\documentclass[12pt]{report}
\usepackage{graphicx} 
\usepackage{amsmath,amssymb,amsthm,amsfonts,mathtools}
\usepackage{textcomp}
\usepackage{mdframed}
\usepackage[a4paper]{geometry}
\usepackage{graphicx}
\usepackage[dvipsnames]{xcolor}
\usepackage{lmodern}
\usepackage{appendix}
    \usepackage{cancel} 
    \usepackage{tikz}
    \usetikzlibrary{matrix,arrows.meta}
    \usepackage{mathtools}
    \usepackage{booktabs}
    \usepackage[utf8]{inputenc}
    \usepackage{enumerate}
    \usepackage[shortlabels]{enumitem}
    \usepackage{bm}
    \usepackage{quiver}
    \usepackage{stmaryrd}
        \usetikzlibrary{backgrounds}
\usepackage{enumitem}
\usepackage{float}
\usepackage[english]{babel}
\usepackage{csquotes}
\usepackage{mathtools,todonotes} 
    \usepackage{tikz-cd}  
    \usepackage[all,cmtip]{xy} 
    \usepackage[bbgreekl]{mathbbol} 
    \usepackage{array} 

    \usepackage{hyperref} 

    \usepackage{authblk} 

\makeatletter
\def\blx@noerroretextools{}
\makeatother
\usepackage[
  bibstyle=alphabetic,
  citestyle=alphabetic,
  useprefix,
  giveninits=true,
  minbibnames=99,
  maxbibnames=99,
  sorting=nyt,
  sortcites=true,
  backend=biber,
  refsection=chapter
]{biblatex}
\renewbibmacro{in:}{}
\DeclareSourcemap{
  \maps[datatype=bibtex]{
    \map[overwrite]{
      \step[fieldsource=doi, final]
      \step[fieldset=url, null]
      \step[fieldset=eprint, null]
    }
    \map[overwrite]{
      \step[fieldsource=eprint, final]
      \step[fieldset=pages, null]
      \step[fieldset=eid, null]
      \step[fieldset=journal, null]
    }
  }
}
\bibliography{bibliography}

\usepackage{authblk}
\usepackage{titlesec}
\usepackage{abstract}

\usepackage[dvipsnames]{xcolor}
\usepackage{mathtools,todonotes}
\usepackage{autonum}

\usepackage{tikz}
\usetikzlibrary{backgrounds}
\usepackage{tikz-cd}
\usepackage[all,cmtip]{xy}
\usepackage[bbgreekl]{mathbbol}
\usepackage{mathrsfs}
\usepackage{bm}
\newcommand{\vb}[1]{\bm{#1}}

\usepackage{aliascnt}

\theoremstyle{plain}
\newtheorem{theorem}{Theorem}[section]

\newaliascnt{proposition}{theorem}
\newtheorem{proposition}[proposition]{Proposition}
\aliascntresetthe{proposition}

\newaliascnt{lemma}{theorem}
\newtheorem{lemma}[lemma]{Lemma}
\aliascntresetthe{lemma}

\newaliascnt{corollary}{theorem}
\newtheorem{corollary}[corollary]{Corollary}
\aliascntresetthe{corollary}

\newaliascnt{claim}{theorem}

\aliascntresetthe{claim}

\theoremstyle{definition}
\newaliascnt{definition}{theorem}
\newtheorem{definition}[definition]{Definition}
\aliascntresetthe{definition}

\theoremstyle{remark}
\newaliascnt{example}{theorem}
\newtheorem{example}[example]{Example}
\aliascntresetthe{example}

\newaliascnt{remark}{theorem}
\newtheorem{remark}[remark]{Remark}
\aliascntresetthe{remark}

\newaliascnt{axiom}{theorem}

\aliascntresetthe{axiom}
\newtheorem*{axiom*}{Axiom}

\usepackage[nameinlink,noabbrev,capitalize]{cleveref}
\Crefname{theorem}{Theorem}{Theorems}
\Crefname{proposition}{Proposition}{Propositions}
\Crefname{lemma}{Lemma}{Lemmas}
\Crefname{corollary}{Corollary}{Corollaries}
\Crefname{claim}{Claim}{Claims}
\Crefname{definition}{Definition}{Definitions}
\Crefname{example}{Example}{Examples}
\Crefname{remark}{Remark}{Remarks}
\Crefname{axiom}{Axiom}{Axioms}

\newcommand{\qsp}[2]{,\ensuremath{\raise.5ex\hbox{$#1$}\big\slash\raise-.5ex\hbox{$#2$}}}

\makeatletter
\newcommand{\zzlabel}[1]{\ifmeasuring@\else\ltx@label{#1}\fi}
\makeatother

\newcounter{terms}[equation]

\RequirePackage{tikz-cd}
\RequirePackage{amssymb}
\usetikzlibrary{calc}
\usetikzlibrary{decorations.pathmorphing}

\tikzset{curve/.style={settings={#1},to path={(\tikztostart)
    .. controls ($(\tikztostart)!\pv{pos}!(\tikztotarget)!\pv{height}!270:(\tikztotarget)$)
    and ($(\tikztostart)!1-\pv{pos}!(\tikztotarget)!\pv{height}!270:(\tikztotarget)$)
    .. (\tikztotarget)\tikztonodes}},
    settings/.code={\tikzset{quiver/.cd,#1}
        \def\pv##1{\pgfkeysvalueof{/tikz/quiver/##1}}},
    quiver/.cd,pos/.initial=0.35,height/.initial=0}

\tikzset{tail reversed/.code={\pgfsetarrowsstart{tikzcd to}}}
\tikzset{2tail/.code={\pgfsetarrowsstart{Implies[reversed]}}}
\tikzset{2tail reversed/.code={\pgfsetarrowsstart{Implies}}}
\tikzset{no body/.style={/tikz/dash pattern=on 0 off 1mm}}
    
\title{
   General Relativity for Mathematicians -- Part I
}

\author{
   Filippo Fila--Robattino and Manuel Tecchiolli\\
}

\date{Universität Zürich -- HS 2025}

\begin{document}

\maketitle
\pagenumbering{roman}

\begin{abstract}
These notes cover the first part of a course of the same title taught by Filippo Fila--Robattino and Manuel Tecchiolli at the University of Zurich in the Winter Semester of 2025. They are also an experiment: much of their content was developed from the lecturers’ own knowledge with the assistance of OpenAI’s ChatGPT models.
\end{abstract}

\tableofcontents

\chapter*{Introduction}
\pagenumbering{arabic}
\addcontentsline{toc}{chapter}{Introduction}

A stone falls, a clock ticks, and a beam of light travels from one observer to another. These are familiar occurrences, and yet a precise account of them requires us to decide what we mean by motion, duration, and observation. We assign positions to bodies, compare the readings of clocks, and write equations relating these quantities. In doing so, we already make assumptions about the structure of the world: which events can be regarded as simultaneous, how distances are measured, and which motions require an explanation. Much of the history of relativity consists in making such assumptions explicit, examining their physical meaning, and discovering that some of them can be replaced.

The purpose of these notes is to follow this process up to the geometric and variational formulation of general relativity. Our main interest is in understanding how physical questions lead us to mathematical structures, and how those structures, once introduced, change the questions we are able to ask. The resulting theory brings together ideas that initially seem to belong to different subjects: the motion of a free particle, the propagation of light, the geometry of smooth manifolds, and the symmetries of an action. We will develop the connections between them step by step.

\medskip
\noindent\textbf{Motion, descriptions, and symmetry.}
The position of a body is always given relative to some description. Two observers may assign different coordinates to the same event, different velocities to the same particle, and even different temporal orderings to certain pairs of events. To compare their accounts, we need to know how their measurements are related and which relations among them have physical significance. A principle of relativity specifies an equivalence among suitable physical descriptions and constrains the laws that can be formulated within them.

This point of view gives symmetry a central role from the beginning. In Newtonian mechanics, the Galilean transformations relate inertial descriptions and preserve the structure of free motion. Electromagnetism brings a different structure into the discussion: the Maxwell equations single out a characteristic propagation speed, and their transformation properties lead us to reconsider the relation between space and time. Special relativity organizes these ideas in Minkowski space--time, where durations and spatial separations enter a common geometric framework. The metric encodes the causal cones and the proper time measured along timelike worldlines; the Poincar\'e transformations preserve this structure.

Notice that a change of coordinates and a change of physical observer are distinct notions. Coordinates label events, while an observer is itself a physical system whose motion and measurements must be described. Their relationship will matter throughout these notes. A coordinate expression can change while the geometric object it represents remains the same; conversely, physically different motions can be compared within a single coordinate system. Keeping track of these distinctions helps us formulate precisely what a symmetry preserves.

\medskip
\noindent\textbf{From a prescribed geometry to a dynamical one.}
In special relativity, the Minkowski metric is given before we solve the equations of motion. It provides a fixed causal and metrical structure within which particles and fields evolve. General relativity changes the status of this structure: the space--time metric becomes one of the fields whose behavior the theory must determine. The geometry that enters the description of motion also participates in the dynamics.

The universality of free fall provides a physical motivation for this step. Freely falling test bodies, released with the same initial position and velocity and subject to no other interactions, follow the same trajectory independently of their mass. This suggests that their common motion can be described through a structure of space--time itself. An affine connection supplies a notion of parallel transport and geodesic motion; a Lorentzian metric supplies causal cones and proper time. In the metric formulation of general relativity, these structures are related by the Levi--Civita connection.

A freely falling description can make the metric Minkowskian and its first derivatives vanish at an event. Curvature, however, can still be present. It governs tidal effects, including the relative acceleration of nearby freely falling particles, and records information that a change of coordinates cannot eliminate. Thus the passage from special to general relativity requires us to understand both the local simplifications available in a suitable description and the geometric information that survives them.

\medskip
\noindent\textbf{Why differential geometry?}
For a mathematician, the statement that gravity is described by geometry is an invitation to identify exactly which geometric structures are involved. A smooth manifold allows us to label events locally and compare smooth descriptions on overlapping domains. Its tangent spaces provide the setting for velocities and directional derivatives. Tensor fields, differential forms, and connections then allow us to formulate relations among these objects intrinsically. Each additional structure supplies something that the preceding ones did not determine.

We will take time to distinguish these levels. A smooth manifold alone has no preferred metric, no prescribed notion of parallel transport, and no distinguished family of freely falling trajectories.

The geometric development is therefore part of the argument of these notes. We will move between intrinsic definitions and coordinate expressions, using the former to identify the objects and the latter to compute with them. Definitions will be accompanied by examples, remarks, and proofs so that the formalism remains connected to the questions that motivated it. Familiarity with linear algebra and multivariable calculus will be assumed; the differential geometry needed for the construction will be developed along the way.

\medskip
\noindent\textbf{The variational point of view.}
A second thread running through the notes is the principle of stationary action. Already in mechanics, it gives us a way to organize equations of motion and study their symmetries. We assign an action to possible trajectories and ask for its first variation to vanish under admissible deformations. As we pass to field theory, trajectories are replaced by fields, and the same question becomes a variational problem on a space of sections.

In general relativity, we apply this idea to the metric itself. The Einstein--Hilbert action uses the scalar curvature and the metric volume form to turn geometric data into a functional. Its variation, together with the variation of a matter action, yields the Einstein equations. This construction also exposes the relation between diffeomorphism symmetry, geometric identities, and the constraints on the dynamics. Understanding the equations includes understanding why their components cannot all be treated as independent evolution equations.

The path through the notes reflects these two threads. We begin with mechanics, electromagnetism, and special relativity, emphasizing the role of symmetry and the assumptions entering the description of motion. We then develop the differential geometry required to formulate these ideas on a manifold. Finally, we bring geometry and variational calculus together to study geodesics, the gravitational action, and the resulting field equations.

Our aim is to reach a point where the mathematical formulation of general relativity can be read with a clear understanding of the role of each of its ingredients. The connection appearing in a geodesic equation, the metric measuring proper time, and the curvature entering the gravitational action belong to a single construction. Following that construction reveals the depth of Einstein's proposal: the structure by which we compare events and describe motion is itself subject to physical laws.
\chapter{Principles of Relativity}
\section{The problem of motion}
The problem of motion is one of the oldest and most persistent questions in the history of science. It asks in the most general way: how do bodies change their positions in space with the passage of time, and why? The attempt to answer this question forced philosophers and scientists to define what they meant by space, time, cause, and matter. It also forced them to develop mathematical tools to represent trajectories and to formulate laws that could be applied to concrete situations. The history of physics can therefore be read as the progressive refinement of our understanding of motion.

In the Aristotelian framework, motion was conceived in a world that was finite, ordered, and qualitative. The cosmos, in Aristotle’s view, was composed of concentric spheres, with the Earth at the center. Above the Earth extended the celestial spheres, incorruptible and eternal, made of a special element called aether.\footnote{Not to get confused with the “luminiferous one” that we will mention in the following sections.} Below the sphere of the Moon lay the sublunar world, composed of the four elements: earth, water, air, and fire. Each element had a natural place: earth tended to fall toward the center of the cosmos, water above earth, air above water, and fire above all, rising upward. Motion, therefore, was classified into two kinds. Natural motion was spontaneous: it was the tendency of each element to seek its proper place. Violent motion, by contrast, was motion contrary to this natural tendency, and it required continuous external action, a mover, to sustain it. In this way, Aristotle explained why stones fall downward, why flames rise upward, and why a cart must be pulled by a horse to continue moving.

This picture was embedded in a teleological philosophy. Everything in nature had a purpose (a \emph{telos}). Rest was not considered a special case of motion but the completion of motion, the realization of a thing’s essence. A stone at the center of the Earth was at rest not because of a balance of forces but because it had reached its final cause. Motion was not described quantitatively but qualitatively. Time, for Aristotle, was the number of motion according to before and after. It was dependent on change, not an independent entity. Space, similarly, was not conceived as a homogeneous container but as an order of places, each defined by the body that occupied it. In this picture, there was no notion of inertia. A body did not continue moving unless something continued to move it. The medium, such as air or water, played a role in sustaining motion as well as resisting it. Kinematics and dynamics were not separated; to describe motion was to speak of its cause. Celestial and terrestrial motions were fundamentally different: celestial motion was eternal and circular, terrestrial motion was corruptible and linear. Thus the Aristotelian problem of motion was solved by referring to essences, causes, and purposes, not by equations or universal laws.

The transition to modern physics required a radical change. Galileo Galilei represents the decisive break. He introduced idealization as a method: one strips away the accidental features of the world to isolate its fundamental laws. In studying bodies on inclined planes, he reduced friction and resistance; in thought experiments, he imagined removing the medium entirely. He insisted on mathematization: motion must be described by quantitative relations among measurable magnitudes, expressed geometrically and later algebraically. He separated kinematics, the description of motion, from dynamics, the study of its causes. He discovered quantitative regularities: in free fall from rest, distance grows as the square of time, velocity increases linearly with time, and the composition of uniform horizontal motion with accelerated vertical motion yields a parabolic trajectory. Galileo also articulated the relativity of motion. In his ship thought experiment, he argued that no internal mechanical experiment can distinguish a ship at rest from one moving with constant velocity in a straight line. This showed that uniform rectilinear motion is, in a sense, indistinguishable from rest. The medium no longer sustained motion; it only resisted it. Motion, once acquired, persisted without the need for a mover.

The culmination of Galileo’s insights is the principle of inertia. In modern form, we state:

\begin{axiom*}[Galilean Law of Inertia]
In an inertial reference frame, a body not subjected to any external force preserves its state of motion.
\end{axiom*}

At this point, one may naturally ask: what exactly is meant by an inertial reference frame? How can we recognize that a system of coordinates is truly “inertial”? These questions are unavoidable, because the very formulation of the law of inertia presupposes the existence of such privileged frames.

From a pedagogical perspective, one can say that an inertial reference frame is a coordinate system in which the Galilean law of inertia holds: if a body is isolated from all interactions, it moves uniformly in a straight line. However, this definition is somewhat circular: we define inertial frames through the validity of the law of inertia, and at the same time we state the law of inertia only in inertial frames.

Isaac Newton brought the study of motion into a unified mathematical framework through his three laws of motion and the law of universal gravitation, describing terrestrial and celestial phenomena within the same theory. In this framework, he introduced the notion of an \emph{absolute space}: a universal arena that exists independently of material bodies and within which true inertial motion takes place.

In such an absolute space, the notion of an inertial system acquires an absolute meaning: it is simply a system of coordinates at rest or in uniform translation with respect to absolute space. Of course, absolute space cannot be observed directly; it is a metaphysical construct introduced to avoid logical regress.

Thus, while the law of inertia seems conceptually simple, it inevitably pushes us towards subtle philosophical considerations: either we accept the necessity of postulating an absolute background (as Newton did), or we accept a pragmatic, operational definition of inertial frames by approximating them with real physical systems (for example, using the fixed stars as reference).

This axiom reframes the problem of motion. Dynamics becomes the study of deviations from inertial motion. Forces are precisely what cause such deviations. To mathematically formulate this in a systematic way, we introduce the principle of stationary action, which provides the general foundation for deriving equations of motion.

The idea is the following: instead of writing down the equations of motion directly, we assign to each possible path of a system a single number, called the \emph{action}. The action is a functional, meaning it is a rule that takes as input a function $q=q(t)$ (a path in configuration space) and produces a number. The physical path is then singled out as the one that makes this number stationary under small deformations.

Let us begin with the definition. For a system described by coordinates $q=(q^1,\dots,q^n)$, velocities $\dot q=(\dot q^1,\dots,\dot q^n)$, and a Lagrangian function $\mathcal L=\mathcal L(t, q,\dot q)$, we define the action of a path $q=q(t)$ on a time interval $[t_1,t_2]$ as
\begin{align}
S[q] = \int_{t_1}^{t_2} \mathcal L(t,q(t),\dot q(t))\,dt.
\end{align}
Notice carefully what this means. If you hand me a trial trajectory $q(t)$, I can compute at each instant the value of $\mathcal L(q,\dot q,t)$, and then integrate over the interval. The result is a number $S[q]$. If you change the path, the number changes. In this sense $S$ is not a function but a \emph{functional}.\footnote{In a more formal sense, a real functional $F$ is a map 
$F \colon \mathrm{Map}(A,B) \to \mathbb{R}$,
where $\mathrm{Map}(A,B)$ denotes a space of maps between two sets $A$ and $B$, 
which in general may carry additional geometric structures (e.g. topological or differentiable structures).}

The \emph{principle of stationary action} states: among all smooth paths $q=q(t)$ that connect the two fixed endpoints $q(t_1)=q_1$ and $q(t_2)=q_2$, the physical path is the one for which $S[q]$ does not change to first order when “we deform the path slightly”. In other words, if $q$ is the physical trajectory, and $q+\varepsilon\eta$ is a nearby path with $\eta=\eta(t)$ such that $\eta(t_1)=\eta(t_2)=0$ (the endpoints are fixed), then, defining
\begin{align}
\delta S\coloneqq\left.\frac{d}{d\varepsilon} S[q+\varepsilon\eta]\right|_{\varepsilon=0},
\end{align}
we have
\begin{align}
    \boxed{\delta S=0.}
\end{align}
This condition is what we mean by the action being \emph{stationary}. It could be a minimum, a maximum, or a saddle point, but the crucial fact is that the first-order variation vanishes.\footnote{Notice that in the literature (especially in physics) one can find the notation $\delta x(t)$ for $\varepsilon\eta(t)$. Technically, it makes sense. In fact, in field theory, the symbol $\delta$ is usually interpreted as the de Rham differential operator on the space of fields.}

\begin{remark}[Summation convention]
Throughout these notes we adopt the Einstein summation convention: whenever an index variable appears twice in a single term, once as an upper index and once as a lower index, it is implicitly summed over all its possible values. In the present context of classical mechanics, since we have not defined yet what the position of the indices means, we shall simply agree that repeated indices are summed. For example,
\begin{align}
a_i b_i = \sum_{i} a_i b_i.
\end{align}
This convention simplifies notation by avoiding the need to write explicit summation signs.
\end{remark}

Let us now compute this variation explicitly. By definition,
\begin{align}
\delta S &= \left.\frac{d}{d\varepsilon} \int_{t_1}^{t_2} \mathcal L(t,q+\varepsilon\eta,\dot q+\varepsilon \dot\eta)\,dt\right|_{\varepsilon=0}.
\end{align}
Differentiating under the integral sign gives
\begin{align}
\delta S = \int_{t_1}^{t_2}\left( \frac{\partial \mathcal L}{\partial q^i}\eta^i + \frac{\partial \mathcal L}{\partial \dot q^i}\dot \eta^i \right) dt.
\end{align}
The first term is already in the desired form, but the second term involves $\dot\eta^i$, the derivative of the variation. To handle this, we integrate by parts:
\begin{align}
\int_{t_1}^{t_2} \frac{\partial \mathcal L}{\partial \dot q^i}\dot \eta^i\,dt
= \Big[\frac{\partial \mathcal L}{\partial \dot q^i}\eta^i\Big]_{t_1}^{t_2}
- \int_{t_1}^{t_2} \frac{d}{dt}\left(\frac{\partial \mathcal L}{\partial \dot q^i}\right)\eta^i\,dt.
\end{align}
Because the variations vanish at the endpoints ($\eta(t_1)=\eta(t_2)=0$), the boundary term is zero. Thus, the variation of the action becomes
\begin{align}
\delta S = \int_{t_1}^{t_2} \left(\frac{\partial \mathcal L}{\partial q^i} - \frac{d}{dt}\frac{\partial \mathcal L}{\partial \dot q^i}\right)\eta^i\,dt.
\end{align}

Now comes a crucial mathematical fact, called the \emph{fundamental lemma of the calculus of variations}: if a function $f(t)$ satisfies
\begin{align}
\int_{t_1}^{t_2} f(t)\eta(t)\,dt = 0
\end{align}
for every smooth $\eta$ that vanishes at the endpoints, then $f=0$ on the interval.

Applying this result, we conclude that the stationarity condition $\delta S=0$ for all variations implies
\begin{align}
\frac{d}{dt}\frac{\partial \mathcal L}{\partial \dot q^i} - \frac{\partial \mathcal L}{\partial q^i} = 0, \qquad i=1,\dots,n.
\end{align}
These are called the Euler--Lagrange equations, the equations of motion (e.o.m.) of our theories.
\begin{remark}
Why should nature select stationary-action paths? Up to this point, the principle of stationary action may appear as a mathematical postulate: we write down an action, require it to be stationary, and obtain the equations of motion. This is powerful, but one may ask: why should nature care about stationarity of this functional? The answer comes from quantum mechanics, and in particular from the path integral formulation introduced by Richard Feynman.\\
In quantum mechanics, the fundamental object is not the trajectory of a particle but its wavefunction, or more generally the amplitude for the particle to go from one configuration to another. If a particle starts at position $q_a$ at time $t_a$ and is later found at position $q_b$ at time $t_b$, quantum mechanics assigns a complex number $\mathcal{A}(q_b,t_b;q_a,t_a)$, called the transition amplitude, whose modulus squared gives the probability for this process.\\
Feynman proposed a striking way to compute this amplitude: instead of selecting a single trajectory, we must sum over \emph{all possible trajectories} connecting $(q_a,t_a)$ to $(q_b,t_b)$. Each trajectory $q=q(t)$ contributes with a complex phase
\begin{align}
\exp\!\left(\tfrac{i}{\hbar} S[q]\right),
\end{align}
where $S[q]$ is exactly the classical action of that path. The transition amplitude is then formally given by the path integral
\begin{align}
\mathcal{A}(q_b,t_b;q_a,t_a) = \int \mathcal{D}q \, \exp\!\left(\tfrac{i}{\hbar} S[q]\right).
\end{align}
At first sight this looks mysterious: what does it mean to integrate over all possible paths? To approach this expression, we discretize time into small intervals. Suppose we divide $[t_a,t_b]$ into $N$ slices of width $\Delta t$. Instead of specifying a continuous trajectory, we specify the positions of the particle at the intermediate times, $(q_1,q_2,\dots,q_{N-1})$. For each short step, quantum mechanics provides a short-time propagator, which depends on the Lagrangian. Multiplying all these propagators and integrating over all intermediate positions gives an approximation to the amplitude. Formally, the limit $N\to\infty$ yields the path integral. The existence and meaning of this limit require additional hypotheses and a suitable prescription; discretization alone does not establish them.\footnote{In particular, this construction does not automatically define a measure on the space of paths.}\\
The crucial point is what happens to the path integral in the classical limit $\hbar\to0$.\footnote{The reduced Planck constant $\hbar=h/(2\pi)$ is the fundamental constant setting the scale of quantum effects. For example, the Heisenberg uncertainty relation $\Delta x\,\Delta p\geq\hbar/2$ limits how sharply position and momentum can be specified simultaneously. Formally taking $\hbar\to0$ removes this quantum restriction, allowing the classical description in terms of definite positions and momenta.} Formally, since each path contributes a phase $e^{iS[q]/\hbar}$, if $S[q]$ changes even a little when we deform the path, then the phase oscillates extremely rapidly across nearby paths, as $\hbar$ approaches zero. The stationary-phase argument suggests that contributions from nonstationary regions cancel by destructive interference, while neighborhoods of stationary paths determine the leading semiclassical behavior.\footnote{This is the formal path-integral analogue of the \emph{method of stationary phase}; a rigorous application requires control of the integral and its asymptotics.}\\
In precise terms, “stationary" means that the variation of the action vanishes, i.e.
\begin{align}
\delta S= 0.
\end{align}
This condition is exactly the Euler--Lagrange equations, so the stationary paths are the classical trajectories.\\
This provides a semiclassical explanation of the principle of stationary action: under suitable assumptions, classical trajectories emerge from the stationary-phase behavior of the quantum amplitude.
\end{remark}
Now, consider a particle of mass $m$ moving in $\mathbb{R}^3$ with potential $V=V(\mathbf x,t)$. The Lagrangian is
\begin{align}
\mathcal L(\mathbf x,\dot{\mathbf x},t) = \tfrac{1}{2}m \dot{\mathbf x}^2 - V(\mathbf x,t).
\end{align}
For each component $x_k$,
\begin{align}
\frac{\partial \mathcal L}{\partial \dot x_k} = m \dot x_k, \qquad
\frac{d}{dt}\frac{\partial \mathcal L}{\partial \dot x_k} = m \ddot x_k, \qquad
\frac{\partial \mathcal L}{\partial x_k} = -\frac{\partial V}{\partial x_k}.
\end{align}
The Euler--Lagrange equation gives
\begin{align}
m \ddot x_k + \frac{\partial V}{\partial x_k} = 0,
\end{align}
or in vector form,
\begin{align}
m \ddot{\mathbf x} = -\nabla V(\mathbf x,t) \eqqcolon \mathbf F(\mathbf x,t).
\end{align}
This is Newton’s second law $\mathbf F=m\mathbf a$.

We now introduce what is arguably the most important concept in physics: \emph{symmetry}. Very informally, by a symmetry we mean a specified set of transformations acting on the mathematical objects of the theory (coordinates, fields, parameters), and we say:\\

\emph{A theory is said to be symmetric under a given set of transformations if, when we apply those transformations to the variables that enter the action, the equations of motion that follow from the variational principle do not change.}\\

This definition emphasizes the strength of the variational formalism: it isolates dynamics in a single scalar functional $S$, and invariances of $S$ (up to boundary terms/total derivatives\footnote{In general, if two action functionals differ by a boundary term, or, equivalently, if two Lagrangian functions differ by a total derivative, then they lead to the same equations of motion, since boundary terms do not contribute to the variation of $S$, being the endpoints fixed.}) guarantee invariance of the Euler--Lagrange equations.

Let us make this precise for the simplest nontrivial example: a free particle of mass $m$ moving in three-dimensional Euclidean space. The action is given by
\begin{align}
S[\mathbf{x}] = \int_{t_1}^{t_2} \tfrac{1}{2}m\,|\dot{\mathbf{x}}(t)|^2\,dt,
\end{align}
and the Euler--Lagrange equations give $m\,\ddot{\mathbf{x}}=\mathbf{0}$. We now ask: what are the symmetries of this theory?
\begin{enumerate}
    \item \textbf{Rotations:} A rotation of the spatial coordinates changes $\mathbf{x}(t)$ to a rotated vector, but it does not change its length. In particular, the squared velocity $|\dot{\mathbf{x}}(t)|^2$ remains the same. Therefore, the Lagrangian and the action are invariant, and the equations of motion are unchanged. This expresses the fact that the theory sees all directions in space as equivalent.

    \item \textbf{Translations:} A shift $\mathbf{x}(t)\mapsto\mathbf{x}(t)+\mathbf a$, where $\mathbf a$ is a constant vector, leaves the velocity invariant and hence the Lagrangian and the action. Also, a time translation $t\mapsto t+b$ gives the same solution set for the equations of motion. In fact. a time-translation can be regarded as a linear change in the parametrization of the trajectories. For simplicity of notation, consider the $1$-dimensional case. Notice that, in general, for $t' = at + b$, we have
    \begin{align}
    q'(t') := q\!\left(\tfrac{t'-b}{a}\right).
    \end{align}
    Differentiating, we obtain the relation between velocities
    \begin{align}
    \dot q(t) = \frac{dq}{dt} 
    = \frac{dq'}{dt'}\frac{dt'}{dt} 
    = a\,\dot q'(t').
    \end{align}
    Substituting in the action, we get
    \begin{align}
    S[q] &= \int_{t_1}^{t_2} \mathcal L(t, q(t),\dot q(t))\,dt \\
    &= \int_{t_1'}^{t_2'} \mathcal L\!\left(\tfrac{t'-b}{a},q'(t'),\,a\,\dot q'(t')\right)\,\frac{dt'}{a}.
    \end{align}
    This implies that, if we set $a=1$, and if the Lagrangian does not depend explicitly on time\footnote{If the Lagrangian does not depend explicitly on time, a change $t'-b\mapsto t'$ is immaterial.} (like the case at hand), we get
    \begin{align}
    S[q] &= \int_{t_1'}^{t_2'} \mathcal L\!\left(t', q'(t'),\,\dot q'(t')\right)\,dt'.
    \end{align}
    The form of the action functional is equivalent, hence leading to the same equations of motion for the trajectory $q'$.\\
    These invariances express the homogeneity of space and time. 

    \item \textbf{Galilean boosts:} For a constant velocity $\mathbf{v}$, a \emph{Galilean boost} acts by
    \begin{align}
    \mathbf{x}(t)\mapsto \mathbf{x}'(t) = \mathbf{x}(t) + \mathbf{v}\,t, 
    \qquad
    \dot{\mathbf{x}}(t)\mapsto \dot{\mathbf{x}}'(t) = \dot{\mathbf{x}}(t)+\mathbf{v}.
    \end{align}
    The Lagrangian transforms as
    \begin{align}
    \mathcal L' = \tfrac{1}{2}m\,|\dot{\mathbf{x}}+\mathbf{v}|^{2} 
    = \tfrac{1}{2}m\,|\dot{\mathbf{x}}|^{2} + m\,\mathbf{v}\cdot\dot{\mathbf{x}} + \tfrac{1}{2}m\,|\mathbf{v}|^{2}
    = \mathcal L + \frac{d}{dt}\big(m\,\mathbf{v}\cdot \mathbf{x} + \tfrac{1}{2}m\,|\mathbf{v}|^{2}t\big).
    \end{align}
    Therefore
    \begin{align}
    S' = \int_{t_1}^{t_2} \mathcal L'\,dt 
    = \int_{t_1}^{t_2} \mathcal L\,dt + \Big[m\,\mathbf{v}\cdot \mathbf{x} + \tfrac{1}{2}m\,|\mathbf{v}|^{2}t\Big]_{t_1}^{t_2},
    \end{align}
    so the action changes only by a boundary term, vanishing when considering the variations with fixed endpoints. Hence, the free particle is invariant under Galilean boosts: moving from one inertial frame to another, with constant relative velocity, does not change the form of the equations of motion, according to what Galileo observed.
\end{enumerate}
Collecting these observations, the symmetries of the free-particle action in three-dimensional Euclidean space are: spatial rotations (isotropy, $3$ degrees of freedom), spatial translations and time translations (homogeneity of space and time, $4$ degrees of freedom), and Galilean boosts (relativity of uniform motion, $3$ degrees of freedom). Together they generate the \emph{Galilean transformations} ($10$ degrees of freedom), the kinematical symmetry of Newtonian mechanics. In the variational language, each of these transformations leaves the action invariant up to a total derivative, and therefore leaves the Euler--Lagrange equations unchanged.
\begin{remark}[Symmetries and external forces]\label{remsymmforc}
The introduction of a fixed external potential generally reduces the symmetries of the
free-particle action. For example, a fixed central potential $V=V(|\mathbf{x}|)$ preserves rotational invariance
but breaks invariance under spatial translations and Galilean boosts. A uniform gravitational potential
$V(\mathbf{x}) = m g x_3$ breaks rotational invariance while preserving translations in the horizontal
directions. In each case, the equations of motion still follow from the variational principle.

It is important to distinguish the symmetries of a theory from those of a fixed external background or a particular configuration. Interactions do not necessarily destroy Galilean invariance: for example, a system of particles with interaction potential
\begin{align}
V=\sum_{i<j}V_{ij}(|\mathbf{x}_i-\mathbf{x}_j|)
\end{align}
is invariant under simultaneous Galilean transformations of all particles, up to the boundary terms arising from their kinetic energies. Likewise, a fixed electromagnetic background need not possess all the spacetime symmetries of the full electromagnetic theory.

In general relativity, Einstein’s revolutionary step was to make spacetime geometry itself dynamical and to describe free fall geometrically. In this way, free fall in a gravitational field and uniform motion in the absence of gravity both appear as natural manifestations of the underlying geometry. This does not mean that every gravitational field has the same symmetries as flat spacetime: the covariance of the field equations must be distinguished from the isometries of a particular metric. But before we dive into these concepts, there are still some steps to take in our conceptual journey.
\end{remark}

\section{The Maxwell equations}
By the beginning of the nineteenth century, electricity, magnetism, and optics had developed as largely independent branches of natural philosophy. Each possessed a growing body of experimental results and empirical laws, but their mutual relations were not yet understood.

Electrostatics was grounded on Coulomb’s careful experiments, which had established that the force between two point charges is inversely proportional to the square of their separation. This law provided the electric analogue of Newton’s law of gravitation. Meanwhile, the invention of the voltaic pile by Alessandro Volta (1800) made it possible to produce sustained electric currents, which in turn led to a new range of phenomena. Georg Simon Ohm (1827) discovered the proportionality between current, voltage, and resistance, later codified as Ohm’s law, thereby laying the foundations of circuit theory.

Magnetism, studied for centuries through the behavior of lodestones and compass needles, also acquired new clarity in this period. The decisive event came in 1820, when Hans Christian Ørsted observed that a current-carrying wire deflects a magnetic needle. This showed that electricity could directly influence magnetism. Building on this discovery, André-Marie Ampère conducted systematic experiments and formulated quantitative laws for the forces between currents, thereby founding electrodynamics. These results suggested that currents generate magnetic fields and that magnetism and electricity are not entirely separate.

A second decisive advance was made by Michael Faraday in the 1830s. Faraday discovered electromagnetic induction: a changing magnetic flux through a circuit induces an electromotive force. His experiments revealed a profound reciprocity between electricity and magnetism. Just as currents produce magnetic effects (Ampère’s law), changing magnetic fields can produce electric currents (Faraday’s law). Faraday also emphasized the concept of lines of force, suggesting that electric and magnetic phenomena should be understood in terms of \emph{fields} permeating space rather than merely as actions at a distance.

Meanwhile, the study of light had advanced along a separate trajectory. Thomas Young’s double-slit experiment (1801) and Augustin Fresnel’s subsequent work had established the wave nature of light, accounting for interference and diffraction phenomena. Polarization experiments further showed that light waves are transverse rather than longitudinal. By the mid-nineteenth century, the wave theory of light was widely accepted.

By the 1850s, therefore, a rich but fragmented picture had emerged:
\begin{itemize}
\item Coulomb’s law described the force between charges.
\item Ørsted and Ampère had shown that currents create magnetic effects.
\item Faraday had demonstrated that changing magnetic flux induces currents.
\item Optics had revealed that light is a transverse wave, but its medium of propagation was unknown.
\end{itemize}

What was missing was a unified theory that connected these disparate results. The empirical laws of electricity and magnetism were individually successful, but they lacked a common mathematical framework. Optics was understood as the study of waves, but the physical nature of these waves was obscure. The stage was thus set for a synthesis: a single theory that would explain both electromagnetic phenomena and the propagation of light.

It was James Clerk Maxwell who finally achieved the synthesis that unified electricity, magnetism, and optics into a single theoretical framework. Building on the experimental laws of Coulomb, Ampère, and Faraday, and guided by Faraday’s intuition of fields as genuine physical entities, Maxwell formulated a set of differential equations that describe how the electric and magnetic fields evolve in space and time.

One of Maxwell’s key insights was the recognition that Ampère’s law, as originally formulated, was incomplete. Ampère’s law stated that the circulation of the magnetic field around a closed curve is proportional to the electric current passing through the enclosed surface. However, when applied to situations involving time-varying electric fields, this formulation appeared to violate charge conservation. Maxwell resolved this inconsistency by introducing an additional term, the \emph{displacement current}, proportional to the time derivative of the electric field. This modification ensured that the theory respected the continuity equation for electric charge and made the set of laws self-consistent.

Maxwell’s equations in vacuum, written in modern differential form, are the following:
\begin{align}
\nabla \cdot \mathbf{E} &= \frac{\rho}{\varepsilon_0}, 
&\qquad \nabla \cdot \mathbf{B} &= 0, \\
\nabla \times \mathbf{E} &= - \frac{\partial \mathbf{B}}{\partial t}, 
&\qquad \nabla \times \mathbf{B} &= \mu_0 \mathbf{J} + \mu_0 \varepsilon_0 \frac{\partial \mathbf{E}}{\partial t},
\end{align}
where we define the \emph{nabla} operator $\nabla = \big(\partial_{x_1},\partial_{x_2},\partial_{x_3}\big)$.

The first line contains Gauss’s laws. The divergence of the electric field is proportional to the local charge density, with proportionality constant $\varepsilon_0$, the permittivity of free space. The divergence of the magnetic field vanishes, which means that there are no isolated magnetic charges (monopoles); this does not imply that all magnetic field lines are closed loops, as illustrated by a nonzero uniform magnetic field, whose field lines are straight lines.

The second line contains the dynamical laws. Faraday’s law states that a time-varying magnetic field produces a circulating electric field. The Ampère–Maxwell law states that magnetic fields are produced both by electric currents and by time-varying electric fields, the latter contribution being Maxwell’s displacement current.

Together, these four equations encapsulate all of classical electromagnetism. They unify in a compact and elegant form the empirical discoveries of Coulomb, Ørsted, Ampère, and Faraday. The laws of electrostatics, magnetostatics, and electromagnetic induction appear as limiting cases, while the full equations capture the dynamical interplay of electric and magnetic fields. For the first time, electricity and magnetism were no longer two separate branches of physics, but two aspects of the same field theory.

One of the great conceptual shifts brought by nineteenth-century physics, and in particular by electromagnetism, was the realization that the fundamental laws of nature cannot be expressed solely in terms of forces acting at a distance between isolated particles. Instead, they require the introduction of new mathematical objects: \emph{fields}. A field is something defined at every point of space (and time), capable of encoding how physical quantities vary continuously across regions rather than existing only at discrete locations. 

Electrostatics already pointed in this direction. Coulomb’s law says that charges exert forces on each other, but to describe the influence of a distribution of charges it is far more efficient to introduce the electric field $\mathbf{E}$, which, at each point $x$, specifies the force per unit charge that would act on a test particle. Similarly, magnetostatics showed that currents create forces on other currents, but Faraday and Ampère’s work made it clear that these effects can be described in terms of a magnetic field $\mathbf{B}$. The decisive step came with Faraday’s induction experiments: changing magnetic fluxes induce electric effects even where no charge is present. This demanded a language in which electric and magnetic influences are present throughout space and time, not just as instantaneous actions at a distance. 

Maxwell embraced and mathematized Faraday’s vision by treating fields as primary physical quantities, obeying differential equations that relate their spatial variation to their temporal evolution and to sources (charges and currents). This conceptual leap — elevating fields to the same status as particles — is what led to the mature form of classical electromagnetism. From the mathematical point of view, it also marked the systematic entry of vector calculus into physics. The notions of gradient, divergence, curl, and Laplacian, first developed in mathematical analysis, became indispensable tools for expressing the structure of Maxwell’s equations.

For this reason, before we proceed further, we set down a precise definition of what we mean by a vector field in the Euclidean context, and we recall the operators that act on such fields.

\begin{definition}[Vector field in Euclidean space]
A \emph{smooth vector field} on $\mathbb{R}^3$ is a smooth map
\begin{align}
\mathbf{A}: \mathbb{R}^3 \longrightarrow \mathbb{R}^3,
\end{align}
which assigns to each point $\mathbf{x}=(x_1,x_2,x_3)\in\mathbb{R}^3$ a vector 
\begin{align}
\mathbf{A}(\mathbf{x}) = \big(A_1(\mathbf{x}),A_2(\mathbf{x}),A_3(\mathbf{x})\big),
\end{align}
where each component $A_i:\mathbb{R}^3\to\mathbb{R}$ is assumed to be in $\mathcal C^\infty(\mathbb R^3)$. 

In this Euclidean setting, all vector calculus operators (gradient, divergence, curl, Laplacian) are understood with respect to the gradient/nabla operator
\begin{align}
\nabla = \big(\partial_{x_1},\partial_{x_2},\partial_{x_3}\big).
\end{align}
\end{definition}

In order to gain some familiarity with vector calculus identities, let us prove the following identity.

\begin{lemma}\label{id:curlcurl}
Let $\mathbf{A}:\mathbb{R}^3 \to \mathbb{R}^3$ be a smooth vector field. Then, we have
\begin{align}
\nabla \times (\nabla \times \mathbf{A}) \;=\; \nabla(\nabla \cdot \mathbf{A})\;-\;\nabla^2 \mathbf{A},
\end{align}
where $\nabla^2 \mathbf{A}$ denotes the componentwise Laplacian, i.e. $(\nabla^2 \mathbf{A})_i=\displaystyle\sum_{j=1}^3 \partial_j\partial_j A_i$.
\end{lemma}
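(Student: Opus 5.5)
The plan is to compare both sides componentwise, writing the curl with the Levi-Civita symbol $\varepsilon_{ijk}$, defined by $\varepsilon_{123}=1$, antisymmetry under exchange of any two indices, and vanishing whenever two indices coincide. With this notation $(\nabla\times\mathbf{A})_i=\varepsilon_{ijk}\partial_j A_k$, with repeated indices summed as agreed in the summation convention remark. Applying this formula twice gives
\begin{align}
\big(\nabla\times(\nabla\times\mathbf{A})\big)_i=\varepsilon_{ijk}\partial_j\big(\varepsilon_{klm}\partial_l A_m\big)=\varepsilon_{ijk}\varepsilon_{klm}\,\partial_j\partial_l A_m .
\end{align}
The constants $\varepsilon$ pass through the derivatives, and smoothness of $\mathbf{A}$ is used only to justify the second derivatives.

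The key step is the contraction identity
\begin{align}
\varepsilon_{ijk}\varepsilon_{klm}=\delta_{il}\delta_{jm}-\delta_{im}\delta_{jl}.
\end{align}
This is the one point that needs an argument, so it is the main obstacle. I would check it by cases. Write $\varepsilon_{ijk}\varepsilon_{klm}=\varepsilon_{kij}\varepsilon_{klm}$ using cyclic invariance. For fixed $k$, a term is nonzero only if $\{i,j\}=\{l,m\}=\{1,2,3\}\setminus\{k\}$. When $(i,j)=(l,m)$ the product is $+1$, and when $(i,j)=(m,l)$ it is $-1$. The right-hand side takes exactly these values, and it vanishes when $i=j$, which matches the left side. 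An alternative route is to expand the product of two $\varepsilon$'s as a $3\times3$ determinant of Kronecker deltas and contract one index, but the case check is more elementary and fits the level of the notes.

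Substituting the identity yields
\begin{align}
\partial_j\partial_i A_j-\partial_j\partial_j A_i=\partial_i(\partial_j A_j)-\partial_j\partial_j A_i .
\end{align}
Here the symmetry of second derivatives, $\partial_j\partial_i=\partial_i\partial_j$ by Schwarz's theorem for smooth functions, lets me move $\partial_i$ outside. The first term is $\big(\nabla(\nabla\cdot\mathbf{A})\big)_i$ and the second is $(\nabla^2\mathbf{A})_i$ as defined in the statement, which proves the lemma.

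As a fallback that avoids $\varepsilon$ altogether, I would compute only the first component explicitly. Writing $(\nabla\times\mathbf{A})_1=\partial_2A_3-\partial_3A_2$ and so on, one gets
\begin{align}
\big(\nabla\times(\nabla\times\mathbf{A})\big)_1=\partial_2(\partial_1A_2-\partial_2A_1)-\partial_3(\partial_3A_1-\partial_1A_3).
\end{align}
Adding and subtracting $\partial_1\partial_1A_1$ then gives $\partial_1(\nabla\cdot\mathbf{A})-\nabla^2A_1$. The remaining components follow by cyclic permutation of the indices $1\to2\to3$.
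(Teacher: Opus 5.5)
Your proposal is correct and matches the paper's proof step by step. Both use the Levi--Civita form of the curl, the contraction $\varepsilon_{ijk}\varepsilon_{k\ell m}=\delta_{i\ell}\delta_{jm}-\delta_{im}\delta_{j\ell}$, and identify $\partial_j\partial_i A_j-\partial_j\partial_j A_i$ with the two terms on the right. Your only additions are the case-by-case check of the contraction identity and the explicit use of symmetry of second derivatives; the paper uses both without comment.
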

\begin{proof}
The curl of $\mathbf{A}$ reads in components
\begin{align}
(\nabla \times \mathbf{A})_i = \varepsilon_{ijk}\,\partial_j A_k,
\end{align}
where $\varepsilon_{ijk}$ is the Levi--Civita symbol, defined by\footnote{This symbol is completely antisymmetric in its indices, and it provides a compact way to express the cross product and the curl in Cartesian coordinates. For example,
\begin{align}
(\mathbf{A}\times \mathbf{B})_i &= \varepsilon_{ijk}\,A_j B_k, \\
(\nabla \times \mathbf{A})_i &= \varepsilon_{ijk}\,\partial_j A_k.
\end{align}
Contractions of two Levi--Civita symbols yield Kronecker deltas, namely
\begin{align}
\varepsilon_{ijk}\,\varepsilon_{i\ell m} = \delta_{j\ell}\delta_{km} - \delta_{jm}\delta_{\ell k}.
\end{align}}
\begin{align}
\varepsilon_{ijk} = 
\begin{cases}
+1 & \text{if $(i,j,k)$ is an even permutation of $(1,2,3)$},\\[6pt]
-1 & \text{if $(i,j,k)$ is an odd permutation of $(1,2,3)$},\\[6pt]
0 & \text{if any two indices coincide}.
\end{cases}
\end{align}
Therefore,
\begin{align}
\big(\nabla \times (\nabla \times \mathbf{A})\big)_i 
&= \varepsilon_{ijk}\,\partial_j\big(\varepsilon_{k\ell m}\,\partial_\ell A_m\big).
\end{align}
Using the contraction identity 
\begin{align}
\varepsilon_{ijk}\varepsilon_{k\ell m} = \delta_{i\ell}\delta_{jm} - \delta_{im}\delta_{j\ell},
\end{align}
we obtain
\begin{align}
\big(\nabla \times (\nabla \times \mathbf{A})\big)_i 
&= (\delta_{i\ell}\delta_{jm} - \delta_{im}\delta_{j\ell})\,\partial_j\partial_\ell A_m \\
&= \partial_j\partial_i A_j \;-\; \partial_j\partial_j A_i.
\end{align}
The first term is exactly $(\nabla(\nabla \cdot \mathbf{A}))_i$, and the second is $(\nabla^2 \mathbf{A})_i$. Thus, in terms of $\nabla$,
\begin{align}
\big(\nabla \times (\nabla \times \mathbf{A})\big)_i = (\nabla(\nabla \cdot \mathbf{A}))_i - (\nabla^2 \mathbf{A})_i,
\end{align}
which proves the identity.
\end{proof}

We now turn to Maxwell’s equations in vacuum, where no charges or currents are present. In this case, $\rho=0$ and $\mathbf{J}=0$, and the equations take the simplified form
\begin{align}\label{Maxwell's equations}
\nabla \cdot \mathbf{E} &= 0, 
&\qquad \nabla \cdot \mathbf{B} &= 0, \\
\nabla \times \mathbf{E} &= - \frac{\partial \mathbf{B}}{\partial t}, 
&\qquad \nabla \times \mathbf{B} &= \mu_0 \varepsilon_0 \frac{\partial \mathbf{E}}{\partial t}.
\end{align}

The first pair tells us that both the electric and magnetic fields are divergence-free: there are no sources or sinks for the fields in empty space, no charges and no magnetic monopoles. The second pair describes their dynamics: a time-varying magnetic field generates an electric field, and a time-varying electric field generates a magnetic field.

Let's now consider Faraday’s law,
\begin{align}
\nabla \times \mathbf{E} = - \frac{\partial \mathbf{B}}{\partial t}.
\end{align}
Taking the curl of both sides gives
\begin{align}
\nabla \times (\nabla \times \mathbf{E}) = - \frac{\partial}{\partial t}(\nabla \times \mathbf{B}),
\end{align}
and, using \cref{id:curlcurl}, we obtain
\begin{align}
\nabla \times (\nabla \times \mathbf{E}) = \nabla(\nabla \cdot \mathbf{E}) - \nabla^2 \mathbf{E}.
\end{align}
Now, by recalling that $\nabla \cdot \mathbf{E} = 0$ and that $\displaystyle{\nabla \times \mathbf{B} = \mu_0 \varepsilon_0 \frac{\partial \mathbf{E}}{\partial t}}$ in vacuum, we find
\begin{align}
- \nabla^2 \mathbf{E} = - \frac{\partial}{\partial t}\big(\mu_0 \varepsilon_0 \frac{\partial \mathbf{E}}{\partial t}\big),
\end{align}
which becomes
\begin{align}
\nabla^2 \mathbf{E} - \frac1{c^2} \frac{\partial^2 \mathbf{E}}{\partial t^2} = 0.
\end{align}
This is the standard wave equation, with wave speed
\begin{align}
c = \frac{1}{\sqrt{\mu_0 \varepsilon_0}}.
\end{align}
A completely analogous computation applied to $\mathbf{B}$ yields the same equation. Therefore, both the electric and the magnetic fields propagate in vacuum as waves moving with speed $c$. The appearance of the constant $c$ in Maxwell’s theory was a striking result. Its numerical value matched the known speed of light, as measured by Fizeau, Foucault, and others in mid-nineteenth-century experiments. Since optical phenomena such as interference, diffraction, and polarization had already demonstrated the undulatory nature of light, it was natural to identify the electromagnetic waves predicted by Maxwell with light itself. Thus, the propagation speed emerging from the electromagnetic field equations was not just another constant of nature: it was precisely the velocity of light, providing the theoretical bridge between electromagnetism and optics.

This unification was one of Maxwell’s most profound insights. However, this immediately raised a deep conceptual question: \emph{with respect to what is this velocity measured?}

In mechanics, velocities are always relative to some reference frame. once again, if a train moves at $100\ \mathrm{km/h}$, this means relative to the ground, and if an observer moves alongside the train, they may describe the velocity differently. As we mentioned, the equations of Newtonian mechanics are invariant under Galilean transformations, which assume the existence of an absolute space and an absolute time though which the definition of inertial frames could be possible. Within this conceptual background, it seemed natural to assume that the electromagnetic waves predicted by Maxwell’s theory must propagate with velocity $c$ relative to some medium filling all of space and at rest with respect to the absolute space. This hypothetical medium was called the \emph{luminiferous ether}. Just as sound waves require air and water waves require water, light waves were thought to require a physical substrate that carried the oscillations of the electromagnetic field.

The idea of the ether thus served as a way to reconcile Maxwell’s theory with the Newtonian picture of space and time. The ether was imagined as a stationary, all-pervading substance, defining an absolute frame of rest relative to which the velocity of light was measured. Different inertial observers, moving relative to the ether, would then measure different effective velocities of light, in the same way that the apparent velocity of sound depends on whether one moves with or against the wind. In this sense, the ether reintroduced the notion of an absolute structure in space, much like the absolute space of Newton.

The Michelson--Morley (1887) interferometer tests the ether hypothesis by comparing light travel times along two perpendicular arms while the whole apparatus moves with speed $v$ through the ether. To avoid any ambiguity, we distinguish two frames and relate them by a \emph{Galilean boost}:
\begin{itemize}
\item Ether frame $\mathcal{E}$: the ether is at rest; the light propagates isotropically at speed $c$.
\item Laboratory frame $\mathcal{T}$: the interferometer is at rest; relative to $\mathcal{E}$ it moves with constant velocity $v$ along the $x$-axis.
\end{itemize}
A Galilean boost from $\mathcal{E}$ to $\mathcal{T}$ is
\begin{align}
x_{\mathcal{T}} = x_{\mathcal{E}} - v\,t,\qquad y_{\mathcal{T}}=y_{\mathcal{E}},\qquad t_{\mathcal{T}}=t_{\mathcal{E}},\qquad \mathbf{u}_{\mathcal{T}}=\mathbf{u}_{\mathcal{E}}-v\,\hat{\mathbf{x}}.
\end{align}

\begin{center}
\includegraphics[width=0.55\textwidth]{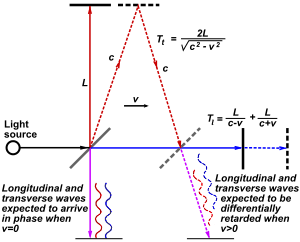}
\end{center}

Let each arm have length $L$. Two equivalent computations (w.r.t. the two frames) give the same result.

\emph{Longitudinal arm (parallel to $v$).} In $\mathcal{E}$ the mirrors and beam splitter translate at speed $v$. On the out-and-back path the light always moves at speed $c$ \emph{through the ether}, but the mirror is receding on the outbound leg and approaching on the return. The relevant relative speeds along the arm are $c-v$ (out) and $c+v$ (back), so
\begin{align}
T_{\parallel}=\frac{L}{c-v}+\frac{L}{c+v}=\frac{2Lc}{c^2-v^2}.
\end{align}
Equivalently, in $\mathcal{T}$, we can boost the light velocity by $\mathbf{u}_{\mathcal{T}}=\mathbf{u}_{\mathcal{E}}-v\hat{\mathbf{x}}$, which gives longitudinal speeds $c\mp v$.

\emph{Transverse arm (perpendicular to $v$).} In $\mathcal{E}$, since the mirror is moving, the light beam forms an angle (see the picture). Since $|\mathbf{u}_{\mathcal{E}}|=c$ and the longitudinal component has norm $v$, it must hold that the component along the arm has magnitude $\sqrt{c^2-v^2}$. Hence, the round-trip time is
\begin{align}
T_{\perp}=\frac{2L}{\sqrt{c^2-v^2}}.
\end{align}
On the other hand, in $\mathcal{T}$, the mirror is at rest, and the Galilean-boosted velocity has components $\mathbf{u}_{\mathcal{T}}=(u_x-v,\,u_y)$; choosing $u_x=v$ gives net longitudinal drift zero (the mirror is not moving) and an along-arm speed of $\sqrt{c^2-v^2}$.

\emph{Time difference and fringe shift.} The predicted time lag is
\begin{align}
\Delta T = T_{\parallel}-T_{\perp}
= \frac{2Lc}{c^2-v^2}-\frac{2L}{\sqrt{c^2-v^2}},
\end{align}
which for $v\ll c$ expands to $\Delta T \approx \frac{L}{c}\,\frac{v^2}{c^2}$. A time lag produces a phase lag; if the light wavelength is $\lambda$, the phase difference expressed in fringe units is
\begin{align}
N = \frac{\Delta\Phi}{2\pi}=\frac{\Delta T}{T_{\text{period}}}=\frac{c\Delta T}{\lambda}.
\end{align}
After a $90^\circ$ rotation, the two arms exchange roles, so the predicted magnitude of the fringe shift is
\begin{align}
|\Delta N|=\frac{2c\Delta T}{\lambda}
\approx \frac{2L}{\lambda}\frac{v^2}{c^2}.
\end{align}
\begin{remark}[Direction of $v$]
    Notice that we assumed that $v$ was in the direction of one of the arms, but one could simply do the calculation for a general direction making it dependent on an angle $\theta$.
\end{remark}
If one assumes that the velocity $v$ is given by the Earth's orbital velocity $v\approx 30\,\mathrm{km/s}$, the expected shift was of order $0.4$ fringes, upon a second measurement after a $90^\circ$ rotation of the apparatus.\\

\emph{The predicted shift of the interference fringes was not observed. This null result contradicted the prediction obtained from the Galilean ether model used above; the anticipated anisotropy of the round-trip speed of light was not seen.\footnote{There were, of course, attempts to preserve the notion of absolute space. It eventually became clear that insisting on this concept forces one to abandon other assumptions of the theory.}}\\

\begin{remark}[Cosmological perspective of the time]
The Earth's orbital speed $v\approx30\,\mathrm{km/s}$ provides a natural scale for estimating the expected effect, but it should not be identified automatically with the Earth's total velocity relative to a hypothetical ether. Within the Galilean picture, that velocity would also include any motion of the solar system relative to the ether. Using the orbital speed in an estimate therefore does not establish that the solar system was assumed to be at rest relative to the ether. Moreover, the Earth's orbital velocity changes direction over the year, so an unknown constant velocity of the solar system could not cancel it at all times.
\end{remark}

\section{The special theory of relativity}
We have reached a natural turning point. Starting from the Aristotelian picture, we adopted Galileo’s idealization and mathematization of motion, isolated dynamics in a single scalar functional (the action), derived the Euler–Lagrange equations, and identified the symmetries of the free particle; Maxwell’s equations then revealed light as a wave governed by precise field laws. The common thread is the search for structures that remain the same when we change our viewpoint. This motivates a unifying idea: \emph{the principle of relativity}, namely that the fundamental laws should have the same form for all admissible observers. In Newtonian kinematics, the admissible changes of viewpoint are the \emph{Galilean transformations} (spatial rotations and translations, time translations, and boosts with constant relative velocity)\footnote{Infinitesimally: $\vb x' = R\,\vb x + \vb v\,t + \vb a$, $t' = t + b$, with $R\in \mathrm{SO}(3)$ and constants $(\vb v,\vb a,b)$. These transformations map straight worldlines at constant velocity into straight worldlines at constant velocity and encode absolute time together with relative space.}. Operationally, Galileo’s ship thought experiment asserts the indistinguishability of uniform rectilinear motion from rest for purely mechanical phenomena, and this can be elevated to an axiom of Newtonian mechanics.
\begin{axiom*}[Galilean principle of relativity]
The laws of mechanics are equivalent in all inertial reference frames; or equivalently, their equations are invariant in form under Galilean transformations.
\end{axiom*}
Combined with the law of inertia, this axiom singles out an equivalence class of inertial frames and fixes the kinematics of classical mechanics: time is absolute (simultaneity is frame–independent), spatial distances at a given time can be compared across frames via Euclidean geometry, and velocities add linearly ($\vb u'=\vb u-\vb v$). Within this framework, a signal’s speed is not universal but frame–dependent; if light were an ordinary disturbance in a material medium, its measured speed would vary with the observer’s motion relative to that medium.

At this point, a tension emerges.  
Maxwell’s equations describe light as an electromagnetic wave propagating with a fixed velocity $c$, independent of the source.  
If one interprets this propagation as occurring in a material medium (the ``aether''), then the measured speed of light should depend on the observer’s velocity relative to that medium.  
Yet the Michelson–Morley experiment yielded a null result: no difference in the speed of light was observed along different directions of Earth’s motion.  
Attempts to reconcile these findings with the notion of an absolute space led to increasingly contrived hypotheses, ultimately undermining the very principle of relativity if kept in the Galilean framework.

As in Newtonian mechanics, the core of the problem lies in \emph{symmetry}.  
For a free particle, we saw that Galilean invariance dictated the admissible action and equations of motion.  
Now Maxwell’s theory forces us to ask: what is the correct symmetry underlying electromagnetism?  
If it were Galilean invariance, then the homogeneous wave equation should retain its form under Galilean boosts.  
We now show that this is not the case.

\begin{proposition}\label{prop:wave-not-galilean}
The scalar wave operator 
\begin{align}
\Box:=\nabla^{2}-\tfrac{1}{c^{2}}\partial_{t}^{2}
\end{align}
is not invariant under Galilean boosts.
\end{proposition}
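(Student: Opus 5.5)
To prove the statement, I will compute how the wave operator transforms under a Galilean boost by the chain rule, and exhibit an extra term that does not vanish. Without loss of generality I take the boost along the $x_1$-axis with velocity $v\neq 0$:
\begin{align}
x_1' = x_1 - v t,\qquad x_2'=x_2,\qquad x_3'=x_3,\qquad t'=t.
\end{align}
Given a smooth scalar field $\phi'(\mathbf x',t')$ in the boosted frame, I define $\phi(\mathbf x,t):=\phi'(x_1-vt,x_2,x_3,t)$. Invariance of $\Box$ would mean that $\Box\phi$, expressed in the primed variables, equals $\Box'\phi' := \nabla'^2\phi'-\tfrac{1}{c^2}\partial_{t'}^2\phi'$ for every such $\phi'$. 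This is the precise sense in which the equation would retain its form.

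\textbf{Step 1 (chain rule).} From the coordinate change I will read off the derivative relations
\begin{align}
\partial_{x_i}=\partial_{x_i'},\qquad \partial_t=\partial_{t'}-v\,\partial_{x_1'}.
\end{align}
The spatial Laplacian is therefore unchanged, $\nabla^2=\nabla'^2$. The time part becomes
\begin{align}
\partial_t^2=\partial_{t'}^2-2v\,\partial_{t'}\partial_{x_1'}+v^2\partial_{x_1'}^2,
\end{align}
using equality of mixed partials for smooth $\phi'$.

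\textbf{Step 2 (collect terms).} Combining the two pieces gives
\begin{align}
\Box=\Box'+\frac{2v}{c^2}\,\partial_{t'}\partial_{x_1'}-\frac{v^2}{c^2}\,\partial_{x_1'}^2 .
\end{align}

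\textbf{Step 3 (nonvanishing of the extra term).} To finish, I will exhibit one explicit function on which the extra operator is nonzero. A simple choice is $\phi'=x_1'^2$. For this function $\Box'\phi'=2$, while $\Box\phi=2-2v^2/c^2$, and these differ whenever $v\neq0$.

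Even better, I can take a solution of the wave equation in one frame that fails to be a solution in the other. The plane wave $\phi=\cos(k(x_1-ct))$ satisfies $\Box\phi=0$. In primed variables it reads $\phi'=\cos(k(x_1'-(c-v)t'))$, and then
\begin{align}
\Box'\phi' = -k^2\Big(1-\tfrac{(c-v)^2}{c^2}\Big)\phi' \neq 0 .
\end{align}
So the same physical wave obeys the wave equation in one frame and not in the other. This matches the Galilean velocity-addition picture, in which the wave speed becomes $c-v$.

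The only real subtlety is Step 1: keeping straight which variables are held fixed in each partial derivative, since $t'=t$ and yet $\partial_t\neq\partial_{t'}$. I will handle this by writing $\phi$ explicitly as a composition with the coordinate map, as above, so that the chain rule applies without ambiguity.
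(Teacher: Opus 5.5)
Your proposal is correct and follows the same route as the paper: the chain rule gives $\nabla=\nabla'$ and $\partial_t=\partial_{t'}-\mathbf v\cdot\nabla'$, hence the extra terms $\frac{2}{c^2}(\mathbf v\cdot\nabla')\partial_{t'}-\frac{1}{c^2}(\mathbf v\cdot\nabla')^2$, and your witness $\phi'=x_1'^2$ usefully makes explicit the nonvanishing of those terms, which the paper only asserts. One small correction: your supplementary plane-wave claim $\Box'\phi'\neq0$ fails for $v=2c$, since then $(c-v)^2=c^2$ and the right-moving wave becomes a left-moving wave of speed $c$, so it is the $x_1'^2$ example, valid for every $v\neq0$, that actually carries the proof.
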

\begin{proof}
Consider a Galilean boost with velocity $\vb v$.  
The coordinate change is
\begin{align}
\vb x'=\vb x-\vb v\,t, \qquad t'=t.
\end{align}
A scalar field transforms as $\phi'(\vb x',t')=\phi(\vb x,t)=\phi(\vb x'+\vb v t',t')$.  
By the chain rule, derivatives with respect to the unprimed coordinates can be expressed 
in terms of primed ones:
\begin{align}
\partial_{x_i}
&= \frac{\partial x'_j}{\partial x_i}\,\partial_{x'_j}
   +\frac{\partial t'}{\partial x_i}\,\partial_{t'}
= \delta_{ij}\,\partial_{x'_j}+0
= \partial_{x'_i}, \\[0.5em]
\partial_t
&= \frac{\partial x'_j}{\partial t}\,\partial_{x'_j}
   +\frac{\partial t'}{\partial t}\,\partial_{t'}
= (-v_j)\,\partial_{x'_j}+1\cdot\partial_{t'} \\
&= \partial_{t'} - \vb v\!\cdot\nabla'.
\end{align}
Thus
\begin{align}
\nabla = \nabla', \qquad \partial_{t}=\partial_{t'}-\vb v\!\cdot\nabla'.
\end{align}
Hence the wave operator becomes
\begin{align}
\Box \;=\; \nabla'^{2}-\frac{1}{c^{2}}\big(\partial_{t'}-\vb v\!\cdot\nabla'\big)^{2}
= \Big(\nabla'^{2}-\frac{1}{c^{2}}\partial_{t'}^{2}\Big)
+ \frac{2}{c^{2}}(\vb v\!\cdot\nabla')\,\partial_{t'}
-\frac{(\vb v\!\cdot\nabla')^{2}}{c^{2}}.
\end{align}
The extra terms vanish only if $\vb v=\vb 0$.  
Therefore $\Box$ does not retain its form under nontrivial Galilean boosts.
\end{proof}
This calculation reveals the crux: the symmetry of Maxwell’s equations is not the Galilean transformations.  
If we insist on Galilean relativity, the very structure of the wave operator would differ from one inertial frame to another, contradicting the empirical fact that the speed of light is measured to be the same for all observers.  Thus, we are led to abandon Galilean symmetry.
\begin{remark}
    As stressed in \cref{remsymmforc}, a fixed external field may break symmetries that are present in the complete theory. Electromagnetic fields can be treated as prescribed external fields when this is an appropriate approximation. However, in the complete description they have their own dynamics, governed by Maxwell’s equations, and interact with charged matter. The fundamental difficulty is therefore not the loss of symmetry caused by a particular external field, but the incompatibility between Galilean kinematics and Maxwell’s equations. We must develop a theory whose spacetime symmetry is coherent with both electrodynamics and the dynamics of matter, transforming the particles and the electromagnetic fields together.
\end{remark}
Einstein’s guiding idea is to take the constancy of the speed of light as a fundamental premise: in every inertial frame, light in vacuum propagates at speed $c$, independent of the source and of the observer.  
Together with the principle of relativity and the usual assumptions of spatial isotropy and spacetime homogeneity, this leads to three key kinematical effects: lack of absolute simultaneity, time dilation, and length contraction. In the following light--signal experiments, clocks within each frame are synchronized using Einstein’s convention. We also use the absence of a change in transverse lengths and assume that ideal clocks at rest in the same frame measure the same elapsed time, independently of their orientation or construction.\footnote{The absence of a transverse scale change can be justified using reciprocity and spatial isotropy. If transverse lengths are multiplied by a positive factor $k(v)$ when passing between two frames, exchanging the frames gives $k(v)k(-v)=1$, while isotropy gives $k(v)=k(-v)$; hence $k(v)=1$. These assumptions are additional to the constancy of the speed of light alone.}

We illustrate them with a boat–versus–shore setup.  
The boat is at rest in frame $\mathcal S'$, the shore defines frame $\mathcal S$, and the boat moves at constant velocity $v$ along the $X^1$--axis of $\mathcal S$.  

We identify three scenarios:
\begin{enumerate}
    \item \textbf{Relativity of simultaneity.}  
On the boat, two lamps at bow and stern (separated by the boat’s proper length $L_0$) are wired so that, at boat time $t'=0$, they emit simultaneously.  
A detector at the boat’s midpoint receives the two pulses together by symmetry in $\mathcal S'$.  
Consider the same process from the shore. Let $L$ be the boat’s length as judged in $\mathcal S$ (to be related to $L_0$ below).  
Denote by $t_{\rm bow}$ and $t_{\rm stern}$ the \emph{shore} emission times that lead the two pulses to reach the moving midpoint at the same shore instant. Set the reception event to $(t,x)=(0,0)$, so that the bow and stern follow $x_{\rm bow}(t)=vt+L/2$ and $x_{\rm stern}(t)=vt-L/2$.  
Because light has speed $c$ in $\mathcal S$, the backward–moving pulse (from the bow) and the forward–moving pulse (from the stern) must satisfy
\begin{align}
 \tfrac{L}{2}+v\,t_{\rm bow}\;=\;-c\,t_{\rm bow}
\quad\Rightarrow\quad
t_{\rm bow} \;=\; -\frac{L}{2(c+v)},\\
-\tfrac{L}{2}+v\,t_{\rm stern}\;=\;c\,t_{\rm stern}
\quad\Rightarrow\quad
t_{\rm stern} \;=\; -\frac{L}{2(c-v)}.
\end{align}
Therefore the shore time difference between the two \emph{emissions} is
\begin{align}
\Delta t \;=\; t_{\rm bow}-t_{\rm stern} 
\;=\; \frac{Lv}{c^{2}-v^{2}}
\;=\; \gamma^{2}\,\frac{v\,L}{c^{2}},
\end{align}
with $\gamma=(1-\frac{v^2}{c^2})^{-1/2}$. Thus, events arranged to be simultaneous on the boat (the two emissions) are not simultaneous on the shore.

\item \textbf{Time dilation.}  
Consider on the boat a \emph{light clock}: two mirrors a vertical distance $h$ apart.  
In $\mathcal S'$ one tick (up and down) takes the proper time
\begin{align}
\Delta t' \;=\; \frac{2h}{c}.
\end{align}
In $\mathcal S$ the apparatus drifts horizontally, so the light follows a longer zig–zag.  
For the half–tick, with horizontal drift $v\,\Delta t_{1/2}$ and vertical leg $h$,
\begin{align}
(c\,\Delta t_{1/2})^{2} \;=\; (v\,\Delta t_{1/2})^{2}+h^{2}
\quad\Rightarrow\quad
\Delta t_{1/2} \;=\; \frac{h}{\sqrt{c^{2}-v^{2}}} \;=\; \gamma\,\frac{h}{c}.
\end{align}
Hence, the full tick lasts
\begin{align}
\Delta t \;=\; 2\,\Delta t_{1/2} \;=\; \gamma\,\Delta t'.
\end{align}
A moving clock runs slower:
\begin{align}
\boxed{\ \Delta t \;=\; \gamma\,\Delta t'.\ }
\end{align}

\item \textbf{Length contraction.}  
Let $L_0$ be the boat’s \emph{proper} length (distance between stern and bow measured on the boat).  
In $\mathcal S'$ send a light pulse from stern to bow and back; the round–trip proper time is
\begin{align}
\Delta t' \;=\; \frac{2L_0}{c}.
\end{align}
In $\mathcal S$, the forward leg takes $t_{\rightarrow}=(L+vt_{\rightarrow})/c$ (the bow recedes), which gives $t_{\rightarrow}=L/(c-v)$, and, similarly, the return leg takes $t_{\leftarrow}=L/(c+v)$ (the stern approaches), so
\begin{align}
T \;=\; t_{\rightarrow}+t_{\leftarrow}
\;=\; \frac{L}{c-v}+\frac{L}{c+v}
\;=\; \frac{2Lc}{c^{2}-v^{2}}
\;=\; \frac{2L}{c}\,\gamma^{2}.
\end{align}
But the two events “emit at stern” and “receive at stern” are recorded by the \emph{same} boat clock, so the between them measured from the shore must be dilated by
\begin{align}
T \;=\; \gamma\,T' \;=\; \gamma\,\frac{2L_0}{c}.
\end{align}
Equating the two expressions gives
\begin{align}
\boxed{\ L \;=\; \frac{L_0}{\gamma},\ }
\end{align}
i.e.\ moving rods are shorter along the direction of motion by the factor $1/\gamma$.
\end{enumerate}
\textit{Summary.} Under the assumptions stated above, elementary light–signal experiments give
\begin{align}
&\text{Relativity of simultaneity: } 
\Delta t \;=\; \gamma^{2}\,\frac{v\,L}{c^{2}},
\quad\hfill\\
&\text{Time dilation: } \Delta t \;=\; \gamma\,\Delta t',\\
&\text{Length contraction: } L \;=\; \dfrac{L_0}{\gamma}.
\end{align}
The principle of relativity must be reformulated on a new foundation—one that treats the speed of light as invariant and ensures that the wave equation, or better Maxwell’s equations, retain the same form in all inertial frames. This step leads directly to Einstein’s special theory of relativity.
\begin{axiom*}[Einstein’s principle of relativity]
The laws of mechanics and electrodynamics are equivalent in all inertial reference frames; or equivalently, their equations are invariant in form under Poincaré transformations.
\end{axiom*}

\begin{remark}
    Notice that, with regard to \cref{remsymmforc}, the complete system of relativistic particles and electromagnetic fields possesses Poincaré symmetry when all dynamical variables are transformed together. A particular fixed electromagnetic background may have fewer symmetries. Therefore, the special theory of relativity not only described the space--time symmetries by means of different transformations, but also placed mechanics and electrodynamics within a common spacetime symmetry framework.
\end{remark}

In Newtonian mechanics, space and time are regarded as separate: three coordinates $x,y,z$ describe position, and a universal time $t$ flows independently.  

The three thought experiments above show that this separation cannot be maintained.  
Time dilation and length contraction reveal that temporal intervals and spatial lengths are not absolute but depend on the state of motion of the observer.  
The relativity of simultaneity demonstrates that space and time are not merely relative, but fundamentally intertwined: whether two events occur at the same time depends on the frame of reference.  

Taken together, these effects compel us to treat space and time as aspects of a four--dimensional object, which we shall call \emph{Minkowski spacetime}.
\begin{definition}[Affine space]
Let $V$ be a vector space over a field $\mathbb{K}$.  
An \emph{affine space} modeled on $V$ is a set $A$ together with an operation
\begin{align}
(p,v) \ &\longmapsto\ p+v\in A , \qquad p\in A,\ v\in V ,
\end{align}
such that:

1. For every $p\in A$ and every $v,w\in V$,
\begin{align}
(p+v)+w &= p+(v+w).
\end{align}

2. For every $p\in A$,
\begin{align}
p+0 &= p .
\end{align}

3. For every $p,q\in A$ there exists a unique $v\in V$ such that
\begin{align}
q &= p+v .
\end{align}

The vector $v$ is denoted by
\begin{align}
v &= q-p ,
\end{align}
and is called the \emph{displacement} from $p$ to $q$.  
\end{definition}
\begin{remark}
    Thus, points of $A$ are not vectors themselves, but differences of points define vectors in $V$.  
Choosing an origin $p_{0}\in A$ identifies $A$ with $V$ via
\begin{align}
p &\longmapsto p-p_{0},
\end{align}
though this identification depends on the choice of origin.
\end{remark}
\begin{definition}[Minkowski spacetime]\label{defmink}
Minkowski spacetime is a structure $(\mathcal M, \mathcal Q)$ where 
$\mathcal M$ is a four--dimensional affine space modeled on a vector space 
$V\cong\mathbb R^4$, whose points are called \emph{events}.  For any two events $p,q\in\mathcal M$, the displacement $X=p-q\in V$ is called 
a \emph{four--vector}.
 
$\mathcal Q$ is a quadratic form\footnote{A quadratic form on a real 
$n$-dimensional vector space $V$ is a map $Q:V\to\mathbb{R}$ such that, in a basis of $V$, 
$Q(v)=v^{T} B v$ for some symmetric $n\times n$ matrix $B$ w.r.t. the chosen basis. More precisely, a quadratic form is defined via a symmetric bilinear form $B$, such that $Q(v)=B(v,v)$.} defined on $V$. In a suitable basis of $V$, where its associated matrix is $\eta=\mathrm{diag}(-1,1,1,1)$,
it takes the canonical form\footnote{By Sylvester's law of inertia, every real quadratic form $Q$ on a 
finite--dimensional vector space $V$ can be represented, in a suitable basis, 
by a symmetric matrix $B$. Any two such matrices associated to $Q$ are 
congruent, i.e.\ related by $B' = P^T B P$ for some invertible matrix $P$. 
Sylvester's theorem states that there exists a basis in which $B$ is diagonal 
with entries $\pm 1$ (or $=0$), and that the numbers of positive and negative entries 
are invariants of $Q$. In the Minkowski case, $Q$ is chosen to have signature 
$(3,1)$, i.e.\ three positive and one negative directions, so that in a 
suitable basis its associated matrix is $\eta=\mathrm{diag}(-1,1,1,1)$.}
\begin{align}
\mathcal Q(X) = (X^1)^2+(X^2)^2+(X^3)^2-(X^0)^2
\qquad \forall \,X\in V,
\end{align}
with $X=(X^0,X^1,X^2,X^3)$.
\end{definition}
\begin{remark}
It is crucial that $\mathcal M$ is an affine space: we cannot add or scale events in a meaningful way, but we can take differences of events.  
These differences form vectors in an associated four--dimensional vector space $V\cong\mathbb{R}^4$.\\
In the physics literature it is customary to denote a four-vector with an upper index, writing
\begin{align}
X^\mu = (X^0,X^1,X^2,X^3),
\end{align}
where usually $X^i$ ($i=1,2,3$) denotes the spatial components.
\end{remark}
\begin{remark}[Coordinates in Minkowski spacetime]
Once a basis $\{e_\mu\}$ for $V$ is fixed, every four--vector $X\in V$ can be written as
\begin{align}
    X &= X^\mu e_\mu ,
\end{align}
and the tuple $(X^0,X^1,X^2,X^3)$ are the \emph{components} of $X$ in that basis.\\
If one also chooses an origin $O\in\mathcal M$, it is possible to define \emph{coordinates}
for events $p\in\mathcal M$. Each event $p$ can then be written uniquely as
\begin{align}
    p &= O + X, \qquad X\in V ,
\end{align}
and the components of the displacement/four--vector $X$ relative to $(O,\{e_\mu\})$ are identified with the coordinates of $p$. In particular, if we choose the \emph{canonical basis},\footnote{Notice that, in this basis, the quadratic form is represented by the canonical matrix 
$\eta=\mathrm{diag}(-1,1,1,1)$. A general basis in which $\eta$ takes this form is called an 
$\eta$--\emph{orthonormal basis} (see next definition).} we conventionally denote the components as
\begin{align}
    (X^0,X^1,X^2,X^3) &= (ct,x,y,z),
\end{align}
with $X^0=ct$ interpreted as the temporal part and $(X^1,X^2,X^3)=(x,y,z)$ as the spatial parts.  
In these coordinates, the quadratic form reads
\begin{align}
    \mathcal Q(X) &= x^2+y^2+z^2-c^2t^2.
\end{align}
The coordinates of $p$ with respect to the canonical basis are called \emph{canonical coordinates}.
\end{remark}
\begin{definition}[Orthonormal basis]\label{etaortbas}
    Every basis of $V$ for which 
    \begin{align}
        \eta = \mathrm{diag}(-1,1,1,1)
    \end{align}
    is called an $\eta$--\emph{orthonormal basis}.
\end{definition}
\begin{definition}[Reference frame]\label{defreffram}
     A \emph{reference frame} for $\mathcal M$ is the choice of a point in $\mathcal M$ together with $\eta$-orthonormal basis of $V$.
\end{definition}
\begin{remark}
    Notice that the concept of coordinates is more general that the one of a reference frame. In fact, in all reference frames one has coordinates, but not all coordinates come from a reference frame, since general ones might be with respect to a non $\eta$-orthonormal basis of $V$.
\end{remark}
The following result lies at the heart of Special Relativity. It expresses the fundamental symmetry of Minkowski spacetime and shows that any two $\eta$--orthonormal bases are related by a Lorentz transformation.
\begin{definition}[Lorentz transformations]\label{lorinv}
A \emph{Lorentz transformation} is an invertible linear map
\begin{align}
\Lambda:V \to V
\end{align}
that preserves the Minkowski quadratic form $\mathcal Q$, i.e.
\begin{align}
\mathcal Q(\Lambda X)=\mathcal Q(X)\qquad \text{for all } X\in V.
\end{align}
Equivalently, in an $\eta$--orthonormal basis where $\mathcal Q$ is represented by the 
matrix $\eta=\mathrm{diag}(-1,1,1,1)$, this condition reads
\begin{align}
\Lambda^{T}\eta \Lambda = \eta.
\end{align}
\end{definition}
\begin{proposition}\label{prop:lorentz-with-PT}
The linear maps on four--vectors that preserve the Minkowski quadratic form are exactly those obtainable as finite compositions of:
\begin{itemize}
\item \textbf{Spatial rotations}:
\begin{align}
ct'&=ct,\qquad
\begin{pmatrix}x'\\y'\\z'\end{pmatrix}=R\begin{pmatrix}x\\y\\z\end{pmatrix},
\end{align}
where $R$ is a $3\times 3$ real orthogonal matrix with $\det R = +1$.
\item \textbf{Boosts} (e.g.\ along $x$):
\begin{align}
ct'&=\gamma\,(ct-\beta x),\qquad
x'=\gamma\,(x-\beta ct),\qquad
y'=y,\ z'=z,\quad \gamma=(1-\beta^2)^{-1/2};
\end{align}
\item \textbf{Parity} $P$ (space inversion):
\begin{align}
ct'&=ct,\qquad x'=-x,\ y'=-y,\ z'=-z;
\end{align}
\item \textbf{Time reversal} $T$:
\begin{align}
ct'&=-ct,\qquad x'=x,\ y'=y,\ z'=z.
\end{align}
\end{itemize}
\end{proposition}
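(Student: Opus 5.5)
The statement has two directions. The easy one is that each listed generator preserves $\mathcal Q$. Since the set of $\mathcal Q$-preserving maps is closed under composition, this gives one inclusion. For rotations, $P$ and $T$ it is immediate, because $x^2+y^2+z^2$ and $(ct)^2$ are each preserved separately. For the boost along $x$ I would compute
\begin{align}
\gamma^2(x-\beta ct)^2-\gamma^2(ct-\beta x)^2=\gamma^2(1-\beta^2)\,(x^2-c^2t^2)=x^2-c^2t^2 .
\end{align}
For boosts in a general direction, I would conjugate the $x$-boost by a rotation. Equivalently, I would adopt the convention that ``boost'' means $R B_x(\beta) R^{-1}$, so these are already compositions of listed maps.

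The substantive direction is the converse: every $\Lambda$ with $\Lambda^T\eta\Lambda=\eta$ is such a composition. I plan a reduction argument in an $\eta$-orthonormal basis.

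\textbf{Step 1 (normalize the temporal column).} Let $u=\Lambda e_0$. Then $\mathcal Q(u)=-1$, so $(u^0)^2=1+|\vec u|^2\ge 1$. If $u^0<0$, replace $\Lambda$ by $T\Lambda$, so that $u^0\ge1$.

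\textbf{Step 2 (rotate and boost).} Choose a rotation $R$ with $R\vec u=|\vec u|\,\hat{\mathbf x}$. Then set $\beta=|\vec u|/u^0$, so that $|\beta|<1$ and $\gamma=u^0$, and apply the boost $B_x(\beta)$. After this, $B_x(\beta)R\,u=e_0$. It follows that $\Lambda_1:=B_x(\beta)\,R\,(T^{\epsilon}\Lambda)$ fixes $e_0$.

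\textbf{Step 3 (the residual map).} Since $\Lambda_1$ preserves the bilinear form $\eta$ and fixes $e_0$, it maps the $\eta$-orthogonal complement $e_0^\perp=\mathrm{span}(e_1,e_2,e_3)$ to itself. On that complement $\eta$ is Euclidean, so $\Lambda_1$ has the form $\mathrm{diag}(1,O)$ with $O\in \mathrm{O}(3)$. If $\det O=-1$, compose with $P$, which acts as $-I_3$ in space; then $-O\in \mathrm{SO}(3)$, which is a rotation. Inverting everything, and using that the inverses of rotations, boosts, $P$ and $T$ are again maps of the same type ($B_x(\beta)^{-1}=B_x(-\beta)$), expresses $\Lambda$ as a finite composition of the listed maps.

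The step I expect to require the most care is Step 2. One must check that $B_x(\beta)$ sends $(u^0,|\vec u|,0,0)$ to $(1,0,0,0)$ with the correct sign of $\beta$. This reduces to $\gamma(u^0-\beta|\vec u|)=1$ and $\gamma(|\vec u|-\beta u^0)=0$, and both follow from $(u^0)^2-|\vec u|^2=1$. The case $\vec u=0$ must be handled separately, by taking $R$ and the boost to be the identity.

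A second point to justify is the invariance of the orthogonal complement used in Step 3. This follows from polarization: $\mathcal Q$ determines the bilinear form, via $B(X,Y)=\tfrac12(\mathcal Q(X+Y)-\mathcal Q(X)-\mathcal Q(Y))$, so $\Lambda$ preserves $B$ and hence orthogonality to $e_0$.
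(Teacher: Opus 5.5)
Your proof is correct and complete. Polarization gives $\Lambda^{T}\eta\Lambda=\eta$. After Step 1 you have $u^0\ge 1$, so $|\beta|<1$ and $\gamma=u^0$, and then $B_x(\beta)(u^0,|\vec u|,0,0)=\big((u^0)^2-|\vec u|^2,0,0,0\big)=e_0$. The stabilizer of $e_0$ is $\{\mathrm{diag}(1,O):O\in\mathrm{O}(3)\}$. Since $\det(-O)=-\det O$ in dimension three, composing with $P$ absorbs the orientation-reversing case, i.e.\ $\mathrm{O}(3)=\mathrm{SO}(3)\cup P\,\mathrm{SO}(3)$.

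The paper does not prove this proposition; its proof is ``left to future chapters'', so there is no argument of theirs to match. Your Steps 1--2, however, are exactly the construction the paper carries out in the proof of \cref{exrestfr}, here applied to the timelike vector $\Lambda e_0$. That construction rotates the spatial part onto the $x$-axis, boosts with $\beta=\|\vec X\|/X^0$, and uses a time reversal to fix the sign of the temporal component. Your Step 3 is the genuinely new ingredient: identifying the residual map with an element of $\mathrm{O}(3)$ is what turns the rest-frame lemma into a generation statement for the whole group.

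One caution if you wanted to shorten Step 2 by citing \cref{exrestfr}. Its \emph{statement} only asserts the existence of some Lorentz transformation, so invoking it would be circular for this purpose. What you need is the explicit construction in its \emph{proof}, which uses only the listed generators. That is precisely what you wrote out by hand.
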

\begin{proof}
Left to future chapters.
\end{proof}
\begin{remark}[Lorentz transformations as hyperbolic rotations]
As mentioned in the proof, Lorentz transformations can be viewed as a combination of 
ordinary and \emph{hyperbolic rotations} in four dimensions.  
Spatial rotations act as ordinary circular rotations on the three spatial components, 
preserving the Euclidean length $x^2+y^2+z^2$.  
Boosts, instead, are hyperbolic rotations in the $(ct,x^i)$--plane (with $i=1,2,3$), 
preserving the difference $-(ct)^2+(x^i)^2$.
\end{remark}
\begin{definition}[Proper and orthochronous]
A Lorentz transformation $\Lambda$ is called
\begin{itemize}
    \item Proper:\footnote{This condition means that the four--dimensional orientation is preserved. It does not imply preservation of time orientation: the composition $PT=-I$ is proper, but reverses time orientation and the orientation of the spatial subspace separately.} $\det \Lambda = +1$;
    \item Improper: $\det \Lambda = -1$;
    \item Orthochronous: $\Lambda_0^0 > 0$;
    \item Non--orthochronous: $\Lambda_0^0 < 0$.
    \end{itemize}
The \emph{proper orthochronous} ones are indeed both proper and orthochronous.
\end{definition}
\begin{definition}[Poincaré transformations]
Recall that Lorentz transformations are linear maps acting on the vector space $V\cong\mathbb R^4$ of four--vectors, preserving the Minkowski interval.  
The \emph{Poincaré transformations} are affine transformations acting on Minkowski spacetime $\mathcal M$: they consist of a Lorentz transformation on the displacement vectors, together with a translation of events.

Concretely, if $p\in\mathcal M$ is an event, a Poincaré transformation sends
\begin{align}
p \;\mapsto\; p' = p_0 + a + \Lambda (p-p_0),
\end{align}
where $a\in V$ is a translation vector, $\Lambda:V\to V$ a Lorentz transformation, and $p_0\in\mathcal M$ an arbitrary origin chosen to identify $\mathcal M$ with $V$.  
A difference choice $p'_0$ is related to the old one via a Poincaré transformation. In fact, this would simply result in a translation, i.e. a Poincaré transformation, since $p'_0+\Lambda (p-p'_0)=p'_0+\Lambda(p_0-p'_0)+\Lambda (p-p_0)$.
\end{definition}
\begin{remark}
    By definition, Poincaré transformations preserve the orthonormality of four--vectors in $V$ with respect to $\eta$, and hence the structure of a reference frame defined in \cref{defreffram}. 
    Therefore, they provide the natural spacetime transformations relating different reference frames, and they are the natural symmetry transformations of special relativity, extending the Galilean transformations of Newtonian mechanics.
\end{remark}
\begin{definition}[Active and passive transformations]
    Let $v\in V$ be a four--vector.
    \begin{itemize}
        \item A Lorentz \emph{passive transformation} consists in replacing the basis vectors $e_\mu$ by $e'_\mu=\Lambda(e_\mu)$, where $\Lambda:V\to V$ is a Lorentz transformation. If $\Lambda^\nu{}_\mu$ are its matrix entries in the original basis, this explicitly reads\footnote{We will define the meaning of the position of the indices more rigorously in the next chapters.}
        \begin{align}
            e_\mu \;\mapsto\; e'_\mu = e_\nu\,\Lambda^\nu{}_\mu.
        \end{align}
        Notice that we are just changing the basis in which we represent $v$, so the vector itself, as an element of $V$, remains unchanged: $v = v^\mu e_\mu = v'^\mu e'_\mu$. Therefore, the components of the vector in the new basis are obtained by the inverse matrix:
        \begin{align}
            v'^\mu= (\Lambda^{-1})^\mu{}_\nu v^\nu.
        \end{align}
        \item A Lorentz \emph{active transformation} consists in transforming the vector itself while keeping the basis fixed,
        \begin{align}
            v \;\mapsto\; v' = \Lambda v,
            \qquad
            v'^\mu = \Lambda^\mu{}_\nu v^\nu .
        \end{align}
    \end{itemize}
    Note that a Poincaré transformation in Minkowski spacetime induces a Lorentz active transformation on four--vectors.
\end{definition}
\begin{remark}
    For those familiar with the notions of covariant and contravariant vectors, let us stress that a passive transformation here simply means a change of basis by means of a linear map --- in this case restricted to be a Lorentz transformation. It should not be confused with the transformation law of a covariant vector under a change of coordinates.
\end{remark}
The Minkowski quadratic form defines a pseudo-inner product.\footnote{Technically, on a real vector space $V$, every quadratic form $Q$ determines a unique 
symmetric bilinear form $B$ (expressed by an associated symmetric matrix on a certain basis) via the polarization identity 
$B(u,v)=\tfrac12\big(Q(u+v)-Q(u)-Q(v)\big)$, so that $Q(v)=B(v,v)$. 
Conversely, every symmetric bilinear form defines a quadratic form in this way. Moreover,
if $B$ is positive definite, it is an (Euclidean) inner product; if $B$ has mixed 
signature $(p,q)$ with $p,q>0$, it is an indefinite (pseudo--)inner product---for 
instance, in Minkowski spacetime $B$ has Lorentzian signature $(3,1)$. 
We will return to these notions later on.}
\begin{definition}[Minkowski inner product and norm]\label{minkscpr}
On $V\cong\mathbb{R}^{4}$, for every two four--vectors 
$X=(X^{0},X^{1},X^{2},X^{3})$ and $Y=(Y^{0},Y^{1},Y^{2},Y^{3})$, we define 
their \emph{Minkowski inner product}. In an $\eta$--orthonormal basis of $V$, 
where the quadratic form $\mathcal Q$ is represented by the matrix 
$\eta=\mathrm{diag}(-1,1,1,1)$, it is given by\footnote{We recall Einstein's 
summation convention: any index repeated once up and once down is implicitly 
summed over its range.}
\begin{align}
\langle X,Y\rangle := \eta_{\mu\nu}\,X^{\mu}Y^{\nu}
= -X^{0}Y^{0}+X^{1}Y^{1}+X^{2}Y^{2}+X^{3}Y^{3}.
\end{align}
This rule is bilinear (linear in each entry) and symmetric 
($\langle X,Y\rangle=\langle Y,X\rangle$), just like the Euclidean dot product, 
but it is not positive definite.\\
For a vector $X$, the associated \emph{Minkowski norm} is given by
\begin{align}
\|X\|^{2}:=\langle X,X\rangle.
\end{align}
\end{definition}
\begin{remark}
    In this setting, the invariance property stated in \cref{lorinv} reads that $\langle X,Y\rangle$ is preserved by Lorentz transformations, i.e.\ by linear maps $L:V\to V$ such that
    \begin{align}
    \langle LX,LY\rangle = \langle X,Y\rangle, \qquad \forall\,X,Y\in V.
    \end{align}
\end{remark}
\begin{remark}[The light-cone]
The Minkowski inner product induces a natural classification of four--vectors, and hence of the possible relations between events in spacetime.  
Given $X\in V$ we say:
\begin{itemize}
\item $X$ is \emph{timelike} if $\|X\|^{2}<0$,
\item $X$ is \emph{spacelike} if $\|X\|^{2}>0$,
\item $X$ is \emph{lightlike (or null)} if $\|X\|^{2}=0$.
\end{itemize}
The set of all null vectors through the origin forms a double cone, called the \emph{light cone}.  
Vectors inside the cone are timelike, and represent displacements that a massive particle could follow.  
Vectors outside the cone are spacelike, corresponding to displacements that would require faster--than--light travel. 
\begin{figure}[h]
\centering
\includegraphics[width=0.55\textwidth]{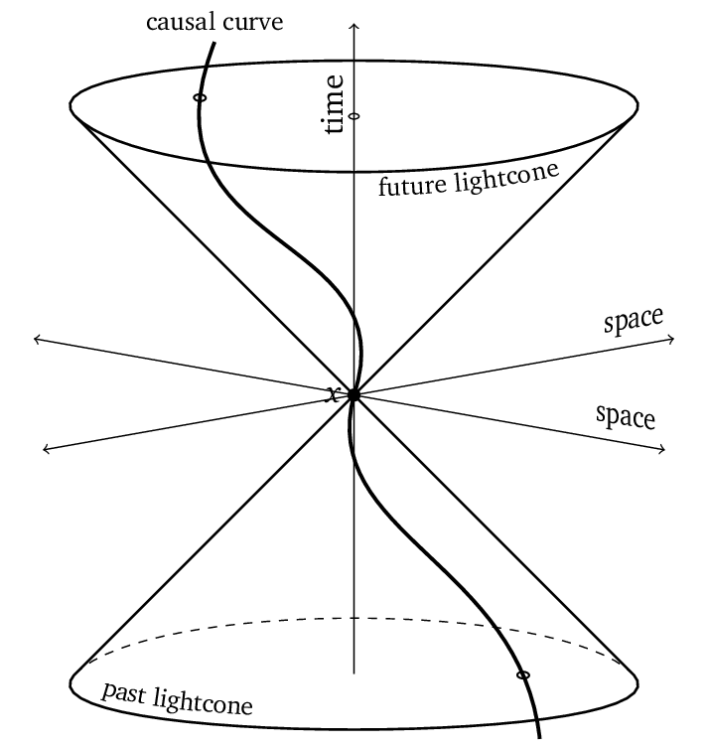}
\caption{Light cone structure in Minkowski spacetime. 
The future cone contains all possible influences of the event at the apex, the past cone all possible causes. 
Timelike curves remain within the cones, while spacelike directions lie outside. Source: Journal of Futures Studies, September 2021, 26(1): 83–90.}
\end{figure}
\end{remark}
\begin{lemma}[Existence of the rest frame]\label{exrestfr}
    Let $X \in V$. Then, according to the sign of $\|X\|^{2}$, there exists a Lorentz transformation $\Lambda$ such that $\Lambda X$ takes one of the following forms:
    \begin{itemize}
        \item If $X$ is timelike, $\|X\|^{2}<0$, then
        \begin{align}
            \Lambda X = \big(\sqrt{-\|X\|^{2}},\,0,\,0,\,0\big).
        \end{align}
        (If one restricts to the proper and orthochronous Lorentz transformations, the first component becomes $\operatorname{sgn}(X^0)\sqrt{-\|X\|^{2}}$.)

        \item If $X$ is spacelike, $\|X\|^{2}>0$, then
        \begin{align}
            \Lambda X = \big(0,\sqrt{\|X\|^{2}},\,0,\,0\big).
        \end{align}

        \item If $X$ is null, $\|X\|^{2}=0$ and $X\neq 0$, then
        \begin{align}
            \Lambda X = (\alpha,\,\alpha,\,0,\,0), \qquad \alpha\neq 0.
        \end{align}
        (Notice that the sign of $\alpha$ is determined by whether the vector is future-- or past--directed. In the case of proper and orthochronous Lorentz transformations, this time orientation cannot be changed.)
    \end{itemize}
    In particular, every timelike vector admits a rest frame, i.e.\ a Lorentz frame in which it is purely temporal.
\end{lemma}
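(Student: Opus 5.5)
We work throughout in a fixed $\eta$--orthonormal basis, so that $\|X\|^2=-(X^0)^2+|\mathbf X|^2$ with $\mathbf X=(X^1,X^2,X^3)$. The plan is to construct $\Lambda$ as a composition of the generators listed in \cref{prop:lorentz-with-PT}: first a spatial rotation, then a boost along $x$, and finally, only if needed, parity or time reversal. Each of these preserves $\mathcal Q$, hence so does their composition. For the parenthetical claims we track which factors are proper and orthochronous: rotations with $\det R=1$ and boosts with $|\beta|<1$ are both, since $\gamma>0$.

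\textbf{Step 1 (align the spatial part).} If $\mathbf X\neq 0$, choose $R\in\mathrm{SO}(3)$ with $R\mathbf X=(|\mathbf X|,0,0)$. Such an $R$ exists because $\mathrm{SO}(3)$ acts transitively on spheres. Concretely, one completes $\mathbf X/|\mathbf X|$ to a positively oriented orthonormal basis and takes $R$ to be its transpose. After this step $X=(X^0,r,0,0)$ with $r=|\mathbf X|\ge 0$. If $\mathbf X=0$ we skip the step.

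\textbf{Step 2 (boost in the $(ct,x)$--plane).} A boost gives $X'^0=\gamma(X^0-\beta r)$ and $X'^1=\gamma(r-\beta X^0)$.
\begin{itemize}
\item \emph{Timelike case.} Here $|X^0|>r$, so $\beta:=r/X^0$ satisfies $|\beta|<1$. This gives $X'^1=0$. Invariance of $\mathcal Q$ then forces $X'^0=\pm\sqrt{-\|X\|^2}$, and a direct check shows the sign equals $\operatorname{sgn}X^0$, because $\gamma(X^0-r^2/X^0)=\gamma\,(X^0)^{-1}((X^0)^2-r^2)$. If the sign is negative, composing with $T$ yields the stated form.
\item \emph{Spacelike case.} Here $r>|X^0|$, so $\beta:=X^0/r$ satisfies $|\beta|<1$. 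This gives $X'^0=0$, and invariance gives $X'^1=\pm\sqrt{\|X\|^2}$. The sign is $+$ since $r>0$; otherwise we could compose with a rotation by $\pi$ about the $z$-axis.
\item \emph{Null case, $X\neq0$.} Here $r=|X^0|>0$. No boost is needed at all, since after Step 1 we already have $(X^0,|X^0|,0,0)$. If $X^0<0$, we apply the rotation by $\pi$ about $z$ to obtain $(X^0,X^0,0,0)$, so $\alpha=X^0$. Alternatively, a boost can rescale $\alpha$ by the Doppler factor $\sqrt{(1-\beta)/(1+\beta)}$.
\end{itemize}

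\textbf{Step 3 (parenthetical statements).} Orthochronous transformations preserve the sign of the time component of timelike and null vectors. This is the main point requiring an argument, though it is short. If $\Lambda$ is orthochronous, then $\Lambda e_0$ is a future timelike vector. For $X$ timelike or null with $X^0>0$, the reverse Cauchy--Schwarz-type estimate $|X^0\Lambda^0{}_0|>|\sum_i\Lambda^i{}_0\ldots|$ shows that $(\Lambda X)^0$ has the same sign as $X^0$. More elementarily, we can just observe that our construction used only $\mathrm{SO}(3)$ rotations and boosts, and each of these visibly preserves $\operatorname{sgn}X^0$ in the computations above. The only step expected to need care is this sign bookkeeping. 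The final sentence of the lemma, existence of a rest frame, follows from the timelike case by taking the new frame $\{\Lambda^{-1}e_\mu\}$.
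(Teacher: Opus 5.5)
Your proof is correct and follows the paper's argument essentially step for step: rotate $\mathbf X$ onto the $x$-axis, boost with $\beta=r/X^0$ (timelike) or $\beta=X^0/r$ (spacelike), and finish with $T$ when $X^0<0$ in the timelike case or with a rotation in the null case. Your explicit sign checks make the proper orthochronous parentheticals slightly more transparent than the paper does. Step 3 goes beyond what the paper proves, and the estimate there uses the wrong entries of $\Lambda$. If you keep it, it must bound $\sum_i\Lambda^0{}_iX^i$, the row of $\Lambda$, using $(\Lambda^0{}_0)^2=1+\sum_i(\Lambda^0{}_i)^2$ from $\Lambda\eta\Lambda^{T}=\eta$, rather than a sum over the column entries $\Lambda^i{}_0$.
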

\begin{proof}
Write $X=(X^{0},X^{1},X^{2},X^{3})$ and 
\begin{align}
    \|X\|^{2}=-(X^{0})^{2}+(X^{1})^{2}+(X^{2})^{2}+(X^{3})^{2}.
\end{align}
If $\vec X=(X^{1},X^{2},X^{3})\neq 0$, we can perform a spatial rotation that sends $\vec X$ to $(\|\vec X\|,0,0)$.\footnote{Spatial rotations act transitively on the set of spatial vectors with the same Euclidean length.}
Being part of Lorentz transformations, such a rotation does not change the value of $\|X\|^{2}$.  
Hence, without loss of generality, we may assume
\begin{align}
    X=(X^{0},\,\|\vec X\|,\,0,\,0).
\end{align}
A Lorentz boost, for a real parameter $\beta$ with $|\beta|<1$, acts as
\begin{align}
    X'^{0} &= \gamma\,(X^{0}-\beta\,\|\vec X\|),\\
    X'^{1} &= \gamma\,(\|\vec X\|-\beta\,X^{0}),\\
    X'^{2} &= 0,\qquad X'^{3}=0,
\end{align}
where $\gamma=(1-\beta^{2})^{-1/2}$. We know that a boost preserves the Minkowski norm, therefore it holds $\|X'\|^{2}=\|X\|^{2}$.

\smallskip
\noindent\textit{Timelike case} ($\|X\|^{2}<0$).  
Here $|X^{0}|>\|\vec X\|$. For $\|\vec X\|=0$, it is already in the form of the statement. In the other case, choosing $\beta=\|\vec X\|/X^{0}$ gives $X'^{1}=0$ and
\begin{align}
    X'^{0}=\operatorname{sgn}(X^{0})\sqrt{(X^{0})^{2}-\|\vec X\|^{2}}
    =\operatorname{sgn}(X^{0})\sqrt{-\|X\|^{2}}.
\end{align}
With a time reversal operation, obtain the claimed form.

\smallskip
\noindent\textit{Spacelike case} ($\|X\|^{2}>0$).  
Here $\|\vec X\|>|X^{0}|$. Apart from the trivial case where $X^0=0$, choosing $\beta=X^{0}/\|\vec X\|$ gives $X'^{0}=0$ and
\begin{align}
    X'^{1}=\sqrt{\|\vec X\|^{2}-(X^{0})^{2}}=\sqrt{\|X\|^{2}}.
\end{align}
Thus $X'=(0,\sqrt{\|X\|^{2}},0,0)$.

\smallskip
\noindent\textit{Null case} ($\|X\|^{2}=0$, $X\neq 0$).  
Then $\|\vec X\|=|X^{0}|$, so we have
\begin{align}
    X=(\alpha,\pm\alpha,0,0), \qquad \alpha\neq 0.
\end{align}
Then, we just apply a rotation, if necessary, to bring it in the desired form.
\end{proof}
The following result shows that a space--like separation violates causality.
\begin{proposition}[Spacelike separations admit time-order reversal]\label{prop:spacelike-reversal}
Let $A$ and $B$ be two events with spacelike separation $\Delta X=(c\,\Delta t,\Delta x,\Delta y,\Delta z)$, i.e.
\begin{align}
\|\Delta X\|^{2} = -c^{2}(\Delta t)^{2} + (\Delta x)^{2} + (\Delta y)^{2} + (\Delta z)^{2} > 0.
\end{align}
Then there exists an inertial frame (obtained by a Lorentz boost) in which the time order of $A$ and $B$ is reversed.
\end{proposition}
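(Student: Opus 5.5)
The plan is to reduce to a boost along a single spatial axis and then choose the boost velocity explicitly. If $\Delta t=0$ the two events are already simultaneous in the given frame. In that case an arbitrarily small boost along $\Delta\vec x$ produces either order, so the statement is essentially trivial; I treat it separately. Assume $\Delta t\neq 0$. First apply a spatial rotation, as in the proof of \cref{exrestfr}, sending $\Delta\vec x=(\Delta x,\Delta y,\Delta z)$ to $(\|\Delta\vec x\|,0,0)$. This is possible since spacelike separation forces $\|\Delta\vec x\|>c|\Delta t|\ge 0$, so $\Delta\vec x\neq 0$. A rotation does not change the time coordinate, so the time order is unaffected, and composing the rotation, a boost along $x$, and the inverse rotation gives a boost along the direction $\Delta\vec x/\|\Delta\vec x\|$. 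Hence it suffices to work with $\Delta X=(c\Delta t,\ell,0,0)$, where $\ell=\|\Delta\vec x\|>c|\Delta t|$.

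Next apply the boost of \cref{prop:lorentz-with-PT}:
\begin{align}
c\Delta t'=\gamma\,(c\Delta t-\beta\ell).
\end{align}
Since $\gamma>0$, the sign of $\Delta t'$ is the sign of $c\Delta t-\beta\ell$. We want this sign to be opposite to that of $\Delta t$. Say $\Delta t>0$; we need $\beta>c\Delta t/\ell$. Since $0<c\Delta t/\ell<1$ by the spacelike condition, any $\beta\in(c\Delta t/\ell,\,1)$ works, and the interval is nonempty. The case $\Delta t<0$ is symmetric: take $\beta\in(-1,\,c\Delta t/\ell)$. The threshold $\beta=c\Delta t/\ell$ is exactly the simultaneity frame constructed in the spacelike case of \cref{exrestfr}, and crossing it reverses the order.

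The only real obstacle is checking that the required velocity is admissible, $|\beta|<1$, so that $\gamma$ is real and the map is a genuine (proper orthochronous) Lorentz boost rather than a time reversal. This is exactly where the hypothesis $\|\Delta X\|^2>0$, equivalently $c|\Delta t|<\ell$, is used. I would also remark that for timelike separations the same computation yields $|c\Delta t/\ell|>1$, so no such boost exists, which explains the contrast with causal order.
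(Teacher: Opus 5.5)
Your proposal is correct and follows essentially the same route as the paper: rotate the spatial separation onto the $x$-axis, identify the simultaneity threshold $\beta_0=c\Delta t/\ell$ (admissible because $|\beta_0|<1$ by the spacelike condition), and boost past it. Normalizing $\ell>0$ and treating $\Delta t=0$ separately makes your sign bookkeeping slightly cleaner than the paper's ``$v>v_0$ (with the same sign as $\Delta x$)''.
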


\begin{proof}
By a spatial rotation we can align the spatial separation along the $x$–axis, so that
\begin{align}
\Delta X=(c\,\Delta t,\Delta x,0,0), \qquad \text{with } (\Delta x)^{2} > c^{2}(\Delta t)^{2}.
\end{align}
Consider a Lorentz boost along $x$ with velocity $v<c$. 
The transformed time difference is
\begin{align}
\Delta t'=\gamma\!\left(\Delta t - \frac{v}{c^{2}}\Delta x\right), \qquad \gamma=\frac{1}{\sqrt{1-v^{2}/c^{2}}}.
\end{align}
Choose
\begin{align}
v_{0} := \frac{c^{2}\,\Delta t}{\Delta x}.
\end{align}
Because the separation is spacelike, we have $|v_{0}|=\dfrac{c^{2}|\Delta t|}{|\Delta x|}<c$, so $v_{0}$ is an admissible boost speed. 
For this choice,
\begin{align}
\Delta t'=\gamma\!\left(\Delta t - \frac{v_{0}}{c^{2}}\Delta x\right)=0,
\end{align}
i.e. $A$ and $B$ are simultaneous in the primed frame. 
Now, vary the boost velocity slightly: if $v>v_{0}$ (with the same sign as $\Delta x$), then
\begin{align}
\Delta t'=\gamma\!\left(\Delta t - \frac{v}{c^{2}}\Delta x\right)<0,
\end{align}
whereas if $v<v_{0}$ one gets $\Delta t'>0$. 
Hence there are inertial frames in which $B$ occurs before $A$ and others in which $A$ occurs before $B$.
Therefore, for spacelike separations the temporal order is not invariant under Lorentz transformations. 
Since causal influence is restricted to timelike or lightlike separations (signals cannot exceed $c$), spacelike–separated events cannot be causally related without violating this causal structure.
\end{proof}
\begin{remark}
Notice that, even if the boost is an orthochronous transformation, the sign of the time difference 
between two spacelike separated events may change. This is not a property of the 
boost itself, but of the fact that spacelike separations have no invariant 
time ordering.
\end{remark}
Therefore, the light cone determines the \emph{causal structure} of spacetime: 
an event $B$ can be influenced by another event $A$ only if the four-vector $B-A$ is timelike or lightlike.  
Spacelike separation means that no signal limited by the speed of light can connect the two events, hence no causal influence is possible.
\begin{remark}[Lorentz invariance and causal preservation]
By \cref{lorinv}, we know that Lorentz transformations preserve the Minkowski norm $\|X\|^{2}=\langle X,X\rangle$ Therefore, they preserve the causal classification of vectors and separations between events: timelike ($\|X\|^{2}<0$), lightlike ($\|X\|^{2}=0$), and spacelike ($\|X\|^{2}>0$). 
Consequently, the light cone at each event is mapped into itself. 
In particular, any timelike worldline remains timelike in every inertial frame, and no Lorentz transformation can turn a lightlike separation into a timelike or spacelike one (and vice versa). 
Thus the causal structure determined by the light cones is invariant under changes of inertial frame.
\end{remark}
\begin{definition}[Invariant separation and proper time]
Given a timelike four--vector $X=(ct,x,y,z)$, the Minkowski interval
\begin{align}
s^{2} = -c^{2}t^{2}+\|x\|^{2}
\end{align}
is strictly negative.  
For a worldline $X=X(t)$ the infinitesimal interval is
\begin{align}
ds^{2} = -c^{2}dt^{2}+dx^{2}+dy^2+dz^2.
\end{align}
The \emph{proper time} $\tau$ along the worldline is defined by
\begin{align}
d\tau = \frac{1}{c}\sqrt{-ds^{2}}
= \sqrt{\,1-\frac{\dot{x}^{\,2}}{c^{2}}}\,dt.
\end{align}
It is invariant under Lorentz transformations: all inertial observers agree on the elapsed proper time between two timelike separated events.
\end{definition}
\begin{remark}
In the Newtonian framework, the admissible set of transformations on space and time was given by the Galilean ones, reflecting the Galilean principle of relativity.  
As we saw earlier, this symmetry was encoded directly in the free action of a particle: the Lagrangian $\mathcal{L}=\tfrac{1}{2}m\,|\dot{\mathbf{x}}|^{2}$ is invariant under Galilean transformations up to a total time derivative, and this invariance embodies the fact that the equations of motion are the same in all inertial frames connected by such transformations.  
\\
Special relativity requires us to switch from the Galilean to the Poincaré transformations as the fundamental spacetime symmetry in order to preserve $c$.
Accordingly, the free action of a particle must now be built from Poincaré--invariant quantities.  
Within the class of first-order worldline actions with no additional fields or degrees of freedom, Poincaré invariance together with reparametrization invariance determines the action on timelike curves up to an overall factor, for a strictly invariant representative; adding an irrelevant boundary term does not change the equations of motion.
\end{remark}
The theorem below makes this statement mathematically precise.
\begin{remark}
The proper time provides the natural invariant measure of length along timelike curves, 
in the same way that the Euclidean norm provides the natural rotation--invariant measure of length in $\mathbb{R}^3$.
\end{remark}
\begin{proposition}
Consider first-order actions
\begin{align}
S[x]=\int_{\lambda_1}^{\lambda_2}L(x,\dot x)\,d\lambda
\end{align}
on regular timelike curves, with no additional fields or degrees of freedom and no explicit dependence on $\lambda$. Assume that the Lagrangian density is strictly Poincaré invariant and that the action is invariant under orientation-preserving reparametrizations. Then the action is proportional to $\int d\tau$. With the normalization for a particle of mass $m>0$, it is
\begin{align}
S_{\mathrm{SR}} = -mc^{2}\int d\tau.
\end{align}
After subtracting its constant rest-energy contribution on a fixed coordinate-time interval, it reduces to the Newtonian free-particle action as $c\to+\infty$. Adding a boundary term depending only on the endpoints does not alter the equations of motion.
\end{proposition}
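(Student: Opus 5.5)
Our approach is to first use Poincaré invariance to reduce $L(x,\dot x)$ to a function of a single invariant of $\dot x$, and then use reparametrization invariance to pin down that function. Invariance under translations $x\mapsto x+a$ forces $L$ to be independent of $x$, so $L=L(\dot x)$ with $\dot x\in V$ timelike. Invariance under Lorentz transformations, $L(\Lambda\dot x)=L(\dot x)$, says that $L$ is constant on orbits of the Lorentz group acting on timelike vectors. By \cref{exrestfr}, every timelike $\dot x$ can be brought to $(\pm\sqrt{-\|\dot x\|^2},0,0,0)$, and if the full group is allowed (including $T$) to $(\sqrt{-\|\dot x\|^2},0,0,0)$. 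Hence $L(\dot x)=f(\sqrt{-\|\dot x\|^2})$ for some function $f:(0,\infty)\to\mathbb R$. If only the proper orthochronous group is imposed, $f$ may a priori also depend on $\operatorname{sgn}(\dot x^0)$. But along a regular timelike curve this sign is constant, and it is fixed by orientation-preserving reparametrizations, so we may work on one sheet of the cone.

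Next we impose reparametrization invariance. Let $\lambda=\phi(\sigma)$ with $\phi'>0$, and set $\tilde x(\sigma)=x(\phi(\sigma))$, so that $\tilde x'=\phi'\,\dot x$. Then
\begin{align}
\int L(\tilde x')\,d\sigma=\int f\big(\phi'\sqrt{-\|\dot x\|^2}\big)\,\frac{d\lambda}{\phi'}.
\end{align}
Requiring this to equal $\int f(\sqrt{-\|\dot x\|^2})\,d\lambda$ for all curves and all $\phi$ comes down to a localization step, which I expect to be the main technical point. We take a straight worldline with constant $u=\sqrt{-\|\dot x\|^2}$ and choose $\phi$ whose derivative equals an arbitrary constant $k>0$ on a subinterval and is compensated elsewhere. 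Alternatively, we note that equality of the integrals for all subintervals forces equality of the integrands, since the action on any sub-arc is also reparametrization invariant. Either route yields the functional equation
\begin{align}
f(ku)=k\,f(u)\qquad\forall\,k,u>0.
\end{align}
Setting $u=1$ gives $f(u)=\alpha u$ with $\alpha=f(1)$, with no regularity assumption needed. Therefore
\begin{align}
S=\alpha\int\sqrt{-\langle\dot x,\dot x\rangle}\,d\lambda=\alpha c\int d\tau,
\end{align}
by the definition of proper time, which is itself parametrization independent. This shows that the action is proportional to $\int d\tau$.

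For the normalization we write $\alpha c=-mc^2$ and check the Newtonian limit. Using coordinate time $t$ as the parameter, $d\tau=\sqrt{1-|\dot{\mathbf x}|^2/c^2}\,dt$, and a Taylor expansion gives
\begin{align}
-mc^2\sqrt{1-|\dot{\mathbf x}|^2/c^2}=-mc^2+\tfrac12 m|\dot{\mathbf x}|^2+O(c^{-2}).
\end{align}
On a fixed interval $[t_1,t_2]$ the constant term contributes $-mc^2(t_2-t_1)$, which does not depend on the path. Subtracting it and letting $c\to\infty$ recovers $\int\tfrac12 m|\dot{\mathbf x}|^2dt$. This limit also fixes the sign of $\alpha$ (negative) and its magnitude. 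The final claim, that boundary terms do not alter the equations of motion, follows as in the earlier footnote: a term $F(x(\lambda_2))-F(x(\lambda_1))$ has zero variation when the endpoints are held fixed.

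The subtle points are the orbit argument for the orthochronous case and the localization from invariance of integrals to the pointwise equation $f(ku)=kf(u)$. I would handle the latter by applying invariance to arbitrary sub-arcs of straight worldlines with linear reparametrizations $\lambda=k\sigma$.
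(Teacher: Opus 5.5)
Your proposal is correct and follows the paper's proof step for step. Translation invariance removes the $x$-dependence, and transitivity of the Lorentz group on timelike vectors of fixed norm reduces $L$ to a function of $-\eta(\dot x,\dot x)$. Reparametrization invariance then forces degree-one homogeneity, hence $L=C\sqrt{-\eta_{\mu\nu}\dot x^\mu\dot x^\nu}$, and matching the $|\mathbf v|^2$ term in the coordinate-time expansion fixes $Cc=-mc^2$.

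The one caution concerns your localization step. The variant that keeps the parameter interval fixed (derivative $k$ on a subinterval, ``compensated elsewhere'') cannot by itself yield $f(ku)=kf(u)$. The action of $L=\alpha u+\beta$ is unchanged under every reparametrization of a fixed interval, so that route only shows $f$ is affine. You should therefore rely, as you ultimately do, on the linear reparametrizations $\lambda=k\sigma$ that change the parameter interval. This is also what the paper implicitly uses when it demands invariance of $L\,d\lambda$.
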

\begin{proof}[Sketch of the proof]
\leavevmode\newline
\vspace{-0.55cm}
\begin{enumerate}
    \item \emph{First-order dependence.}
    By assumption, $L$ depends only on $x$ and $\dot x$. This is stronger than locality alone, which would also allow dependence on higher derivatives. Translation invariance gives $L(x+a,\dot x)=L(x,\dot x)$ for every constant $a$, so $L$ is independent of $x$.

    \item \emph{Poincaré invariance and reparametrization invariance.}
    The Lorentz group acts transitively on timelike vectors with a fixed value of $-\eta(\dot x,\dot x)$. Hence
    \begin{align}
    L(\dot x)=f(s),\qquad s=-\eta_{\mu\nu}\dot x^\mu\dot x^\nu>0.
    \end{align}
    Under an orientation-preserving reparametrization, the tangent vector rescales by a positive factor. Invariance of $L\,d\lambda$ therefore requires
    \begin{align}
    L(\alpha\dot x)=\alpha L(\dot x),\qquad \alpha>0.
    \end{align}
    Consequently,
    \begin{align}
    f(\alpha^2s)=\alpha f(s),
    \end{align}
    which implies $f(s)=C\sqrt{s}$ for a constant $C$. Thus
    \begin{align}
    S=C\int\sqrt{-\eta_{\mu\nu}\dot x^\mu\dot x^\nu}\,d\lambda
    =Cc\int d\tau.
    \end{align}
    This proves uniqueness within the stated class.

    \item \emph{Fixing the normalization.}
    Write
    \begin{align}
    S=A\int d\tau.
    \end{align}
    Using coordinate time as parameter and writing $\mathbf v=d\mathbf x/dt$, we obtain
    \begin{align}
    d\tau &= dt\,\sqrt{1-\tfrac{|\mathbf v|^2}{c^2}}\\
    &= dt\big(1-\tfrac{|\mathbf v|^2}{2c^2}+O(c^{-4})\big).
    \end{align}
    Matching the coefficient of $|\mathbf v|^2$ to the Newtonian kinetic term $\tfrac12m|\mathbf v|^2$ fixes
    \begin{align}
    A=-mc^2.
    \end{align}
    Therefore, on a fixed coordinate-time interval,
    \begin{align}
    S_{\mathrm{SR}}
    &=-mc^2(t_2-t_1)+\int_{t_1}^{t_2}\tfrac12m|\mathbf v|^2\,dt+O(c^{-2}).
    \end{align}
    The first term is independent of the spatial trajectory when $t_1$ and $t_2$ are fixed. After subtracting it, the action tends to the Newtonian free-particle action.
\end{enumerate}
\end{proof}
\begin{remark}
Notice that, once again, we are requiring the symmetries of spacetime, i.e. the change from one reference frame to another, to be respected by the action functional. This is a deep geometric insight: it suggests that the action of a particle must ultimately be constructed from geometric objects that encode the very symmetries of spacetime itself. This idea will become one of the central pillars of General Relativity.
\end{remark}
\begin{remark}[On the notion of four--vector]
    In special relativity, a \emph{four--vector} is simply any object $w$
    described by four components with respect to some basis, such that under an active Lorentz transformation represented by a matrix $\Lambda$, its components in the fixed basis change according to
    \begin{align}
        w' = \Lambda w.
    \end{align}
    Under the passive change of basis $e'_\mu=\Lambda(e_\mu)$, the same vector instead has components $w'=\Lambda^{-1}w$. Equivalently, if a change of inertial coordinates is written as $x'=\Lambda x+a$, four--vector components transform as $w'=\Lambda w$, and the associated basis change uses $\Lambda^{-1}$. The vector transformation law must always be read together with the convention for the basis or coordinate transformation.
\end{remark}
\begin{remark}
At first sight, parametrization invariance seems to have nothing to do with physics, appearing as a mere redundancy in the description of the action functional.
However, it has a concrete consequence: the degree-one homogeneity of the Lagrangian makes the Legendre transformation singular and leads to a constraint on the canonical momenta. For the free relativistic particle, this constraint can be computed explicitly.

For a regular timelike worldline parametrized by $\lambda$, write
\begin{align}
\mathcal L=-mc\sqrt{-\eta_{\rho\sigma}\dot x^\rho\dot x^\sigma}.
\end{align}
The canonical momentum is the covector
\begin{align}
p_\mu
:=\frac{\partial\mathcal L}{\partial\dot x^\mu}
=mc\,\frac{\eta_{\mu\nu}\dot x^\nu}
{\sqrt{-\eta_{\rho\sigma}\dot x^\rho\dot x^\sigma}}.
\end{align}
Raising its index with the inverse Minkowski metric gives the four--momentum $P^\mu=\eta^{\mu\nu}p_\nu$. Direct substitution yields
\begin{align}\label{rel4mom}
\eta^{\mu\nu}p_\mu p_\nu
=\eta_{\mu\nu}P^\mu P^\nu
=-m^2c^2.
\end{align}
This is a constraint on the image of the Legendre transformation, rather than an additional equation of motion.\footnote{The second Noether theorem expresses the same gauge redundancy through an identity among the Euler--Lagrange expressions. With $\mathcal E_\mu=\frac{\partial\mathcal L}{\partial x^\mu}-\frac{d}{d\lambda}\frac{\partial\mathcal L}{\partial\dot x^\mu}$, infinitesimal reparametrizations give $\dot x^\mu\mathcal E_\mu\equiv0$. The specific mass-shell relation \eqref{rel4mom} follows from the particular free-particle Lagrangian above.}

To identify the energy, choose a future-directed worldline and use coordinate time as parameter. The resulting Lagrangian is
\begin{align}
L_t=-mc^2\sqrt{1-\frac{|\mathbf v|^2}{c^2}},
\qquad \mathbf v=\frac{d\mathbf x}{dt}.
\end{align}
The energy associated with time-translation invariance is
\begin{align}
E
=\mathbf v\cdot\frac{\partial L_t}{\partial\mathbf v}-L_t
=\frac{mc^2}{\sqrt{1-|\mathbf v|^2/c^2}}.
\end{align}
Consequently, with our signature convention,
\begin{align}
P^0=\frac Ec,\qquad
p_0=\frac{\partial\mathcal L}{\partial\dot x^0}=-\frac Ec.
\end{align}
Writing $\mathbf p=(P^1,P^2,P^3)$, the mass-shell constraint becomes
\begin{align}
E^2=c^2|\mathbf p|^2+m^2c^4.
\end{align}
Since $P$ is timelike, \cref{exrestfr} provides a rest frame in which its spatial components vanish. In that frame,
\begin{align}\label{rel4mom_0}
(P^0)^2=p_0^2=m^2c^2.
\end{align}
On the positive-energy branch, the energy in the rest frame is therefore
\begin{align}\label{E=mc2}
\boxed{\ E_{\mathrm{rest}}=mc^2.\ }
\end{align}
Thus reparametrization invariance is associated with a structural constraint, whose explicit form for the free relativistic action gives the relativistic energy--momentum relation and, in the rest frame, Einstein’s mass--energy relation.
\end{remark}
Special relativity provides a coherent kinematical framework in which space and time are unified into Minkowski spacetime, and it reconciles mechanics with electrodynamics by establishing the invariance of the speed of light. Nevertheless, the theory is not complete. For example, the formulation is tied to the notion of inertial reference frames, defined as systems in which free particles move along straight worldlines at constant velocity. While operationally useful, this notion remains conceptually circular and lacks an intrinsic dynamical characterization within the theory itself.

Einstein was fully aware of these limitations. He considered special relativity as a decisive but provisional step: it solved the conflict between mechanics and electrodynamics, but at the same time it pointed to a more general framework in which spacetime itself would become a dynamical entity, accounting for the inclusion of gravity. This is precisely the path that leads to theory of general relativity.

\printbibliography[heading=subbibliography]
\chapter{Elements of Differential Geometry}
\noindent\fbox{%
\parbox{\dimexpr\textwidth-2\fboxsep-2\fboxrule\relax}{%
\textbf{From now on we adopt the \emph{Einstein summation convention}:}
whenever an index appears once in an upper and once in a lower position,
summation over its full range is understood without explicitly writing the sum.
For example, $v = v^{i}\,\partial_{i}$, $\alpha = \alpha_{i}\,dx^{i}$,
and $\langle \alpha, v\rangle = \alpha_{i} v^{i}$ all implicitly involve
a sum over $i$. We will keep using the summation symbol when the indices run over numbers different from the typical dimension.}}
\section{Smooth manifolds}
The concept of a smooth manifold (or $C^\infty$--differentiable manifold) is that of a space which locally resembles 
$\mathbb R^n$. But what does “locally” mean, and in what sense does it “look like” 
$\mathbb R^n$? And what does \emph{smooth} stay for? A common misconception is to think of this as an approximation to a 
\emph{flat} space. In fact, it has nothing to do with the notion of 
\emph{curvature} that we will introduce later. Rather, it concerns the 
differential--topological structure: the possibility of assigning $n$--tuples in 
$\mathbb R^n$ that label points in open neighborhoods of the manifold in a way 
compatible with a smooth structure.  

To clarify what a smooth manifold is, let's start with saying what a manifold \textbf{is not}. Here are some common misconceptions:
\begin{itemize}
    \item[\textbf{X}] A smooth manifold is \textbf{not} locally “flat” in the geometric sense. It has nothing to do with the fact that we can locally approximate an embedded curved surface with its tangent plane. For instance, the surface of the Earth can be locally charted with planar 
    coordinates, but this has nothing to do with its curvature.
    
    \item[\textbf{X}] A smooth manifold is \textbf{not} locally a vector space in the linear sense.  
    Being locally homeomorphic to $\mathbb R^n$ only means that neighborhoods of 
    points can be labeled by $n$--tuples of real numbers. This is a topological concept, not a linear one. It does not endow the manifold with addition or scalar multiplication of points.  
    For instance, while $\mathbb R^n$ itself carries a natural vector space 
    structure, generic manifolds (like the sphere $S^n$ or the projective space 
    $\mathbb{RP}^n$) do not admit any compatible linear structure.
    
    \item[\textbf{X}] A smooth manifold is \textbf{not} simply an “embedded surface” in 
    some higher--dimensional Euclidean space. While many examples are realized 
    this way,\footnote{In fact, Whitney's embedding theorem guarantees that every smooth manifold of dimension $n>0$ admits a smooth embedding into $\mathbb R^{2n}$ and can therefore be identified with a smooth submanifold of that space. Such an embedding, however, is a choice and is not part of the definition of a smooth manifold.} the definition of a manifold is intrinsic and does not require 
    any ambient space.
\end{itemize}
Now, we can ask ourselves: \emph{what is a smooth manifold?}
\begin{definition}[Smooth manifold]
A \emph{smooth manifold} of dimension $n$ is a pair $(M,[\mathcal A])$ where:
\begin{itemize}
    \item $M$ is a topological space that is Hausdorff\footnote{A topological space $X$ is \emph{Hausdorff} (or $T_2$) if for any two distinct points 
    $x,y\in X$ there exist disjoint open neighborhoods $U\ni x$, $V\ni y$.} 
    and second countable;\footnote{A topological space $X$ is \emph{second countable} if its topology admits a countable basis, 
    i.e.~there exists a countable family of open sets $\{U_i\}_{i\in\mathbb N}$ such that every open set in $X$ 
    is a union of some of the $U_i$.}
    \item $[\mathcal A]$ is an equivalence class\footnote{Every pair $(U_i,\varphi_i)$ is called \emph{chart}, and two atlases $\mathcal A$ and $\mathcal B$ on $M$ are said to be \emph{equivalent} if every chart of $\mathcal A$ is \emph{smoothly compatible} with every chart of $\mathcal B$, that is, if all transition maps between charts of $\mathcal A$ and charts of $\mathcal B$ are smooth, and vice versa. Each equivalence class contains a unique maximal atlas, called a \emph{smooth structure} on $M$.} of atlases 
    $\mathcal A=\{(U_i,\varphi_i)\}_{i\in I}$ such that:
    \begin{itemize}
        \item the open sets $\{U_i\}_{i\in I}$ cover $M$;
        \item each $\varphi_i:U_i\to\varphi_i(U_i)\subset\mathbb R^n$ is a homeomorphism onto an open subset of $\mathbb R^n$;
        \item for every $i,j\in I$ with $U_i\cap U_j\neq\varnothing$, the transition maps
        \begin{align}
            \varphi_j\circ\varphi_i^{-1}:\ \varphi_i(U_i\cap U_j) \longrightarrow  \varphi_j(U_i\cap U_j)
        \end{align}
        are smooth ($C^\infty$).
    \end{itemize}
\end{itemize}
\end{definition}
We denote the unique maximal atlas in $[\mathcal A]$ by $\mathcal A_{\max}$; a chart of the smooth structure means a chart in $\mathcal A_{\max}$.
\begin{figure}[h!]
\centering
\includegraphics[width=0.7\textwidth]{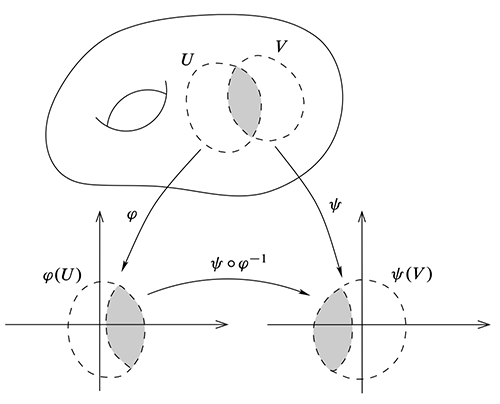}
\caption{Change of charts over two open subspaces $U,V\in M$. Source: https://xuan-li.github.io/post/properties-of-manifolds.}
\end{figure}
\begin{definition}[Smooth map]
    A map between smooth manifolds $F:M\to N$ is \emph{smooth} if, for all charts 
$(U,\varphi)\in\mathcal A_{\max}$ of $M$ and $(V,\psi)\in\mathcal B_{\max}$ of $N$, the coordinate 
representation
\begin{align}
    \psi\circ F\circ \varphi^{-1}:\ \varphi(U\cap F^{-1}(V)) \longrightarrow \psi(V)
\end{align}
is a smooth map between open subsets of $\mathbb R^n$ and $\mathbb R^m$. Moreover, a bijective smooth map $F:M\to N$ whose inverse $F^{-1}:N\to M$ is also smooth is called a \emph{diffeomorphism} of manifolds.
\end{definition}
\begin{remark}
    Notice that, by definition, the map between two charts $(U_i,\varphi_i)$ and $(U_j,\varphi_j)$ given by
    \begin{align}
        \varphi_j\circ\varphi_i^{-1}:\ \varphi_i(U_i\cap U_j) \longrightarrow  \varphi_j(U_i\cap U_j)
    \end{align}
    is a diffeomorphism of open subsets of $\mathbb R^n$ for all $i,j$.
\end{remark}
\begin{proposition}[Equivalent atlases agree on smooth maps]\label{propeqatl}
Let $(M,[\mathcal A])$ and $(N,[\mathcal B])$ be smooth manifolds of, respectively, dimensions $n$ and $n'$.  
Let $\mathcal A,\mathcal A'$ be two atlases in the equivalence class $[\mathcal A]$, 
and $\mathcal B,\mathcal B'$ two atlases in $[\mathcal B]$.  
Then, a map $F:M\to N$ is smooth with respect to $(\mathcal A,\mathcal B)$ 
if and only if it is smooth with respect to $(\mathcal A',\mathcal B')$.
\end{proposition}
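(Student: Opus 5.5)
Note first what the definition of smooth map says. It quantifies over all charts of the maximal atlases $\mathcal A_{\max}$ and $\mathcal B_{\max}$. These depend only on the equivalence classes, so the statement is immediate under that reading. To make the proposition meaningful, I will interpret ``smooth with respect to $(\mathcal A,\mathcal B)$'' as follows: for every chart $(U,\varphi)\in\mathcal A$ and every chart $(V,\psi)\in\mathcal B$, the local representative $\psi\circ F\circ\varphi^{-1}$ is smooth on $\varphi(U\cap F^{-1}(V))$. I will also assume $F$ is continuous, so that $U\cap F^{-1}(V)$ is open. The claim is then that this condition transfers between equivalent atlases. By symmetry of the roles of $(\mathcal A,\mathcal B)$ and $(\mathcal A',\mathcal B')$, it suffices to prove one implication.

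Assume $F$ is smooth with respect to $(\mathcal A,\mathcal B)$. Fix charts $(U',\varphi')\in\mathcal A'$ and $(V',\psi')\in\mathcal B'$, and a point $x_0\in\varphi'(U'\cap F^{-1}(V'))$. Since smoothness is a local property on open subsets of $\mathbb R^n$, it is enough to show that $\psi'\circ F\circ\varphi'^{-1}$ is smooth on a neighbourhood of $x_0$.

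\begin{itemize}
\item Let $p=\varphi'^{-1}(x_0)$. Since $\mathcal A$ and $\mathcal B$ cover $M$ and $N$, choose $(U,\varphi)\in\mathcal A$ with $p\in U$, and $(V,\psi)\in\mathcal B$ with $F(p)\in V$.
\item Let $W=U\cap U'\cap F^{-1}(V\cap V')$. This is an open neighbourhood of $p$ by continuity of $F$.
\item On $\varphi'(W)$, factor the representative as
\begin{align}
\psi'\circ F\circ\varphi'^{-1}=(\psi'\circ\psi^{-1})\circ(\psi\circ F\circ\varphi^{-1})\circ(\varphi\circ\varphi'^{-1}).
\end{align}
\item The outer two factors are transition maps between charts of equivalent atlases. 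By the definition of equivalence they are smooth on the relevant open sets.
\item The middle factor is smooth by hypothesis.
\item A composition of smooth maps between open subsets of Euclidean spaces is smooth by the chain rule. Hence the representative is smooth near $x_0$.
\end{itemize}

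The converse follows by exchanging primed and unprimed atlases. Equivalence of atlases is symmetric, so the same transition maps are again smooth.

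The only delicate point, and the one I expect to need the most care, is bookkeeping of domains. I must check that each factor in the composition is defined on the correct open set. For instance, $\varphi\circ\varphi'^{-1}$ must map $\varphi'(W)$ into $\varphi(U\cap F^{-1}(V))$, and $\psi\circ F\circ\varphi^{-1}$ must land in $\psi(V\cap V')$. I must also use that smoothness on an open cover implies smoothness on the union. Continuity of $F$ is needed for the openness of these domains. If it is not assumed, I will derive it from the chart condition: $F$ restricted to $U\cap F^{-1}(V)$ is a composition of homeomorphisms with the continuous map $\psi\circ F\circ\varphi^{-1}$, and I will first show that these sets are open.
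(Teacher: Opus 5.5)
Your argument is correct and is the same as the paper's proof. Around a point you choose unprimed charts meeting the primed ones, and you factor $\psi'\circ F\circ\varphi'^{-1}$ on $\varphi'(W)$ through the transition map $\varphi\circ\varphi'^{-1}$, the representative $\psi\circ F\circ\varphi^{-1}$ (smooth by hypothesis) and the transition map $\psi'\circ\psi^{-1}$; the chain rule then gives smoothness, and the converse follows by symmetry. Your explicit appeal to continuity of $F$, to make $W=U\cap U'\cap F^{-1}(V\cap V')$ open, makes precise a step that the paper uses tacitly when it calls $W$ an open neighbourhood of $p$.
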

\begin{proof}
Suppose first that $F:M\to N$ is smooth with respect to $(\mathcal A,\mathcal B)$.  
Take a point $p\in M$ and charts $(U',\varphi')\in\mathcal A'$, $(V',\psi')\in\mathcal B'$ 
with $p\in U'$ and $F(p)\in V'$.  
Choose also charts $(U,\varphi)\in\mathcal A$ and $(V,\psi)\in\mathcal B$ 
with $p\in U\cap U'$ and $F(p)\in V\cap V'$.\\
Set $W=U\cap U'\cap F^{-1}(V\cap V')$, an open neighborhood of $p$. On $\varphi'(W)\subset\mathbb R^n$ we can write
\begin{align}
    \psi'\circ F\circ (\varphi')^{-1}
    &= (\psi'\circ\psi^{-1}) 
       \circ (\psi\circ F\circ \varphi^{-1})
       \circ (\varphi\circ(\varphi')^{-1}).
\end{align}
Here:
\begin{itemize}
    \item $\psi\circ F\circ \varphi^{-1}$ is smooth by assumption;
    \item $\varphi\circ(\varphi')^{-1}$ and $\psi'\circ\psi^{-1}$ are smooth by compatibility 
    of the atlases.
\end{itemize}
Therefore, $\psi'\circ F\circ (\varphi')^{-1}$ is smooth, showing that $F$ is smooth 
with respect to $(\mathcal A',\mathcal B')$.\\
The reverse implication is identical, meaning that smoothness of maps does not depend on the representative atlases, but only on the smooth structures $[\mathcal A]$ and $[\mathcal B]$.
\end{proof}
\begin{remark}
In the definition of a smooth manifold one speaks of an equivalence class of atlases. 
The reason is that the relevant objects are not single atlases, but all atlases that give rise to the same smooth structure on $M$. The point is that, as shown in \cref{propeqatl}, the notion of “smoothness” for maps $F:M\to N$ does not depend on the particular choice of an atlas in $[\mathcal{A}]$, but only on the whole smooth structure of the manifold. In other words, in the definition of a smooth manifold, it would be meaningless to consider as different two atlases that agree on the smoothness of every map on $M$.\\
For example, on the $2$--sphere $S^2$, one can define an atlas $\mathcal A$ 
consisting of the two stereographic projection charts (see \cref{exmanfs}). On the other hand, a further atlas $\mathcal B$ can be obtained by adding other local charts, 
e.g. spherical coordinates restricted to suitable domains. 
Although $\mathcal A \neq \mathcal B$ as sets of charts, they are equivalent since the additional charts in $\mathcal B$ are compatible with the stereographic ones. 
In particular, both atlases generate the same maximal atlas, and thus define the same smooth structure on $S^2$.
\end{remark}
\begin{definition}[Coordinates]
Let $(M,[\mathcal A])$ be a smooth manifold of dimension $n$, and let 
$(U,\varphi)\in\mathcal A_{\max}$ be a chart.  
Writing
\begin{align}
    \varphi(q) = \big(x^1(q),\dots,x^n(q)\big), \qquad q\in U,
\end{align}
the smooth functions
\begin{align}
    x^i:\ U \longrightarrow \mathbb R, \qquad x^i(q)=(p_i\circ\varphi)(q),
\end{align}
where $p_i$ is the natural projection to the $i$-th component, are called the \emph{coordinate functions} associated with the chart $(U,\varphi)$.  
The collection $\{x^1,\dots,x^n\}$ is called a \emph{coordinate system} on $U$.\footnote{In the literature one finds sometimes that the definition of coordinates simply coincides with a the one of a chart.}
\end{definition}
\begin{example}[Examples of smooth manifolds]\label{exmanfs}
\leavevmode\newline
\vspace{-0.55cm}
\begin{enumerate}
    \item \textbf{Euclidean space.}  
    $\mathbb R^n$ with its standard topology is a smooth manifold of dimension $n$.  
    The atlas consists of the single global chart
    \begin{align}
        (\mathbb R^n, \ \mathrm{id}_{\mathbb R^n}),
    \end{align}
    whose maximal extension is the canonical smooth structure on $\mathbb R^n$.
    \item \textbf{The circle $S^1$.}  
    Consider $S^1=\{(x,y)\in\mathbb R^2: x^2+y^2=1\}$ with the subspace topology.  
    An explicit smooth atlas is given by the two charts
    \begin{align}
        U_1 &= S^1\setminus\{(0,1)\}, 
        &\varphi_1(x,y)&=\frac{x}{1-y}, \\[0.5em]
        U_2 &= S^1\setminus\{(0,-1)\}, 
        &\varphi_2(x,y)&=\frac{x}{1+y},
    \end{align}
    which map $U_1,U_2$ homeomorphically onto open subsets of $\mathbb R$.  
    The transition map $\varphi_2\circ\varphi_1^{-1}:\mathbb R\setminus\{0\}\to\mathbb R\setminus\{0\}$ 
    is smooth, hence this defines a smooth structure on $S^1$.
    \item \textbf{The $n$-sphere $S^n$.}  
    Let $S^n=\{x=(x_1,\dots,x_{n+1})\in\mathbb R^{n+1}: \|x\|=1\}$ with the subspace topology.  
    Define the open sets $U_\pm = S^n\setminus\{(0,\dots,0,\pm 1)\}$.  
    The stereographic projections from the north and south poles are the maps
    \begin{align}
        \varphi_+: U_+ \ \longrightarrow\ \mathbb R^n, 
        &\qquad \varphi_+(x)=\frac{1}{1-x_{n+1}}(x_1,\dots,x_n), \\[0.5em]
        \varphi_-: U_- \ \longrightarrow\ \mathbb R^n, 
        &\qquad \varphi_-(x)=\frac{1}{1+x_{n+1}}(x_1,\dots,x_n).
    \end{align}
    These are smooth homeomorphisms onto $\mathbb R^n$.  
    On the overlap $U_+\cap U_-$ the transition map 
    $\varphi_+\circ\varphi_-^{-1}:\mathbb R^n\setminus\{0\}\to \mathbb R^n\setminus\{0\}$ is given by
    \begin{align}
        y \longmapsto \frac{y}{\|y\|^2},
    \end{align}
    which is smooth.  
    Thus $\{(U_+,\varphi_+),(U_-,\varphi_-)\}$ is an atlas inducing the standard smooth structure on $S^n$.
    \item \textbf{The torus $\mathbb T^2$.}  
    Define the torus as the quotient
    \begin{align}
        \mathbb T^2 = \mathbb R^2 / \mathbb Z^2,
    \end{align}
    where $(m,n)\in\mathbb Z^2$ acts on $\mathbb R^2$ by integer translations $(x,y)\mapsto(x+m,y+n)$.  
    The quotient inherits a natural topology making it compact and Hausdorff.  
    Charts are given by projecting open sets in $\mathbb R^2$ which inject into $\mathbb T^2$, for example 
    \begin{align}
        U = (0,1)\times(0,1) \subset \mathbb R^2, 
        \qquad \pi: U \to \pi(U)\subset \mathbb T^2,
    \end{align}
    where $\pi$ is the quotient map.  
    The induced maps $\pi(U)\to U\subset\mathbb R^2$ are homeomorphisms ($\pi$ forms a universal covering), and the transition maps 
    are smooth by construction.  
    This yields the standard smooth structure on $\mathbb T^2$.
    \item \textbf{The real projective space $\mathbb{RP}^n$.}
    Define
    \begin{align}
    \mathbb{RP}^n = \big(\mathbb R^{n+1}\setminus\{0\}\big)/\!\sim,
    \end{align}
    where $x\sim y$ iff $x=\lambda y$ for some $\lambda\in\mathbb R^\times$.
    Equivalently, points of $\mathbb{RP}^n$ are 1--dimensional linear subspaces of $\mathbb R^{n+1}$. For each $i=0,\dots,n$, let
    \begin{align}
        U_i = \{[x_0:\dots:x_n]\in\mathbb{RP}^n : x_i\neq 0\},
    \end{align}
    and define the chart
    \begin{align}
        \varphi_i: U_i \longrightarrow \mathbb R^n, \qquad
        [x_0:\dots:x_n] \longmapsto \left(\frac{x_0}{x_i},\dots,\widehat{\tfrac{x_i}{x_i}},\dots,\frac{x_n}{x_i}\right),
    \end{align}
    where the hat indicates omission of the $i$-th coordinate.\footnote{Equivalently, we scale the homogeneous  coordinates so that $x_i=1$.}
    Each $\varphi_i$ is a homeomorphism onto $\mathbb R^n$.
    On the overlap $U_i\cap U_j$ (so $x_i\neq 0$ and $x_j\neq 0$),
    write
    \begin{align}
        \varphi_i([x])=(y_0,\dots,\widehat{y_i},\dots,y_n), \qquad y_k:=\frac{x_k}{x_i}\ (k\neq i),
    \end{align}
    so in particular $y_j=\frac{x_j}{x_i}\neq 0$.
    Then,
    \begin{align}
    \varphi_j\circ\varphi_i^{-1}(y_0,\dots,\widehat{y_i},\dots,y_n)
    = (z_0,\dots,\widehat{z_j},\dots,z_n),
    \end{align}
    where
    \begin{align}
        z_i=\frac{1}{y_j},
        \qquad
        z_k=\frac{y_k}{y_j}\quad (k\neq i,j).
    \end{align}
    Since these transition maps are rational functions with nonzero denominators, they are smooth. Thus, the charts $\{(U_i,\varphi_i)\}_{i=0}^n$ defines a smooth atlas on $\mathbb{RP}^n$.
    \item \textbf{An affine space $A$.}  
    Let $A$ be an affine space modeled on a finite--dimensional real vector space 
    $V\cong\mathbb R^n$.  
    Fix a point $a_0\in A$ and define the map
    \begin{align}
        \varphi_{a_0}: A \ \longrightarrow\ V,\qquad 
        \varphi_{a_0}(a) := a-a_0\in V.
    \end{align}
    This is a bijection sending $a_0$ to the origin of $V$.  
    Endow $A$ with the topology and smooth structure induced from $V$ via 
    $\varphi_{a_0}$. If $a_1\in A$ is another reference point, the corresponding chart 
    $\varphi_{a_1}:A\to V$ is related to $\varphi_{a_0}$ by the transition map
    \begin{align}
        \varphi_{a_1}\circ\varphi_{a_0}^{-1}(v)  =  v+(a_0-a_1),
        \qquad v\in V,
    \end{align}
    which is a translation on $V$, hence smooth. Thus the structure $\{(A,\varphi_{a_0}) : a_0\in A\}$ defines an atlas on $A$, and every affine space modeled on $V\cong\mathbb R^n$ carries a canonical smooth structure diffeomorphic to $\mathbb R^n$.\footnote{Notice that a single choice of $a_0$ already suffices. That is, the single atlas $\{(A,\varphi_{a_0})\}$ is enough to define the smooth structure on $A$, since all other charts defined via a different $a_0\in A$ have been shown above to be compatible. Hence, although the choice of base point $a_0$ is arbitrary (the identification $A\cong V\cong\mathbb R^n$ depends on $a_0$), the induced smooth structure on $A$ is canonical.}\\
    Notice also that this structure is compatible with the concept of coordinates and reference frame we provided in \cref{defmink}.
\end{enumerate}
\end{example}
What happens instead when we have the same underlying topological structure? Can we find inequivalent smooth manifolds?
\begin{example}[Examples of different smooth structures]
\leavevmode\newline
\vspace{-0.55cm}
\begin{enumerate}
    \item \textbf{Exotic $\mathbb R^4$.}  
    The topological space $\mathbb R^4$ admits \emph{uncountably many} mutually non-diffeomorphic smooth structures (the exotic $\mathbb R^4$'s).  
    These arise by combining Freedman’s topological results in dimension $4$ with smooth obstructions from Donaldson theory and explicit constructions (e.g. via Casson handles and Gompf’s techniques).  
    Each exotic $\mathbb R^4$ is homeomorphic to the standard $\mathbb R^4$ but not diffeomorphic to it.

    \item \textbf{Exotic spheres.}  
    For $n=7$ there are exactly $28$ distinct smooth structures on $S^7$, up to orientation-preserving diffeomorphism (Milnor; Kervaire–Milnor, $\Theta_7\cong\mathbb Z/28\mathbb Z$).  
    More generally, many $S^n$ with $n\ge 7$ admit exotic smooth structures, whereas $S^1,S^2,S^3$ have a unique smooth structure.  
    For $S^4$ it remains open whether exotic structures exist.

    \item \textbf{Closed $4$–manifolds: homeomorphic but not diffeomorphic.}  
    There are infinitely many closed simply connected $4$–manifolds that are homeomorphic but not diffeomorphic.  
    A classical family is given by the Dolgachev surfaces $E(1)_{p,q}$, obtained from $E(1)=\mathbb{CP}^2\#9\,\overline{\mathbb{CP}}^{\,2}$ by logarithmic transforms along multiple fibers of coprime multiplicities $p,q$: these are all homeomorphic to $E(1)$ (by Freedman) but pairwise non-diffeomorphic and not diffeomorphic to $E(1)$ (distinguished, for instance, by Seiberg–Witten invariants).  
    Another example is the Barlow surface, homeomorphic to $\mathbb{CP}^2\#8\,\overline{\mathbb{CP}}^{\,2}$ but not diffeomorphic to it.

    \item \textbf{Exotic tori in high dimensions.}  
    In sufficiently high dimensions (e.g. $n\ge 5$), there exist manifolds homeomorphic to the standard torus $T^n$ but not diffeomorphic to it (“exotic tori”), obtained via surgery theory; thus the topological type does not determine the smooth structure.
\end{enumerate}
All these examples share the same topological manifolds. In particular, for dimensions $\le 3$ the smooth structure is unique; in dimension $4$ the landscape is especially rich; in higher dimensions there are many non-equivalent smooth structures (including exotic spheres and tori).
\end{example}
\begin{definition}[Submanifold]
Let $(M,[\mathcal A])$ be a smooth manifold of dimension $n$.  
A subset $S\subseteq M$ is called a \emph{smooth submanifold of dimension $k$} 
if for every $p\in S$ there exists a chart $(U,\varphi)\in\mathcal A_{\max}$ with 
$p\in U$ such that
\begin{align}
    \varphi(U\cap S)  =  \{(x^1,\dots,x^n)\in \varphi(U): x^{k+1}=\cdots=x^n=0\}.
\end{align}
In other words, near each point $p\in S$, the set $S$ looks like 
$\mathbb R^k\times\{0\}^{\,n-k}\subset\mathbb R^n$ in suitable coordinates.  
The integer $k$ is called the \emph{dimension} of $S$.
\end{definition}
\begin{definition}[Germ of a smooth function]
Let $(M,[\mathcal A])$ be a smooth manifold and $p\in M$. 
Define an equivalence relation $\sim_p$ on smooth functions defined on open neighborhoods of $p$ by
\begin{align}
    f \sim_p g 
    \ \Longleftrightarrow\ 
    \exists\,U\subset\operatorname{dom}(f)\cap\operatorname{dom}(g)\text{ open with }p\in U\text{ such that }f|_U=g|_U.
\end{align}
The \emph{germ} of $f$ at $p$ is the equivalence class $[f]_p$, and the set
\begin{align}
    C^\infty_p(M) := \{[f]_p:\ f\in C^\infty(U),\ U\subset M\text{ open},\ p\in U\}
\end{align}
is an $\mathbb R$--algebra (with addition and multiplication induced pointwise).
The evaluation at $p$ is defined by
\begin{align}
    \mathrm{ev}_p:\ C^\infty_p(M)\longrightarrow \mathbb R,\qquad \mathrm{ev}_p([f]_p)=f(p).
\end{align}
\end{definition}
\begin{remark}
    By abuse of notation, we write $f\in C^\infty_p(M)$ for the germ of $f$ at $p$.
\end{remark}
\begin{definition}[Tangent space]
Let $(M,[\mathcal A])$ be a smooth manifold and $p\in M$.  
The \emph{tangent space} of $M$ at $p$, denoted $T_pM$, is the real vector space of 
$\mathbb R$--linear maps
\begin{align}
    v:\ C^\infty_p(M)\ \longrightarrow\ \mathbb R
\end{align}
satisfying the Leibniz rule
\begin{align}
    v(fg) = f(p)\,v(g) + g(p)\,v(f),
    \qquad \forall\ f,g\in C^\infty_p(M).
\end{align}
\end{definition}
\begin{remark}[As equivalence of curves]
Note that $T_pM$ can be defined as the space of equivalence classes of smooth curves 
$\gamma:(-\varepsilon,\varepsilon)\to M$ with $\gamma(0)=p$, where $\gamma_1\sim\gamma_2$ iff 
for one (hence any) chart $(U,\varphi)$ with $p\in U$ we have
\begin{align}
    \left.\frac{d}{dt}\right|_{t=0}\big(\varphi\circ\gamma_1(t)\big)
     = 
    \left.\frac{d}{dt}\right|_{t=0}\big(\varphi\circ\gamma_2(t)\big)\ \in \mathbb R^n.
\end{align}
The correspondence is given by
\begin{align}
    v_\gamma(f) := \left.\frac{d}{dt}\right|_{t=0}\big(f\circ\gamma\big)(t),
\end{align}
which defines a derivation at $p$ and yields a vector space isomorphism with the definition above.
\end{remark}
\begin{proposition}[Coordinate basis of the tangent space]\label{coordTpM}
Let $(M,[\mathcal A])$ be a smooth manifold of dimension $n$, $p\in M$, and let 
$(U,\varphi)\in\mathcal A_{\max}$ be a chart with associated coordinate functions
$x^1,\dots,x^n:U\to\mathbb R$. For each $i=1,\dots,n$, we define the derivation\footnote{In order to use a lighter notation, we write $\partial_i$ instead of $\tfrac{\partial}{\partial x^i}|_p$ when the base point $p$ is clear from the context. We also use the same symbol $x^i$ both for the coordinate function $x^i:U\to\mathbb R$ and for its evaluation $x^i(q)$ at a point $q\in U$, so that $x=(x^1,\dots,x^n)=\varphi(q)\in\mathbb R^n$.}
\begin{align}
    \partial_i : C^\infty_p(M)\longrightarrow \mathbb R,
    \qquad
    \partial_i(f) := 
    \left.\frac{\partial}{\partial x^i}\right|_{\varphi(p)} \big(f\circ\varphi^{-1}\big)(x).
\end{align}
Then, we have that
\begin{enumerate}[label=\roman*)]
\item Each $\partial_i$ is an element of $T_pM$, hence a derivation at $p$,
\item The family $(\partial_1,\dots,\partial_n)$ is a basis of $T_pM$ (the coordinate basis),
\item Every $v\in T_pM$ has a unique expansion
\begin{align}
    v = v^i\,\partial_i,
    \qquad
    v^i = v(x^i),
\end{align}
were the coefficients $v^i$ are called the \emph{components} of $v$ in the coordinate basis $(\partial_i)$, and
\item If $(V,\psi)$ is another chart around $p$ with coordinate functions 
$x^{\prime 1},\dots,x^{\prime n}$, then the components transform by the Jacobian rule
\begin{align}
    v^{\prime i} = v(x^{\prime i}) =\frac{\partial x^{\prime i}}{\partial x^j}\,v^j,
\end{align}
where we omitted the dependence on $p$ in the Jacobian.\footnote{I.e., we intend $\frac{\partial x^{\prime i}}{\partial x^j}= \frac{\partial x^{\prime i}}{\partial x^j}(p)$.}
\end{enumerate}
\end{proposition}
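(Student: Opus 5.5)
The plan is to transport everything to the open set $\varphi(U)\subset\mathbb R^n$ and exploit the fact that germs and derivations are local.

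For i), I will check linearity and the Leibniz rule directly. Linearity of $\partial_i$ is immediate from linearity of partial derivatives. The Leibniz rule follows from the ordinary product rule applied to $(fg)\circ\varphi^{-1}=(f\circ\varphi^{-1})(g\circ\varphi^{-1})$, evaluated at $\varphi(p)$. I also note that $\partial_i$ is well defined on germs, since functions agreeing near $p$ have compositions agreeing near $\varphi(p)$. Finally, $\partial_i(x^j)=\delta_i^j$, because $x^j\circ\varphi^{-1}$ is the $j$-th projection.

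The core of ii) and iii) is a Hadamard-type lemma, which I expect to be the main technical step. First I show that any derivation kills constants: $v(1)=v(1\cdot1)=2v(1)$, so $v(1)=0$. Next, given a germ $f$, I shrink the domain to a ball $B$ around $x_0=\varphi(p)$, which is star-shaped and therefore legitimate at the level of germs. On $B$ I write $F=f\circ\varphi^{-1}$ as
\begin{align}
F(x)=F(x_0)+\sum_{i}(x^i-x_0^i)\,g_i(x),\qquad g_i(x)=\int_0^1\partial_iF\big(x_0+t(x-x_0)\big)\,dt,
\end{align}
where the $g_i$ are smooth and $g_i(x_0)=\partial_i F(x_0)$.

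Pulling this back gives $f=f(p)+\sum_i(x^i-x^i(p))\,(g_i\circ\varphi)$ as germs at $p$. Applying $v$ and using the Leibniz rule, the term $(x^i-x^i(p))$ vanishes at $p$, and $v$ kills the constants. This leaves
\begin{align}
v(f)=\sum_i v(x^i)\,g_i(x_0)=v(x^i)\,\partial_i f.
\end{align}
So the $\partial_i$ span $T_pM$, with coefficients $v^i=v(x^i)$. Linear independence follows from $\partial_i(x^j)=\delta_i^j$: if $a^i\partial_i=0$, evaluating on $x^j$ gives $a^j=0$. Uniqueness of the expansion in iii) is then automatic.

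For iv), I apply iii) to the smooth functions $x'^i$, which are defined near $p$ and so give germs. This yields $v'^i=v(x'^i)=v^j\,\partial_j(x'^i)$. By definition, $\partial_j(x'^i)$ is the partial derivative of $x'^i\circ\varphi^{-1}=p_i\circ\psi\circ\varphi^{-1}$ at $\varphi(p)$. That is exactly the Jacobian entry $\partial x'^i/\partial x^j$ of the transition map, which is smooth by compatibility of the charts.
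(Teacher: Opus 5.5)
Your proposal is correct and follows the paper's proof step for step in i)--iii). Both use the product rule for Leibniz, $v(1)=2v(1)$ to kill constants, the Hadamard decomposition $f=f(p)+(x^i-x^i(p))\,g_i$ on a convex neighbourhood of $\varphi(p)$ for spanning, and testing on the $x^k$ for independence. The only deviation is in iv): you apply iii) directly to the germs $x'^i$ to get $v'^i=v^j\,\partial_j(x'^i)$. The paper instead writes $\partial'_i=\frac{\partial x^j}{\partial x'^i}\,\partial_j$ and compares $v=v^i\partial_i=v'^i\partial'_i$, which implicitly requires inverting the Jacobian, so your route is slightly more direct.
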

\begin{proof}
In order to prove (i), by definition, we have
\begin{align}
    \partial_i(f)
    &= \left.\frac{\partial}{\partial x^i}\right|_{\varphi(p)} (f\circ\varphi^{-1})(x).
\end{align}
Linearity is immediate. For $f,g\in C^\infty_p(M)$, using Leibniz in coordinates,
\begin{align}
    \partial_i(fg)
    &= \left.\frac{\partial}{\partial x^i}\right|_{\varphi(p)}
       \big((f\circ\varphi^{-1})(x)\,(g\circ\varphi^{-1})(x)\big)\\
    &= (f\circ\varphi^{-1})(\varphi(p))\,
       \left.\frac{\partial}{\partial x^i}\right|_{\varphi(p)}(g\circ\varphi^{-1})(x)
       + (g\circ\varphi^{-1})(\varphi(p))\,
       \left.\frac{\partial}{\partial x^i}\right|_{\varphi(p)}(f\circ\varphi^{-1})(x)\\
    &= f(p)\,\partial_i(g) + g(p)\,\partial_i(f),
\end{align}
so $\partial_i$ satisfies the Leibniz rule and hence $\partial_i\in T_pM$.

For (ii), we notice that, for linear independence, if $a^i\partial_i=0$, applying to $x^k$ gives
\begin{align}
    0
    &= a^i\,\partial_i(x^k)
     = a^i\,\left.\frac{\partial}{\partial x^i}\right|_{\varphi(p)}(x^k\circ\varphi^{-1})(x)
     = a^i\,\delta_i^{\ k}
     = a^k,
\end{align}
thus all $a^i=0$. For spanning, let $v\in T_pM$. Leibniz gives $v(1)=2v(1)$, hence $v$ vanishes on constants. Shrink the coordinate image to a convex neighborhood of $a=\varphi(p)$. For a representative $f$, set $\hat f=f\circ\varphi^{-1}$ and
\begin{align}
g_i(q)=\int_0^1\partial_i\hat f\big(a+t(\varphi(q)-a)\big)\,dt.
\end{align}
The fundamental theorem of calculus gives
\begin{align}
f=f(p)+(x^i-x^i(p))g_i,\qquad g_i(p)=\partial_i(f).
\end{align}
Applying $v$ and Leibniz therefore yields
\begin{align}
v(f)=v(x^i)g_i(p)=v(x^i)\partial_i(f),
\end{align}
hence $v=v(x^i)\partial_i$.

(iii) is given by uniqueness of the expansion that follows from (ii). Namely, the coefficients are
\begin{align}
    v^i = v(x^i).
\end{align}
Whereas, (iv) follows from the fact that, on $U\cap V$, if $x^{\prime i}=x^{\prime i}(x^1,\dots,x^n)$, then
\begin{align}
    \partial'_i=\frac{\partial}{\partial x'^i}=\frac{\partial x^j}{\partial x'^i}\frac{\partial}{\partial x^j}=\frac{\partial x^j}{\partial x'^i}\partial_j,
\end{align}
by definition of derivative, and hence
\begin{align}
    v^{\prime i}= \frac{\partial x^{\prime i}}{\partial x^j}\,v^j,
\end{align}
since $v=v^i\partial_i=v'^i\partial'_i$. This is exactly the Jacobian transformation law.
\end{proof}
\begin{definition}[Cotangent space]\label{defcotspace}
Given the tangent space $T_pM$ at $p\in M$, the \emph{cotangent space} is its linear dual\footnote{$\mathrm{Hom}_\mathbb R(T_pM,\mathbb R)$ is a way to call the space of $\mathbb R$-linear maps $T_pM\to\mathbb R$.}
\begin{align}
    T_p^*M := \mathrm{Hom}_\mathbb R(T_pM,\mathbb R).
\end{align}
Equivalently, there is a canonical $\mathbb R$--linear map
\begin{align}
    d:\ C^\infty_p(M)\ \longrightarrow\ T_p^*M,
    \qquad f\ \longmapsto\ df,
\end{align}
characterized by
\begin{align}
    d(f+g)&=df+dg, \\
     d(\lambda f)&=\lambda\,df,\\
     d(fg)&=f(p)\,dg+g(p)\,df,
\end{align}
and by the pairing
\begin{align}
    \langle df,\,v\rangle  :=  v(f), \qquad v\in T_pM.
\end{align}
In local coordinates $(x^1,\dots,x^n)$ around $p$, the cotangent space $T_p^*M$ has a distinguished basis
\begin{align}
    dx^1,\dots,dx^n \ \in T_p^*M,
\end{align}
defined by $dx^i:=d(x^i)$.  
This basis is dual to the coordinate basis of $T_pM$, namely\footnote{By definition of the pairing, $\langle dx^i,\partial_j\rangle=\partial_j(x^i)$. But $\partial_j$ acts as $\tfrac{\partial}{\partial x^j}$ at $p$, so 
    $\partial_j(x^i)=\tfrac{\partial x^i}{\partial x^j}(p)=\delta^i_{\ j}$. This is precisely the dual condition for the pairing: for a basis $e_i$ of a vector space and its dual $e^i$, one has $\langle e^i,\,e_j\rangle=\delta^i_{\ j}$.}
\begin{align}
    dx^i(\partial_j) = \delta^i_{\ j}.
\end{align}
\end{definition}
\begin{remark}
The map 
\begin{align}
    d\colon C^\infty_p(M) \longrightarrow T_p^*M, \qquad f\mapsto df,
\end{align}
is called the \emph{differential} of $f$ at $p$.  
Given a smooth function $f$, 
its differential $df$ is a covector at $p$ which acts on tangent vectors by
\begin{align}
    df(v) = v(f), \qquad v\in T_pM.
\end{align}
In local coordinates $(x^1,\dots,x^n)$, this reads
\begin{align}
    df =\partial_i f\,dx^i.
\end{align}
Thus, $df$ is also called the \emph{differential of $f$ at $p$}.
\end{remark}
\begin{definition}[Pushforward and pull-back]\label{pushpullfibers}
Let $F:M\to N$ be a smooth map between smooth manifolds, and let $p\in M$.  
The \emph{pushforward} of $F$ at $p$ is defined as the linear map\footnote{Notice that the pushforward is often also denoted as $dF_p$.}
\begin{align}
    F_*|_p:\ T_pM \longrightarrow T_{F(p)}N
\end{align}
such that
\begin{align}
    (F_*|_p v)(f)  :=  v(f\circ F),
    \qquad f\in C^\infty_{F(p)}(N),\ v\in T_pM.
\end{align}
Whereas, the \emph{pull-back} of covectors is the linear map
\begin{align}
    F^*|_p:\ T_{F(p)}^*N \longrightarrow T_p^*M
\end{align}
defined by
\begin{align}
    (F^*|_p\alpha)(v)  :=  \alpha(F_*|_p v),
    \qquad \alpha\in T_{F(p)}^*N,\ v\in T_pM.
\end{align}
\end{definition}
\begin{proposition}[Identification of $T_pM$ with $\mathbb R^n$]\label{idTpM}
Let $(U,\varphi)$ be a chart around $p\in M$ with coordinates 
$(x^1,\dots,x^n)$.  
Then, the linear map\footnote{Notice that this map can be seen as a pushforward.}
\begin{align}
    d\varphi_p:\ T_pM \longrightarrow \mathbb R^n,
    \qquad
    v \longmapsto (v^1,\dots,v^n),
\end{align}
with $v^i=v(x^i)$ as defined in \cref{coordTpM}, is a vector space isomorphism.\\
Moreover, we have
\begin{align}
    d\varphi_p(\partial_i) = e_i,
\end{align}
where $(\partial_1,\dots,\partial_n)$ is the coordinate basis of $T_pM$ and 
$(e_1,\dots,e_n)$ is the canonical basis of $\mathbb R^n$.
\end{proposition}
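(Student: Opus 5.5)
The plan is to reduce everything to \cref{coordTpM}, where the whole content of the statement has essentially been established.

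First, I will show that $d\varphi_p$ is linear. For $v,w\in T_pM$ and $\lambda\in\mathbb R$, the components are obtained by evaluation on the coordinate functions:
\begin{align}
(v+\lambda w)^i=(v+\lambda w)(x^i)=v(x^i)+\lambda\,w(x^i)=v^i+\lambda w^i .
\end{align}
This uses only the vector space structure of $T_pM$, namely the pointwise operations on derivations.

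Second, I will show that $d\varphi_p$ is bijective, and here I will invoke parts (ii) and (iii) of \cref{coordTpM}. For injectivity, suppose $v^i=v(x^i)=0$ for all $i$. The expansion $v=v(x^i)\partial_i$ then gives $v=0$. For surjectivity, given $(a^1,\dots,a^n)\in\mathbb R^n$, I set $v:=a^i\partial_i\in T_pM$ and compute
\begin{align}
v(x^k)=a^i\,\partial_i(x^k)=a^i\delta_i^{\ k}=a^k .
\end{align}
This is the same computation used for linear independence in the proof of \cref{coordTpM}.

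Third, the identity $d\varphi_p(\partial_i)=e_i$ is the special case $a^k=\delta_i^{\ k}$ of that computation, since $\partial_i(x^k)=\delta_i^{\ k}$. Equivalently, the isomorphism sends the coordinate basis to the canonical basis. Since a linear map sending a basis to a basis is an isomorphism, this observation combined with (ii) would on its own suffice for the whole claim. I will present this as the streamlined version of the argument.

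As a remark to justify the notation, I will also check that $d\varphi_p$ coincides with the pushforward $\varphi_*|_p$ of \cref{pushpullfibers}, viewing $\varphi:U\to\varphi(U)\subset\mathbb R^n$ as a smooth map and identifying $T_{\varphi(p)}\mathbb R^n\cong\mathbb R^n$ through the standard coordinate basis. Indeed, $(\varphi_*|_p v)(r^i)=v(r^i\circ\varphi)=v(x^i)$, where $r^i$ are the standard coordinates on $\mathbb R^n$. There is no real obstacle in the proof. The only delicate point is the spanning property of the $\partial_i$, which relies on the Hadamard-type lemma, and this was already handled in \cref{coordTpM}(ii), so I will simply cite it.
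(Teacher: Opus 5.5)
Your proposal is correct and follows essentially the paper's argument: linearity, injectivity from the expansion $v=v(x^i)\partial_i$ of \cref{coordTpM}, and $d\varphi_p(\partial_i)=e_i$ from $\partial_i(x^k)=\delta_i^{\ k}$. The only cosmetic difference is in surjectivity: the paper gets it from the dimension count $\dim T_pM=\dim\mathbb R^n=n$, while you exhibit the preimage $a^i\partial_i$ (or note that a basis is sent to a basis), and both rest on the same part (ii) of \cref{coordTpM}.
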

\begin{proof}
Linearity is clear. Furthermore, if $v\in T_pM$ and $d\varphi_p(v)=0$, then $v(x^i)=0$ for all $i$. Hence, $d\varphi_p$ is injective. Since $\dim T_pM=\dim\mathbb R^n=n$, it is also surjective, so $d\varphi_p$ is an isomorphism.\\
Finally,
\begin{align}
    d\varphi_p(\partial_i)
    = \big(\partial_i(x^1),\dots,\partial_i(x^n)\big)
    = (\delta_i^{\ 1},\dots,\delta_i^{\ n})
    = e_i,
\end{align}
as claimed.
\end{proof}
\begin{remark}
Via a chart $(U,\varphi)$ around $p\in M$, the differential $d\varphi_p$ identifies 
$T_pM$ with $\mathbb R^n$ by sending the coordinate basis $(\partial_i)$ to the canonical basis $(e_i)$, i.e.
\begin{align}
    d\varphi_p(\partial_i) = e_i.
\end{align}
This implies that the pairing with the coordinate functions identifies $T_p^*M$ with 
$(\mathbb R^n)^*$, where $(dx^i)$ corresponds to the canonical dual basis $(e^i)$, because of
\begin{align}
    dx^i(\partial_j) = \langle dx^i,\partial_j\rangle=\delta^i_{\ j}.
\end{align}
However, this does \emph{not} provide a canonical isomorphism 
\begin{align}
    T_pM  \cong  T_p^*M,
\end{align}
since the missing step would be a canonical identification 
$\mathbb R^n \cong (\mathbb R^n)^*$. Such an isomorphism requires an additional choice, for instance a basis or a nondegenerate bilinear form.
What \emph{is} canonical is the identification
\begin{align}
    T_pM  \cong  T_p^{**}M, 
    \qquad 
    v \longmapsto (\alpha \mapsto \alpha(v)),
\end{align}
which is natural, namely independent of the choice of the basis.
\end{remark}

\section{Multilinear algebra and tensors}
\begin{definition}[Tensor product]
Let $V_1,\dots,V_k$ be finite--dimensional\footnote{Notice that this construction holds also for the infinite dimensional case. We just need the finite one for our later purposes.} real vector spaces.  
The \emph{tensor product} $V_1\otimes\dots\otimes V_k$ is a real vector space equipped with a multilinear map
\begin{align}
\otimes:  V_1\times\dots\times V_k \longrightarrow V_1\otimes\dots\otimes V_k,
\qquad (v_1,\dots,v_k)\longmapsto v_1\otimes\dots\otimes v_k,
\end{align}
such that for any multilinear map $B: V_1\times\dots\times V_k\to U$ into a vector space $U$ there exists a unique linear map 
\begin{align}
\tilde B:  V_1\otimes\dots\otimes V_k \longrightarrow U
\end{align}
with 
\begin{align}
\tilde B\circ \otimes  =  B.
\end{align}
Equivalently, the following diagram commutes:
$$
\begin{tikzcd}
V_1 \times \dots \times V_k \arrow[r, "\otimes"] \arrow[dr, "B"'] 
  & V_1 \otimes \dots \otimes V_k \arrow[d, dashed, "\exists!\,\tilde B"] \\
  & U.
\end{tikzcd}
$$
\end{definition}
\begin{remark}
The tensor product $V_1\otimes\dots\otimes V_k$ is the vector space that \emph{linearly represents} multilinear maps on $V_1\times\dots\times V_k$. By definition, the canonical map
\begin{align}
    \otimes:  V_1\times\dots\times V_k \longrightarrow V_1\otimes\dots\otimes V_k, 
    \qquad (v_1,\dots,v_k)\longmapsto v_1\otimes\dots\otimes v_k
\end{align}
collects all the “multilinear information” of $V_1\times\dots\times V_k$ into a single vector space: every multilinear map factors uniquely through it as a linear map. Therefore, the tensor product is itself a vector space, which linearly represents multilinear maps $V_1\times\dots\times V_k \to U$ as linear maps $V_1\otimes\dots\otimes V_k \to U$.\\
Moreover, the tensor product space itself can be realized in different ways. Formally, a \emph{model of the tensor product} is a pair $(T,\tau)$ consisting of a vector space $T$ together with a multilinear map $\tau:V_1\times\dots\times V_k\to T$ that satisfies the universal property. If $(T,\tau)$ and $(T',\tau')$ are two such models, then there exists a unique isomorphism $f:T\to T'$ with\footnote{Indeed, 
since $\tau':V_1\times\dots\times V_k\to T'$ is multilinear, the universal property of $(T,\tau)$ yields a unique 
linear $f:T\to T'$ with $f(\tau(v_1,\dots,v_k))=\tau'(v_1,\dots,v_k)$. Similarly, there is a unique 
$g:T'\to T$ with $g(\tau'(v_1,\dots,v_k))=\tau(v_1,\dots,v_k)$. One checks that $g\circ f=\mathrm{id}_T$ and 
$f\circ g=\mathrm{id}_{T'}$, hence $f$ is an isomorphism.}
\begin{align}
f(\tau(v_1,\dots,v_k))  =  \tau'(v_1,\dots,v_k).
\end{align}
In particular, for a fixed multilinear map $B:V_1\times\dots\times V_k\to U$, the corresponding factorizations $\tilde B_T:T\to U$ and $\tilde B_{T'}:T'\to U$ are compatible with this canonical isomorphism, i.e.
\begin{align}
\tilde B_{T'} \circ f  =  \tilde B_T.
\end{align}
Hence, different constructions of the tensor product yield the same universal factorization.
\end{remark}
\begin{example}[Different constructions of the tensor product]
For simplicity, suppose $k=2$. There are several equivalent ways to construct the tensor product $V\otimes W$, starting from two finite dimensional vector spaces $V$ and $W$.
We list three standard models, each with a concrete example for $V=W=\mathbb R^2$.

\begin{enumerate}
    \item \textbf{Generators and relations.}  
    Start from the free vector space\footnote{Given set $X$, the \emph{free vector space} $\mathbb R^{(X)}$ is the space of finitely supported functions $f:X\to\mathbb R$, i.e.\ those with $f(x)\neq 0$ only for finitely many $x\in X$. A canonical basis is given by the characteristic functions $e_x$ with $e_x(y)=\delta_{xy}$, for $x,y\in X$, so that every $f\in\mathbb R^{(X)}$ can be written uniquely as $\sum_{x\in X} f(x)\,e_x$.} $\mathbb R^{(V\times W)}$ generated by pairs $(v,w)$, and impose the relations
    \begin{align}
        (v_1+v_2,w) &\sim (v_1,w)+(v_2,w), \\
        (v,w_1+w_2) &\sim (v,w_1)+(v,w_2), \\
        (\lambda v,w) &\sim \lambda (v,w), \\
        (v,\lambda w) &\sim \lambda (v,w).
    \end{align}
    The resulting quotient space is $V\otimes W$, with canonical map 
    $(v,w)\mapsto v\otimes w$, which is given by the quotient map:
    $$
    \begin{tikzcd}
    V\times W \arrow[r, hook, "\iota"] \arrow[dr, "\otimes"'] 
    & \mathbb R^{(V\times W)} \arrow[d, twoheadrightarrow, "q"] \\
    & V\otimes W.
    \end{tikzcd}
    $$

    For example, let $V=W=\mathbb R^2$ with bases 
    $e_1=(1,0), e_2=(0,1)$ and $f_1=(1,0), f_2=(0,1)$. For $v=(1,2)$ and $w=(3,4)$,
    \begin{align}
        [(v,w)] 
        &= [(e_1+2e_2,  3f_1+4f_2)] \\
        &\sim [(e_1, 3f_1+4f_2)] + 2[(e_2, 3f_1+4f_2)] \\
        &\sim 3[(e_1,f_1)] + 4[(e_1,f_2)] + 6[(e_2,f_1)] + 8[(e_2,f_2)].
    \end{align}
    Passing to the quotient notation, this is
    \begin{align}
        (1,2)\otimes(3,4)
        &= 3\, (e_1\!\otimes f_1) + 4\, (e_1\!\otimes f_2) \nonumber\\
        &\quad + 6\, (e_2\!\otimes f_1) + 8\, (e_2\!\otimes f_2).
    \end{align}

    \item \textbf{Dual of bilinear forms.}  
    Consider the space $\mathrm{Bil}(V\times W,\mathbb R)$ of bilinear forms with real values, 
    and define
    \begin{align}
        V\otimes W  :=  \mathrm{Bil}(V\times W,\mathbb R)^*.
    \end{align}
    The canonical map 
    \begin{align}
    (v,w)  \longmapsto  \big(B \mapsto B(v,w)\big)
    \end{align}
    identifies pairs with evaluation functionals. In particular,
    \begin{align}
    v\otimes w &:  B \longmapsto B(v,w),
    \end{align}
    defines the evaluation functional. Namely, for every bilinear map $B$, we have the canonical linear map
    \begin{align}
    \tilde B:  V\otimes W \longrightarrow U,
    \end{align}
    which is defined by linear extension from its action on elementary tensors, i.e.
    \begin{align}
    \tilde B(v\otimes w) &= B(v,w).
    \end{align}
    This description is canonical within the dual model, since the evaluation maps
    $v\otimes w\colon B\mapsto B(v,w)$ is uniquely determined once we realize 
    $V\otimes W$ as the dual of $\mathrm{Bil}(V\times W,\mathbb R)$.\\
    For instance, with $V=W=\mathbb R^2$ and 
    $B(v,w)=v_1w_1+2v_2w_2$, we have
    \begin{align}
        (e_1\otimes f_2)(B) &= B(e_1,f_2)=0, \\
        (e_2\otimes f_2)(B) &= B(e_2,f_2)=2.
    \end{align}

    \item \textbf{Concrete basis construction.}  
    Suppose you already have the bilinear map $\otimes$ (e.g. as given by one of the cases above). If $\dim V=n$ with basis $\{e_i\}$ and $\dim W=m$ with basis $\{f_j\}$, 
    then $V\otimes W$ is the $nm$-dimensional vector space with basis 
    $\{\,e_i\otimes f_j : 1\leq i\leq n, 1\leq j\leq m\,\}$.
    This shows $V\otimes W \cong \mathbb R^{nm}$ in a natural way. Notice that in the preceding cases we also built explicitly the map $\otimes:V\times W\to V\otimes W$. In this case, we view the construction as a realization of such a map with respect to some bases of $V$ and $W$.\\
    For $V=W=\mathbb R^2$,
    \begin{align}
        (1,2)\otimes(3,4)
        &= 3\, (e_1\!\otimes f_1) + 4\, (e_1\!\otimes f_2) \nonumber\\
        &\quad + 6\, (e_2\!\otimes f_1) + 8\, (e_2\!\otimes f_2).
    \end{align}
\end{enumerate}
All these constructions satisfy the same universal property and are therefore canonically isomorphic.
\end{example}
\begin{remark}
In the constructions above we implicitly focused on bilinear maps with values in $\mathbb R$. This already suffices, since bilinear maps with values in an arbitrary vector space $U$ can be obtained by linear extension: one has the natural identification\footnote{For $V$ and $W$ finite dimensional.}
\begin{align}
\mathrm{Bil}(V\times W, U)  \cong  \mathrm{Bil}(V\times W,\mathbb R)\otimes U.
\end{align}
Thus it is enough to construct $V\otimes W$ so that it represents real--valued bilinear forms; the universal property then guarantees that the same object works for every target space $U$. Of course, this hold also for the general multilinear case.
\end{remark}
\begin{definition}[Symmetric and alternating tensors]
Let $V$ be a finite--dimensional real vector space and $k\geq 1$.  
\begin{itemize}
\item The \emph{symmetric power} of $V$ is the quotient
\begin{align}
    \mathrm{Sym}^k(V) 
    := V^{\otimes k}\big/\big\langle 
        v_{\sigma(1)}\otimes\dots\otimes v_{\sigma(k)} 
        - v_1\otimes\dots\otimes v_k
     \big| \sigma\in S_k\big\rangle.
\end{align}
\item The \emph{alternating (wedge) power} of $V$ is the quotient
\begin{align}
    \textstyle{\bigwedge^k} V 
    := V^{\otimes k}\big/\big\langle 
        v_{\sigma(1)}\otimes\dots\otimes v_{\sigma(k)} 
        - \mathrm{sgn}(\sigma)\,v_1\otimes\dots\otimes v_k
     \big| \sigma\in S_k\big\rangle.
\end{align}
\end{itemize}
These quotient spaces are canonically isomorphic to the images of the linear operators
\begin{align}
    \mathrm{Sym}(T) 
    &:= \frac{1}{k!}\sum_{\sigma\in S_k} T\circ\sigma, \\[4pt]
    \mathrm{Alt}(T) 
    &:= \frac{1}{k!}\sum_{\sigma\in S_k} \mathrm{sgn}(\sigma)\,T\circ\sigma,
\end{align}
where, viewing $T\in V^{\otimes k}$ as a multilinear map on $V^*$, $(T\circ\sigma)(\alpha_1,\dots,\alpha_k):=T(\alpha_{\sigma(1)},\dots,\alpha_{\sigma(k)})$.  
These operators are linear, idempotent, and satisfy
\begin{align}
    \mathrm{Sym}^2=\mathrm{Sym}, \qquad 
    \mathrm{Alt}^2=\mathrm{Alt}, \qquad 
    \mathrm{Sym}\circ\mathrm{Alt}=0=\mathrm{Alt}\circ\mathrm{Sym}\,\,\,\, (k\geq 2).
\end{align}
The quotient maps restrict to isomorphisms from these images onto $\mathrm{Sym}^k(V)$ and $\textstyle{\bigwedge^k} V$, respectively. We use these identifications when decomposing tensors. For covariant tensors, apply the same construction to $V^*$.\\
In coordinates, if $T=T_{i_1\dots i_k}\,dx^{i_1}\otimes\dots\otimes dx^{i_k}$, then
\begin{align}
    \mathrm{Sym}(T)_{i_1\dots i_k}
    &= \frac{1}{k!}\sum_{\sigma\in S_k} 
       T_{\,i_{\sigma(1)}\dots i_{\sigma(k)}}, \\[4pt]
    \mathrm{Alt}(T)_{i_1\dots i_k}
    &= \frac{1}{k!}\sum_{\sigma\in S_k} 
       \mathrm{sgn}(\sigma)\,T_{\,i_{\sigma(1)}\dots i_{\sigma(k)}}.
\end{align}
Symmetrization means invariance of the components under any permutation of indices, 
while alternization means a change of sign under odd permutations.
\end{definition}
\begin{remark}
For $k=2$, there is a direct sum decomposition
\begin{align}
    V\otimes V  =  \mathrm{Sym}^2(V) \oplus \textstyle{\bigwedge^2} V,
\end{align}
so every tensor $T\in V\otimes V$ decomposes uniquely as
\begin{align}
    T  =  \mathrm{Sym}(T) + \mathrm{Alt}(T).
\end{align}
For covariant tensors, if $T=T_{ij}\,e^i\otimes e^j\in V^*\otimes V^*$, then
\begin{align}
    \mathrm{Sym}(T) &= \tfrac{1}{2}(T_{ij}+T_{ji})\,e^i\otimes e^j, \\
    \mathrm{Alt}(T) &= \tfrac{1}{2}(T_{ij}-T_{ji})\,e^i\otimes e^j,
\end{align}
so symmetry means $T_{ij}=T_{ji}$ and antisymmetry means $T_{ij}=-T_{ji}$.
\end{remark}
\begin{definition}[Tensors on tangent and cotangent spaces]
Let $p\in M$.  
For integers $r,s\geq 0$, an \emph{$(r,s)$--tensor at $p$} is an element of
\begin{align}
    T^{(r,s)}(M)_p  :=  \underbrace{T_pM\otimes \cdots \otimes T_pM}_{r \text{ times}}
     \otimes 
    \underbrace{T_p^*M\otimes \cdots \otimes T_p^*M}_{s \text{ times}}.
\end{align}
Equivalently, an $(r,s)$--tensor at $p$ can be seen as a multilinear map
\begin{align}
    \underbrace{T_p^*M\times \cdots \times T_p^*M}_{r\text{ times}}
    \times
    \underbrace{T_pM\times \cdots \times T_pM}_{s\text{ times}}
     \longrightarrow  \mathbb R.
\end{align}
If $(\partial_i)$ is the coordinate basis of $T_pM$ and $(dx^i)$ the dual basis of $T_p^*M$, 
then $\{\partial_{i_1}\otimes\dots\otimes\partial_{i_r}\otimes dx^{j_1}\otimes\dots\otimes dx^{j_s}\}$
is a basis of $T^{(r,s)}(M)_p$.\\
Thus, any tensor $T\in T^{(r,s)}(M)_p$ has components $T^{i_1\dots i_r}_{\phantom{i_1\dots i_r}j_1\dots j_s}\in\mathbb R$ defined by
\begin{align}
    T = T^{i_1\dots i_r}_{\phantom{i_1\dots i_r}j_1\dots j_s}\,
    \partial_{i_1}\otimes\dots\otimes\partial_{i_r}\otimes dx^{j_1}\otimes\dots\otimes dx^{j_s}.
\end{align}
\end{definition}
\begin{remark}
    Notice that we will often use the simple notation $(r,s)$--type to indicate a tensor in $T^{(r,s)}(M)_p$.
\end{remark}
\begin{proposition}[A tensor transforms like a tensor]
Let $(U,\varphi)$ and $(V,\psi)$ be two overlapping charts around $p\in M$, 
with coordinate functions $(x^1,\dots,x^n)$ and $(x^{\prime 1},\dots,x^{\prime n})$.  
Let $T\in T^{(r,s)}(M)_p$ and write
\begin{align}
    T = T^{i_1\dots i_r}_{\phantom{i_1\dots i_r}j_1\dots j_s}\,
        \partial_{i_1}\otimes\dots\otimes \partial_{i_r}
        \otimes dx^{j_1}\otimes\dots\otimes dx^{j_s}.
\end{align}
Then, under the coordinate change $x\mapsto x'$, the components transform as\footnote{Notice that, once again, we are omitting the dependence on $p$. I.e., we mean $\frac{\partial x^{\prime i}}{\partial x^j}=\frac{\partial x^{\prime i}}{\partial x^j}(p)$}
\begin{align}
    T^{\prime\,i_1\dots i_r}_{\phantom{i_1\dots i_r}j_1\dots j_s}
     = 
    \frac{\partial x^{\prime i_1}}{\partial x^{k_1}} \cdots
    \frac{\partial x^{\prime i_r}}{\partial x^{k_r}} \,
    \frac{\partial x^{\ell_1}}{\partial x^{\prime j_1}} \cdots
    \frac{\partial x^{\ell_s}}{\partial x^{\prime j_s}} \,
    T^{k_1\dots k_r}_{\phantom{k_1\dots k_r}\ell_1\dots \ell_s}.
\end{align}
That is, contravariant indices transform with the Jacobian of the coordinate change,
while covariant indices transform with the inverse Jacobian.
\end{proposition}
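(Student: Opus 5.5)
The plan is to reduce everything to the transformation of the basis elements, since $T$ itself is a fixed element of $T^{(r,s)}(M)_p$ and only its representation changes. First I would record the two one-index rules. From item (iv) of \cref{coordTpM}, on the overlap we have
\begin{align}
\partial_k=\frac{\partial x^{\prime i}}{\partial x^{k}}\,\partial'_i ,
\end{align}
which follows from the chain rule exactly as in that proof, with the roles of the two charts exchanged. For the covectors, I would apply the defining pairing $\langle df,v\rangle=v(f)$ of \cref{defcotspace} to the function $x^{\prime j}$ and expand in the dual basis. This gives $dx^{\prime j}=\frac{\partial x^{\prime j}}{\partial x^{\ell}}\,dx^{\ell}$, since $dx^{\prime j}(\partial_\ell)=\partial_\ell(x^{\prime j})$. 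Equivalently,
\begin{align}
dx^{\ell}=\frac{\partial x^{\ell}}{\partial x^{\prime j}}\,dx^{\prime j}.
\end{align}
The chain rule identity
\begin{align}
\frac{\partial x^{\prime i}}{\partial x^{k}}\frac{\partial x^{k}}{\partial x^{\prime j}}=\delta^i_{\ j},
\end{align}
which comes from differentiating $\psi\circ\varphi^{-1}$ and its inverse, shows that the two Jacobian matrices are mutually inverse. So the two descriptions are consistent.

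Next I would substitute these expressions into
\begin{align}
T=T^{k_1\dots k_r}_{\phantom{k_1\dots k_r}\ell_1\dots\ell_s}\,\partial_{k_1}\otimes\cdots\otimes dx^{\ell_s}.
\end{align}
Multilinearity of $\otimes$, which is built into the universal property, lets me pull all the scalar Jacobian factors out of each tensor slot. The result is $T$ written as a combination of the primed basis elements $\partial'_{i_1}\otimes\dots\otimes\partial'_{i_r}\otimes dx^{\prime j_1}\otimes\dots\otimes dx^{\prime j_s}$, and the coefficients are exactly the right-hand side of the claimed formula.

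Finally I would invoke the fact, stated in the definition of tensors, that these products form a basis of $T^{(r,s)}(M)_p$. Uniqueness of components in a basis then identifies those coefficients with $T^{\prime\,i_1\dots i_r}_{\phantom{i_1\dots i_r}j_1\dots j_s}$.

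There is no real obstacle here; the argument is bookkeeping. The step needing the most care is the covector rule: I must make sure that the covariant slots pick up $\partial x/\partial x'$ and not its inverse. The cleanest check is the pairing argument above together with the inverse-Jacobian identity, which also confirms that the full contraction $\alpha_i v^i$ is chart-independent.
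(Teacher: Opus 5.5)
Your proof is correct and takes essentially the same approach as the paper: write one coordinate basis of $T_pM$ and $T_p^*M$ in terms of the other using the Jacobian, substitute into the expansion of $T$, and read off the components by uniqueness in the product basis. The only difference is cosmetic. The paper substitutes the primed basis into the primed expansion, so the inverse-Jacobian identity $\frac{\partial x^{\prime i}}{\partial x^{k}}\frac{\partial x^{k}}{\partial x^{\prime j}}=\delta^i_{\ j}$ is tacitly needed at the end, whereas you substitute the unprimed basis written in primed terms and obtain the stated formula directly.
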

\begin{proof}
The tangent basis transforms by
\begin{align}
    \partial'_j = \frac{\partial x^i}{\partial x^{\prime j}} \,\partial_i,
\end{align}
and the cotangent basis by
\begin{align}
    dx^{\prime i} = \frac{\partial x^{\prime i}}{\partial x^j} \,dx^j.
\end{align}
Substituting these relations into the expansion of $T$ in the primed basis yields the stated formula.
\end{proof}
\begin{remark}[Covariant and contravariant indices]
Given a tensor $T\in T^{(r,s)}(M)_p$, the upper indices ($i_1,\dots,i_r$) 
are called \emph{contravariant}, while the lower indices ($j_1,\dots,j_s$) 
are called \emph{covariant}. Thus, the type $(r,s)$ indicates $r$ contravariant 
and $s$ covariant indices.\\
Notice also that our notation for the coordinate functions 
$x^i:U\to\mathbb R$, which are not tensors themselves and do not transform 
tensorially, is chosen to naturally match the transformation laws of the associated bases
\begin{align}
    \partial_i := \frac{\partial}{\partial x^i}\Big|_p,
    \qquad
    dx^i := d(x^i),
\end{align}
which transform covariantly and contravariantly, respectively.\footnote{This is 
consistent with our earlier notation $\partial_i\coloneqq\tfrac{\partial}{\partial x^i}|_p$, 
where the lower index reflects their covariant transformation law. One could in principle 
adopt the opposite convention, writing $x_i$, $\partial^i$, and $dx_i$, with the roles of 
upper and lower indices reversed. The choice $x^i$, $\partial_i$, $dx^i$ is simply the 
standard convention.} Therefore, writing $x^i$ with an upper index ensures that the 
placement of indices in $\partial_i$ and $dx^i$ is consistent with their 
transformation rules.\footnote{In any case, the assignment of index positions is conventional up to a 
global swap of all upper and lower indices. What matters is the overall consistency.}
\end{remark}
\begin{definition}[Operations on tensors]\label{opetensors}
Let $p\in M$ and let $T\in T^{(r,s)}(M)_p$ be an $(r,s)$--tensor at $p$.
\begin{enumerate}[label=\roman*)]

\item \emph{Tensor product.}  
If $S\in T^{(r,s)}(M)_p$ and $T\in T^{(r',s')}(M)_p$, their tensor product is the
$(r+r',s+s')$--tensor
\begin{align}
    &(S\otimes T)(\alpha^1,\dots,\alpha^{r+r'},v_1,\dots,v_{s+s'})
    :=\\
    &S(\alpha^1,\dots,\alpha^r,v_1,\dots,v_s)\,
       T(\alpha^{r+1},\dots,\alpha^{r+r'},v_{s+1},\dots,v_{s+s'}).
\end{align}

\item \emph{Contraction of indices.}  
Contracting a contravariant with a covariant slot produces an $(r-1,s-1)$--tensor.  
In components, if
\begin{align}
    T=T^{i_1\dots i_r}_{\phantom{i_1\dots i_r}j_1\dots j_s}\,
        \partial_{i_1}\otimes\dots\otimes \partial_{i_r}
        \otimes dx^{j_1}\otimes\dots\otimes dx^{j_s},
\end{align}
then contracting $i_k$ with $j_\ell$ gives a new tensor
\begin{align}
    T^{i_1\dots \widehat{i_k}\dots i_r}_{\phantom{i_1\dots i_r}j_1\dots \widehat{j_\ell}\dots j_s}
    := T^{i_1\dots i_r}_{\phantom{i_1\dots i_r}j_1\dots j_s}\,\delta^{j_\ell}_{\ i_k}.
\end{align}

\item \emph{Interior product.}  
Given a vector $v\in T_pM$, the \emph{interior product} of $v$ into $T$ is the $(r,s-1)$--tensor
\begin{align}
    \iota_v T(\alpha^1,\dots,\alpha^r,v_2,\dots,v_s)
    := T(\alpha^1,\dots,\alpha^r,v,v_2,\dots,v_s).
\end{align}
Dually, given a covector $\alpha\in T_p^*M$, one defines it via the insertion in a contravariant slot by
\begin{align}
    \iota_\alpha T(\alpha^2,\dots,\alpha^r,v_1,\dots,v_s)
    := T(\alpha,\alpha^2,\dots,\alpha^r,v_1,\dots,v_s),
\end{align}
yielding an $(r-1,s)$--tensor.

\item \emph{Trace.}  
For a $(1,1)$--tensor $T=T^i_{\ j}\,\partial_i\otimes dx^j$, the contraction of its indices gives the
\emph{trace}
\begin{align}
    \mathrm{tr}(T) := T^i_{\ i}.
\end{align}
Equivalently, if $T:T_pM\to T_pM$ is seen as a linear map, then $\mathrm{tr}(T)$ is its ordinary trace. Notice that it is invariant with respect to the basis.
\end{enumerate}
\end{definition}

\section{Tensor fields}
\begin{definition}[Smooth fiber bundle]
A \emph{smooth fiber bundle} over a smooth manifold $M$\footnote{In the following, we drop the notation for the atlas.} (called the base space) is a quadruple $(E,M,\pi,F)$ where $E$ is a topological space, $F$ a smooth manifold, and $\pi:E\to M$ is a continuous surjective map, such that for every $p\in M$ there exists an open neighborhood $U\subseteq M$ and a homeomorphism
\begin{align}
    \phi_U:\ \pi^{-1}(U) \longrightarrow  U\times F
\end{align}
making the diagram commute:
$$
\begin{tikzcd}
\pi^{-1}(U) \arrow[r, "\phi_U"] \arrow[d, "\pi"'] & U \times F \arrow[d, "p_1"] \\
U \arrow[r, equal] & U
\end{tikzcd}
$$
where $p_1:U\times F\to U$ is the natural projection to the first component.
Moreover, on overlaps $U\cap V$, we have transition maps given by
\begin{align}
\phi_{U}\circ \phi_{V}^{-1} \colon &(U\cap V)\times F 
   \longrightarrow (U\cap V)\times F, \\[4pt]
&(x,f) \longmapsto \bigl(x,\, g_{UV}(x)(f)\bigr),
\end{align}
where $g_{UV}(x)\colon F\to F$ is a diffeomorphism for all $x\in U\cap V$, and the map $(x,z)\mapsto g_{UV}(x)(z)$ from $(U\cap V)\times F$ to $F$ is smooth.\\
In this way, the smooth structures of $F$ and $M$, together with the collection of all \textit{local trivializations} $\{(U,\phi_U)\}$, endow $E$ with a unique smooth structure such that $\pi:E\to M$ is smooth and surjective.
Here, $E$ is called the \emph{total space}, $M$ the \emph{base space}, 
$\pi:E\to M$ the \emph{bundle projection}, and $F$ the \emph{typical fiber}.
\end{definition}
\begin{remark}
    Notice that, by abuse of notation, a smooth fiber bundle is also commonly denoted in the same way of its projection, namely $\pi:E\to M$.
\end{remark}
\begin{remark}
    This construction is \textit{a posteriori}, namely we endowed the bundle with a smooth structure afterwards. Another standard definition is to take the fiber bundle to be smooth directly.
\end{remark}
\begin{definition}[Smooth vector bundle]\label{smoothfb}
A \emph{smooth vector bundle} of rank $k$ over a smooth manifold $M$ is a smooth fiber bundle 
$(E,M,\pi,\mathbb R^k)$ such that:
\begin{itemize}
    \item Each fiber $E_x := \pi^{-1}(x)$, for $x\in M$, is a real vector space of dimension $k$;
    \item For every local trivialization in the chosen vector bundle atlas
    $\phi:\pi^{-1}(U)\to U\times \mathbb R^k$, the restriction 
    \begin{align}
        \phi_x:\ E_x  \longrightarrow  \{x\}\times\mathbb R^k  \cong  \mathbb R^k
    \end{align}
    is a linear isomorphism;
    \item The transition maps on overlaps
    \begin{align}
        \phi_{U}\circ \phi_{V}^{-1}:\ (U\cap V)\times\mathbb R^k 
         \longrightarrow  (U\cap V)\times\mathbb R^k
    \end{align}
    are smooth maps of the form
    \begin{align}
        (x,v)\longmapsto (x,\,g_{UV}(x)v),
    \end{align}
    where $g_{UV}\colon x\mapsto g_{UV}(x)$ is a smooth map on $U\cap V$, and $g_{UV}(x)$ is an invertible matrix for all $x\in U\cap V$.
\end{itemize}
\end{definition}
\begin{definition}[Section of a smooth fiber bundle]
Let $(E,M,\pi,F)$ be a smooth fiber bundle.
A \emph{local section} over an open set $U\subseteq M$ is a smooth map
\begin{align}
s:\ U &\longrightarrow E,\\
\pi\circ s &= \mathrm{id}_U.
\end{align}
In a local trivialization $\phi_U:\pi^{-1}(U)\to U\times F$, a section $s$ over $U$
corresponds to a smooth map $\sigma_U:U\to F$ via
\begin{align}
(\phi_U\circ s)(x) &= \big(x,\sigma_U(x)\big).
\end{align}
On overlaps $U\cap V$, the change of trivialization satisfies
\begin{align}
\sigma_U(x) &= g_{UV}(x)\big(\sigma_V(x)\big), \qquad x\in U\cap V,
\end{align}
where $g_{UV}$ is given by the transition maps are
\begin{align}
\phi_U\circ \phi_V^{-1}(x,f) &= \big(x,\,g_{UV}(x)(f)\big),
\end{align}
as defined in \cref{smoothfb}. Moreover, we denote by $\Gamma(U,E)=\Gamma(E|_U)$ the set of smooth sections over $U$, and by $\Gamma(E)=\Gamma(M,E)$ the set of global smooth sections.
\end{definition}
\begin{remark}
A local section with domain $M$ is called a \emph{global section}; thus every $s\in\Gamma(E)$ satisfies $\pi\circ s=\mathrm{id}_M$.
\end{remark}
\begin{remark}
Notice that, for a vector bundle, the relation $\sigma_U(x)=g_{UV}(x)\,\sigma_V(x)$ is a linear change of coordinates in $\mathbb R^k$: here the assignment $g_{UV}:U\cap V\to \mathrm{Mat}_{k\times k}(\mathbb R)$ is smooth and, for every $x\in U\cap V$, the matrix $g_{UV}(x)$ is invertible.
\end{remark}
\begin{definition}[Vector bundle morphism and isomorphism]
Let $(E,M,\pi,\mathbb R^k)$ and $(E',M',\pi',\mathbb R^{k'})$ be smooth vector bundles.
A \emph{vector bundle morphism over a smooth map $f:M\to M'$} is a smooth map\footnote{If $E,E'$ are smooth vector bundles (so their total spaces are smooth manifolds), a map $\Phi:E\to E'$ is called \emph{smooth} iff it is a smooth map of manifolds. In other words, since, for every two local trivializations of $E$ and $E'$, we have that $\phi'_U\circ\Phi\circ\phi_U^{-1}\colon (x,v)\mapsto(f(x),A(x)(v))$, one checks that $x\mapsto A(x)\in\operatorname{Hom}(\mathbb R^k,\mathbb R^{k'})$ is smooth on the relevant base open set (since $f$ is smooth). Moreover, if $M=M'$ and $f=\mathrm{id}_M$, we say that $\Phi$ is a bundle morphism over $\mathrm{id}_M$.}
\begin{align}
\Phi:\ E \longrightarrow E'
\end{align}
such that:
\begin{enumerate}
\item[\emph{(i)}] It covers $f$, i.e.
\begin{align}
\pi'\circ \Phi  =  f\circ \pi,
\end{align}
so the diagram commutes:
$$
\begin{tikzcd}
E \arrow[r, "\Phi"] \arrow[d, "\pi"'] & E' \arrow[d, "\pi'"] \\
M \arrow[r, "f"'] & M'
\end{tikzcd}
$$
\item[\emph{(ii)}] It is fiberwise linear: for every $x\in M$, the restricted map
\begin{align}
\Phi_x:\ E_x \longrightarrow E'_{f(x)}
\end{align}
is linear.
\end{enumerate}
A vector bundle morphism $\Phi:E\to E'$ over $f$ is a \emph{vector bundle isomorphism} if there exists a vector bundle morphism
\begin{align}
\Psi:\ E' \longrightarrow E
\end{align}
over $f^{-1}$ (necessarily $f$ is a diffeomorphism) such that $\Psi\circ \Phi=\mathrm{id}_E$ and $\Phi\circ \Psi=\mathrm{id}_{E'}$.
Equivalently, $\Phi$ is a diffeomorphism of total spaces covering $f$ and each $\Phi_x:E_x\to E'_{f(x)}$ is a linear isomorphism.
\end{definition}
\begin{proposition}[Global frame iff trivial bundle]\label{prop:trivial-global-frame}
Let $(E,M,\pi,\mathbb R^k)$ be a smooth vector bundle of rank $k$.  
The following are equivalent:
\begin{enumerate}
\item[\emph{(i)}] $E$ is smoothly trivial, i.e. there exists a vector bundle isomorphism
\begin{align}
\Phi:\ E  \xrightarrow{\ \cong\ }  M\times \mathbb R^k
\end{align}
covering $\mathrm{id}_M$.
\item[\emph{(ii)}] There exist $k$ nowhere--vanishing global sections $s_1,\dots,s_k\in\Gamma(E)$ such that
\begin{align}
\{s_1(x),\dots,s_k(x)\}\ \text{is a basis of }E_x\quad\text{for all }x\in M.
\end{align}
\end{enumerate}
\end{proposition}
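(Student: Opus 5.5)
For (i)$\Rightarrow$(ii), I will pull back the standard frame. Let $e_1,\dots,e_k$ be the canonical basis of $\mathbb R^k$, and let $\Psi=\Phi^{-1}:M\times\mathbb R^k\to E$ be the inverse isomorphism over $\mathrm{id}_M$. I define
\begin{align}
s_i(x):=\Psi(x,e_i),\qquad i=1,\dots,k.
\end{align}
Each $s_i$ is smooth, being the composition of the smooth map $x\mapsto(x,e_i)$ with $\Psi$. Since $\pi\circ\Psi=p_1$, we get $\pi\circ s_i=\mathrm{id}_M$, so each $s_i$ is a global section. Because $\Psi_x:\{x\}\times\mathbb R^k\to E_x$ is a linear isomorphism, it carries the basis $(e_i)$ to a basis $(s_i(x))$ of $E_x$. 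In particular no $s_i(x)$ vanishes.

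For (ii)$\Rightarrow$(i), I will write down the candidate inverse directly,
\begin{align}
\Psi:\ M\times\mathbb R^k\longrightarrow E,\qquad \Psi(x,a)=\sum_{i=1}^k a^i\,s_i(x).
\end{align}
By construction $\Psi$ covers $\mathrm{id}_M$ and is fiberwise linear. Each restriction $\Psi_x$ is bijective because $(s_i(x))$ is a basis of $E_x$. Hence $\Psi$ is a bijection of total spaces that is a linear isomorphism on each fiber. By the equivalent characterization in the definition of vector bundle isomorphism, it then suffices to show that $\Psi$ and $\Psi^{-1}$ are both smooth, and to take $\Phi=\Psi^{-1}$.

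To check smoothness I will work in a local trivialization $\phi_U:\pi^{-1}(U)\to U\times\mathbb R^k$. There the sections become smooth maps $\sigma_{i}:U\to\mathbb R^k$, written as $\phi_U(s_i(x))=(x,\sigma_i(x))$. Assembling them as columns gives a smooth matrix-valued map $A(x)=(\sigma_1(x)|\cdots|\sigma_k(x))$, and in these coordinates
\begin{align}
\phi_U\circ\Psi(x,a)=(x,A(x)a),
\end{align}
which is manifestly smooth. Since the $\sigma_i(x)$ form a basis, $A(x)$ is invertible for every $x\in U$. The local expression of $\Psi^{-1}$ is $(x,v)\mapsto(x,A(x)^{-1}v)$.

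The main obstacle is smoothness of $\Psi^{-1}$, i.e. smoothness of $x\mapsto A(x)^{-1}$. I will handle it with Cramer's rule: $A(x)^{-1}=\operatorname{adj}(A(x))/\det A(x)$. Every entry is a polynomial in the smooth entries of $A(x)$ divided by the nowhere-vanishing smooth function $\det A(x)$, so it is smooth. Because trivializing sets cover $M$ and smoothness is a local property, $\Psi^{-1}$ is smooth on all of $E$. Thus $\Phi:=\Psi^{-1}$ is the required vector bundle isomorphism $E\cong M\times\mathbb R^k$.
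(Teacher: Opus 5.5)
Your proof is correct and follows essentially the same route as the paper. For (i)$\Rightarrow$(ii) you pull back the standard frame, and for (ii)$\Rightarrow$(i) you express the sections in a local trivialization as an invertible smooth matrix, whose inverse gives the local form of the isomorphism. The differences are only cosmetic: you build $\Psi:M\times\mathbb R^k\to E$ and invert it, whereas the paper builds $\Phi$ directly by extracting coefficients, and you justify smoothness of $x\mapsto A(x)^{-1}$ via Cramer's rule where the paper simply asserts it.
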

\begin{proof}
\emph{(i) $\Rightarrow$ (ii).}
Let $e_1,\dots,e_k$ be the standard basis of $\mathbb R^k$ and define the canonical sections of the trivial bundle:
\begin{align}
\iota_i &: M \longrightarrow M\times\mathbb R^k,
&
\iota_i(x) &= (x,e_i),
\qquad i=1,\dots,k.
\end{align}
Since $\Phi:E\to M\times\mathbb R^k$ is a bundle isomorphism covering $\mathrm{id}_M$, its inverse $\Phi^{-1}$ is smooth and covers $\mathrm{id}_M$. Set
\begin{align}
s_i &:= \Phi^{-1}\circ \iota_i \in \Gamma(E),
\qquad i=1,\dots,k,
\\
\Phi\bigl(s_i(x)\bigr) &= (x,e_i),
\qquad x\in M.
\end{align}
Then, for each $x\in M$, the fiber map
\begin{align}
\Phi|_{E_x}:\ E_x \xrightarrow{\ \cong\ } \{x\}\times\mathbb R^k \cong \mathbb R^k
\end{align}
is a linear isomorphism, hence
\begin{align}
\{s_1(x),\dots,s_k(x)\}
= \bigl(\Phi|_{E_x}\bigr)^{-1}\!\bigl(\{e_1,\dots,e_k\}\bigr)
\end{align}
is a basis of $E_x$.\\
\emph{(ii) $\Rightarrow$ (i).}
Let $s_1,\dots,s_k\in\Gamma(E)$ be global sections forming a basis of $E_x$ for all $x\in M$.
For $v\in E_x$, write uniquely
\begin{align}
v = v^i(x)\,s_i(x).
\end{align}
Define the bundle morphism
\begin{align}
\Phi:\ E \longrightarrow M\times\mathbb R^k,\qquad
v\in E_x \longmapsto \bigl(x,\,(v^1(x),\dots,v^k(x))\bigr).
\end{align}
Fix a local trivialization $\phi_U:\pi^{-1}(U)\to U\times\mathbb R^k$ and let
\begin{align}
\phi_U\circ s_i(x)  =  \bigl(x,\sigma_i(x)\bigr),
\end{align}
where, for every $i=1,\dots,k$ and $x\in U$, $\sigma_i(x)=(\sigma_i^1(x),\dots,\sigma_i^k(x))$ is a vector with $k$ components. Therefore, we have the invertible matrix
\begin{align}
    S_U(x) := [\sigma_1(x)^\top\ \cdots\ \sigma_k(x)^\top].
\end{align}
In order to identify the trivial bundle $M\times\mathbb R^k$ with its canonical trivialization
$\phi'_U: (x,w)\mapsto (x,w)$, for $\phi_U(v)=(x,w)$, we write\footnote{$\phi_U(v)=(x,w),\qquad 
w  =  S_U(x)v(x),
\quad\text{so}\quad
v(x)  =  S_U(x)^{-1} w.$}
\begin{align}
\phi'_U\circ \Phi \circ \phi_U^{-1}(x,w)
 = 
\bigl(x,\,S_U(x)^{-1}w\bigr).
\end{align}
Hence $\Phi$ covers $\mathrm{id}_M$, is fiberwise linear with fiber maps
$\Phi|_{E_x}:E_x\to\{x\}\times\mathbb R^k$ linear isomorphisms, and is smooth since
$x\mapsto S_U(x)$ and $x\mapsto S_U(x)^{-1}$ are smooth.
Therefore $\Phi$ is a smooth vector bundle isomorphism and $E\cong M\times\mathbb R^k$.
\end{proof}
\begin{definition}[Subbundle]
Let $(E,M,\pi,\mathbb R^k)$ be a smooth vector bundle. 
A \emph{(smooth) subbundle} of rank $r$ is a smooth embedded submanifold $S\subset E$ such that:
\begin{enumerate}
\item[\emph{(i)}] The restricted projection $\pi_S:=\pi|_{S}:S\to M$ is a smooth surjective submersion;
\item[\emph{(ii)}] For every $x\in M$, the fiber 
\begin{align}
S_x := S\cap E_x\ \subset\ E_x
\end{align}
is a linear subspace of dimension $r$ (hence the rank is constant).
\end{enumerate}
Equivalently, for every $x\in M$ there exists an open neighborhood $U\ni x$ and a local trivialization 
$\phi_U:\pi^{-1}(U)\to U\times\mathbb R^k$ such that
\begin{align}
\phi_U\bigl(S\cap \pi^{-1}(U)\bigr) = U\times\bigl(\mathbb R^r\times\{0\}\bigr)\ \subset\ U\times\mathbb R^k.
\end{align}
\end{definition}
\begin{corollary}[Splitting off a trivial summand from $j$ global sections]\label{cor:j-sections-splitting}
Let $(E,M,\pi,\mathbb R^k)$ be a smooth vector bundle of rank $k$. 
Assume there exist $j$ global sections $s_1,\dots,s_j\in\Gamma(E)$, with $1\le j\le k$, which are pointwise linearly independent (in particular, each $s_i$ is nowhere vanishing). Then:
\begin{enumerate}
\item[\emph{(a)}] The pointwise span
\begin{align}
S_x := \mathrm{span}\{s_1(x),\dots,s_j(x)\}\ \subset\ E_x
\end{align}
defines a smooth rank-$j$ subbundle $S\subset E$.
Moreover, the map
\begin{align}
\Psi:\ \underline{\mathbb R}^j = M\times\mathbb R^j \longrightarrow S,\qquad
(x,a_1,\dots,a_j)\longmapsto \sum_{i=1}^ja_i\,s_i(x),
\end{align}
is a smooth vector bundle isomorphism; in particular $S\cong \underline{\mathbb R}^j$ is a trivial subbundle.
\item[\emph{(b)}] There exists a smooth rank-$(k-j)$ subbundle $E'\subset E$ such that
\begin{align}
E\ \cong\ S\ \oplus\ E'\ \cong\ \underline{\mathbb R}^j\ \oplus\ E'.
\end{align}
In particular, if $j=k$, then $E$ is (smoothly) trivial.
\end{enumerate}
\end{corollary}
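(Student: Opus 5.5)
For (a), I will first show that $S$ is locally a trivialized subbundle by completing the sections to a local frame. Fix $x_0\in M$ and a local trivialization $\phi_U:\pi^{-1}(U)\to U\times\mathbb R^k$. As in the proof of \cref{prop:trivial-global-frame}, write $\phi_U\circ s_i(x)=(x,\sigma_i(x))$ with $\sigma_i:U\to\mathbb R^k$ smooth and $\sigma_1(x),\dots,\sigma_j(x)$ linearly independent. Choose constant vectors $w_{j+1},\dots,w_k\in\mathbb R^k$ completing $\sigma_1(x_0),\dots,\sigma_j(x_0)$ to a basis. The determinant of the matrix $[\sigma_1(x)\ \cdots\ \sigma_j(x)\ w_{j+1}\ \cdots\ w_k]$ is continuous and nonzero at $x_0$, so after shrinking $U$ it is nonzero on $U$. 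This gives a smooth invertible matrix field $A(x)$ on $U$.

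The new trivialization $\tilde\phi_U(v)=(x,A(x)^{-1}w)$, where $\phi_U(v)=(x,w)$, is then smooth with smooth inverse, fiberwise linear, and sends $s_i(x)$ to $(x,e_i)$ for $i\le j$. Consequently $\tilde\phi_U(S\cap\pi^{-1}(U))=U\times(\mathbb R^j\times\{0\})$. This is precisely the equivalent characterization of a rank-$j$ subbundle in the definition, and it also shows that $S$ is an embedded submanifold with $\pi|_S$ a submersion. In these coordinates the map $\Psi$ reads $(x,a)\mapsto(x,(a,0))$. Hence $\Psi$ is smooth, covers $\mathrm{id}_M$, is a fiberwise linear isomorphism, and has smooth inverse. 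So $S\cong\underline{\mathbb R}^j$.

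For (b), the main obstacle is that a complement cannot simply be taken as the span of $w_{j+1},\dots,w_k$, because these local choices need not glue. I will therefore construct a bundle metric and take the orthogonal complement. On each trivializing set, pull back the Euclidean inner product; this is a smooth positive-definite pairing on $\pi^{-1}(U_\alpha)$. Then glue these pairings with a smooth partition of unity $\{\rho_\alpha\}$ subordinate to the cover, whose existence follows from $M$ being Hausdorff and second countable. The resulting $\langle\cdot,\cdot\rangle=\sum_\alpha\rho_\alpha\langle\cdot,\cdot\rangle_\alpha$ is smooth and positive definite, since a convex combination of inner products is an inner product.

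Next, apply Gram--Schmidt locally to the frame $(s_1,\dots,s_j,\tilde s_{j+1},\dots,\tilde s_k)$ from step (a), where $\tilde s_l$ corresponds to $w_l$. The Gram--Schmidt formulas are smooth because the norms are nonvanishing. This produces a local orthonormal frame whose first $j$ vectors span $S$ and whose last $k-j$ vectors span $E'_x:=S_x^{\perp}$. Trivializing with this frame shows that $E'$ is a rank-$(k-j)$ subbundle, by the same criterion as before. Finally, the map $S\oplus E'\to E$, $(u,v)\mapsto u+v$, is smooth and fiberwise an isomorphism because $E_x=S_x\oplus S_x^\perp$. It is therefore a bundle isomorphism, since in the local frames its inverse is the identity matrix. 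Combined with (a), this gives $E\cong\underline{\mathbb R}^j\oplus E'$. When $j=k$ we have $E'=0$, and triviality recovers \cref{prop:trivial-global-frame}.
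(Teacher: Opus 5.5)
Your proposal is correct and follows essentially the paper's route. For (a) you carry out the ``direct check in local trivializations'' that the paper offers as an alternative to the constant rank theorem, and for (b) you take the orthogonal complement with respect to a bundle metric built from a partition of unity, exactly as the paper does. Your local frame completion and Gram--Schmidt step also supply a verification the paper only asserts: that $S^{\perp_h}$ is a smooth rank-$(k-j)$ subbundle and that the sum map $S\oplus E'\to E$ is a bundle isomorphism.
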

\begin{proof}
Define the smooth bundle map
\begin{align}
\Phi:\ M\times\mathbb R^j \longrightarrow E,\qquad
(x,a)\longmapsto \sum_{i=1}^j a_i\,s_i(x).
\end{align}
Fiberwise, $\Phi_x:\mathbb R^j\to E_x$ is linear and injective by pointwise linear independence, hence has constant rank $j$. By the constant rank theorem for vector bundles\footnote{Let $\Phi:E\to E'$ be a smooth vector bundle morphism covering $f:M\to M'$ and assume the fiberwise rank $\mathrm{rank}(\Phi_x)$ is constant $r$. Regard $\Phi$ as the morphism $E\to f^*E'$ over $\mathrm{id}_M$. Then, locally there exist trivializations of $E$ and $f^*E'$ and smooth changes of frame such that its matrix takes the block form $\begin{psmallmatrix} I_r & 0 \\ 0 & 0 \end{psmallmatrix}$. Consequences: (a) if $f=\mathrm{id}_M$, both $\ker\Phi\subset E$ and $\mathrm{im}\,\Phi\subset E'$ are smooth subbundles of constant ranks $k-r$ and $r$, and one obtains local splittings $E|_U\simeq \ker\Phi\oplus S$ and $E'|_U\simeq \mathrm{im}\,\Phi\oplus Q$ for suitable subbundles $S,Q$; (b) in general, the fiberwise images $\Phi_x(E_x)$ define a rank-$r$ subbundle of the pull-back bundle $f^{*}E'\to M$.} (or by a direct check in local trivializations), the image
\begin{align}
S = \mathrm{im}(\Phi) = \bigsqcup_{x\in M} S_x
\end{align}
is a smooth subbundle of rank $j$. The map $\Psi$ in \emph{(a)} is just $\Phi$ with codomain restricted to $S$; fiberwise it is a linear isomorphism, and smoothness of the inverse follows from writing coordinates in the smooth frame $\{s_i\}_{i=1}^j$. This proves \emph{(a)}.\\
For \emph{(b)}, fix a smooth bundle metric $h$ on $E$ (it exists by partition of unity). Set $E':=S^{\perp_h}$. Then $E'$ is a smooth subbundle of rank $k-j$, and the orthogonal projections are smooth, yielding a (metric) direct sum decomposition $E=S\oplus E'$. The final claim follows: if $j=k$, then $E\cong\underline{\mathbb R}^k$ is trivial.
\end{proof}
\begin{definition}[Tangent bundle]
Let $M$ be a smooth $n$–dimensional manifold. The \emph{tangent bundle} is the rank-$n$ smooth vector bundle 
\begin{align}
\pi:TM \longrightarrow M,
\end{align}
whose fiber at $p\in M$ is $\pi^{-1}(p)=T_pM$.\footnote{Since $\pi$ is surjective and we build the bundle such that $\pi^{-1}(p)=T_pM$ for all $p\in M$, this means that $TM=\bigsqcup_{p\in M}T_pM$ as a set. Then we need to give it a topology and a smooth structure.} Given a chart over $U$ with coordinates $x=(x^1,\dots,x^n)$ and, for each $T_pM$ with $p\in U$, coordinate basis $\partial_i|_p:=\tfrac{\partial}{\partial x^i}\big|_p$,\footnote{Notice that we now introduce a $p$-dependent notation because here the point $p\in U$ is not fixed.} the map $\phi_U:\pi^{-1}(U)\to U\times \mathbb R^n$, defined fiberwise for $p\in U$ as
\begin{align}
\phi_U:\ \pi^{-1}(p) &\longrightarrow \{p\}\times\mathbb R^n\cong\mathbb R^n\\
v_p = v^i\,\partial_i|_p &\longmapsto (v^1,\dots,v^n),
\end{align}
is a local trivialization. On overlaps, with respect to coordinates $(x^1,\dots,x^n)$ and $(x'^{1},\dots,x'^{n})$, the transition maps are the Jacobian matrices at $p\in U\cap V$:
\begin{align}
g_{UV}(p)^i_{\ j} \,=\, \frac{\partial x^{i}}{\partial x^{\prime j}}(p),
\qquad
\phi_U\circ\phi_V^{-1}(p,v) \,=\, \big(p,\, g_{UV}(p)\,v\big).
\end{align}
\end{definition}

\begin{definition}[Cotangent bundle]
Let $M$ be a smooth $n$–dimensional manifold. The \emph{cotangent bundle} is the rank-$n$ smooth vector bundle
\begin{align}
\pi:T^*M &\longrightarrow M,
\end{align}
whose fiber at $p\in M$ is $\pi^{-1}(p)=T^*_pM$.
Given a chart over $U$ with coordinates $x=(x^1,\dots,x^n)$ and, for each $T_p^*M$ with $p\in U$, coordinate basis $dx^i|_p$, the map $\phi_U:\pi^{-1}(U)\to U\times \mathbb R^n$, defined fiberwise for $p\in U$ as
\begin{align}
\phi_U:\ \pi^{-1}(p) &\longrightarrow \{p\}\times\mathbb R^n\cong\mathbb R^n, \\
\alpha_p = a_i\,dx^i|_p &\longmapsto (a_1,\dots,a_n),
\end{align}
is a local trivialization. On overlaps, with respect to coordinates $(x^1,\dots,x^n)$ and $(x'^{1},\dots,x'^{n})$, the transition maps are the inverse transposes of the tangent transition matrices at $p\in U\cap V$:
\begin{align}
g^{T^*}_{UV}(p)_i{}^{j} = \frac{\partial x^{\prime j}}{\partial x^{i}}(p),
\qquad \phi_U\circ\phi_V^{-1}(p,a) = \bigl(p,\, g^{T^*}_{UV}(p)\,a\bigr).
\end{align}
\end{definition}
\begin{definition}[Vector fields and covector fields]
A \emph{vector field} on a smooth manifold $M$ is a smooth section $X\in\Gamma(TM)$. In components, on a chart $(U,x)$, it reads\footnote{Note that the notation is implicit, but every quantity depends on points in $U$.}
\begin{align}
X  =  X^i\,\partial_i,\qquad X^i\in C^\infty(U),\qquad
X^{\prime i}  =  \frac{\partial x^{\prime i}}{\partial x^j}\,X^j.
\end{align}
A \emph{covector field} on $M$ is a smooth section $\alpha\in\Gamma(T^*M)$. On a chart $(U,x)$, it reads
\begin{align}
\alpha  =  \alpha_i\,dx^i,\qquad \alpha_i\in C^\infty(U),\qquad
\alpha'_i  =  \frac{\partial x^{j}}{\partial x^{\prime i}}\,\alpha_j.
\end{align}
\end{definition}
\begin{definition}[Tensor fields]
A \emph{$(r,s)$–tensor field} on a smooth manifold $M$ is a smooth section $T \in \Gamma(\,T^{(r,s)}(M))$, where
\begin{align}
T^{(r,s)}(M) := \underbrace{TM\otimes\cdots\otimes TM}_{r\ \text{times}}
\otimes \underbrace{T^*M\otimes\cdots\otimes T^*M}_{s\ \text{times}}
\end{align}
is the $(r,s)$–tensor bundle. In a chart $(U,x)$, $T$ can be expressed in components as
\begin{align}
T = T^{i_1\cdots i_r}{}_{j_1\cdots j_s} 
   \partial_{i_1}\otimes\cdots\otimes \partial_{i_r}
   \otimes dx^{j_1}\otimes\cdots\otimes dx^{j_s},\quad
T^{i_1\cdots i_r}{}_{j_1\cdots j_s} \in C^\infty(U).
\end{align}
On coordinate overlaps $(U,x)$ and $(V,x')$, the components transform as
\begin{align}
T^{\prime i_1\cdots i_r}{}_{j_1\cdots j_s}
&= 
\frac{\partial x^{\prime i_1}}{\partial x^{k_1}}
\cdots
\frac{\partial x^{\prime i_r}}{\partial x^{k_r}}
\frac{\partial x^{\ell_1}}{\partial x^{\prime j_1}}
\cdots
\frac{\partial x^{\ell_s}}{\partial x^{\prime j_s}} 
T^{k_1\cdots k_r}{}_{\ell_1\cdots \ell_s}.
\end{align}
\end{definition}
\begin{remark}
Notice that the constructions $\mathrm{Sym}^k$ and $\Lambda^k$, 
applied fiberwise to the cotangent spaces $T^*_pM$, generate smooth vector bundles
\begin{align}
\mathrm{Sym}^k(T^*M), \qquad \Lambda^kT^*M.
\end{align}
\end{remark}
\begin{definition}[Pushforward and pull-back of tensor fields]
Let $F:M\to N$ be a smooth map.
\begin{itemize}
\item If $S\in\Gamma(T^{(0,s)}(N))$, its \textit{pull-back} is $F^* S\in\Gamma(T^{(0,s)}(M))$ given by
\begin{align}
(F^* S)_p(v_1,\dots,v_s)
 :=  S_{F(p)}\bigl(F_*|_p\,v_1,\dots,F_*|_p\,v_s\bigr),
\end{align}
for $v_i\in T_pM$ and $F_*|_p$ defined in \cref{pushpullfibers}.

\item If $T\in\Gamma(T^{(r,0)}(M))$, its \textit{pushforward} along $F$ is
\begin{align}
(F_{\!* }T)_p(\alpha^1,\dots,\alpha^r)
 :=  T_p\bigl((F^* \alpha^1)_p,\dots,(F^* \alpha^r)_p\bigr),
\end{align}
for $\alpha^i\in T^* _{F(p)}(N)$. Thus $p\mapsto (F_{\!* }T)_p\in T^{(r,0)}_{F(p)}(N)$ is a section of $F^* T^{(r,0)}(N)$ (the \textit{pull-back bundle}). Moreover, if $F$ is a local diffeomorphism $F|_U:U\!\to\!V$,\footnote{Here “(local) diffeomorphism” means that for every $p\in M$ there are neighborhoods $U\ni p$, $V\ni F(p)$ with $F|_U:U\to V$ a diffeomorphism.} $(F_* T)_q:=(F_{\!* }T)_{(F|_U)^{-1}(q)}$, so this restricted pushforward is a section over $V$. These local fields need not agree for different inverse branches; a diffeomorphism $F:M\to N$ gives a global pushforward, while for a general map a field on $N$ requires projectability.
\end{itemize}
\end{definition}

\begin{remark}[Pushforward vs pull-back]
For a smooth map $F:M\to N$, the pushforward
\begin{align}
F_*:\ TM\longrightarrow TN,\qquad (x,v)\longmapsto\bigl(F(x),\,F_*|_x v\bigr)
\end{align}
is a vector bundle morphism covering $F$. Note that, although $F_*:TM\to TN$ is a vector bundle morphism covering $F$, it does not induce a natural map on sections $\Gamma(TM)\to\Gamma(TN)$ for arbitrary $F$. A diffeomorphism gives such a map, while a local diffeomorphism only gives it after restriction to an inverse branch; in general one only has a natural map $\Gamma(TM)\to \Gamma(F^{*}TN)$. By contrast, the pull--back on covectors is defined fiberwise by
\begin{align}
F^*|_x:T^{*}_{F(x)}N\longrightarrow T^{*}_xM,
\end{align}
and induces the natural map on sections
\begin{align}
F^{*}:\ \Gamma(T^*N)\longrightarrow \Gamma(T^*M),
\end{align}
but it does \textit{not} induce a bundle morphism 
\begin{align}
    T^{*}N\longrightarrow T^{*}M.
\end{align}
In general, there is \textit{no} such natural bundle map $T^{*}N \longrightarrow T^{*}M$ covering $F$. What naturally always exists is the bundle morphism
\begin{align}
\iota:\ F^{*}T^{*}N \longrightarrow T^{*}M,\qquad
(x,\alpha_{F(x)})\longmapsto \bigl(x,F^*|_x\alpha_{F(x)}\bigr),
\end{align}
which covers $\mathrm{id}_M$. Composing the natural map $\alpha\mapsto(x\mapsto(x,\alpha_{F(x)}))$ from $\Gamma(T^*N)$ to $\Gamma(F^*T^*N)$ with the map on sections induced by $\iota$ gives exactly the pull-back $F^*: \Gamma(T^*N)\to\Gamma(T^*M)$. However, if $F$ is a diffeomorphism, then the natural map of sections $F^*$ induces a bundle morphism
\begin{align}
    F^*\colon T^{*}N\longrightarrow T^{*}M\qquad \alpha_y \longmapsto (dF_{F^{-1}(y)})^{*}\alpha_y,
\end{align}
which covers $F^{-1}$.
\end{remark}
\begin{remark}
    Notice that the map $F_*$ is also denoted as $dF$ and it is also called the \textit{tangential map}.
\end{remark}
\begin{proposition}[Properties of pushforward and pull-back]
Let $F:M\to N$ and $G:N\to P$ be diffeomorphisms between smooth manifolds. The fiberwise and covariant pull-back statements below also hold for arbitrary smooth maps. Then:

\begin{enumerate}[label=\textnormal{(\roman*)}]
\item \emph{Linearity.} For every $p\in M$, the maps
\begin{align}
F_*|_p &: T_pM \longrightarrow T_{F(p)}N, \\
F^*|_p &: T^*_{F(p)}N \longrightarrow T^*_pM
\end{align}
are linear. Hence, since $F$ is a diffeomorphism, also
\begin{align}
F_* &: \Gamma(TM)\longrightarrow \Gamma(TN), \\
F^* &: \Gamma(T^*N)\longrightarrow \Gamma(T^*M)
\end{align}
are well--defined and $\mathbb R$–linear.

\item \emph{Identity.}
\begin{align}
(\mathrm{id}_M)_* &= \mathrm{id}_{\Gamma(TM)}, \\
(\mathrm{id}_M)^* &= \mathrm{id}_{\Gamma(T^*M)}.
\end{align}

\item \emph{Functoriality.}
\begin{align}
(G\circ F)_* &= G_* \circ F_*, \\
(G\circ F)^* &= F^* \circ G^*.
\end{align}
\end{enumerate}
\end{proposition}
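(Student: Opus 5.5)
The plan is to reduce everything to the pointwise definitions in \cref{pushpullfibers} and to the chain rule for derivations. For (i), linearity of $F_*|_p$ follows directly from the formula $(F_*|_p v)(f)=v(f\circ F)$: for $v,w\in T_pM$ and $\lambda\in\mathbb R$, we get $(v+\lambda w)(f\circ F)=v(f\circ F)+\lambda w(f\circ F)$. We also have to check that $F_*|_p v$ really lies in $T_{F(p)}N$, i.e.\ that it is a derivation at $F(p)$. This comes from $(fg)\circ F=(f\circ F)(g\circ F)$ together with the Leibniz rule for $v$ at $p$, since $(f\circ F)(p)=f(F(p))$. Linearity of $F^*|_p$ is immediate from $(F^*|_p(\alpha+\lambda\beta))(v)=(\alpha+\lambda\beta)(F_*|_pv)$. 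On sections, because $F$ is a diffeomorphism, we define $(F_*X)_q:=F_*|_{F^{-1}(q)}X_{F^{-1}(q)}$ and $(F^*\alpha)_p:=F^*|_p\alpha_{F(p)}$. Linearity over $\mathbb R$ is then inherited pointwise.

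The only genuine work in (i) is smoothness of these sections, so that they lie in $\Gamma(TN)$ and $\Gamma(T^*M)$. I would check it in charts $(U,x)$ on $M$ and $(V,y)$ on $N$ using the coordinate basis of \cref{coordTpM}. Applying $F_*|_p$ to $\partial_i$ and testing on $y^a$ gives
\begin{align}
F_*|_p\,\partial_i=\frac{\partial y^a\circ F}{\partial x^i}(p)\,\frac{\partial}{\partial y^a}\Big|_{F(p)}.
\end{align}
Hence $F_*X$ has components $\bigl(\tfrac{\partial (y^a\circ F)}{\partial x^i}X^i\bigr)\circ F^{-1}$, which are smooth because $F$ and $F^{-1}$ are smooth. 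Dually, $F^*(\alpha_a\,dy^a)=(\alpha_a\circ F)\,\tfrac{\partial (y^a\circ F)}{\partial x^i}\,dx^i$ has smooth components. This formula uses no inverse, which shows that the covector statement holds for arbitrary smooth $F$.

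Part (ii) is trivial: $(\mathrm{id}_*|_pv)(f)=v(f)$ and $(\mathrm{id}^*|_p\alpha)(v)=\alpha(v)$. For (iii), I would first prove the pointwise chain rule
\begin{align}
((G\circ F)_*|_pv)(h)=v(h\circ G\circ F)=(F_*|_pv)(h\circ G)=\bigl(G_*|_{F(p)}F_*|_pv\bigr)(h).
\end{align}
Dualizing gives $(G\circ F)^*|_p=F^*|_p\circ G^*|_{F(p)}$, because $\gamma((G\circ F)_*v)=\gamma(G_*F_*v)=(G^*\gamma)(F_*v)=(F^*G^*\gamma)(v)$. The statements on sections then follow by evaluating at base points, using $(G\circ F)^{-1}=F^{-1}\circ G^{-1}$ for the pushforward.

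I expect the main obstacle to be bookkeeping rather than ideas. One point is that germs are well defined under composition: $f\circ F$ is defined on the open set $F^{-1}(\operatorname{dom}f)\ni p$, and equal germs stay equal. The other point is the smoothness verification in (i), which is where the diffeomorphism hypothesis is genuinely needed for vector fields.
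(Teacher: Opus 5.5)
Your proposal is correct and follows the same pointwise route as the paper: linearity from the defining formulas, the identity map acting trivially, and functoriality from the chain rule $d(G\circ F)_p=dG_{F(p)}\circ dF_p$ together with its dual. You are more thorough than the paper, which takes the chain rule as known and checks neither that $F_*|_p v$ is a derivation at $F(p)$ nor that $F_*X$ and $F^*\alpha$ are smooth sections; your derivation argument and coordinate check supply exactly those details.
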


\begin{proof} For the proof we will often write $dF=F_*$.\\
\textit{(i)} Linearity follows because $F_*|_p$ is a derivative (hence linear), and $F^*$ is precomposition with $F_*|_p$, which preserves linearity.  

\textit{(ii)} For the identity map $\mathrm{id}_M:M\to M$, one has
\begin{align}
d(\mathrm{id}_M)_p(v) = v, \qquad
(\mathrm{id}_M)^*(\alpha)(v) = \alpha(d(\mathrm{id}_M)_p(v)) = \alpha(v),
\end{align}
so both pushforward and pull-back act as the identity.

\textit{(iii)} For $p\in M$ and $v\in T_pM$,
\begin{align}
d(G\circ F)_p(v) = dG_{F(p)}\bigl(dF_p(v)\bigr),
\end{align}
so $(G\circ F)_*=G_*\circ F_*$.  
For $\alpha\in T^*_{G(F(p))}P$ and $v\in T_pM$,
\begin{align}
\bigl((G\circ F)^*\alpha\bigr)(v)
= \alpha\bigl(d(G\circ F)_p(v)\bigr)
= \alpha\bigl(dG_{F(p)}(dF_p(v))\bigr)
= \bigl(F^*(G^*\alpha)\bigr)(v),
\end{align}
so $(G\circ F)^*=F^*\circ G^*$.
\end{proof}

\begin{definition}[Distribution]
A \emph{smooth distribution} of rank $r$ on a smooth manifold $M$ is a rank-$r$ subbundle 
$D\subset TM$ (equivalently: for each $p\in M$ a subspace $D_p\subset T_pM$ of dimension $r$,
depending smoothly on $p$).  
It is \emph{involutive} if for all $X,Y\in\Gamma(D)$ one has
\begin{align}
[X,Y]\in\Gamma(D).
\end{align}
An \emph{integral submanifold} of $D$ is an immersed\footnote{A smooth map $F:N\to M$ is an \emph{immersion} if its pushforward $F_*|_p:T_pN\to T_{F(p)}M$ is injective for all $p\in N$ (equivalently, $\mathrm{rank}\,F_*|_p=\dim N$ for all $p\in N$). A subset $S\subset M$ is an \emph{immersed submanifold} if there exists a smooth structure on $S$ such that the inclusion $i:S\hookrightarrow M$ is an immersion. In general, an immersed submanifold need not be embedded; it is \emph{embedded} precisely when the inclusion is a topological embedding (i.e.\ a homeomorphism onto its image with the subspace topology).} submanifold $N\subset M$ of dimension $r$
such that
\begin{align}
T_pN = D_p,\qquad \forall\,p\in N.
\end{align}
A maximal connected integral submanifold is called a \emph{leaf}.
\end{definition}
\begin{definition}[Frame]
Let $U\subset M$ be open. 
A \emph{local frame of $TM$ on $U$} is a $n$–tuple of smooth vector fields 
$X_1,\dots,X_n\in\Gamma(TU)$ that are pointwise linearly independent and span each tangent fiber:
\begin{align}
T_pM=\mathrm{span}\{X_1(p),\dots,X_n(p)\},\qquad \forall\,p\in U.
\end{align}
If $D\subset TM$ is a rank-$r$ distribution, a \emph{local frame of $D$ on $U$} is
$X_1,\dots,X_r\in\Gamma(D|_U)$ such that
\begin{align}
D_p=\mathrm{span}\{X_1(p),\dots,X_r(p)\},\qquad \forall\,p\in U.
\end{align}
\end{definition}
\begin{definition}[Coframe]
Let $U\subset M$ be open.
A \emph{local coframe of $T^*M$ on $U$} is a $n$–tuple of smooth $1$–forms 
$\theta^1,\dots,\theta^n\in\Omega^1(U)$ that form a pointwise basis of $T_p^*M$ and are dual to a frame
$X_1,\dots,X_n$:
\begin{align}
\theta^{i}(X_j)=\delta^{i}_{\ j}\qquad \text{on }U.
\end{align}
\end{definition}
\begin{remark}
Not every local frame associated to a distribution $D$ is made of coordinate vector fields. Coordinate fields always commute,
\begin{align}
[\partial_{i},\partial_{j}]=0,
\end{align}
but a general local frame $X_1,\dots,X_r$ (e.g.\ for a distribution $D|_U$) need not be commuting, i.e.\ typically $[X_i,X_j]\neq 0$. The next theorem characterizes precisely when a commuting (hence coordinate) frame exists: this holds locally if and only if $D$ is involutive. Equivalently, there are coordinates $(x^1,\dots,x^n)$ on $U$ with
\begin{align}
D|_U=\mathrm{span}\{\partial_{1},\dots,\partial_{r}\}
\end{align}
and $[\partial_{i},\partial_{j}]=0$ for $1\le i,j\le r$.
\end{remark}
\begin{definition}[Bracket of vector fields]
Let $M$ be a smooth manifold. For $X,Y\in\Gamma(TM)$, their \textit{bracket} or \textit{commutator} $[X,Y]\in\Gamma(TM)$ is
the unique vector field characterized by
\begin{align}
[X,Y](f) = X\bigl(Y(f)\bigr) - Y\bigl(X(f)\bigr), \qquad \forall f\in C^\infty(M).
\end{align}
In local coordinates $(x^1,\dots,x^n)$, if $X=X^i\partial_i$ and $Y=Y^i\partial_i$, then
\begin{align}
[X,Y]  =  \Bigl(X^j\partial_jY^i - Y^j\partial_jX^i\Bigr)\,\partial_i.
\end{align}
The Lie bracket is $\mathbb{R}$–bilinear, antisymmetric ($[X,Y]=-[Y,X]$), and satisfies the Jacobi identity
\begin{align}
[X,[Y,Z]] + [Y,[Z,X]] + [Z,[X,Y]]  =  0.
\end{align}
Moreover, Lie brackets are natural under local diffeomorphisms, after restriction to an open set on which the map is a diffeomorphism, i.e.
\begin{align}
\phi_* \,[Y,Z]  =  [\phi_* Y,\phi_* Z].
\end{align}
\end{definition}
\begin{theorem}[Frobenius theorem]
Let $D\subset TM$ be a smooth distribution of rank $r$ on a smooth manifold $M$.
Then, the following statements are equivalent:
\begin{enumerate}[label=(\roman*)]
\item $D$ is involutive: for all $X,Y\in\Gamma(D)$ one has
\begin{align}
[X,Y]\in\Gamma(D).
\end{align}
\item For every $p\in M$ there exist a neighborhood $U$ of $p$ and a local frame
$X_1,\dots,X_r\in\Gamma(D|_U)$ such that on $U$
\begin{align}
[X_i,X_j]=0, 
\end{align}
for all $i,j=1,\dots,r$. In particular, there exist local coordinates $(x^1,\dots,x^n)$ on $U$ with
\begin{align}
X_i=\partial_{i}\qquad (i=1,\dots,r),
\end{align}
so the local plaques are the connected components of the level sets
\begin{align}
x^{r+1}=\mathrm{const},\ \dots,\ x^n=\mathrm{const}.
\end{align}
\item $D$ is locally integrable: for every $p\in M$ there exists an immersed $r$-submanifold
$N\subset M$ containing $p$ and
\begin{align}
T_qN = D_q,\qquad \forall\,q\in N.
\end{align}
\end{enumerate}
\end{theorem}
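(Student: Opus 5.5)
I will prove the cycle of implications (ii) $\Rightarrow$ (iii) $\Rightarrow$ (i) $\Rightarrow$ (ii). The first two are short; the real content is (i) $\Rightarrow$ (ii).

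\emph{(ii) $\Rightarrow$ (iii).} Given coordinates $(x^1,\dots,x^n)$ on $U$ with $D|_U=\mathrm{span}\{\partial_1,\dots,\partial_r\}$, take $N$ to be the slice through $p$:
\begin{align}
N=\{q\in U:\ x^{k}(q)=x^{k}(p),\ k=r+1,\dots,n\}.
\end{align}
By the definition of submanifold this is an embedded $r$-dimensional submanifold. Its tangent space at each $q\in N$ is spanned by $\partial_1|_q,\dots,\partial_r|_q$, which is exactly $D_q$.

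\emph{(iii) $\Rightarrow$ (i).} Let $X,Y\in\Gamma(D)$ and $p\in M$, and let $N\ni p$ be the integral submanifold with inclusion $i$. Since $X$ and $Y$ are tangent to $N$, they are $i$-related to vector fields $\tilde X,\tilde Y$ on $N$. By naturality of the bracket, $[X,Y]|_N=i_*[\tilde X,\tilde Y]$. Hence $[X,Y]_p\in T_pN=D_p$.

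A subtlety here: (iii) gives an integral manifold through $p$ only, but we need one through every nearby point. This is harmless, because (iii) is assumed at every point.

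If the relatedness argument feels heavy, I will use the dual characterization instead. Choose 1-forms $\theta^{r+1},\dots,\theta^n$ near $p$ that annihilate $D$. Tangency means $i^*\theta^k=0$. The invariant formula
\begin{align}
d\theta(X,Y)=X\theta(Y)-Y\theta(X)-\theta([X,Y])
\end{align}
restricted to $N$ then gives $\theta^k([X,Y])=0$, hence $[X,Y]\in D$. This route uses exterior derivatives, which the paper has not yet developed, so I prefer the relatedness argument.

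\emph{(i) $\Rightarrow$ (ii), the main obstacle.}
\begin{enumerate}
\item Choose coordinates $y$ near $p$ and reorder them so that the projection $D_p\to\mathrm{span}\{\partial_{y^1},\dots,\partial_{y^r}\}$ along the remaining $\partial_{y^k}$ is an isomorphism. By continuity it stays an isomorphism nearby.
\item This yields a unique local frame of $D$ of the form
\begin{align}
X_i=\partial_{y^i}+\sum_{k>r} a_i^k\,\partial_{y^k},\qquad i=1,\dots,r.
\end{align}
\item Compute in coordinates: $[X_i,X_j]$ has no $\partial_{y^1},\dots,\partial_{y^r}$ components. By involutivity $[X_i,X_j]\in D$, and the only element of $D$ of that form is $0$. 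Hence $[X_i,X_j]=0$.
\item Let $\Phi^{X_i}_t$ be the flows of the $X_i$. Commuting vector fields have commuting flows, and I will take this as a standard fact about flows (the paper has not proved it). Define
\begin{align}
\Psi(t^1,\dots,t^r,s^{r+1},\dots,s^n)=\Phi^{X_1}_{t^1}\circ\cdots\circ\Phi^{X_r}_{t^r}\bigl(p+(0,\dots,0,s^{r+1},\dots,s^n)\bigr),
\end{align}
where the argument is written in $y$-coordinates.
\item At the origin the differential of $\Psi$ sends $\partial_{t^i}\mapsto X_i(p)$ and $\partial_{s^k}\mapsto\partial_{y^k}$. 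These vectors form a basis of $T_pM$, so by the inverse function theorem $\Psi$ is a local diffeomorphism.
\item Because the flows commute, any $X_i$ can be moved to the front of the composition. This gives $\Psi_*\partial_{t^i}=X_i$ at every point, not only at the origin. The inverse $\Psi^{-1}$ therefore supplies coordinates with $X_i=\partial_i$.
\end{enumerate}

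The delicate parts are step 3, where the graph-like normal form makes the brackets vanish identically, and the commuting-flows fact used in steps 4 and 6. Everything else is the inverse function theorem.
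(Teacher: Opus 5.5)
Your proof is correct and follows the paper's argument step for step: the same normalized frame $X_i=\partial_{y^i}+\sum_{k>r}a_i^k\partial_{y^k}$ whose brackets are forced to vanish, the same flow-box map $\Phi^{X_1}_{t^1}\circ\cdots\circ\Phi^{X_r}_{t^r}$ applied to a transverse slice and inverted by the inverse function theorem, and the same restriction-to-$N$ argument for (iii)$\Rightarrow$(i). The paper makes two details explicit that you leave implicit: involutivity, which is stated for global sections, extends to the local frame $X_i$ by multiplying with cutoff functions; and the commuting-flows fact gets a one-line justification via uniqueness for the linearized flow equation.
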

\begin{proof}
\emph{(i) $\Rightarrow$ (ii).} Fix $p$. After reordering coordinates and shrinking to a neighborhood of $p$, the projection of $D$ onto the first $r$ coordinate directions is an isomorphism. Thus $D$ has a smooth frame
\begin{align}
X_i=\partial_{x^i}+\sum_{\alpha=r+1}^n a_i^\alpha\partial_{x^\alpha},\qquad i=1,\dots,r.
\end{align}
Involutivity also holds for local sections, by multiplying them by cutoff functions near any point. Each $[X_i,X_j]$ has zero first $r$ components and belongs to $D$, so $[X_i,X_j]=0$.
Let $\Phi^i_t$ be their local flows, obtained by the smooth ODE existence theorem. Commuting fields have commuting local flows: in coordinates, both $d\Phi^i_t(X_j)$ and $X_j\circ\Phi^i_t$ solve the same variational equation, since $[X_i,X_j]=0$, and have the same initial value; uniqueness then shows that $\Phi^i_t$ preserves $X_j$ and its integral curves.
Translate the original coordinates so that $p$ is the origin, and let $q(z)$ parametrize the transverse slice $x^1=\cdots=x^r=0$. The map
\begin{align}
F(t^1,\dots,t^r,z)=\Phi^1_{t^1}\circ\cdots\circ\Phi^r_{t^r}(q(z))
\end{align}
has invertible differential at the origin. By the inverse function theorem it is a local coordinate parametrization. Commutation of the flows gives $F_*\partial_{t^i}=X_i$, proving (ii). The case $r=0$ is immediate.\\
\emph{(ii) $\Rightarrow$ (iii).} Fixing the transverse coordinates gives an integral submanifold through every point.\\
\emph{(iii) $\Rightarrow$ (i).} Let $p$ lie on an immersed integral manifold $N$. Restrict to a locally embedded neighborhood in $N$. Any local sections $X,Y$ of $D$ restrict to tangent fields on $N$, and their bracket restricts to the bracket on $N$. Hence $[X,Y]_p\in T_pN=D_p$. Since $p$ is arbitrary, $D$ is involutive.
\end{proof}
\begin{definition}
    A \textit{smooth regular foliation} of dimension $r$ on an $n$--dimensional smooth manifold $M$ is a partition of $M$ into connected immersed $r$--dimensional submanifolds (the \emph{leaves}) such that every point $p\in M$ has a neighborhood $U$ with a chart
\begin{align}
\phi:U\longrightarrow \mathbb R^r\times\mathbb R^{n-r}
\end{align}
for which the connected components of the intersections of leaves with $U$ are the plaques given by
\begin{align}
\{(y,z)\in \mathbb R^r\times\mathbb R^{n-r}:\ z=\text{const}\},
\end{align}
i.e. the connected components of the fibers of the projection onto the transverse coordinates.\footnote{Precisely: in a foliation chart $\phi:U\to \mathbb R^r\times\mathbb R^{n-r}$ write points as $(y,z)$ with
$y\in\mathbb R^r$ (tangential coordinates) and $z\in\mathbb R^{n-r}$ (transverse coordinates). Consider the projection
$\pi:\mathbb R^r\times\mathbb R^{n-r}\to\mathbb R^{n-r}$, $\pi(y,z)=z$. For a fixed $z_0$, the \emph{fiber}
$\pi^{-1}(z_0)=\{(y,z_0):y\in\mathbb R^r\}$ is an $r$–plane. The \textit{plaques} in $U$ are the \emph{connected components} of $\pi^{-1}(z_0)\cap \phi(U)$ as $z_0$ varies. We take connected components because $\phi(U)$ need not be a product $n$--box; thus a single fiber may meet $\phi(U)$ in several disconnected pieces, each piece being one plaque. Via
$\phi^{-1}$, these plaques correspond to the local leaves in $U$.}
\end{definition}
\begin{remark}[Local leaves and the induced foliation]
Assume \textnormal{(ii)} holds in a chart $\phi:U\to V\subset\mathbb R^n$ with coordinates $(x^1,\dots,x^n)$ so that
\begin{align}
D|_U=\mathrm{span}\{\partial_{x^1},\dots,\partial_{x^r}\}.
\end{align}
Set $x^\perp\coloneqq(x^{r+1},\dots,x^n)$ and, for $c=(c_{r+1},\dots,c_n)\in\mathbb R^{n-r}$, define the level set
\begin{align}
L_c  :=  (x^\perp)^{-1}(c)
       =  \{\,q\in U:\ x^{r+1}(q)=c_{r+1},\dots,x^n(q)=c_n\,\}.
\end{align}
Since $x^\perp:U\to\mathbb R^{n-r}$ is a submersion on $U$,\footnote{Let $F:M\to N$ be smooth. We say that $F$ is a \emph{submersion at $p\in M$} if
\begin{align}
F_*|_p:\ T_pM \longrightarrow T_{F(p)}N \quad \text{is surjective}
\end{align}
(equivalently, $\operatorname{rank} F_*|_p=\dim N$). We say that $F$ is a \emph{submersion on $U\subset M$} if it is a submersion at every $p\in U$.} each $L_c$ is a smooth $r$–dimensional submanifold and
\begin{align}
T_q L_c  =  \ker d(x^\perp)_q  =  \mathrm{span}\{\partial_{x^1},\dots,\partial_{x^r}\}  =  D_q,
\qquad \forall\,q\in L_c.
\end{align}
Hence each connected component of $L_c$ is an immersed integral manifold of $D$ (a \emph{local leaf}), and every connected immersed integral manifold contained in $U$ has image in one such component, though it need not be the whole component.
Equivalently, the covectors $dx^{r+1},\dots,dx^n$ vanish on vectors tangent to any leaf, so the transverse coordinates
\begin{align}
x^{r+1},\dots,x^n
\end{align}
are constant along leaves. In other words, Frobenius theorem guarantees that a smooth regular foliation exists if and only if the distribution is involutive (and of constant rank). Under the stated smoothness and constant-rank hypotheses, these local plaques fit together into the unique foliation whose leaves are the maximal connected integral manifolds of $D$.
\end{remark}
\begin{remark}
For $D=TM$ the involutivity condition is automatic (the bracket of any two vector fields is again a vector field), but this does \emph{not} imply that an arbitrary local frame $X_1,\dots,X_n$ must commute: in general $[X_i,X_j]\neq 0$.
What involutivity guarantees here is the \emph{existence} of a local \emph{coordinate} frame (hence commuting), i.e. a system of coordinates admitting a frame of the form $(\partial_1,\dots,\partial_n)$, not necessarily the initially chosen frame.
More precisely:
\begin{align}
\text{``coordinate frame''}\ \Longleftrightarrow\ \text{``pairwise commuting local frame''}.
\end{align}
The forward implication is immediate since $[\partial_i,\partial_j]=0$. Conversely, if a local frame $E_1,\dots,E_n$ satisfies $[E_i,E_j]=0$ on $U$, then by Frobenius there exist local coordinates $(x'^1,\dots,x'^n)$ with
\begin{align}
E_i=\partial'_i \qquad (i=1,\dots,n).
\end{align}
If $(x^1,\dots,x^n)$ and $(x'^1,\dots,x'^n)$ are two coordinate systems, the associated frames and coframes are related by the Jacobian:
\begin{align}
\partial'_i &= \frac{\partial x^j}{\partial x'^i}\,\partial_j,\\
dx'^i &= \frac{\partial x'^i}{\partial x^j}\,dx^j.
\end{align}
Hence, a non-commuting frame cannot arise from a coordinate change, otherwise it would still be a coordinate frame. In other terms, this is compatible with the fact that Lie brackets are natural under local diffeomorphisms, after restriction to an open set on which the map is a diffeomorphism, i.e.
\begin{align}
\phi_* \,[Y,Z]  =  [\phi_* Y,\phi_* Z].
\end{align}
In general, any two frames (coordinate or not) differ by a smooth $GL(n)$–valued map $B:U\to GL(n)$:\footnote{This reflects that the frame bundle $FM\to M$ is a principal $GL(n)$–bundle whose structure group acts freely and transitively on each fiber.}
\begin{align}
E'_i  =  B^j{}_i\,E_j.
\end{align}
In particular, on $TM$, there always exists a $B$ such that
\begin{align}
E_i  =  B^j{}_i\,\partial_j,
\end{align}
and, after shrinking the neighborhood, $B^j{}_i=\partial x^j/\partial y^i$ for coordinates $y$ with $E_i=\partial_{y^i}$ if and only if $[E_i,E_j]=0$. Equivalently, the dual forms $(B^{-1})^i{}_j\,dx^j$ are closed and locally equal to $dy^i$.
\end{remark}
\begin{definition}[Lie algebroid]
A \emph{Lie algebroid} over a manifold $M$ is a vector bundle $A$ over $M$ together with
\begin{align}
[\,,\,]:\Gamma(A)\times\Gamma(A)\longrightarrow \Gamma(A),\qquad
\rho:A\longrightarrow TM,
\end{align}
where $\rho$ is a smooth vector bundle morphism over $\mathrm{id}_M$, called the \emph{anchor}, such that for all $s,t\in\Gamma(A)$ and $f\in C^\infty(M)$:
\begin{align}
&\text{(i) } [\,,\,]\ \text{is $\mathbb{R}$–bilinear, skew, and satisfies the Jacobi identity;}\\
&\text{(ii) } \text{Leibniz rule: } [s, f t]  =  f[s,t] + (\rho(s)f)\, t.
\end{align}
\end{definition}
\begin{remark}[From commutator to Lie algebra and Lie algebroid]
The commutator of vector fields just defined endows $\Gamma(TM)$ with Lie algebra structure on $\mathbb R$, and it satisfies the derivation rule
\begin{align}
[X, fY]  =  f[X,Y] + X(f)\,Y, \qquad X,Y\in\Gamma(TM),\ f\in C^\infty(M).
\end{align}
Hence, $(TM,[\,,\,],\rho=\mathrm{id}_{TM})$ is the \emph{canonical Lie algebroid} over $M$, 
the natural generalization of a Lie algebra over $\mathbb{R}$ to the $C^\infty(M)$–module of sections.\footnote{This is a Lie–Rinehart structure: $\Gamma(TM)$ is a Lie algebra over $\mathbb{R}$ and a $C^\infty(M)$–module linked by $[X,fY]=f[X,Y]+X(f)\,Y$; it is \emph{not} a Lie algebra over $C^\infty(M)$.}\\
Moreover, if $D\subset TM$ is a smooth distribution that is involutive (the Frobenius condition),
then $\Gamma(D)$ is closed under the ambient commutator of vector fields (well–defined by extending sections locally),
the inclusion $\iota:D\hookrightarrow TM$ serves as anchor, and the same Leibniz rule holds with $\rho=\iota$. Thus,
\begin{align}
(D,\,[\,,\,]_D,\,\iota)
\end{align}
is a Lie \emph{sub}algebroid of $(TM,[\,,\,],\mathrm{id})$. This is precisely the algebraic content behind Frobenius.
\end{remark}

\section{Differential forms}
\begin{definition}[Differential form]
A \emph{differential $p$-form} on a smooth manifold $M$ is a smooth section 
\begin{align}
\omega  \in  \Omega^p(M)\coloneqq\Gamma(\Lambda^p T^*M).
\end{align}
On a chart $(U,x)$, it can be written as\footnote{The symbol $\wedge$ is the alternating tensor product. In particular, notice that $dx^{i}\wedge dx^{i}=0$.}
\begin{align}
\omega  =  \frac{1}{p!}\,\omega_{i_1\cdots i_p}\,dx^{i_1}\wedge\cdots\wedge dx^{i_p},
\qquad \omega_{i_1\cdots i_p}\in C^\infty(U),
\end{align}
with antisymmetric components
\begin{align}
\omega_{i_{\sigma(1)}\cdots i_{\sigma(p)}}  =  \mathrm{sgn}(\sigma)\,\omega_{i_1\cdots i_p},
\qquad \sigma\in S_p.
\end{align}
Under a change of coordinates $x \mapsto x'$ the components transform as
\begin{align}
\omega'_{j_1\cdots j_p}
 = 
\frac{\partial x^{i_1}}{\partial x^{\prime j_1}}
\cdots
\frac{\partial x^{i_p}}{\partial x^{\prime j_p}}\,
\omega_{i_1\cdots i_p}.
\end{align}
\end{definition}
\begin{definition}[Wedge product]
There is a bilinear operation
\begin{align}
\wedge:\ \Omega^p(M)\times \Omega^q(M)\longrightarrow \Omega^{p+q}(M)
\end{align}
defined on elementary forms by
\begin{align}
(\alpha_1\wedge\cdots\wedge\alpha_p)\wedge(\beta_1\wedge\cdots\wedge\beta_q)
=\alpha_1\wedge\cdots\wedge\alpha_p\wedge\beta_1\wedge\cdots\wedge\beta_q,
\end{align}
and extended by linearity. Here the exterior power of $T^*_pM$ is identified with alternating multilinear forms by sending $\alpha_1\wedge\cdots\wedge\alpha_k$ to $k!\,\mathrm{Alt}(\alpha_1\otimes\cdots\otimes\alpha_k)$. Thus
\begin{align}
\alpha\wedge\beta=\frac{(p+q)!}{p!\,q!}\,\mathrm{Alt}(\alpha\otimes\beta).
\end{align}
\end{definition}
\begin{definition}[Graded algebra]
A \emph{graded algebra} $(A^\bullet,\,\cdot\,)$ over $\mathbb{R}$ is the direct sum of vector spaces $\{A^p\}_{p\in\mathbb{Z}}$
\begin{align}
A^\bullet = \bigoplus_{p\in\mathbb{Z}} A^p,
\end{align}
called a $\mathbb{Z}$--graded vector space, together with a bilinear product
\begin{align}
\cdot:\ A^p\times A^q \longrightarrow A^{p+q},
\end{align}
which is associative and admits a unit $1\in A^0$.  
If, for all homogeneous $a\in A^p$ and $b\in A^q$, one has
\begin{align}
a\cdot b = (-1)^{pq}\,b\cdot a,
\end{align}
then $A^\bullet$ is called a \emph{graded commutative algebra}.
\end{definition}
\begin{proposition}[Graded algebra structure]
For $\alpha\in\Omega^p(M)$ and $\beta\in\Omega^q(M)$ one has
\begin{align}
\alpha\wedge\beta = (-1)^{pq}\,\beta\wedge\alpha,
\qquad
(\alpha\wedge\beta)\wedge\gamma = \alpha\wedge(\beta\wedge\gamma).
\end{align}
Then,
\begin{align}
    \Omega^\bullet(M)\coloneqq\bigoplus_{p\geq0} \Omega^p(M)
\end{align}
forms a graded algebra $(\Omega^\bullet(M),\wedge)$ over $\mathbb R$, where the unity is given by the constant function $1\in\Omega^0(M)$.
\end{proposition}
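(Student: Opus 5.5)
The plan is to reduce everything to pointwise statements about the exterior algebra of a single cotangent space $T^*_pM$, and then to check them on elementary forms. Since the wedge product of forms is defined fiberwise, both identities hold as equalities of sections once they hold at each $p\in M$. At a fixed point, every $\alpha\in\Lambda^pT^*_pM$ is a finite linear combination of elementary forms $dx^{i_1}\wedge\cdots\wedge dx^{i_p}$ in a chart. Moreover, $\wedge$ is bilinear. So it suffices to verify both identities when $\alpha$, $\beta$, $\gamma$ are elementary products of covectors (indeed of coordinate differentials), and then extend by linearity.

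\emph{Associativity.} On elementary forms I will use the defining formula of the wedge product. Both $(\alpha\wedge\beta)\wedge\gamma$ and $\alpha\wedge(\beta\wedge\gamma)$ equal the single concatenated product
\begin{align}
\alpha_1\wedge\cdots\wedge\alpha_p\wedge\beta_1\wedge\cdots\wedge\beta_q\wedge\gamma_1\wedge\cdots\wedge\gamma_r .
\end{align}
The main obstacle is well-definedness. The definition is stated on elementary forms, and one has to know that it descends to a bilinear map on the quotient $\Lambda^p\times\Lambda^q\to\Lambda^{p+q}$. For this I will use the identification with alternating multilinear maps, $\alpha\wedge\beta=\frac{(p+q)!}{p!q!}\mathrm{Alt}(\alpha\otimes\beta)$, which is manifestly well defined and bilinear. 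To make associativity airtight there, I will prove the standard lemma $\mathrm{Alt}(\mathrm{Alt}(S)\otimes T)=\mathrm{Alt}(S\otimes T)=\mathrm{Alt}(S\otimes\mathrm{Alt}(T))$. Its proof goes as follows: permutations of the first $p$ slots embed $S_p\hookrightarrow S_{p+q}$, so averaging over $S_p$ and then over $S_{p+q}$ reproduces the average over $S_{p+q}$, because $\mathrm{sgn}$ is multiplicative. With this lemma, both bracketings equal $\frac{(p+q+r)!}{p!q!r!}\mathrm{Alt}(\alpha\otimes\beta\otimes\gamma)$, since the combinatorial factors cancel.

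\emph{Graded commutativity.} Take $\alpha=\alpha_1\wedge\cdots\wedge\alpha_p$ and $\beta=\beta_1\wedge\cdots\wedge\beta_q$. Moving the block $\beta_1,\dots,\beta_q$ in front of $\alpha_1,\dots,\alpha_p$ is a permutation obtained by $pq$ adjacent transpositions: each $\beta_j$ passes each $\alpha_i$ once. By the defining relation of $\bigwedge^{p+q}$, each transposition contributes a sign $-1$, and this yields $(-1)^{pq}$. Bilinearity then extends the identity to all $\alpha,\beta$.

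Finally, the graded algebra structure follows directly. $\Omega^\bullet(M)$ is a direct sum of the real vector spaces $\Omega^p(M)$, and $\wedge$ is bilinear and maps $\Omega^p\times\Omega^q$ into $\Omega^{p+q}$. It is associative by the first step. The constant function $1\in\Omega^0(M)$ is a unit, since the wedge with a $0$-form is pointwise scalar multiplication. Together with graded commutativity, this shows that $(\Omega^\bullet(M),\wedge)$ is a graded commutative algebra. Smoothness of $\alpha\wedge\beta$ is clear from the local formula, whose component functions are polynomial in the components of $\alpha$ and $\beta$.
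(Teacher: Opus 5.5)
Your proposal is correct and follows essentially the same route as the paper: you reduce to elementary products of $1$-forms in a chart, obtain associativity from concatenation and graded commutativity from the $pq$ adjacent transpositions, and extend by bilinearity. Your added lemma $\mathrm{Alt}(\mathrm{Alt}(S)\otimes T)=\mathrm{Alt}(S\otimes T)$ supplies what the paper leaves implicit, namely that the wedge product is well defined and that the formula $\alpha\wedge\beta=\frac{(p+q)!}{p!\,q!}\mathrm{Alt}(\alpha\otimes\beta)$ agrees with concatenation.
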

\begin{proof}
On the basis $\{dx^{i_1}\wedge\cdots\wedge dx^{i_p}\}$ the wedge product is defined by antisymmetrization of the tensor product. Permuting blocks of degrees $p$ and $q$ introduces the sign $(-1)^{pq}$ ($pq$ exchanges of $1$-forms). Associativity holds on products of $1$--forms because both parenthesizations concatenate the same ordered factors, and extends to arbitrary forms by local expansion and bilinearity.
\end{proof}
\begin{definition}[Interior product]
For $X\in\Gamma(TM)$ the \emph{interior product}
\begin{align}
\iota_X:\Omega^p(M)\longrightarrow \Omega^{p-1}(M)
\end{align}
is defined on decomposable forms by
\begin{align}
\iota_X(\alpha_1\wedge\cdots\wedge \alpha_p)
=\sum_{k=1}^p (-1)^{k-1}\,\alpha_k(X)\,\alpha_1\wedge\cdots\wedge\widehat{\alpha_k}\wedge\cdots\wedge\alpha_p,
\end{align}
and extended linearly.
\end{definition}
\begin{proposition}[Properties of $\iota_X$]
For $\alpha\in\Omega^p(M),\ \beta\in\Omega^q(M)$ one has
\begin{align}
\iota_X(\alpha\wedge\beta)=(\iota_X\alpha)\wedge\beta+(-1)^p\,\alpha\wedge(\iota_X\beta).
\end{align}
\end{proposition}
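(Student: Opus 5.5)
The plan is to reduce the identity to decomposable forms and then do a direct sign count from the defining formula for $\iota_X$. Both sides of the claimed identity are $\mathbb R$--bilinear in $(\alpha,\beta)$. The identity is also pointwise, since $\iota_X$ and $\wedge$ are defined fiberwise. So it suffices to verify it at a point $p$ for decomposable forms
\begin{align}
\alpha=\alpha_1\wedge\cdots\wedge\alpha_p,\qquad \beta=\beta_1\wedge\cdots\wedge\beta_q,
\end{align}
with $\alpha_i,\beta_j\in T^*_pM$. Locally, every form is a $C^\infty$--linear combination of wedge products of the $dx^i$, and $\iota_X$ is $C^\infty(M)$--linear because $\iota_X(f\omega)=f\,\iota_X\omega$ by the definition. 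Hence the decomposable case extends by linearity. The degenerate cases $p=0$ or $q=0$, where one factor is a function, are treated separately. We use the convention $\iota_X f=0$ and check that both sides reduce to $f\,\iota_X\beta$ or $(\iota_X\alpha) f$, with the sign $(-1)^0=1$.

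For the decomposable case, write $\gamma_k$ for the $k$-th factor of the concatenation $\alpha\wedge\beta=\alpha_1\wedge\cdots\wedge\alpha_p\wedge\beta_1\wedge\cdots\wedge\beta_q$. This uses the wedge product definition and associativity from the graded algebra proposition. Applying the defining formula gives
\begin{align}
\iota_X(\alpha\wedge\beta)=\sum_{k=1}^{p+q}(-1)^{k-1}\gamma_k(X)\,\gamma_1\wedge\cdots\wedge\widehat{\gamma_k}\wedge\cdots\wedge\gamma_{p+q}.
\end{align}
We split the sum into $k\le p$ and $k=p+j$ with $1\le j\le q$. In the first range the removed factor lies in the $\alpha$ block. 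The sign is $(-1)^{k-1}$, exactly as in $\iota_X\alpha$, and the remaining $\beta$ factors are untouched, so this part is $(\iota_X\alpha)\wedge\beta$. In the second range the sign is $(-1)^{p+j-1}=(-1)^p(-1)^{j-1}$, and the $\alpha$ block stays intact in front. This part is therefore $(-1)^p\,\alpha\wedge(\iota_X\beta)$. Adding the two parts gives the claim.

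One point needs care, and it is the main obstacle. The defining formula for $\iota_X$ must be well defined, that is, independent of how a form is written as a sum of decomposables. Otherwise the reduction to decomposables is not legitimate. To settle this, I would show that the formula is multilinear and alternating in $(\alpha_1,\dots,\alpha_p)$. Multilinearity is clear. For alternation, swapping two adjacent factors $\alpha_k,\alpha_{k+1}$ negates every term: the two terms that remove $\alpha_k$ or $\alpha_{k+1}$ exchange with each other and pick up the opposite sign, and all other terms change sign by antisymmetry of the wedge. Hence the formula descends to $\Lambda^pT^*_pM$ by the universal property of the exterior power. Once this is established, the sign bookkeeping above is routine.
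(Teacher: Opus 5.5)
Your proof is correct and follows the paper's argument. You reduce to decomposable forms and note that the contraction hits either the $\alpha$-block, giving $(\iota_X\alpha)\wedge\beta$, or the $\beta$-block, where the sign $(-1)^{p+j-1}=(-1)^p(-1)^{j-1}$ produces $(-1)^p\,\alpha\wedge\iota_X\beta$. Your additional check that the defining formula is alternating in its factors, and therefore descends to $\Lambda^pT^*_pM$, settles a well-definedness point that the paper leaves implicit in its definition of $\iota_X$.
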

\begin{proof}
Check on decomposable forms: the contraction either acts inside $\alpha$ or inside $\beta$, and the sign arises from moving $\iota_X$ past a block of degree $p$.
\end{proof}
\begin{definition}[Degree of an operator]
Let $(A^\bullet(M),\,\cdot\,)$ be a graded algebra.  
A linear operator $D:A^\bullet\to A^\bullet$ is said to have \emph{degree $k$} if
\begin{align}
D(A^p) \subseteq  A^{p+k}, \qquad \forall\,p.
\end{align}
\end{definition}
\begin{definition}[Exterior derivative]
The \emph{exterior derivative} is the $\mathbb{R}$--linear operator
\begin{align}
d:\Omega^p(M)\longrightarrow \Omega^{p+1}(M)
\end{align}
of degree $1$ such that $df$ is the differential of $f$ (defined in \cref{defcotspace}), i.e. $df=\partial_i f\, dx^i$ for $f\in C^\infty(M)$, and
\begin{align}
d(\alpha\wedge\beta) &= d\alpha\wedge\beta + (-1)^p\,\alpha\wedge d\beta, 
\qquad \alpha\in\Omega^p(M), \\
d^2 &= 0.
\end{align}
\end{definition}
\begin{proposition}[Existence and uniqueness]
There exists a unique operator $d$ with the properties above.  
In coordinates $(x^1,\dots,x^n)$, it is given by
\begin{align}
\omega = \tfrac{1}{p!}\,\omega_I\,dx^I
\quad\Rightarrow\quad
d\omega = \tfrac{1}{p!}\,\partial_j \omega_I\, dx^j\wedge dx^I,
\end{align}
with $dx^I\coloneqq dx^{i_1}\wedge\cdots\wedge dx^{i_p}$.
\end{proposition}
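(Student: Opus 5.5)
The plan is the standard one: prove uniqueness first by a local argument, then existence by defining $d$ in charts and checking that the chart definitions glue.

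\emph{Uniqueness.} Let $d$ be any operator with the stated properties.

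First, $d$ is local: if $\omega$ vanishes on an open set $U\ni p$, then $(d\omega)_p=0$. To see this, choose a bump function $\chi$ with $\chi=1$ near $p$ and $\operatorname{supp}\chi\subset U$. Then $\omega=(1-\chi)\omega$, so the Leibniz rule gives
\begin{align}
d\omega=d(1-\chi)\wedge\omega+(1-\chi)\,d\omega,
\end{align}
and both terms vanish at $p$.

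Locality lets us apply $d$ to forms defined only on a chart domain: extend them by cutoffs, and the result does not depend on the extension. On a chart $(U,x)$ write $\omega=\tfrac1{p!}\omega_I\,dx^I$. Applying the Leibniz rule repeatedly, and using $d(dx^{i})=d^2x^i=0$, we get
\begin{align}
d\omega=\tfrac1{p!}\,d\omega_I\wedge dx^I=\tfrac1{p!}\,\partial_j\omega_I\,dx^j\wedge dx^I,
\end{align}
since $d$ on functions is the differential. So $d$ is forced to have the coordinate form in the statement.

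\emph{Existence.} Define $d_U$ on each chart by the coordinate formula. There are three checks to make.
\begin{itemize}
\item $d_U$ is $\mathbb R$-linear and agrees with the differential on functions. Both are immediate.
\item The graded Leibniz rule holds. By linearity it suffices to take $\alpha=f\,dx^I$ and $\beta=g\,dx^J$. Then
\begin{align}
d(fg\,dx^I\wedge dx^J)=(g\,df+f\,dg)\wedge dx^I\wedge dx^J,
\end{align}
and moving $dg$ past $dx^I$ produces the sign $(-1)^p$.
\item $d_U^2=0$. We have $d_U^2\omega=\tfrac1{p!}\,\partial_k\partial_j\omega_I\,dx^k\wedge dx^j\wedge dx^I$, which vanishes because the second partials are symmetric in $j,k$ while the wedge product is antisymmetric.
\end{itemize}

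\emph{Gluing (the main obstacle).} We must show that $d_U\omega=d_V\omega$ on $U\cap V$. On the overlap, $d_V$ is itself an operator satisfying all the defining properties. The uniqueness computation above, carried out in the $x$-coordinates, used only those properties, so it forces $d_V\omega=d_U\omega$. The step to verify carefully is that the $x$-coordinate functions qualify as functions to which $d_V$ applies, with $d_V x^i$ equal to the differential $dx^i$. This holds because $d_V$ on functions is the chart-independent differential of \cref{defcotspace}. With that in place, the local definitions patch to a global $d$ on $\Omega^\bullet(M)$ with the required properties.

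A direct alternative would be to verify chart independence via the Jacobian transformation law of the components; there the second-derivative terms cancel by antisymmetry. I would rely on the uniqueness argument instead.
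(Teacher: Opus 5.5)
Your proof is correct and follows the paper's overall structure. The uniqueness half is the paper's argument: locality via a bump function, then Leibniz together with $d(dx^i)=d^2x^i=0$. The existence half uses the same chartwise formula and the same checks of linearity, the graded Leibniz rule and $d^2=0$.

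The one genuine difference is the gluing step, and it is why you put uniqueness first. The paper says only that the chart formulas agree on overlaps ``by the chain rule''. That means transforming the components $\omega_I$ with Jacobians and differentiating. Doing so produces extra terms containing $\partial^2 x^i/\partial y^k\partial y^j$. These are symmetric in $k,j$ and drop out against the antisymmetry of $dy^k\wedge dy^j$; the paper leaves this cancellation implicit.

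You instead observe that $d_V$, applied to forms on the manifold $W=U\cap V$, already satisfies every axiom. You then rerun your uniqueness computation in the $x$-chart. Since $x$ is a global chart on $W$, no cutoff is needed there. The only inputs are $d_Vx^i=dx^i$, which holds because $d_V$ is the intrinsic differential on functions, and $d_V^2=0$.

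Your route needs no index bookkeeping and shows that well-definedness is a consequence of uniqueness. Its only requirement, applying uniqueness on the open submanifold $W$ rather than on $M$, costs nothing. The paper's route is a self-contained verification independent of the uniqueness half, but the second-derivative cancellation has to be carried out to make it complete.
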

\begin{proof}
\emph{Existence.} On a chart, define for 
\begin{align}
\omega = \tfrac{1}{p!}\,\omega_I\,dx^I \in \Omega^p(U),
\end{align}
\begin{align}
d\omega = \tfrac{1}{p!}\,\partial_j\omega_I\,dx^j\wedge dx^I.
\end{align}
This operator is $\mathbb{R}$--linear, has degree $+1$, reduces to the classical differential on functions, and is a graded derivation. Nilpotency $d^2=0$ follows from the symmetry of second derivatives versus the antisymmetry of the wedge product.
On overlaps of charts, the coordinate formulas agree by the chain rule, so the local definitions glue to a global operator.\\
\emph{Uniqueness.} Suppose $D$ is another operator with the same properties. First, $D$ is local. Indeed, if $\omega$ vanishes near $p$, choose a smooth function $\chi$ supported in that neighborhood and equal to $1$ near $p$. Applying Leibniz to $\chi\omega=0$ gives $(D\omega)_p=0$. Thus $D$ acts on local forms using cutoff extensions. In local coordinates, $D x^i=dx^i$ and $D(dx^i)=D^2x^i=0$. Consequently,
\begin{align}
D\left(\frac{1}{p!}\omega_I\,dx^I\right)
=\frac{1}{p!}\,d\omega_I\wedge dx^I
=d\omega.
\end{align}
This holds in every chart, proving uniqueness.
\end{proof}
\begin{proposition}[Naturality of $d$]
Let $F:N\to M$ be a smooth map and $\omega\in\Omega^p(M)$. Then,
\begin{align}
F^*(d\omega) =  d\bigl(F^*\omega\bigr).
\end{align}
\end{proposition}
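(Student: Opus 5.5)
The plan is to reduce the statement to three facts: $F^*$ commutes with $d$ on functions, $F^*$ is multiplicative for $\wedge$, and $d$ is characterized by the existence and uniqueness proposition. Both sides of $F^*(d\omega)=d(F^*\omega)$ are pointwise (local) expressions on $N$. So it suffices to verify the identity on a neighborhood of an arbitrary point $q\in N$. Choose a chart $(U,x)$ of $M$ around $F(q)$ and work on the open set $F^{-1}(U)$. There $\omega=\tfrac1{p!}\omega_I\,dx^I$ with $\omega_I\in C^\infty(U)$. Locality of $d$, shown in the uniqueness part of the previous proof via cutoffs, guarantees that restricting to $F^{-1}(U)$ commutes with $d$.

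\textbf{Step 1 (functions).} For $f\in C^\infty(M)$, $v\in T_qN$, we have $(F^*df)(v)=df(F_*|_q v)=(F_*|_q v)(f)=v(f\circ F)=d(F^*f)(v)$. This follows directly from \cref{pushpullfibers} and the pairing $\langle df,v\rangle=v(f)$. Hence $F^*df=d(f\circ F)$. In particular $F^*dx^i=d(x^i\circ F)$.

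\textbf{Step 2 (wedge).} I will check that $F^*(\alpha\wedge\beta)=F^*\alpha\wedge F^*\beta$. Since pull-back is precomposition of each slot with the linear map $F_*|_q$, it commutes with $\otimes$ and with $\mathrm{Alt}$. The wedge is $\frac{(p+q)!}{p!q!}\mathrm{Alt}(\alpha\otimes\beta)$, so it commutes with $\wedge$. Also $F^*(g\,\alpha)=(g\circ F)F^*\alpha$ for functions $g$.

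\textbf{Step 3 (assemble).} Write $F^*\omega=\tfrac1{p!}(\omega_I\circ F)\,d(x^{i_1}\circ F)\wedge\cdots\wedge d(x^{i_p}\circ F)$. Apply $d$ using the graded Leibniz rule. Every derivative that hits a factor $d(x^{i_k}\circ F)$ produces $d^2(x^{i_k}\circ F)=0$. Therefore
\begin{align}
d(F^*\omega)=\tfrac1{p!}\,d(\omega_I\circ F)\wedge F^*dx^I=\tfrac1{p!}\,F^*(d\omega_I)\wedge F^*dx^I=F^*\!\left(\tfrac1{p!}\,d\omega_I\wedge dx^I\right)=F^*(d\omega).
\end{align}
Steps 1 and 2 are used in the middle equalities, and the coordinate formula for $d$ in the last one.

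The main obstacle is Step 3's implicit need for the Leibniz rule and $d^2=0$ on $N$ for forms that are only defined on the open set $F^{-1}(U)$. In particular, $x^i\circ F$ is not a global function on $N$. I would handle this by the locality argument already given: multiply by a cutoff $\chi$ equal to $1$ near $q$ so that all objects become global, and note that $d$ of the modified forms agrees with the original near $q$. The remaining bookkeeping (the $1/p!$ factors and the sign conventions) is routine.
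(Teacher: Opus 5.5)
Your proposal is correct and follows essentially the same route as the paper's proof: expand $\omega$ in a chart, write $F^*\omega=\tfrac1{p!}(\omega_I\circ F)\,d(x^{i_1}\circ F)\wedge\cdots\wedge d(x^{i_p}\circ F)$, and conclude from the chain rule $d(\omega_I\circ F)=F^*(d\omega_I)$, from $d^2=0$, and from the compatibility of $F^*$ with $\wedge$. Your Steps 1--2 and the cutoff/locality remark make explicit the facts that the paper's proof uses without comment.
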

\begin{proof}
On a chart, we write $$\omega=\frac{1}{p!}\,\omega_{I}\,dx^{I}.$$ Then,
\begin{align}
F^*\omega
=\frac{1}{p!}\,(\omega_{I}\circ F)\,d(x^{I}\!\circ F)=\frac{1}{p!}\,(\omega_{i_1\dots i_p}\circ F)\,d(x^{i_1}\!\circ F)\wedge\cdots\wedge d(x^{i_p}\!\circ F),
\end{align}
which, using $d(\omega_{I}\circ F)=F^*(d\omega_{I})$ (chain rule) and $d^2=0$, leads to
\begin{align}
d(F^* \omega)
&=\frac{1}{p!}\,d(\omega_{I}\circ F)\wedge d(x^{I}\!\circ F)
=\frac{1}{p!}\,F^* (d\omega_{I})\wedge F^* (dx^{I})
=F^* \!\left(\frac{1}{p!}\,d\omega_{I}\wedge dx^{I}\right)\\
&=F^* (d\omega).
\end{align}
\end{proof}
\begin{remark}
    The exterior derivative $d$ makes $(\Omega^\bullet(M),\wedge)$ into a \textit{differential graded algebra} $(\Omega^\bullet(M),\wedge, d)$.
\end{remark}

\begin{lemma}[Frobenius for a single 1–form]
Let $\alpha\in\Omega^1(U)$ with $\alpha\wedge d\alpha=0$ and $\alpha_p\neq0$ for every $p\in U$. Then, locally
\begin{align}
\alpha  =  g\, df
\end{align}
for some smooth functions $f$ and $g$ with $g$ nowhere vanishing (equivalently, $\alpha=e^{h}\,df$ with $h\in C^\infty(M)$, to be intended locally).
\end{lemma}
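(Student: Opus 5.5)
The plan is to reduce the statement to the Frobenius theorem proved above, applied to the kernel distribution of $\alpha$. Let $n=\dim U$ and set
\begin{align}
D_p:=\ker\alpha_p\subset T_pM,\qquad p\in U.
\end{align}
Since $\alpha_p\neq0$ for every $p$, each $D_p$ has dimension $n-1$. The map $\alpha:TU\to U\times\mathbb R$ is a bundle morphism of constant rank $1$. By the constant rank statement used in \cref{cor:j-sections-splitting}, $D$ is therefore a smooth rank-$(n-1)$ distribution. Concretely, near a point where $\alpha_n\neq0$, the fields
\begin{align}
X_i=\partial_i-\frac{\alpha_i}{\alpha_n}\partial_n,\qquad i=1,\dots,n-1,
\end{align}
form a local frame of $D$.

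\emph{Step 1: involutivity.} I will use the intrinsic formula
\begin{align}
d\alpha(X,Y)=X(\alpha(Y))-Y(\alpha(X))-\alpha([X,Y]),
\end{align}
which I would verify in coordinates from $d\alpha=\partial_j\alpha_i\,dx^j\wedge dx^i$ together with the bracket formula. For $X,Y\in\Gamma(D)$ it gives $\alpha([X,Y])=-d\alpha(X,Y)$.

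Next, pick $Z$ with $\alpha(Z)=1$ locally, for instance $Z=\alpha_n^{-1}\partial_n$. Evaluate the $3$-form $\alpha\wedge d\alpha=0$ on $(Z,X,Y)$. Using the interior product rule for $\iota_Z$, the terms containing $\alpha(X)$ or $\alpha(Y)$ vanish, so
\begin{align}
0=(\alpha\wedge d\alpha)(Z,X,Y)=\alpha(Z)\,d\alpha(X,Y)=d\alpha(X,Y).
\end{align}
Hence $[X,Y]\in\Gamma(D)$, and $D$ is involutive. I expect this sign and normalization bookkeeping to be the main, though minor, obstacle. Only vanishing matters, so normalization conventions for the wedge product are harmless.

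\emph{Step 2: Frobenius.} By the Frobenius theorem, part (ii), around each point there are coordinates $(y^1,\dots,y^n)$ with
\begin{align}
D=\mathrm{span}\{\partial_{y^1},\dots,\partial_{y^{n-1}}\}.
\end{align}
Set $f:=y^n$. Then $df$ annihilates $D$, and $df\neq0$.

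\emph{Step 3: comparison.} At each point, $\alpha_p$ and $df_p$ are nonzero covectors with the same kernel $D_p$. They are therefore proportional, so $\alpha=g\,df$ with $g$ nowhere vanishing. In the $y$-coordinates $g=\alpha(\partial_{y^n})$, which is smooth.

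Finally, if we shrink to a connected neighborhood, $g$ has constant sign. Replacing $f$ by $-f$ if necessary makes $g>0$, and then $h=\log g$ gives $\alpha=e^{h}\,df$.
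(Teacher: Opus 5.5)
Your proposal is correct and follows essentially the same route as the paper. You prove involutivity of $\ker\alpha$ by evaluating $\alpha\wedge d\alpha$ on $(Z,X,Y)$ with $\alpha(Z)=1$ and using $d\alpha(X,Y)=-\alpha([X,Y])$, then apply the Frobenius theorem to get $\alpha=g\,dy$, and flip the sign on a connected neighborhood to write $g=e^{h}$. Your explicit local frame for $D$ and the formula $g=\alpha(\partial_{y^n})$ spell out two points the paper leaves implicit: smoothness of the distribution and smoothness of $g$.
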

\begin{proof}
Since $\alpha\wedge d\alpha=0$, the distribution $D\coloneqq\ker\alpha$ is involutive:
locally choose $Z$ with $\alpha(Z)=1$. For $X,Y\in\Gamma(D)$, evaluating $\alpha\wedge d\alpha$ on $(Z,X,Y)$ gives $d\alpha(X,Y)=0$. Since $d\alpha(X,Y)=-\alpha([X,Y])$, involutivity follows. By Frobenius, there exist local coordinates
$(y,x^2,\dots,x^n)$ such that $D=\mathrm{span}\{\partial_{x^2},\dots,\partial_{x^n}\}$, hence
$\alpha = a\,dy$ with $a\in C^\infty$ and $a\neq 0$. Setting $f:=y$ and $g:=a$ gives $\alpha=g\,df$.
(If desired, shrink to a connected neighborhood and replace $(f,g)$ by $(-f,-g)$ if necessary so that $g>0$, then write $g=e^h$.)
\end{proof}

\begin{theorem}[Frobenius, Pfaffian form]
Let $I\subset T^*M$ be a rank–$k$ subbundle, locally generated on an open set $U\subset M$ by $\theta^1,\dots,\theta^k$ (a local coframe for $I$), and let $\mathcal I$
be the graded algebraic ideal they generate in $\Omega^\bullet(U)$.\footnote{Explicitly, $\mathcal I=\sum_{a=1}^k\theta^a\wedge\Omega^\bullet(U)$. A graded ideal is called a \emph{differential ideal} when it additionally satisfies $d\mathcal I\subset\mathcal I$; this is a condition here, not part of the definition of the generated algebraic ideal.}
The following are equivalent:
\begin{enumerate}[label=\textnormal{(\roman*)}]
\item $d\mathcal I\subset\mathcal I$; equivalently, for each $a=1,\dots,k$ there exist $1$–forms $\eta^a{}_b\in\Omega^1(U)$ $(b=1,\dots,k)$ such that on $U$
\begin{align}
    d\theta^a = \sum_{b=1}^k \theta^b\wedge \eta^a{}_b.
\end{align}

\item $D\coloneqq \ker\theta:=\bigcap_{a=1}^k\ker\theta^a$ is a regular distribution of rank $n-k$ on $U$ and is involutive:
\begin{align}
    [\,\Gamma(D),\Gamma(D)\,]\subset\Gamma(D).
\end{align}

\item Around each point of $U$, after restricting to a smaller neighborhood, there exist coordinates $(x^1,\dots,x^{n-k},y^1,\dots,y^k)$ and an invertible matrix of smooth functions $(\mu^a{}_b)$ such that
\begin{align}
\theta^a  =  \sum_{b=1}^k \mu^a{}_b\,dy^b,\qquad a=1,\dots,k.
\end{align}
\end{enumerate}
In particular, through every point of $U$ there passes a unique maximal connected integral manifold of dimension $n-k$. Its local plaques in sufficiently small adapted charts are the connected components of joint level sets $y^a=c_a$.
\end{theorem}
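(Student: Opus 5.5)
I will prove the cycle (i)$\Rightarrow$(ii)$\Rightarrow$(iii)$\Rightarrow$(i), reducing the geometric content to the vector-field version of the Frobenius theorem proved above. The tool throughout is the invariant formula for the exterior derivative of a $1$-form on two vector fields,
\begin{align}
d\theta(X,Y)=X\bigl(\theta(Y)\bigr)-Y\bigl(\theta(X)\bigr)-\theta([X,Y]),
\end{align}
already used in the proof of the single $1$-form lemma. I will check it on $\theta=f\,dg$ using the coordinate formula for $d$ and the bracket.

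\emph{Regularity of $D$.} Since $I$ has rank $k$, the $\theta^a$ are pointwise linearly independent. Hence $\ker\theta$ has constant rank $n-k$ at every point. Smoothness comes from extending $\theta^1,\dots,\theta^k$ locally to a coframe $\theta^1,\dots,\theta^n$ (using \cref{cor:j-sections-splitting}). The dual frame $X_1,\dots,X_n$ then has $X_{k+1},\dots,X_n$ spanning $D$. So in (ii) only involutivity needs proof.

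\emph{(i)$\Rightarrow$(ii).} For $X,Y\in\Gamma(D)$ the first two terms of the formula vanish, so $\theta^a([X,Y])=-d\theta^a(X,Y)$. By (i),
\begin{align}
d\theta^a(X,Y)=\sum_b\bigl(\theta^b(X)\eta^a{}_b(Y)-\theta^b(Y)\eta^a{}_b(X)\bigr)=0.
\end{align}
Thus $[X,Y]\in\Gamma(D)$.

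\emph{(ii)$\Rightarrow$(iii).} The Frobenius theorem gives coordinates $(x^1,\dots,x^{n-k},y^1,\dots,y^k)$ near any point with $D=\mathrm{span}\{\partial_{x^1},\dots,\partial_{x^{n-k}}\}$. Expand $\theta^a=\sum_i\alpha^a{}_i\,dx^i+\sum_b\mu^a{}_b\,dy^b$. Since $\theta^a(\partial_{x^i})=0$, all $\alpha^a{}_i$ vanish. The $\theta^a$ are independent at each point and only the $dy^b$ remain, so $(\mu^a{}_b)$ is invertible.

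\emph{(iii)$\Rightarrow$(i).} Write $\nu=\mu^{-1}$, so that $dy^b=\sum_c\nu^b{}_c\,\theta^c$. Then
\begin{align}
d\theta^a=\sum_b d\mu^a{}_b\wedge dy^b=\sum_{b,c}\nu^b{}_c\,d\mu^a{}_b\wedge\theta^c=\sum_c\theta^c\wedge\eta^a{}_c,
\qquad \eta^a{}_c:=-\sum_b\nu^b{}_c\,d\mu^a{}_b.
\end{align}
It remains to show that this generator condition is equivalent to $d\mathcal I\subset\mathcal I$. If $d\mathcal I\subset\mathcal I$, then $d\theta^a$ is a $2$-form lying in $\mathcal I$, so $d\theta^a=\sum_b\theta^b\wedge\eta^a{}_b$ for some $1$-forms $\eta^a{}_b$. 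I justify this degree bookkeeping by completing $\theta^a$ to a local coframe and expanding in the induced basis. Conversely, the Leibniz rule gives $d(\theta^a\wedge\omega)=d\theta^a\wedge\omega-\theta^a\wedge d\omega\in\mathcal I$.

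\emph{Final assertion.} It follows from (ii) together with the remark on local leaves and the induced foliation. The local plaques are the connected components of $\{y=c\}$, and maximal connected integral manifolds are the leaves of that foliation. Uniqueness of the maximal leaf is the part that needs the most care, since it requires the global gluing of plaques. I will simply cite the remark for it rather than reprove the global leaf construction.

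I expect the main obstacle to be the algebraic bookkeeping: the invariant formula for $d\theta(X,Y)$, and the claim that a $2$-form in the ideal $\mathcal I$ can be written as $\sum_b\theta^b\wedge\eta^b$. Both are routine once a coframe adapted to $I$ is fixed, so I will do both in such a coframe.
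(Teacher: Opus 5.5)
Your proof is correct and is essentially the paper's. It uses the same invariant formula for $d\theta(X,Y)$, the same appeal to the vector-field Frobenius theorem for (ii)$\Rightarrow$(iii), and the same $\eta^a{}_c=-\sum_b(\mu^{-1})^b{}_c\,d\mu^a{}_b$; the only difference is that you close a cycle through (iii) instead of arguing (ii)$\Rightarrow$(i) directly, which spares you the pointwise fact that a $2$-form vanishing on $D\times D$ lies in $\mathcal I$.

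You should add the step the paper makes explicit. Condition (iii) only yields the $\eta^a{}_c$ on small adapted neighborhoods, whereas (i) asks for $\eta^a{}_b\in\Omega^1(U)$. Glue them with a partition of unity, $\eta^a{}_b\coloneqq\sum_\alpha\rho_\alpha\,\eta^a_{(\alpha)b}$; this works because $\sum_\alpha\rho_\alpha=1$ and $\sum_b\theta^b\wedge\eta^a{}_b$ is $C^\infty(U)$-linear in the $\eta$'s.

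By contrast, the implication $d\theta^a\in\mathcal I\Rightarrow d\theta^a=\sum_b\theta^b\wedge\eta^a{}_b$ needs no coframe completion. Just take the degree-one parts of the global coefficients in the definition of $\mathcal I$.
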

\begin{proof}
\emph{(i) $\Leftrightarrow$ (ii).}
Let $D=\ker\theta$ and $\alpha\in\Gamma(I)$. For $X,Y\in\Gamma(D)$ we have
\begin{align}
d\alpha(X,Y) = X(\alpha(Y))-Y(\alpha(X))-\alpha([X,Y]) = -\alpha([X,Y]),
\end{align}
since $\alpha(X)=\alpha(Y)=0$. Thus $d\alpha\equiv 0\ \mathrm{mod}\ I$ if and only if $\alpha([X,Y])=0$ for all $\alpha\in I$, i.e. $[X,Y]\in\Gamma(D)$. Regularity of $D$ (rank $n-k$) follows from the constant rank of $I$.\\
\emph{(ii) $\Rightarrow$ (iii).}
By the Frobenius theorem for distributions, after shrinking $U$ about any chosen point, there exist local coordinates $(x^1,\dots,x^{n-k},y^1,\dots,y^k)$ on $U$ such that
\begin{align}
D=\mathrm{span}\{\partial_{x^1},\dots,\partial_{x^{n-k}}\}.
\end{align}
Hence any $1$–form that annihilates $D$ is a linear combination of $dy^1,\dots,dy^k$. In particular, for each $a$ there are smooth functions $\mu^a{}_b$ with
\begin{align}
\theta^a  =  \sum_{b=1}^k \mu^a{}_b\,dy^b.
\end{align}
Since $\{\theta^a\}$ is a coframe for $I$ and $\{dy^b\}$ is a coframe for the annihilator of $D$, the matrix $(\mu^a{}_b)$ is invertible.\\
\emph{(iii) $\Rightarrow$ (i).} Since $\theta^a=\sum_{b=1}^k \mu^a{}_b\,dy^b$ with $(\mu^a{}_b)$ invertible, we have
\begin{align}
d\theta^a
&= \sum_{b=1}^k d\mu^a{}_b \wedge dy^b + \sum_{b=1}^k \mu^a{}_b\, d(dy^b)
= \sum_{b=1}^k d\mu^a{}_b \wedge dy^b \\
&= -\sum_{b=1}^k \sum_{c=1}^k \theta^c \wedge \bigl((\mu^{-1})^b{}_c\, d\mu^a{}_b\bigr)
= \sum_{c=1}^k \theta^c \wedge \eta^a{}_c,
\end{align}
where $\eta^a{}_c \coloneqq -\sum_{b=1}^k (\mu^{-1})^b{}_c\, d\mu^a{}_b \in \Omega^1(U)$.
Thus $d\theta^a$ lies in the graded ideal locally. A partition of unity patches the local expressions on the original open set $U$, and Leibniz then gives $d\mathcal I\subset\mathcal I$.
\end{proof}

\begin{remark}[Normal form for an involutive Pfaffian system]
Let $I\subset T^*M$ be a smooth Pfaffian system of \emph{constant rank} $k$ on an open set $U\subset M$, locally
generated by an $I$–basis $\{\theta^1,\dots,\theta^k\}$, and assume the Frobenius condition
\begin{align}
d\theta^a  \equiv  \sum_{b=1}^k \omega^a{}_b \wedge \theta^b 
\quad\text{on }U \qquad (a=1,\dots,k),
\end{align}
for some $1$–forms $\omega^a{}_b$ on $U$. Then there exist local coordinates
\begin{align}
(x^1,\dots,x^{n-k},\,y^1,\dots,y^k)
\end{align}
on a sufficiently small neighborhood, again denoted $U$, and an invertible smooth matrix $\mu$ such that, after changing the generators,
\begin{align}
\widetilde\theta^a\coloneqq\sum_{b=1}^k(\mu^{-1})^a{}_b\theta^b=dy^a,\qquad a=1,\dots,k.
\end{align}
Equivalently, the map
\begin{align}
y  =  (y^1,\dots,y^k):\ U\longrightarrow \mathbb R^k
\end{align}
is a submersion with
\begin{align}
I  =  \mathrm{span}\{dy^1,\dots,dy^k\}
\qquad\text{and}\qquad
\ker dy_q  =  \bigcap_{a=1}^k \ker\theta^a{}_q\ \ \text{for all }q\in U.
\end{align}
In these coordinates, the local plaques are the connected components of the joint level sets
\begin{align}
L_c  =  \{\,q\in U:\ y^1(q)=c_1,\dots,y^k(q)=c_k\,\},
\end{align}
and their tangent spaces satisfy
\begin{align}
T_qL_c  =  \bigcap_{a=1}^k \ker\theta^a{}_q.
\end{align}
Thus the covectors $dy^1,\dots,dy^k$ vanish on vectors tangent to any integral submanifold, and the transverse
coordinates $y^1,\dots,y^k$ are constant along leaves. This description is \emph{local}: globally a leaf may pass
through several coordinate charts or have nontrivial topology.\\
In the dual normal form one chooses coordinates so that the annihilator distribution
\begin{align}
D  =  \{\,v\in TM:\ \theta^a(v)=0,\ a=1,\dots,k\,\}
\end{align}
is spanned by the coordinate fields (with respect to $(x^1,\dots,x^{n-k})$)
\begin{align}
D  =  \mathrm{span}\{\partial_1\,\dots,\partial_{n-k}\},
\end{align}
so the leaves are given by fixing $(y^1,\dots,y^k)$ as above.
\end{remark}

\begin{definition}[Smooth manifold with boundary]
Fix $n\in\mathbb N$ and let $\mathbb H^n\coloneqq\{x=(x^1,\dots,x^n)\in\mathbb R^n:\ x^n\ge 0\}$ be the closed half--space with the subspace topology from $\mathbb R^n$.
A \emph{smooth manifold with boundary} of dimension $n$ is a pair $(M,[\mathcal A])$ where:
\begin{itemize}
    \item $M$ is a topological space that is Hausdorff and second countable;
    \item $[\mathcal A]$ is a maximal smooth boundary atlas on $M$, i.e. an equivalence class of atlases 
    $\mathcal A=\{(U_i,\varphi_i)\}_{i\in I}$ such that:
    \begin{itemize}
        \item the open sets $\{U_i\}_{i\in I}$ cover $M$;
        \item each $\varphi_i:U_i\to\varphi_i(U_i)\subset\mathbb H^n$ is a homeomorphism onto an open subset of $\mathbb H^n$ (subspace topology);
        \item for every $i,j\in I$ with $U_i\cap U_j\neq\varnothing$, the transition maps
        \begin{align}
            \varphi_j\circ\varphi_i^{-1}:\ \varphi_i(U_i\cap U_j) \longrightarrow  \varphi_j(U_i\cap U_j)
        \end{align}
        are \emph{smooth up to the boundary}: namely, each transition map is the restriction to open subsets of $\mathbb H^n$ of a $C^\infty$--map defined on open neighborhoods in $\mathbb R^n$.
    \end{itemize}
\end{itemize}
Given such an atlas, the \emph{boundary} and the \emph{interior} of $M$ are
\begin{align}
\partial M  \coloneqq  \{\,p\in M \mid \exists (U,\varphi)\in\mathcal A,\ p\in U,\ \varphi(p)\in \partial\mathbb H^n=\{x^n=0\}\,\},\\
\mathring M  \coloneqq  M\setminus \partial M  =  \{\,p\in M \mid \exists (U,\varphi)\in\mathcal A,\ p\in U,\ \varphi(p)\in \{x^n>0\}\,\}.
\end{align}
These definitions are independent of the chosen chart.
\end{definition}

\begin{remark}
Notice that $\partial M$ is a closed subset of $M$ and carries a natural structure of $(n\!-\!1)$--dimensional smooth manifold (without boundary), induced by restricting charts to $\partial\mathbb H^n$. Moreover, $\mathring M$ is an open $n$--dimensional smooth manifold (without boundary). Notice also that this definition excludes the presence of \emph{corners}, where local models are open subsets of  $[0,\infty)^k\times\mathbb R^{n-k}$ for $k>1$.
\end{remark}

\begin{definition}[Orientable, compact, and closed manifold]\label{or_com_clos}
Let $M$ be an $n$--dimensional smooth manifold, possibly with boundary $\partial M$.
\begin{itemize}
\item[--] $M$ is \emph{orientable} iff it admits an atlas $\{(U_i,\varphi_i)\}$ whose transition maps have \emph{positive} Jacobian determinant on overlaps.
For manifolds with boundary we interpret this via the smooth extension of transition maps to open
neighborhoods in $\mathbb R^n$ and require $\det J>0$ at all points of the overlap (interior or boundary). Equivalently, there exists a nowhere–vanishing smooth $n$–form $\omega$ on $M$ (smooth up to the boundary). The charts whose transition maps have positive Jacobian are called \emph{orientation–preserving charts}
and they generate the maximal \emph{orientation–preserving atlas}.

\item[--] $M$ is \emph{compact} if it is compact as a topological space.

\item[--] $M$ is \emph{closed}\footnote{Here “closed” does \emph{not} mean “closed subset of an ambient space”.} if it is compact and without boundary, i.e. $\partial M=\varnothing$.
\end{itemize}
\end{definition}
\begin{remark}[Defining function and outward direction]
Let $M$ be a smooth $n$--dimensional manifold with boundary and $p\in\partial M$. 
A \emph{defining function} near $p$ is a smooth function $\rho\colon U\to\mathbb R$ on a neighborhood $U\subset M$ of $p$ such that
\begin{align}
\rho&>0\quad\text{on }U\setminus\partial M,\\
\partial M\cap U&=\{x\in U:\ \rho(x)=0\},\\
\rho_*|_x&\neq 0\quad\text{for all }x\in\partial M\cap U.
\end{align}
If $\rho,\tilde\rho$ are two defining functions near $p$, then there exist a neighborhood $V\subset U$ of $p$ and a smooth function $f>0$ on $V$ such that $\tilde\rho=f\,\rho$ on $V$. In particular, on $\partial M\cap V$ one has
\begin{align}
\tilde\rho_*|_x  =  f(x)\,\rho_*|_x,
\end{align}
and hence, at $p$,
\begin{align}
\tilde\rho_*|_p(\nu)=f(p)\,\rho_*|_p(\nu)\qquad\text{for all }\nu\in T_pM.
\end{align}
Therefore the sign of $\rho_*|_p(\nu)$ is independent of the choice of $\rho$.\\
A vector $\nu\in T_pM$ is \emph{outward–pointing} iff
\begin{align}
\rho_*|_p(\nu)<0,
\end{align}
and \emph{inward–pointing} iff $\rho_*|_p(\nu)>0$. Equivalent tests:
\begin{enumerate}
\item \textit{Curve criterion for transverse vectors.} If $\rho_*|_p(\nu)\neq0$, then $\nu$ is outward iff there exists a smooth curve $\gamma$ in a local extension of a boundary chart across $\partial M$, with
\begin{align}
\gamma(0)=p,\qquad \gamma'(0)=\nu,\qquad \gamma(t)\notin M \ \text{for all sufficiently small } t>0.
\end{align}
\item \textit{Boundary chart test.} In any boundary chart $\varphi:U\to\mathbb H^n=\{x^n\ge 0\}$ with $\varphi(p)\in\{x^n=0\}$,
write
\begin{align}
\nu=\sum_{i=1}^n \nu^i\,\partial_i\big|_p.
\end{align}
Then
\begin{align}
\text{$\nu$ is outward} \quad\Longleftrightarrow\quad \nu^n<0,
\end{align}
since a standard local choice is $\rho=x^n$ and thus $\rho_*|_p(\nu)=\nu^n$.
\end{enumerate}
With this sign choice, $-\partial_n$ is an outward transverse vector in a boundary chart. A unit normal requires a Riemannian metric $h$; with that additional structure it is $-\operatorname{grad}_h\rho/|\operatorname{grad}_h\rho|_h$ along the boundary.
\end{remark}

\begin{definition}[Integration of top forms]
Let $M$ be an oriented $n$--dimensional smooth manifold. Write $\omega\in\Omega^n_c(M)$ if $\omega$ has compact support, and then define\footnote{Here $\rho_\alpha\omega$ denotes the pointwise product.}
\begin{align}
\int_M \omega = \sum_\alpha \int_{\varphi_\alpha(U_\alpha)} (\varphi_\alpha^{-1})^*(\rho_\alpha\omega),
\end{align}
where $\{(U_\alpha,\varphi_\alpha)\}$ is an oriented atlas and $\{\rho_\alpha\}$ a partition of unity.\footnote{A \emph{partition of unity} subordinate to an open cover $\{U_\alpha\}$ of a smooth manifold $M$ is a locally finite family of smooth functions on $M$ $\{\rho_\alpha\}$ ($\forall\, p\in M\ \exists\, U$ s.t., $\forall \,\alpha$, $\rho_\alpha|_U=0$ apart from finitely many elements) such that
\begin{align}
&0\le \rho_\alpha\le 1,\quad \mathrm{supp}(\rho_\alpha)\subset U_\alpha,\  \text{and}\\
&\sum_{\alpha\in A}\rho_\alpha(p)=1\quad \text{for every }p\in M.
\end{align}
Our definition of a smooth manifolds implies that it is paracompact, and, on these spaces, partitions of unity subordinate to any open cover always exist.}
\end{definition}
\begin{theorem}[Stokes’ theorem]
If $M$ is an oriented compact $n$--dimensional smooth manifold with boundary $\partial M$ and $\omega\in\Omega^{n-1}(M)$, then
\begin{align}
\int_M d\omega = \int_{\partial M} i^*\omega,
\end{align}
where $i\colon\partial M\to M$ is the canonical boundary inclusion and the orientation on $\partial M$ is induced by the outward normal convention.
\end{theorem}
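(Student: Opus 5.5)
The plan is the standard one: reduce, via a partition of unity, to forms supported in a single chart, and then verify the identity on $\mathbb R^n$ and on $\mathbb H^n$ by the fundamental theorem of calculus.

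\emph{Reduction to single charts.} Fix an oriented atlas $\{(U_\alpha,\varphi_\alpha)\}$ of $M$ consisting of interior charts (images open in $\{x^n>0\}$) and boundary charts (images open in $\mathbb H^n$). By compactness this atlas can be taken finite. Choose a subordinate partition of unity $\{\rho_\alpha\}$, and write $\omega=\sum_\alpha\rho_\alpha\omega$. The maps $\omega\mapsto\int_M d\omega$ and $\omega\mapsto\int_{\partial M}i^*\omega$ are both $\mathbb R$-linear, since $d$, $i^*$ and the integral are linear. It therefore suffices to prove the identity for a form $\omega$ compactly supported in a single $U_\alpha$. Pulling back by $\varphi_\alpha^{-1}$ and using the naturality of $d$ (proved above) together with the definition of the integral, the problem becomes the following: for a compactly supported $(n-1)$-form $\eta$ on an open subset of $\mathbb H^n$, show that $\int d\eta=\int_{\partial\mathbb H^n}\eta$, with the induced orientation on $\partial\mathbb H^n$.

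\emph{Local computation.} Write
\begin{align}
\eta=\sum_{i=1}^n \eta_i\, dx^1\wedge\cdots\wedge\widehat{dx^i}\wedge\cdots\wedge dx^n .
\end{align}
Then
\begin{align}
d\eta=\sum_i(-1)^{i-1}\partial_i\eta_i\,dx^1\wedge\cdots\wedge dx^n .
\end{align}
Extend $\eta$ by zero and integrate over a large box, using Fubini. For each $i<n$, the integral of $\partial_i\eta_i$ in the $x^i$-direction vanishes by the fundamental theorem of calculus and compact support. The same is true for $i=n$ in the case of an interior chart, where the boundary term is absent; this matches $i^*\omega=0$ there, because the support of $\omega$ misses $\partial M$. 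In a boundary chart, the $i=n$ term gives
\begin{align}
(-1)^{n-1}\int_0^\infty\partial_n\eta_n\,dx^n=(-1)^{n}\eta_n(x',0).
\end{align}
Hence
\begin{align}
\int_{\mathbb H^n}d\eta=(-1)^n\int_{\mathbb R^{n-1}}\eta_n(x',0)\,dx'.
\end{align}
On the other side, the pullback of $\eta$ to $\{x^n=0\}$ retains only the $\eta_n$ term, since every other term contains $dx^n$, whose pullback is zero.

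\emph{Main obstacle: orientation.} The delicate step is to check that the outward-normal convention gives $\partial\mathbb H^n$ exactly the orientation that produces the sign $(-1)^n$. By the boundary chart test in the earlier remark, $-\partial_n$ is outward. A basis $(v_1,\dots,v_{n-1})$ of $T\partial M$ is positively oriented precisely when $(-\partial_n,v_1,\dots,v_{n-1})$ is positively oriented in $M$. Now
\begin{align}
(-\partial_n,\partial_1,\dots,\partial_{n-1})
\end{align}
differs from the standard frame by a cyclic shift, contributing $(-1)^{n-1}$, and a sign flip, contributing a further $-1$. So $(\partial_1,\dots,\partial_{n-1})$ has orientation $(-1)^n$ relative to the induced one. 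Therefore
\begin{align}
\int_{\partial\mathbb H^n}\eta=(-1)^n\int\eta_n(x',0)\,dx',
\end{align}
which matches the computation above.

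Two consistency checks remain. First, the boundary charts restricted to $\partial M$ form an oriented atlas for this induced orientation: their transition Jacobians have positive determinant, because the full Jacobian is block triangular and preserves $x^n\ge0$, so $\partial x'^n/\partial x^n>0$. Second, the integrals are independent of the chosen atlas and partition of unity. Both are routine, and with them the local identities sum to the global statement.
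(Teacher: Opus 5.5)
Your proof is correct and is essentially the paper's argument: a partition of unity localizes to a single chart, and the fundamental theorem of calculus on a box in $\mathbb H^n$ finishes it, with the orientation sign $(-1)^n$ checked correctly. Two points of wording: for odd $n$ the restricted boundary charts are \emph{negatively} oriented for the outward-normal orientation, so your Jacobian argument shows they form a consistently oriented atlas whose orientation differs from the induced one by the uniform sign $(-1)^n$ (which is what your local computation actually uses); and for $n=1$ a positively oriented atlas of $\mathbb H^1$-valued charts need not exist (e.g.\ $M=[0,1]$), so in that case you must allow negatively oriented boundary charts.
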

\begin{proof}[Sketch]
In a coordinate cube $[a,b]^n\subset\mathbb{R}^n$ Stokes’ theorem reduces to the fundamental theorem of calculus. By partition of unity one reduces the general case to this local statement, and cancellations along overlapping regions leave only the integral over the boundary of $M$.
\end{proof}
\begin{definition}[The de Rham cohomology]
The sequence
\begin{align}
0 \longrightarrow \Omega^0(M)\xrightarrow{d}\Omega^1(M)\xrightarrow{d}\cdots\xrightarrow{d}\Omega^n(M) \longrightarrow 0
\end{align}
is a cochain complex because $d^2=0$. Its cohomology is the \emph{de Rham cohomology}
\begin{align}
H^p_{\mathrm{dR}}(M)=\frac{\ker(d:\Omega^p(M)\to\Omega^{p+1}(M))}{\mathrm{im}(d:\Omega^{p-1}(M)\to\Omega^p(M))}.
\end{align}
\end{definition}
\begin{remark}[Topological significance]
De Rham’s theorem states that for a smooth manifold $M$, $H^p_{\mathrm{dR}}(M)\cong H^p(M;\mathbb{R})$, the singular cohomology with real coefficients.
\end{remark}

\section{The Lie derivative}
In the following sections, manifolds are understood to have no boundary unless otherwise specified.
The Lie derivative is the most “bare” way to measure the infinitesimal effect of a diffeomorphism on tensor fields, depending only on the flow of a vector field and on the functorial pull--back/pushforward action. In this sense it yields a representation of the Lie algebra $\mathfrak X(M)$ on the tensor algebra $\mathfrak T(M)$ and encodes symmetries via invariance conditions.
\begin{definition}[Local and maximal flow of a vector field]
Let $M$ be a smooth manifold and $X\in\Gamma(TM)$. A \emph{local flow} of $X$ is a smooth map
\begin{align}
\Phi:\ D\subset\mathbb R\times M \longrightarrow M
\end{align}
defined on an open $D$ with $\{0\}\times M\subset D$, with each time set $I(p)=\{t:(t,p)\in D\}$ an interval, such that:
\begin{align}
\Phi(0,p)&=p, \qquad \forall\, p\in M,\\[3pt]
\partial_t \Phi(t,p)&=X_{\Phi(t,p)}, \qquad \forall\, (t,p)\in D,\\[3pt]
\exists\, \text{open } J(t,p)&\ni 0: \{(t+\tau,p):\ \tau\in J(t,p)\}\cup\{(\tau,\Phi(t,p)):\ \tau\in J(t,p)\}\subset D,\\[3pt]
\Phi(t+\tau,p)&=\Phi\bigl(\tau,\Phi(t,p)\bigr), \qquad \forall\, \tau\in J(t,p).
\end{align}
For each $p\in M$, the time set
\begin{align}
I(p)\coloneqq\{t\in\mathbb R:\ (t,p)\in D\}
\end{align}
is an open interval containing $0$, and $t\mapsto\Phi(t,p)$ on $I(p)$ is called the \textit{integral curve} of $X$ with initial value $p$.\\
A local flow $\Phi:D\to M$ is \emph{maximal} if no local flow $\widetilde\Phi:\widetilde D\to M$ satisfies $D\subsetneq\widetilde D$ and $\widetilde\Phi|_D=\Phi$. Its domain then has the form
\begin{align}
D_{\max}=\{(t,p)\in\mathbb R\times M:\ t\in I_{\max}(p)\},
\end{align}
where $I_{\max}(p)$ is the maximal existence interval of $\dot\gamma=X\!\circ\!\gamma$, $\gamma(0)=p$.\\
The field $X$ is \emph{complete} iff $I_{\max}(p)=\mathbb R$ for every $p\in M$, equivalently $D_{\max}=\mathbb R\times M$.
\end{definition}

\begin{proposition}
Let $X\in\Gamma(TM)$ with maximal flow $\Phi:D\subset\mathbb R\times M\to M$. Moreover, for each $t$, define
\begin{align}
U_t\coloneqq\{p\in M:\ (t,p)\in D\}.
\end{align}
Then, the map
\begin{align}
\Phi_t:U_t\longrightarrow U_{-t},\qquad \Phi_t(p)= \Phi(t,p),
\end{align}
is a diffeomorphism with inverse $\Phi_{-t}$. In particular, if $D=\mathbb R\times M$, then $\Phi_t\colon M\to M$ is a diffeomorphism for all $t$.
\end{proposition}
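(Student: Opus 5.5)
The plan is to use only the defining properties of the maximal flow, namely the group law $\Phi(t+\tau,p)=\Phi(\tau,\Phi(t,p))$, together with maximality. First I show that $\Phi_t$ maps $U_t$ into $U_{-t}$. Let $p\in U_t$ and $q=\Phi(t,p)$. The curve $s\mapsto\Phi(s+t,p)$, defined for $s\in I_{\max}(p)-t$, is an integral curve of $X$ starting at $q$. By uniqueness and maximality of integral curves (ODE uniqueness in charts, then gluing), $I_{\max}(q)\supseteq I_{\max}(p)-t$ and $\Phi(s,q)=\Phi(s+t,p)$ on this interval. Running the same argument backwards from $q$ gives the reverse inclusion, so $I_{\max}(q)=I_{\max}(p)-t$. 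Since $0\in I_{\max}(p)$, we get $-t\in I_{\max}(q)$, hence $q\in U_{-t}$. Moreover $\Phi(-t,q)=\Phi(0,p)=p$.

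Applying the same statement with $t$ replaced by $-t$ shows that $\Phi_{-t}$ maps $U_{-t}$ into $U_t$ and that $\Phi_t\circ\Phi_{-t}=\mathrm{id}_{U_{-t}}$. Together with $\Phi_{-t}\circ\Phi_t=\mathrm{id}_{U_t}$ from the first step, this proves that $\Phi_t$ is a bijection $U_t\to U_{-t}$ with inverse $\Phi_{-t}$.

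For smoothness, I use that the maximal domain $D$ is open in $\mathbb R\times M$ and that $\Phi$ is smooth on $D$. Then $U_t$ is open, being the preimage of $D$ under the smooth map $p\mapsto(t,p)$. The map $\Phi_t$ is the composite of that inclusion with $\Phi$, hence smooth. Its inverse $\Phi_{-t}$ is smooth by the same argument, so $\Phi_t$ is a diffeomorphism. If $D=\mathbb R\times M$, then $U_t=U_{-t}=M$ and the final claim follows.

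The main obstacle is the identity $I_{\max}(\Phi(t,p))=I_{\max}(p)-t$. It relies on uniqueness of integral curves, including the Hausdorff gluing argument that shows two integral curves agree on the whole of their common interval. It also relies on the fact that the maximal domain is open with $\Phi$ smooth there. The definition assumes both properties implicitly, and I take them from standard ODE theory on charts: local existence and uniqueness, smooth dependence on initial data, and a connectedness argument on intervals to pass from local to global uniqueness.
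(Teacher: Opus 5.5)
Your proof is correct and follows essentially the same route as the paper. Time-translation and uniqueness of integral curves give $-t\in I_{\max}(\Phi_t(p))$ and $\Phi_{-t}\circ\Phi_t=\mathrm{id}_{U_t}$, symmetrically $\Phi_t\circ\Phi_{-t}=\mathrm{id}_{U_{-t}}$, and smoothness of both maps follows from openness of $D$ and smoothness of $\Phi$ on it. You are slightly more careful than the paper in invoking maximality of $I_{\max}(q)$ to justify $-t\in I_{\max}(q)$, and you rightly skip its inverse-function-theorem step, which is redundant once a smooth two-sided inverse is available.
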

\begin{proof}
Fix $t$ and $p\in U_t$. Let $\gamma_p:I(p)\to M$ be the maximal integral curve with $\gamma_p(0)=p$ and $\gamma_p(s)=\Phi(s,p)$. Define $q\coloneqq \Phi_t(p)=\gamma_p(t)$.\\
\textit{(i) Openness of domains.} By definition of flow, $U_t$ and $U_{-t}$ are open.\\
\textit{(ii) Inverse candidate.} Consider the integral curve $\gamma_q$ through $q$. By uniqueness and time-translation of ODE solutions,
\begin{align}
\gamma_q(s)=\gamma_p(t+s)\qquad\text{whenever both sides are defined.}
\end{align}
In particular $-t\in I(q)$ and
\begin{align}
\Phi_{-t}(q)=\gamma_q(-t)=\gamma_p(0)=p,
\end{align}
so $\Phi_{-t}\circ \Phi_t=\mathrm{id}$ on $U_t$. A symmetric argument gives $\Phi_t\circ \Phi_{-t}=\mathrm{id}$ on $U_{-t}$. Hence $\Phi_t:U_t\to U_{-t}$ is a bijection with inverse $\Phi_{-t}$.\\
\textit{(iii) Smoothness and local invertibility.} Smooth dependence on initial data yields that $\Phi_t$ and $\Phi_{-t}$ are smooth. Differentiating the identities
\begin{align}
\Phi_{-t}\big(\Phi_t(p)\big)=p,\qquad \Phi_t\big(\Phi_{-t}(q)\big)=q,
\end{align}
gives, for all $p\in U_t$,
\begin{align}
d\Phi_{-t}\big|_{\Phi_t(p)}\circ d\Phi_t\big|_{p}=\mathrm{id}_{T_pM},\qquad
d\Phi_t\big|_{\Phi_{-t}(q)}\circ d\Phi_{-t}\big|_{q}=\mathrm{id}_{T_qM}.
\end{align}
Thus $d\Phi_t\big|_{p}$ is a linear isomorphism with inverse $d\Phi_{-t}\big|_{\Phi_t(p)}$. By the inverse function theorem, $\Phi_t$ is a local diffeomorphism at every $p\in U_t$; since $\Phi_{-t}$ is its smooth inverse, $\Phi_t$ is a diffeomorphism from $U_t$ onto $U_{-t}$.
\end{proof}

\begin{definition}[Lie derivative as flow of a vector field]\label{lie1}
Let $M$ be a smooth manifold and $X\in\Gamma(TM)$. Moreover, let $\Phi:D\subset\mathbb R\times M\to M$ be the local flow of $X$. Then, for any function $f\in C^\infty(M)$
\begin{align}
\mathcal L_X f  =  \frac{d}{dt}\Big|_{t=0}\,f\circ\Phi_{t},
\end{align}
and for any vector field $Y\in\Gamma(TM)$
\begin{align}
\mathcal L_X Y  =  \frac{d}{dt}\Big|_{t=0}\,(\Phi_{-t})_*  Y,
\end{align}
where $(\Phi_t)_* $ denotes the natural action of diffeomorphisms on vector fields via the pushforward.
\end{definition}
\begin{definition}[Lie derivative as algebraic operation]\label{lie2}
Let $M$ be a smooth manifold and $X\in\Gamma(TM)$. The \emph{Lie derivative} along $X$ is the unique $\mathbb R$–linear operator
\begin{align}
\mathcal L_X:\ \mathfrak T(M)\longrightarrow \mathfrak T(M)
\end{align}
on the full tensor field algebra $\displaystyle{\mathfrak T(M)\coloneqq\bigoplus_{r,s\ge0}\Gamma(T^{(r,s)}(M))}$ of $M$ which:
\begin{align}
&\text{(i) acts on functions by }\ \mathcal L_X f = X[f],\quad f\in C^\infty(M);\\
&\text{(ii) acts on vector fields by }\ \mathcal L_X Y = [X,Y],\quad Y\in\Gamma(TM);\\
&\text{(iii) is a derivation of degree $0$:} \ \mathcal L_X(T\otimes S)=(\mathcal L_XT)\otimes S + T\otimes(\mathcal L_XS);\\
&\text{(iv) commutes with contractions of indices defined in \cref{opetensors}.}
\end{align}
\end{definition}
\begin{remark}
    Since it is a degree $0$ operator, it maps homogeneous elements of $\mathfrak T(M)$ to others of the same type, i.e. $\mathcal{L}_X\colon \Gamma(T^{(r,s)}(M))\to \Gamma(T^{(r,s)}(M))$.
\end{remark}
\begin{remark}\label{liefromfunvec}
Since $\mathcal L_X$ is an $\mathbb R$–derivation of degree $0$ for $\otimes$ and it
commutes with the canonical contractions, \normalsize{\textbf{its action on arbitrary tensors is uniquely determined by its action on functions and vector fields}}.  
Indeed, for $\alpha\in\Omega^1(M)$ and $Y\in\Gamma(TM)$, commuting with the pairing/natural contraction
$\langle\ ,\ \rangle:T^*M\otimes TM\to C^\infty(M)$ gives\footnote{Notice that we are simply writing $\mathcal L_X\langle \alpha, Y\rangle=\mathcal L_X((\alpha\otimes Y)^i_{\ i})=(\mathcal L_X(\alpha\otimes Y))^i_{\ i}=(\mathcal L_X\alpha\otimes Y)^i_{\ i}+(\alpha\otimes \mathcal L_XY)^i_{\ i}=\langle \mathcal L_X\alpha, Y\rangle+\langle \alpha, \mathcal L_XY\rangle$, where we used the commutation with the contraction of indices, the Leibniz rule, and the linearity of the contraction operation itself (or of $\langle\ ,\ \rangle$ if you want).}
\begin{align}
\langle \mathcal L_X\alpha,  Y\rangle
= \mathcal L_X\langle \alpha,  Y\rangle - \langle \alpha,  \mathcal L_X Y\rangle
= X\big(\alpha(Y)\big) - \alpha([X,Y]),
\end{align}
so $\mathcal L_X$ on $1$–forms is fixed by its values on $C^\infty(M)$ and $\Gamma(TM)$;
Leibniz and $\mathbb R$--linearity then extend it to all tensor types. Equivalently, one may prescribe it on
functions and $1$–forms.
\end{remark}
\begin{proposition}
    \cref{lie1} and \cref{lie2} coincide on functions and vector fields, and thus, by \cref{liefromfunvec}, on all tensor fields.
\end{proposition}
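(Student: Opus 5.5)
The plan is to verify the two cases separately, working pointwise at a fixed $p\in M$. Since the flow $\Phi$ of $X$ is only local, first fix a neighborhood $W$ of $p$ and an $\varepsilon>0$ such that $(-\varepsilon,\varepsilon)\times W\subset D$. Then $\Phi_t$ and $\Phi_{-t}$ are defined near $p$ for $|t|<\varepsilon$. By the preceding proposition they are mutually inverse diffeomorphisms between the open sets $U_t$ and $U_{-t}$, so $(\Phi_{-t})_*Y$ makes sense near $p$. Everything below uses only these local statements.

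\emph{Functions.} The map $t\mapsto \Phi(t,p)$ is the integral curve of $X$ through $p$, so the curve description of tangent vectors and the chain rule give
\begin{align}
\frac{d}{dt}\Big|_{t=0} f(\Phi_t(p)) = \big\langle df_p,\ \partial_t\Phi(0,p)\big\rangle = df_p(X_p) = X_p[f].
\end{align}
This shows that \cref{lie1} gives $\mathcal L_Xf=X[f]$, which is (i) of \cref{lie2}.

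\emph{Vector fields.} Both sides are derivations at $p$, so it suffices to test them on a function $f$ defined near $p$. By the definition of pushforward, $(\Phi_{-t})_*Y$ at $p$ is the vector $(\Phi_{-t})_*|_{\Phi_t(p)}Y_{\Phi_t(p)}$, and hence
\begin{align}
\big((\Phi_{-t})_*Y\big)_p f = Y_{\Phi_t(p)}\big(f\circ\Phi_{-t}\big).
\end{align}
I will introduce the two-variable function $H(t,s):=Y_{\Phi_t(p)}(f\circ\Phi_{-s})=\big(Y(f\circ\Phi_{-s})\big)(\Phi_t(p))$ and compute $\frac{d}{dt}H(t,t)|_{t=0}=\partial_tH(0,0)+\partial_sH(0,0)$. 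For the first term, $H(t,0)=(Yf)(\Phi_t(p))$, so the function case gives $\partial_tH(0,0)=X_p[Yf]$. For the second term, note that $G(s,q):=f(\Phi(-s,q))$ is smooth in $(s,q)$ and satisfies $\partial_sG(0,\cdot)=-Xf$ by the function case. Since $Y$ acts by differentiation in the $q$-variables, in a chart it commutes with $\partial_s$ (equality of mixed partials). Therefore $\partial_sH(0,0)=Y_p(\partial_sG(0,\cdot))=-Y_p[Xf]$. Adding the two terms gives $X_p(Yf)-Y_p(Xf)=[X,Y]_pf$, which is (ii) of \cref{lie2}.

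The main obstacle is to justify the differentiation step rigorously. This requires joint smoothness of $H$, which follows from the smooth dependence of the flow on $(t,q)$ established for the local flow, together with the interchange of $Y$ and $\partial_s$. I would handle both by writing everything in a single chart around $p$ for $|t|,|s|$ small, where the claims reduce to the classical chain rule and Schwarz's theorem. With (i) and (ii) established, \cref{liefromfunvec} says that an $\mathbb R$-linear degree-$0$ derivation commuting with contractions is determined by its values on functions and vector fields. This yields agreement on all tensor fields, provided the flow definition is extended to tensors via pull-back in the natural way, since that extension is also a derivation commuting with contractions.
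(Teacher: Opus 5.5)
Your proposal is correct and follows essentially the same route as the paper. For functions, both arguments use the chain rule along the integral curve. For vector fields, both split $\frac{d}{dt}\big|_{t=0}\big(Y(f\circ\Phi_{-t})\circ\Phi_t\big)$ into $X(Yf)$ and $-Y(Xf)$: the paper encodes this in the identity $\frac{d}{dt}\big|_{t=0}(f_t\circ\Phi_t)=\frac{d}{dt}\big|_{t=0}f_t+Xf_0$, and your two-variable function $H(t,s)$ makes the same splitting explicit. Your extra care about local flow domains, joint smoothness, and interchanging $Y$ with $\partial_s$ only fills in details the paper leaves implicit.
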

\begin{proof}
On functions we observe that, on every $f\in C^\infty(M)$, we have
\begin{align}
\mathcal L_X f  =  \frac{d}{dt}\Big|_{t=0}\,f\circ\Phi_{t}=Xf,
\end{align}
by the chain rule for the derivative $\frac{d}{dt}$.\\
For proving it on vector fields, notice that the following (typed) identity holds: for smooth $f_t$,
\begin{align}
     \frac{d}{dt}\Big|_{t=0}\big(f_t\circ \Phi_t\big)
      =  \frac{d}{dt}\Big|_{t=0}f_t
      +  df_0\!\left(\frac{d}{dt}\Big|_{t=0}\Phi_t\right)
      =  \frac{d}{dt}\Big|_{t=0}f_t  +  Xf_0.
\end{align}
By definition of pushforward, for every $f\in C^\infty(U)$, we have
\begin{align}
    (\Phi_{-t})_*Y\,f  =  Y(f\circ \Phi_{-t})\circ\Phi_{t}\in C^\infty(U),
\end{align}
where we used that $\Phi_t^{-1}=\Phi_{-t}$ for all $t$. From the identity above, it follows
\begin{align}
    \frac{d}{dt}\Big|_{t=0}(\Phi_{-t})_*  Y\,f
    &= \frac{d}{dt}\Big|_{t=0}\Big(Y(f\circ \Phi_{-t})\circ\Phi_{t}\Big)\\[4pt]
    &= \underbrace{\frac{d}{dt}\Big|_{t=0} Y(f\circ \Phi_{-t})}_{=\,-\,Y(Xf)}
       + \underbrace{\frac{d}{dt}\Big|_{t=0}\big((Yf)\circ \Phi_{t}\big)}_{=\,X(Yf)}\\[4pt]
    &= \,[X,Y]\,f,
\end{align}
where we used $\displaystyle{\frac{d}{dt}\Big|_{t=0}f\circ\Phi_t=Xf}$ and $\displaystyle{\frac{d}{dt}\Big|_{t=0}\Phi_{t}=X}$. Since this holds for all $f$, it follows that
\begin{align}
    \mathcal L_XY=\frac{d}{dt}\Big|_{t=0}(\Phi_{-t})_*  Y  =  [X,Y],
\end{align}
which completes the proof.
\end{proof}

\begin{proposition}[Operational formula for the Lie derivative on tensors]
Let $T$ be an $(r,s)$–tensor field. For $\alpha^1,\dots,\alpha^r\in\Omega^1(M)$ and $Y_1,\dots,Y_s\in\Gamma(TM)$ one has
\begin{align}
\mathcal L_X T(\alpha^1,\dots,\alpha^r;\,Y_1,\dots,Y_s)
&= X\!\big( T(\alpha^1,\dots,\alpha^r;\,Y_1,\dots,Y_s) \big)\\
&\quad - \sum_{i=1}^r T(\alpha^1,\dots,\mathcal L_X\alpha^i,\dots,\alpha^r;\,Y_1,\dots,Y_s)\\
&\quad - \sum_{j=1}^s T(\alpha^1,\dots,\alpha^r;\,Y_1,\dots,\mathcal L_X Y_j,\dots,Y_s).
\end{align}
Here, $\mathcal L_X Y=[X,Y]$ on vector fields and $\mathcal L_X\alpha=d\,\iota_X\alpha+\iota_X d\alpha$ on $1$--forms.
\end{proposition}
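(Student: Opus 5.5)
The formula follows from the algebraic characterization of the Lie derivative (\cref{lie2}): $\mathcal L_X$ is a degree-$0$ derivation of $\otimes$ and commutes with contractions. The idea is to write the scalar $T(\alpha^1,\dots,\alpha^r;Y_1,\dots,Y_s)$ as a complete contraction of the tensor product
\begin{align}
T\otimes\alpha^1\otimes\cdots\otimes\alpha^r\otimes Y_1\otimes\cdots\otimes Y_s,
\end{align}
pairing each contravariant slot of $T$ with the corresponding $\alpha^i$ and each covariant slot with the corresponding $Y_j$. In components this scalar is $T^{i_1\dots i_r}{}_{j_1\dots j_s}\alpha^1_{i_1}\cdots\alpha^r_{i_r}Y_1^{j_1}\cdots Y_s^{j_s}$, a function on $M$.

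The steps are as follows.

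\textbf{Step 1.} On functions, $\mathcal L_X$ acts as $X[\cdot]$ by property (i). Applied to the scalar, this gives the left-hand side $X\big(T(\alpha^1,\dots;\dots,Y_s)\big)$ of the rearranged identity.

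\textbf{Step 2.} Use property (iv) to move $\mathcal L_X$ inside the total contraction. Then use property (iii), iterated by induction on the number of factors, to expand $\mathcal L_X$ of the $(r+s+1)$-fold tensor product into a sum of terms, each with $\mathcal L_X$ hitting exactly one factor.

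\textbf{Step 3.} Re-contract each term. The term where $\mathcal L_X$ hits $T$ gives $(\mathcal L_XT)(\alpha^1,\dots;\dots,Y_s)$. The other terms give $T(\dots,\mathcal L_X\alpha^i,\dots;\dots)$ and $T(\dots;\dots,\mathcal L_XY_j,\dots)$.

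\textbf{Step 4.} Rearrange the resulting identity to obtain the stated formula. Substitute $\mathcal L_XY_j=[X,Y_j]$ from property (ii).

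For the one-form expression, \cref{liefromfunvec} already gives $(\mathcal L_X\alpha)(Y)=X(\alpha(Y))-\alpha([X,Y])$. To match this with $d\iota_X\alpha+\iota_Xd\alpha$, I would use the coordinate-free formula $d\alpha(X,Y)=X(\alpha(Y))-Y(\alpha(X))-\alpha([X,Y])$, which appeared in the proof of the Pfaffian Frobenius theorem. This gives
\begin{align}
(d\iota_X\alpha+\iota_Xd\alpha)(Y)=Y(\alpha(X))+X(\alpha(Y))-Y(\alpha(X))-\alpha([X,Y]),
\end{align}
which agrees with the expression from \cref{liefromfunvec}.

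The main obstacle is bookkeeping in Step 2: one has to justify that the full contraction is a composition of single contractions, each of which commutes with $\mathcal L_X$, and that the order of contractions is irrelevant. I would handle this by working in a chart with components and Einstein summation. There, contractions are just index identifications, so the Leibniz expansion term by term reduces to the component-wise product rule.
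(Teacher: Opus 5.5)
Your proposal is correct and takes the same route as the paper. It writes the evaluation as a total contraction of $T\otimes\alpha^1\otimes\cdots\otimes Y_s$, applies Leibniz together with compatibility with contractions, identifies $\mathcal L_X$ on the resulting scalar with $X$, and rearranges. Your extra check that $d\,\iota_X\alpha+\iota_X d\alpha$ agrees with $X(\alpha(Y))-\alpha([X,Y])$ is not in the paper's proof of this statement, since the paper does that computation later in the proof of Cartan's magic formula, but your check is correct.
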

\begin{proof}
Apply $\mathcal L_X$ to the scalar function $T(\alpha^1,\dots,\alpha^r;Y_1,\dots,Y_s)$. This evaluation is a contraction of the tensor product of $T$ with its arguments. Compatibility with contractions and Leibniz express its derivative as the evaluation of $\mathcal L_XT$ plus one term for the derivative of each argument. On the scalar function the derivative is $X$. Moving the argument terms to the other side gives the stated formula, with a minus sign for both sums.
\end{proof}
\begin{remark}
    Notice that this result allows the complete equivalence (operationally) between the algebraic and the geometric definitions of the Lie derivative. This is achieved by generalizing \cref{lie1} to tensor fields by the pushforward and the pull-back on the contravariant and covariant slots. This is in agreement with \cref{liefromfunvec}.
\end{remark}
\begin{proposition}[Naturality of the Lie derivative under pushforward/pull--back]
Let $F:M\to N$ be a diffeomorphism. Let $X\in\Gamma(TN)$ and $\widetilde X\in\Gamma(TM)$ be the $F$–related vector fields, i.e.
\begin{align}
dF_p(\widetilde X_p)=X_{F(p)}\qquad(p\in M).
\end{align}
Denote by $\Phi_t$ and $\widetilde\Phi_t$ the local flows of $X$ and $\widetilde X$, respectively.
Then, for all tensor fields $S\in \mathfrak T(M)$ and $T\in\mathfrak T(N)$, one has
\begin{align}
F^*(\mathcal L_X T) &= \mathcal L_{\widetilde X}(F^* T),\\
F_*(\mathcal L_{\widetilde X} S) &= \mathcal L_X(F_* S).
\end{align}
\end{proposition}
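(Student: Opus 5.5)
The plan is to reduce the statement to the case of functions and vector fields, using the uniqueness characterization of \cref{lie2} together with \cref{liefromfunvec}, and then verify those two cases directly. The key observation is that, since $F$ is a diffeomorphism, both $F^*$ and $F_*$ act on the whole tensor algebra $\mathfrak T$ (pull-back on covariant slots, pushforward by $F^{-1}$ on contravariant slots, and conversely). Moreover $F^*=(F^{-1})_*=(F_*)^{-1}$, so the second identity follows from the first applied to $T=F_*S$: indeed $F^*(\mathcal L_X F_*S)=\mathcal L_{\widetilde X}S$, and applying $F_*$ to both sides gives the claim. It therefore suffices to prove $F^*\circ\mathcal L_X=\mathcal L_{\widetilde X}\circ F^*$.

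Consider the operator $D:=F_*\circ\mathcal L_{\widetilde X}\circ F^*$ on $\mathfrak T(N)$. It is $\mathbb R$-linear and of degree $0$. It is also a derivation for $\otimes$, because $F^*$ and $F_*$ are algebra isomorphisms for $\otimes$ (they are defined fiberwise by linear isomorphisms). Finally, it commutes with contractions, since a fiberwise linear isomorphism together with its dual preserves the pairing $\langle\alpha,Y\rangle$. By the uniqueness in \cref{lie2}, $D=\mathcal L_X$ as soon as the two operators agree on functions and on vector fields.

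For a function $f$ on $N$, the chain rule gives
\begin{align}
\widetilde X(f\circ F)(p)=df_{F(p)}\big(dF_p\widetilde X_p\big)=(Xf)(F(p)),
\end{align}
that is, $\mathcal L_{\widetilde X}F^*f=F^*(\mathcal L_Xf)$. For a vector field $Y$ on $N$, we have $F^*Y=(F^{-1})_*Y$, which is $F$-related to $Y$, while $\widetilde X=(F^{-1})_*X$. Naturality of the Lie bracket under diffeomorphisms, $\phi_*[Y,Z]=[\phi_*Y,\phi_*Z]$, stated with the definition of the bracket, then gives $[\widetilde X,(F^{-1})_*Y]=(F^{-1})_*[X,Y]$, which is the required identity.

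The main obstacle is making rigorous that $F^*,F_*$ intertwine tensor products and contractions on mixed $(r,s)$ tensors. I would handle this fiberwise: at each $p$, $F^*$ is the map $T^{(r,s)}_{F(p)}N\to T^{(r,s)}_pM$ induced by $(dF_p)^{-1}$ on the vector factors and $(dF_p)^*$ on the covector factors, and functoriality of tensor products of linear maps gives both properties. As an independent check, the flow picture also works: $F\circ\widetilde\Phi_t=\Phi_t\circ F$ by uniqueness of integral curves, since $F\circ\widetilde\gamma$ is an integral curve of $X$. Hence $F^*\Phi_t^*T=\widetilde\Phi_t^*F^*T$, and differentiating at $t=0$ gives the result directly.
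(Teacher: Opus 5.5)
Your argument is correct. Its skeleton is the same as the paper's: check the identity on generators, then extend it using the derivation property and compatibility with contractions. The two main steps, however, are carried out differently.

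\textbf{The paper's route.} It starts from the intertwining $F\circ\widetilde\Phi_t=\Phi_t\circ F$ and differentiates at $t=0$. This gives the pull-back identity on functions and $1$--forms, and the push-forward identity on vector fields via $\mathcal L_XY=\frac{d}{dt}|_{t=0}(\Phi_{-t})_*Y$. It then treats $F^*\circ\mathcal L_X$ and $\mathcal L_{\widetilde X}\circ F^*$ as derivations \emph{relative to} $F^*$, applying \cref{liefromfunvec} in a suitably adapted form. The pushforward statement is handled by the analogous argument.

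\textbf{Your route.} You obtain the base cases algebraically, from the chain rule and naturality of the bracket, and you conjugate. The operator $D=F_*\circ\mathcal L_{\widetilde X}\circ F^*$ is an honest degree-$0$ derivation of $\mathfrak T(N)$ commuting with contractions, so the uniqueness clause of \cref{lie2} applies verbatim, with no adaptation needed. Since $F_*=(F^*)^{-1}$ on the whole tensor algebra, the second identity then follows for free from the first.

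\textbf{Comparison.} The paper's route shows the statement as a direct consequence of how flows transform, which is the geometric content of naturality. Yours is shorter and avoids local flows and their domain issues, at the cost of taking bracket naturality as given. Your closing flow check is essentially a one-line version of the paper's proof. If one accepts $\frac{d}{dt}|_{t=0}\rho_t=\mathcal L_X$ on all tensors, differentiating $F^*\Phi_t^*T=\widetilde\Phi_t^*F^*T$ handles every tensor type at once, with no algebraic extension step.
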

\begin{proof}
The flows intertwine with $F$:
\begin{align}
F\circ \widetilde\Phi_t  =  \Phi_t\circ F .
\end{align}
We apply the natural action of diffeomorphisms on vector fields and $1$--forms and then extend this to all tensors. This gives
\begin{align}\label{1}
F^*\circ (\Phi_t)^*  =  (\widetilde\Phi_t)^* \circ F^*
\end{align}
for $1$--forms on $N$, and
\begin{align}\label{2}
F_*\circ (\widetilde\Phi_t)_*  =  (\Phi_t)_*\circ F_*,
\end{align}
for vector fields on $M$.\\[4pt]
Differentiating \eqref{1} at $t=0$ and using the flow definition of the Lie derivative on covariant tensors, we obtain, for every $1$--form $\alpha\in\Omega^1(N)$,
\begin{align}
    \frac{d}{dt}\Big|_{t=0}\bigl(F^*(\Phi_t^*\alpha)\bigr)
     =  \frac{d}{dt}\Big|_{t=0}\bigl((\widetilde\Phi_t)^*(F^*\alpha)\bigr),
\end{align}
hence
\begin{align}
F^{* }\bigl(\mathcal L_X\alpha\bigr)
 =  \mathcal L_{\widetilde X}\bigl(F^{* }\alpha\bigr).
\end{align}
This works verbatim for functions $f\in C^\infty(N)$ (viewed as $0$--forms).\\[4pt]
For vector fields, using the contravariant flow formula
$\displaystyle \mathcal L_X Y=\frac{d}{dt}\big|_{t=0}(\Phi_{-t})_*Y$,
apply \eqref{2} with $-t$ in place of $t$ and then differentiate at $t=0$:
\begin{align}
  \frac{d}{dt}\Big|_{t=0}\bigl(F_*(\widetilde\Phi_{-t})_* Y\bigr)
   =  \frac{d}{dt}\Big|_{t=0}\bigl((\Phi_{-t})_*(F_*Y)\bigr),
\end{align}
which yields, for every $Y\in\Gamma(TM)$,
\begin{align}
F_{* }\bigl(\mathcal L_{\widetilde X}Y\bigr)
 =  \mathcal L_X\bigl(F_{* }Y\bigr).
\end{align}
Finally, both operators $F^*\circ\mathcal L_X$ and $\mathcal L_{\widetilde X}\circ F^*$ are degree-$0$ derivations relative to $F^*$: $D(T\otimes S)=D(T)\otimes F^*S+F^*T\otimes D(S)$. They commute with natural contractions; the analogous statement holds relative to $F_*$ for the pushforward identity. Since they agree on functions ($0$--forms), vector fields, and $1$--forms by the computations above, by \cref{liefromfunvec} they agree on all $(r,s)$--tensors (Leibniz rule and compatibility with contractions). This proves both identities.
\end{proof}
\begin{definition}[Action on tensor fields]
    We denote by $\rho_t$ the generic action of the flows on the tensor fields of mixed type, namely $(\Phi_{-t})_*$ on contravariant slots and $(\Phi_t)^*$ on covariant slots, so that $\rho_t=(\Phi_t)^*$ is the full tensor pull-back and $\frac{d}{dt}|_{t=0}\rho_t=\mathcal L_X$.\footnote{In the literature, by abuse of notation, this is sometimes written directly as $(\Phi_t)^*$ or $(\Phi_t)_*$.}
\end{definition}
\begin{proposition}[Cartan’s magic formula]
For every smooth vector field $X\in\Gamma(TM)$ and every differential form $\omega\in\Omega^\bullet(M)$,
\begin{align}
\mathcal L_X \omega  =  d\,\iota_X \omega  +  \iota_X d\omega.
\end{align}
\end{proposition}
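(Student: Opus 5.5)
The strategy is to show that both sides of Cartan's formula are operators on $\Omega^\bullet(M)$ with the same algebraic properties, and that such operators are fixed by their values on functions and exact $1$-forms. Write $P_X\coloneqq d\,\iota_X+\iota_X d$. We will check three things.

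\textbf{Derivations of degree $0$.} Both $\mathcal L_X$ and $P_X$ are $\mathbb R$-linear of degree $0$. Each is a derivation of the wedge product:
\begin{align}
D(\alpha\wedge\beta)=D\alpha\wedge\beta+\alpha\wedge D\beta .
\end{align}
For $\mathcal L_X$ this follows from the tensor Leibniz rule (iii) of \cref{lie2}. Indeed, $\alpha\wedge\beta$ is a multiple of $\mathrm{Alt}(\alpha\otimes\beta)$, and $\mathrm{Alt}$ is built from permutations of slots, which commute with $\mathcal L_X$ by compatibility with contractions. Alternatively, it follows from the flow description, since $\Phi_t^*$ preserves wedge products.

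For $P_X$ it is the standard sign cancellation. Both $d$ and $\iota_X$ are antiderivations of degree $\pm1$ (the definition of $d$ and the proposition on the properties of $\iota_X$). So their graded commutator $d\iota_X+\iota_Xd$ is an even derivation. Concretely, expand $d\iota_X(\alpha\wedge\beta)$ and $\iota_Xd(\alpha\wedge\beta)$ using both Leibniz rules. The cross terms $(-1)^{p}\iota_X\alpha\wedge d\beta$ and $(-1)^{p-1}\iota_X\alpha\wedge d\beta$ cancel, and likewise for $d\alpha\wedge\iota_X\beta$.

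\textbf{Commuting with $d$.} Both operators commute with $d$. For $P_X$, this is immediate from $d^2=0$, since $dP_X=d\iota_Xd=P_Xd$. For $\mathcal L_X$, differentiate the naturality of $d$, $\Phi_t^*d=d\Phi_t^*$, at $t=0$ on a relatively compact neighbourhood where the flow exists. This uses the flow characterization of $\mathcal L_X$ on covariant tensors, i.e.\ the action $\rho_t$.

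\textbf{Agreement on generators.} On functions, $P_Xf=\iota_Xdf=df(X)=Xf=\mathcal L_Xf$. Then on exact $1$-forms $df$, using the previous step,
\begin{align}
\mathcal L_X df=d\mathcal L_Xf=dP_Xf=P_Xdf .
\end{align}

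\textbf{Conclusion.} Both operators are local. For $\mathcal L_X$ this holds via the flow, or via the cutoff argument used in the uniqueness of $d$. For $P_X$ it holds because $d$ and $\iota_X$ are local. So we may work on a chart and write $\omega=\tfrac1{p!}\omega_I\,dx^{i_1}\wedge\cdots\wedge dx^{i_p}$. Each factor is a function or an exact $1$-form, and both operators are derivations agreeing on these factors. Hence $\mathcal L_X\omega=P_X\omega$ on each chart, and therefore globally.

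\textbf{Main obstacle.} The delicate step is justifying that $\mathcal L_X$ is a derivation of $\wedge$ and commutes with $d$, given that \cref{lie2} is phrased on the tensor algebra rather than on forms. I would handle it through the flow picture, since $\Phi_t^*$ is an algebra morphism commuting with $d$. The one point to verify is differentiating under $t$ pointwise where the flow is only local; this is fine because each point has a neighbourhood $U$ and an $\varepsilon>0$ with $(-\varepsilon,\varepsilon)\times U\subset D$.

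As a cross-check on $1$-forms, \cref{liefromfunvec} gives $(\mathcal L_X\alpha)(Y)=X(\alpha(Y))-\alpha([X,Y])$. Using $d\alpha(X,Y)=X\alpha(Y)-Y\alpha(X)-\alpha([X,Y])$, this equals $(d\iota_X\alpha+\iota_Xd\alpha)(Y)$, confirming the degree-one case directly.
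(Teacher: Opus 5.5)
Your proof is correct. Its skeleton matches the paper's: both sides are degree-$0$ derivations of $\wedge$, so it suffices to compare them on generators. The key step, however, differs. The paper checks the identity on \emph{arbitrary} $1$-forms by a direct evaluation on a vector field $Y$, using $(\mathcal L_X\alpha)(Y)=X(\alpha(Y))-\alpha([X,Y])$ from \cref{liefromfunvec} and the invariant formula $d\alpha(X,Y)=X(\alpha(Y))-Y(\alpha(X))-\alpha([X,Y])$; this is exactly your closing cross-check. Your main line instead first establishes $[\mathcal L_X,d]=0$ from the naturality of $d$ under the flow pull-backs $\Phi_t^*$. Agreement on functions then forces agreement on exact $1$-forms, and locality reduces everything to a chart, where every form is built from functions and the $dx^i$. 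This buys you two things. You never need the invariant formula for $d\alpha$, which the paper invokes as ``standard'' without proof. You also make explicit the locality argument that the paper's ``extend by the derivation property'' leaves implicit. The price is that you rely on the flow characterization of $\mathcal L_X$ on covariant tensors (the equivalence of \cref{lie1} and \cref{lie2} via $\rho_t$) and on differentiating local flows, whereas the paper's argument is purely algebraic once the $1$-form formula is granted. Note also that you reverse the paper's logical order: there, $[\mathcal L_X,d]=0$ is \emph{deduced} from Cartan's formula in \cref{prop:cartan-comm}(iii). Your argument is therefore non-circular only because you derive the commutation from the flow rather than citing that proposition.
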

\begin{proof}
Both $\mathcal L_X$ and $d\,\iota_X+\iota_X d$ are $\mathbb R$–linear endomorphisms of $\Omega^\bullet(M)$
and derivations of degree $0$ for the wedge product:
\begin{align}
\mathcal L_X(\alpha\wedge\beta)&=(\mathcal L_X\alpha)\wedge\beta+\alpha\wedge(\mathcal L_X\beta),\\
(d\,\iota_X+\iota_X d)(\alpha\wedge\beta)&=(d\,\iota_X+\iota_X d)\alpha\wedge\beta+\alpha\wedge(d\,\iota_X+\iota_X d)\beta.
\end{align}
Therefore, it suffices to verify the identity on $0$–forms and on $1$–forms, and then extend it to all forms by linearity.\\
\textit{Case of $0$–forms.} For $f\in C^\infty(M)$,
\begin{align}
(d\,\iota_X+\iota_X d)f  =  d(0)+\iota_Xdf = \iota_Xdf = Xf = \mathcal L_X f.
\end{align}
\textit{Case of $1$–forms.} Let $\alpha\in\Omega^1(M)$ and $Y\in\Gamma(TM)$. Then
\begin{align}
\mathcal L_X\alpha(Y)  &=  X(\alpha(Y))-\alpha(\mathcal L_X Y)
 =  X(\alpha(Y))-\alpha([X,Y]),\\
d\,\iota_X\alpha(Y)  &=  Y(\iota_X\alpha) =  Y(\alpha(X)),\\
\iota_X d\alpha(Y)  &=  d\alpha(X,Y).
\end{align}
Using the standard expression for $d\alpha$,
\begin{align}
d\alpha(X,Y)  =  X(\alpha(Y))-Y(\alpha(X))-\alpha([X,Y]),
\end{align}
we obtain
\begin{align}
(d\,\iota_X\alpha+\iota_X d\alpha)(Y)
&= Y(\alpha(X)) + X(\alpha(Y)) - Y(\alpha(X)) - \alpha([X,Y])\\
&= X(\alpha(Y)) - \alpha([X,Y])\\
&= \mathcal L_X\alpha(Y).
\end{align}
Therefore the formula holds on $0$– and $1$–forms, and by the derivation property it extends to all differential forms.
\end{proof}

\begin{remark}\label{opliealg}
Let $\mathcal A$ be any associative real algebra (e.g. $\mathrm{End}_{\mathbb R}(\mathfrak X(M))$ or $\mathrm{End}_{\mathbb R}(\mathfrak T(M))$). 
For $A,B\in\mathcal A$ define the commutator
\begin{align}
[A,B]\coloneqq AB-BA.
\end{align}
Then, for all $A,B,C\in\mathcal A$ and $a\in\mathbb R$,
\begin{align}
&[A,B]=-[B,A],\\
&[A,B+C]=[A,B]+[A,C],\\
&[aA,B]=a[A,B],\\
&[A,BC]=[A,B]\,C+B\,[A,C]\qquad\text{(Leibniz)},\\
&[A,[B,C]]+[B,[C,A]]+[C,[A,B]]=0\qquad\text{(Jacobi)}.
\end{align}
Hence, $(\mathcal A,[\,,\,])$ is a Lie algebra.
\end{remark}

\begin{remark}
Notice that it may be useful to define the notion of a \textit{graded commutator}, for example, when dealing with differential forms. In this setting, for homogeneous operators $A,B$ the commutation relations pick up the sign $(-1)^{|A||B|}$, where $|\cdot|\colon \mathcal A^\bullet\to\{0,1\}$ denotes the parity (degree mod $2$).
\end{remark}

\begin{proposition}[Cartan commutation identities]\label{prop:cartan-comm}
Let $X,Y\in \Gamma(TM)$. Then, the following hold:\footnote{Here $[\,,\,]$ denotes the operator commutator as in \cref{opliealg}. Notice that, in the last two relations, they must be intended on $\Omega^\bullet(M)$; in particular, the commutator is graded.}
\begin{align}
\textnormal{(i)}\quad & [\mathcal L_X,\mathcal L_Y]  =  \mathcal L_{[X,Y]},\\
\textnormal{(ii)}\quad & [\mathcal L_X,\iota_Y]  =  \iota_{[X,Y]},\\
\textnormal{(iii)}\quad & [\mathcal L_X,d]  =  0.
\end{align}
\end{proposition}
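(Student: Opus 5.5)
The strategy for all three identities is the same. Each side is an operator of a definite type, and \cref{liefromfunvec} says such an operator is pinned down by very little data. So we first check the structural properties of both sides, then compare them on a small set of generators.

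For (i), note that $[\mathcal L_X,\mathcal L_Y]$ is a commutator of two degree-$0$ derivations of $\otimes$ that commute with contractions. A direct expansion shows the commutator is again such a derivation: the mixed terms $\mathcal L_XT\otimes\mathcal L_YS$ cancel, and commutation with contractions is preserved. Hence it suffices to compare both sides on functions and vector fields. On $f\in C^\infty(M)$ we get $[\mathcal L_X,\mathcal L_Y]f=X(Yf)-Y(Xf)=[X,Y]f=\mathcal L_{[X,Y]}f$, by the definition of the bracket. On a vector field $Z$ we get $[X,[Y,Z]]-[Y,[X,Z]]$, which equals $[[X,Y],Z]$ by the Jacobi identity for vector fields. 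Then \cref{liefromfunvec} gives equality on all of $\mathfrak T(M)$.

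For (iii) we work on $\Omega^\bullet(M)$. By Cartan's magic formula, $\mathcal L_X d=d\iota_Xd$, using $d^2=0$, and $d\mathcal L_X=d\iota_Xd$ as well. So $[\mathcal L_X,d]=0$ immediately. Alternatively, one can differentiate the naturality $\Phi_t^*d=d\Phi_t^*$ at $t=0$.

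For (ii), both $[\mathcal L_X,\iota_Y]$ and $\iota_{[X,Y]}$ are graded derivations of degree $-1$ on $(\Omega^\bullet(M),\wedge)$. For the commutator this follows from the Leibniz rule for $\mathcal L_X$ (degree $0$) and the graded Leibniz rule for $\iota_Y$. Both operators are also local, so it suffices to check them on functions and on exact $1$-forms $df$, since locally every form is generated by these. On functions both sides vanish. On $df$ we compute
\begin{align}
\mathcal L_X\iota_Ydf-\iota_Y\mathcal L_Xdf=X(Yf)-Y(Xf)=[X,Y]f=\iota_{[X,Y]}df,
\end{align}
using $\mathcal L_Xdf=d\mathcal L_Xf=d(Xf)$, which is (iii) in degree $0$. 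If one wants (ii) on general tensors rather than forms, one can instead use the operational formula: contracting with $Y$ is a contraction of $Y\otimes T$, so the derivation property and commutation with contractions give $\mathcal L_X(\iota_YT)=\iota_{\mathcal L_XY}T+\iota_Y\mathcal L_XT$ directly.

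The main obstacle is the reduction step. One must verify carefully that the commutators really are derivations of the correct degree, with the graded signs handled properly in (ii). One must also justify that agreement on functions and $1$-forms (or exact $1$-forms) determines the operator. That justification requires locality, via cutoff functions as in the uniqueness proof for $d$, so that coordinate expansions $\tfrac{1}{p!}\omega_I\,dx^I$ can be used.
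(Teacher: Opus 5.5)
Your proposal is correct and takes essentially the same route as the paper. For (i) you reduce via \cref{liefromfunvec} to functions and vector fields and invoke the Jacobi identity; for (iii) you argue directly from Cartan's formula and $d^2=0$; for (ii) you compare two degree $-1$ derivations of $\Omega^\bullet(M)$ on low-degree forms. The only difference is in (ii): you test on exact $1$-forms $df$ using $\mathcal L_X df=d(Xf)$, which means you need locality to know that functions and their differentials generate all forms. The paper instead tests on an arbitrary $1$-form $\alpha$ via Cartan's formula and $d\alpha(X,Y)=X(\alpha(Y))-Y(\alpha(X))-\alpha([X,Y])$. Both work, and your explicit check that the commutators are again derivations of the right degree fills in a step the paper leaves implicit.
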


\begin{proof}
\textit{(i) On tensors.} By \cref{liefromfunvec}, it suffices to check on functions and vector fields.
For $f\in C^\infty(M)$,
\begin{align}
[\mathcal L_X,\mathcal L_Y]f
= X(Yf)-Y(Xf)
= [X,Y]f
= \mathcal L_{[X,Y]}f.
\end{align}
For $Z\in\Gamma(TM)$,
\begin{align}
[\mathcal L_X,\mathcal L_Y]Z
= [X,[Y,Z]]-[Y,[X,Z]]
= [[X,Y],Z]
= \mathcal L_{[X,Y]}Z,
\end{align}
by the Jacobi identity of the Lie bracket on vector fields. Hence (i) holds on all tensors.\\
\textit{(ii) On differential forms.} First verify on $0$– and $1$–forms; then extend by derivation.
For $f\in C^\infty(M)$,
\begin{align}
[\mathcal L_X,\iota_Y]f
= \mathcal L_X(0)-\iota_Y(Xf)
= 0
= \iota_{[X,Y]}f.
\end{align}
For $\alpha\in\Omega^1(M)$,
\begin{align}
[\mathcal L_X,\iota_Y]\alpha
&= \mathcal L_X\big(\alpha(Y)\big)-\iota_Y\big(\mathcal L_X\alpha\big)\\
&= X\big(\alpha(Y)\big)-\big(d\,\iota_X\alpha+\iota_X d\alpha\big)(Y)\\
&= X\big(\alpha(Y)\big)-Y\big(\alpha(X)\big)-d\alpha(X,Y).
\end{align}
Using $d\alpha(X,Y)=X(\alpha(Y))-Y(\alpha(X))-\alpha([X,Y])$, we obtain
\begin{align}
[\mathcal L_X,\iota_Y]\alpha
= \alpha([X,Y])
= \iota_{[X,Y]}\alpha.
\end{align}
Since $[\mathcal L_X,\iota_Y]$ is a derivation of degree $-1$ and $\iota_{[X,Y]}$ is the unique such derivation agreeing on $0$– and $1$–forms, (ii) follows for all forms.\\
\textit{(iii) On differential forms.} Using Cartan’s identity and $d^2=0$,
\begin{align}
[\mathcal L_X,d]
&= [d\,\iota_X+\iota_X d,\ d]\\
&= d\,\iota_X d + \iota_X d^2 - d^2 \iota_X - d\,\iota_X d\\
&= 0.
\end{align}
\end{proof}
\begin{definition}[Flow loop]\label{defflow}
Let $X,Y\in\mathfrak X(M)$ have local flows $\Phi^X_t,\Phi^Y_s$. For those $s,t$ for which all compositions are defined, set
\begin{align}
\Psi_{s,t} \coloneqq \Phi^X_t \circ \Phi^Y_s \circ \Phi^X_{-t} \circ \Phi^Y_{-s}.
\end{align}
We say that $X$ and $Y$ generate a \emph{flow loop} if every $p\in M$ has a neighborhood $U$ and open time intervals $I,J$ containing $0$ such that all compositions are defined on $U$ and
\begin{align}
\Psi_{s,t}|_U=\mathrm{id}_U\qquad\text{for all }(s,t)\in I\times J.
\end{align}
If the flows are complete, we say they generate a \emph{global flow loop} when
\begin{align}
\Psi_{s,t}=\mathrm{id}_M \qquad \text{for all } s,t\in\mathbb R .
\end{align}
The map $\Psi_{s,t}$ is the \emph{flow commutator}. It measures the failure of commutation on common local domains (with $\mathrm{id}$ denoting the identity on the relevant domain):
\begin{align}
\Psi_{s,t}=\mathrm{id} \quad \Longleftrightarrow \quad
\Phi^X_t\circ\Phi^Y_s=\Phi^Y_s\circ\Phi^X_t.
\end{align}
\end{definition}

\begin{remark}[Isolated loops do not suffice]
The existence of single pairs $(s,t)$ with $\Psi_{s,t}=\mathrm{id}_M$ does \emph{not} imply that  $\Psi_{s,t}$ is a loop; one needs the identity on local neighborhoods for all sufficiently small $(s,t)\in I\times J$.
\end{remark}

\begin{remark}[Geometric meaning of the Lie derivative]\label{liegeoint}
\leavevmode\newline
\vspace{-0.55cm}
\begin{itemize}
\item[--] \textit{Natural transport.} For any vector field $Y\in\Gamma(TM)$, its flow moves points along integral curves and acts functorially on tensor fields via pull--back/pushforward. In this sense, we say that this is a natural way to transport tensor fields on a manifold, one needs no extra structure.
\begin{mdframed}
\normalsize{\textbf{The action of flows on tensor fields is called \textit{natural transport}}}.
\end{mdframed}

\item[--]\textit{Composition of flows and commutativity.}
For $X,Y\in\Gamma(TM)$, consider the composition
\begin{align}
\Psi_{s,t}\coloneqq \Phi_t^X\circ \Phi_s^Y\circ \Phi_{-t}^X\circ \Phi_{-s}^Y.
\end{align}
A second-order Taylor expansion of the pull-back action of $\Psi_{s,t}$ on $f\in C^\infty(M)$ gives
\begin{align}\label{B}
f\circ\Psi_{s,t}=f-st[X,Y]f+O\big(|st|(|s|+|t|)\big),
\end{align}
locally on compact subsets. This proves that a flow loop implies $[X,Y]=0$. Conversely, if $[X,Y]=0$, then
\begin{align}
\frac{d}{dt}(\Phi^X_{-t})_*Y=(\Phi^X_{-t})_*[X,Y]=0.
\end{align}
Thus $\Phi^X_t$ preserves $Y$, and uniqueness of its integral curves implies local commutation of the flows. Therefore, on the appropriate local domains, we have
\begin{align}
    [X,Y]=0 \quad\Longleftrightarrow \quad \Phi^X_t\circ\Phi^Y_s=\Phi^Y_s\circ\Phi^X_t,
\end{align}
which is the commutativity of flows discussed in \cref{defflow}.
\begin{mdframed}
\normalsize{\textbf{This implies: $[X,Y]=0$ if and only if $\Psi_{s,t}$ is a flow loop}}.
\end{mdframed}

\item[--]\textit{Infinitesimal generator.}\footnote{The notion arises from one--parameter group actions and symmetry generators; we will return to it in the next sections.}
The Taylor series at $t=0$ gives
\begin{align}
(\Phi_{-t}^X)_{* }Y  =  Y  +  t\,\mathcal L_X Y  +  O(t^2).
\end{align}
\vspace{-0.65cm}
\begin{mdframed}
\normalsize{\textbf{The Lie derivative is the infinitesimal generator of the natural transport.}}
\end{mdframed}

\item[--] \textit{Cartan relation via geometric construction.}
Generalizing \eqref{B} to the first--order for the action on vector fields (or on tensor fields via $\rho_t$), yields
\begin{align}
(\Psi_{s,t}^{-1})_*=(\Phi_{s}^Y)_{* }\circ(\Phi_{t}^X)_{* }\circ(\Phi_{-s}^Y)_{* }\circ(\Phi_{-t}^X)_{* }  =  \mathrm{id}  -  st\,\mathcal L_{[X,Y]}  +  o(st),
\end{align}
since to mixed order $st$ the map $\Psi_{s,t}$ agrees with the flow of $-[X,Y]$ with parameter $st$. Expanding the operator product on the left gives $\mathrm{id}-st[\mathcal L_X,\mathcal L_Y]+o(st)$, so comparing mixed derivatives at $s=t=0$ yields the identity on tensor fields
\begin{align}
[\mathcal L_X,\mathcal L_Y]  =  \mathcal L_{[X,Y]}.
\end{align}
\vspace{-0.65cm}
\begin{mdframed}
\normalsize{\textbf{At the infinitesimal level, the drag along $X$ then along $Y$ minus the drag along $Y$ then along $X$ equals the drag along $[X,Y]$}}.
\end{mdframed}

\textit{Flatness.}
For a flow loop in the sense of \cref{defflow}, the full local diffeomorphism is the identity, so its induced action on tensors is also the identity. Equivalently, $[X,Y]=0$, and the order of the two local flows does not matter. In particular,
\begin{align}
    [\mathcal L_X,\mathcal L_Y]=0.
\end{align}
\vspace{-0.65cm}
\begin{mdframed}
\normalsize{\textbf{Lie derivatives commute if and only if they generate a flow loop}}.
\end{mdframed}

\item[--] \textit{Invariance.}
A vector field $Y\in\mathfrak X(M)$ is \textit{invariant under the flow of $X$} if and only if\footnote{In general, for tensor fields, we have: let $\rho_t$ be the natural action of the flow $\Phi^X_t$ on tensors by pushforward and pull--back
($\rho_{t+s}=\rho_t\circ\rho_s$, $\rho_0=\mathrm{id}$). By definition, $\mathcal L_X T=\frac{d}{dt}\big|_{0}\rho_t(T)$ and
$\frac{d}{dt}\rho_t(T)=\rho_t(\mathcal L_X T)$. Hence, $\mathcal L_X T=0 \Rightarrow \rho_t(T)=T$ (solution of $\dot T_t=0$), and conversely
$\rho_t(T)=T$ for all $t$ implies $\mathcal L_X T=0$ by differentiating at $t=0$.}

\begin{align}
\mathcal L_X Y  =  0 \Longleftrightarrow (\Phi^X_t)_*Y=Y,
\end{align}
which can be generalized to any tensor field $T\in\mathfrak T(M)$ via $\rho_t$.
\begin{mdframed}
\normalsize{\textbf{The Lie derivative incorporates the symmetry of tensor fields along flows}}.
\end{mdframed}

\item[--] \textit{Mapping of integral curves.} Since, if $\mathcal L_X Y=0$, $Y\in\mathfrak X(M)$ is invariant under the flow of $X$, i.e. $(\Phi^X_t)_*Y=Y$, then, for any integral curve $\gamma\colon I\to M$ of $Y$ (i.e.\ $\dot\gamma(s)\coloneqq Y_{\gamma(s)}$), the transported curve $\widetilde\gamma(s)\coloneqq \Phi^X_t(\gamma(s))$ satisfies
\begin{align}
\dot{\widetilde\gamma}(s)
= \big((\Phi^X_t)_*Y\big)_{\Phi^X_t(\gamma(s))}
= Y_{\Phi^X_t(\gamma(s))}
= Y_{\widetilde\gamma(s)},
\end{align}
so $\widetilde\gamma$ is again an integral curve of $Y$. If also $\mathcal L_Y X=0$, the same holds with $X$ and $Y$ swapped.
\begin{mdframed}
\normalsize{\textbf{The flow of $X$ maps integral curves of $Y$ to integral curves of $Y$ and vice versa, if and only if $[\mathcal L_X,\mathcal L_Y]=0$}}.
\end{mdframed}
\end{itemize}
\end{remark}
\begin{remark}[Coordinate frame]
    Given a frame $(X^1,\dots,X^n)$, we have that:
\begin{align}
    \text{coordinate frame}\quad\Longleftrightarrow\quad \mathcal L_{X^i}X^j=0,
\end{align}
for all $i,j=1,\dots,n$.\footnote{In particular, coordinate frames generate (pair--wise) loops and map their integral curves into their integral curves.}
\end{remark}

\begin{remark}[On a volume form]
    If $\mu$ is a nowhere--vanishing $n$--form (called a \textit{volume form}, see \cref{or_com_clos}), then
\begin{align}
    \mathcal L_X\mu  =  (\operatorname{div}_\mu X)\,\mu.
\end{align}
\end{remark}

\begin{example}
Let $M=\mathbb R^2$ with coordinates $(x,y)$, area form $\mu=dx\wedge dy$, and the \emph{dilation} field
\begin{align}
X  =  x\,\partial_x + y\,\partial_y.
\end{align}
Its flow is $\Phi_t(z)=e^t z$ (homothetic expansion). Then:
\begin{align}
&\text{(a) On functions:}\quad \mathcal L_X f  =  x\,\partial_x f + y\,\partial_y f.\\
&\text{(b) On the area form:}\quad \mathcal L_X(dx\wedge dy)  =  (\operatorname{div} X)\,dx\wedge dy  =  2\,dx\wedge dy,
\end{align}
since $\operatorname{div}X=\partial_x(x)+\partial_y(y)=2$. Therefore, for small $t$,
\begin{align}
(\Phi_t)^* (dx\wedge dy)  =  e^{2t}\,dx\wedge dy
\qquad\Rightarrow\qquad
\frac{d}{dt}\Big|_{t=0}(\Phi_t)^* (dx\wedge dy)  =  2\,dx\wedge dy,
\end{align}
in agreement with $\mathcal L_X\mu=2\mu$: the flow expands areas at instantaneous rate $2$.\\
Finally, take $Y=x\,\partial_x - y\,\partial_y$. Then
\begin{align}
\mathcal L_X Y  =  [X,Y]  =  0,
\end{align}
because both $X$ and $Y$ are linear (diagonal) vector fields whose flows commute. Here $[X,Y]=0$ means that transporting $Y$ along the flow of $X$ and comparing back at the base point produces no first–order change.
\end{example}

\begin{remark}[Algebraic interpretation of the Lie derivative]\label{liealgint}
A (left) representation of a Lie algebra $(\mathfrak g,[\,,\,])$ on a real vector space $V$ is a linear map
\begin{align}
\rho:\ \mathfrak g \longrightarrow \mathrm{End}_{\mathbb R}(V)
\end{align}
such that
\begin{align}
\rho([u,v])  = [\rho(u),\rho(v)] \coloneqq  \rho(u)\circ \rho(v)  -  \rho(v)\circ \rho(u) \qquad \forall\,u,v\in\mathfrak g.
\end{align}
For a smooth manifold $M$, set
\begin{align}
\mathfrak T(M)\coloneqq \bigoplus_{r,s\ge 0}\Gamma\!\bigl(T^{(r,s)}M\bigr),
\end{align}
viewed simply as a real vector space (it is a $C^\infty(M)$--module). Then
\begin{align}
\mathfrak X(M)\ni X \longmapsto \mathcal L_X \in \mathrm{End}_{\mathbb R}\!\bigl(\mathfrak T(M)\bigr)
\end{align}
is a representation, because for all $X,Y\in\mathfrak X(M)$ one has
\begin{align}
[\mathcal L_X,\mathcal L_Y]  =  \mathcal L_{[X,Y]}.
\end{align}
\end{remark}
\begin{proposition}[$\mathcal L_X$ is not $f$--linear]\label{prop:not-Cinfty-linear}
The Lie derivative is $\mathbb R$–linear but, in general, not $C^\infty(M)$–linear, in the vector–field parameter.\footnote{Notice that neither is it in the tensor field one.} In particular,
\begin{align}
\mathcal L_{fX}T  \neq  f\,\mathcal L_X T,
\end{align}
for generic $f\in C^\infty(M)$ and tensor fields $T\in\mathfrak T(M)$.
\end{proposition}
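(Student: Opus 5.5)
The plan is to prove $\mathbb R$–linearity directly from the definition, and then to exhibit the failure of $C^\infty(M)$–linearity through an explicit formula for $\mathcal L_{fX}$ together with a concrete counterexample.

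For $\mathbb R$–linearity in $X$, I will use the algebraic characterization of \cref{lie2}. The map $X\mapsto \mathcal L_X$ is $\mathbb R$–linear on functions, since $(aX+bY)[f]=aX[f]+bY[f]$. It is also $\mathbb R$–linear on vector fields, since the bracket is $\mathbb R$–bilinear. By \cref{liefromfunvec}, the operator is determined by its action on functions and vector fields. The operator $\mathcal L_{aX+bY}-a\mathcal L_X-b\mathcal L_Y$ is again a degree-$0$ derivation commuting with contractions, and it vanishes on functions and vector fields, so it vanishes identically.

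For the failure of $C^\infty(M)$–linearity, I will compute $\mathcal L_{fX}$ on the generating types.
\begin{itemize}
\item On functions, $\mathcal L_{fX}g=fX[g]=f\mathcal L_Xg$. This case is tensorial, so the defect cannot show up here.
\item On vector fields, the Leibniz rule of the bracket gives
\begin{align}
\mathcal L_{fX}Y=[fX,Y]=f[X,Y]-Y(f)\,X=f\,\mathcal L_XY-Y(f)\,X .
\end{align}
\item On $1$–forms, I will use Cartan's formula:
\begin{align}
\mathcal L_{fX}\alpha=d(f\,\alpha(X))+f\,\iota_Xd\alpha=f\,\mathcal L_X\alpha+\alpha(X)\,df .
\end{align}
\end{itemize}
The extra terms $-Y(f)X$ and $\alpha(X)\,df$ depend on the first derivatives of $f$. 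They vanish for all $Y$ (respectively all $\alpha$) only when $df=0$ wherever $X\neq0$.

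To make the inequality in \cref{prop:not-Cinfty-linear} concrete, I will give an explicit example on $M=\mathbb R^2$ (or on any chart). Take $X=\partial_x$, $Y=\partial_x$ and $f=x$. Then $[fX,Y]=[x\partial_x,\partial_x]=-\partial_x\neq0$, while $f[X,Y]=0$. This settles the vector-field case, and the $1$–form case gives, for instance, $\mathcal L_{x\partial_x}dx=dx\neq x\,\mathcal L_{\partial_x}dx=0$. For the footnote about the tensor slot, I will note that $\mathcal L_X(gT)=X(g)T+g\mathcal L_XT$ by the derivation property, which is again not $C^\infty$–linear.

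The only delicate point is phrasing "generic": the statement is not that equality fails for every $f$, since constants give equality. I will state it as failure whenever $df\neq0$ at a point where $X\neq0$. That this suffices follows from the two correction formulas above: one chooses $Y$ with $Y(f)\neq0$ there, or $\alpha$ with $\alpha(X)\neq0$ there.
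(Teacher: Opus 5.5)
Your proof is correct, and it is the routine verification the paper leaves implicit when it calls the proof trivial: $\mathbb R$-linearity via the uniqueness in \cref{lie2}, then the defect formula $\mathcal L_{fX}Y=f\,\mathcal L_XY-Y(f)\,X$ with its Cartan-formula counterpart $\alpha(X)\,df$ on $1$-forms, and then the explicit example $f=x$, $X=Y=\partial_x$ on $\mathbb R^2$. Your reading of ``generic'' is also sound, and the converse holds as well: if $df=0$ on $\{X\neq0\}$, then $\mathcal L_{fX}-f\,\mathcal L_X$ is a derivation commuting with contractions that vanishes on functions and vector fields, hence everywhere, so the failure criterion you state is sharp.
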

\begin{proof}
Trivial.
\end{proof}

\begin{remark}
A flow of a vector field gives rise to a natural/canonical type of transport of tensor fields on a manifold, the \textit{natural transport}: pushforward on contravariant slots and pull--back on covariant ones. From this canonical construction arises the Lie derivative and its geometric interpretation is the \textit{infinitesimal generator of this transport}, which captures the first--order change produced by dragging a tensor along the flow and comparing it back at its original point: \textit{the Lie derivative emerges naturally from the functoriality of how diffeomorphisms act on tensors}. Then, in \cref{liegeoint}, via this geometric construction, we derived the identity of Cartan calculus
\begin{align}
[\mathcal L_X,\mathcal L_Y]=\mathcal L_{[X,Y]},
\end{align}
which, in the particular case where $[X,Y]=0$ (flow loops), gave the condition
\begin{align}
[\mathcal L_X,\mathcal L_Y]=0.
\end{align}
\textbf{There is no effect in naturally transporting tensor fields around flow loops on a manifold.}\\
By contrast, in \cref{liealgint} we kept the \emph{algebraic} viewpoint behind the same identity, namely that
\begin{align}
X\longmapsto \mathcal L_X
\end{align}
is a Lie algebra representation of $\mathfrak X(M)$ on the tensor fields $\mathfrak T(M)$, understood here as an $\mathbb R$–vector space. Hence, the identity
\begin{align}
[\mathcal L_X,\mathcal L_Y]=\mathcal L_{[X,Y]}
\end{align}
comes into play as the homomorphism property of the representation.\\
Both viewpoints, geometric and algebraic, lead to the same commutation rule. However, the \emph{natural} algebraic structure on $\mathfrak T(M)$ is that of a $C^\infty(M)$–module, and the $\mathbb R$–vector space description is just the result of a “simplification”, or, more formally, of applying the forgetful functor\footnote{Formally, one may consider the forgetful functor $\mathcal F:\mathrm{Mod}_{C^\infty(M)}\to \mathrm{Vec}_{\mathbb R}$.}. In particular, in \cref{prop:not-Cinfty-linear}, the Lie derivative is too coarse of an object to encode $f$–linearity:
\begin{align}
\mathcal L_X\ \text{is not $C^\infty(M)$--linear}.
\end{align}
This motivates the search for a different kind of derivation---one that is $C^\infty(M)$--linear in the vector--field parameter (while retaining Leibniz in the tensor argument) and, therefore, will lead to another geometric interpretation, in particular, it must be associated to a different way of transporting tensor fields on a manifold.
\end{remark}

\section{Parallel transport and affine connection}
\begin{definition}[Vector field along a curve]\label{def:field-along-curve}
Let $\gamma:I\to M$ be a smooth curve. A \emph{vector field along $\gamma$} is a smooth section of the pull--back bundle
$\gamma^*TM\to I$, namely an element $V\in\Gamma(\gamma^*TM)$. Equivalently, it is a smooth map
\begin{align}
V:I\longrightarrow TM,\qquad t\longmapsto V_{\gamma(t)},
\end{align}
such that $V_{\gamma(t)}\in T_{\gamma(t)}M$ for all $t\in I$ (i.e.\ $\pi\circ V=\gamma$, where $\pi:TM\to M$ is the bundle projection). The notation $V_{\gamma(t)}$, also written $V(t)$, retains the parameter $t$: equal points $\gamma(t)=\gamma(s)$ need not give equal vectors. Likewise, expressions $V^k(\gamma(t))$ for fields along curves denote component functions of $t$, not necessarily restrictions of functions on $M$.
\end{definition}

\begin{definition}[Tensor field along a curve]\label{def:tensor-field-along-curve}
Let $\gamma:I\to M$ be a smooth curve. For integers $r,s\geq 0$, a \emph{tensor field of $(r,s)$--type along $\gamma$} is a smooth section of the pull--back bundle
\begin{align}
\gamma^*\big(\otimes^r TM \otimes^s T^*M\big)\longrightarrow I.
\end{align}
Equivalently, it is a smooth map
\begin{align}
T:I\longrightarrow \otimes^r TM \otimes^s T^*M,\qquad
\lambda\longmapsto T_{\gamma(\lambda)},
\end{align}
such that $T_{\gamma(\lambda)}$ lies in $\otimes^r T_{\gamma(\lambda)}M \otimes^s T_{\gamma(\lambda)}^*M$, for all $\lambda\in I$.
The case $r=1$, $s=0$ recovers vector fields along $\gamma$, i.e.\ elements of $\Gamma(\gamma^*TM)$.
\end{definition}

\begin{definition}[Parallel transport]\label{def:parallel-transport}
Let $M$ be a smooth manifold. 
A \emph{parallel transport $P$} is an assignment that associates, to each smooth curve $\gamma:I\to M$ with $I\subset\mathbb R$ an open interval, a family of linear isomorphisms
\begin{align}
P^\gamma_{t,s}:\ T_{\gamma(t)}M \longrightarrow T_{\gamma(s)}M,\qquad (t,s)\in I\times I,
\end{align}
such that, for all $t,u,s\in I$,
\begin{align}
P^\gamma_{t,t}&=\mathrm{id}_{T_{\gamma(t)}M},\\[3pt]
P^\gamma_{t,s}&=P^\gamma_{u,s}\circ P^\gamma_{t,u},
\end{align}
and for every orientation--preserving diffeomorphism $\phi:\tilde I\to I$,\footnote{The construction extends to $C^1$ curves and $C^1$ reparametrizations by the same linear ODE, but we use smooth curves here.}
\begin{align}
P^{\gamma\circ\phi}_{t,s}=P^\gamma_{\phi(t),\,\phi(s)}.
\end{align}
\emph{Regularity.}
\begin{enumerate}
  \item For fixed $\gamma$, the map $(t,s,v)\mapsto P^\gamma_{t,s}(v)$ is smooth on 
  \begin{align}
  I\times I\times_{\gamma} TM  \coloneqq  \{(t,s,v)\in I\times I\times TM \mid \pi(v)=\gamma(t)\}.
  \end{align}
  \item For any smooth family of curves $\Upsilon:(-\varepsilon,\varepsilon)\times I\to M$ with $\Upsilon(0,\cdot)=\gamma$, the assignment $(\varepsilon,t,s,v)\mapsto P^{\Upsilon(\varepsilon,\cdot)}_{t,s}(v)$ is smooth on 
  \begin{align}
 (-\varepsilon,\varepsilon)\times I\times I\times_{\Upsilon} TM  \coloneqq  \{(\varepsilon,t,s,v)\in (-\varepsilon,\varepsilon)\times I\times I\times TM \mid \pi(v)=\Upsilon(\varepsilon,t)\}.
  \end{align}
\end{enumerate}
\emph{Locality along the segment.}
If $\gamma_1,\gamma_2:I\to M$ coincide on the closed subinterval $[a,b]\subset I$, then
\begin{align}
P^{\gamma_1}_{t,s}=P^{\gamma_2}_{t,s}\qquad\text{for all }(t,s)\in [a,b]\times[a,b].
\end{align}
\emph{First--order locality in coordinates.}
For every coordinate chart $(U,x)$ there exist smooth functions $\Gamma^k_{ij}$ on $U$, independent of the chosen curve, such that for every smooth curve $\gamma$, every smooth field $V(t)=V^j(t)\partial_j|_{\gamma(t)}$ along it, and every $t$ with $\gamma(t)\in U$,
\begin{align}
D^P_tV
\coloneqq\lim_{h\to0}\frac{P^\gamma_{t+h,t}V(t+h)-V(t)}{h}
=\left(\frac{dV^k}{dt}(t)+\Gamma^k_{ij}(\gamma(t))\dot\gamma^i(t)V^j(t)\right)\partial_k|_{\gamma(t)}.
\end{align}
Here $\dot\gamma^i=d(x^i\circ\gamma)/dt$. The coefficients give a bilinear correction in $\dot\gamma(t)$ and $V(t)$ in each chart; they are not the components of a globally defined tensor. The expression as a whole is intrinsic by its left-hand side, and its dependence on the curve is only through $\gamma(t)$ and $\dot\gamma(t)$; the field contributes its value and first derivative along the curve. The transformation law of the coefficients is derived in \cref{prop:christoffel-transform}.
\end{definition}

\begin{lemma}\label{lem:P-inverse}
Let $P$ be a parallel transport and let $\gamma:I\to M$ be a smooth curve. For all $s,t\in I$, one has
\begin{align}
P^\gamma_{s,t}\circ P^\gamma_{t,s} = \mathrm{id}_{T_{\gamma(t)}M},\qquad\quad P^\gamma_{t,s}\circ P^\gamma_{s,t} = \mathrm{id}_{T_{\gamma(s)}M},
\end{align}
hence
\begin{align}
P^\gamma_{t,s} = \big(P^\gamma_{s,t}\big)^{-1}.
\end{align}
\end{lemma}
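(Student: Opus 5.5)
The plan is to use only the two algebraic axioms in \cref{def:parallel-transport}, namely $P^\gamma_{t,t}=\mathrm{id}_{T_{\gamma(t)}M}$ and the composition law $P^\gamma_{t,s}=P^\gamma_{u,s}\circ P^\gamma_{t,u}$ for all $t,u,s\in I$. No regularity or locality assumption is needed: the statement is purely a groupoid-type consequence of these two identities.

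\emph{First identity.} Apply the composition law with the triple $(t,u,s)$ replaced by $(t,s,t)$. This gives
\begin{align}
P^\gamma_{t,t}=P^\gamma_{s,t}\circ P^\gamma_{t,s}.
\end{align}
Combining with the identity axiom $P^\gamma_{t,t}=\mathrm{id}_{T_{\gamma(t)}M}$ yields $P^\gamma_{s,t}\circ P^\gamma_{t,s}=\mathrm{id}_{T_{\gamma(t)}M}$. One should check that the domains and codomains match: $P^\gamma_{t,s}$ maps $T_{\gamma(t)}M$ to $T_{\gamma(s)}M$, and $P^\gamma_{s,t}$ maps back, so the composite is an endomorphism of $T_{\gamma(t)}M$. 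Note that $\gamma(t)$ and $\gamma(s)$ may coincide as points. The maps are indexed by parameters, so there is no ambiguity.

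\emph{Second identity.} Exchange the roles of $s$ and $t$, using the triple $(s,t,s)$. This gives $P^\gamma_{t,s}\circ P^\gamma_{s,t}=P^\gamma_{s,s}=\mathrm{id}_{T_{\gamma(s)}M}$.

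\emph{Conclusion.} Having a two-sided inverse, $P^\gamma_{t,s}$ is bijective with $P^\gamma_{t,s}=(P^\gamma_{s,t})^{-1}$. This is consistent with the definition's requirement that each $P^\gamma_{t,s}$ be a linear isomorphism, and in fact shows that this requirement is redundant.

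There is no real obstacle. The only point requiring care is reading the index convention correctly: $P_{t,s}$ goes from $t$ to $s$, so in the composition law the inner map is $P_{t,u}$. It is therefore important to instantiate the law with the middle index equal to the target of the first factor.
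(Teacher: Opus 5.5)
Your proof is correct and follows the paper's argument exactly: you instantiate the composition law as $P^\gamma_{t,t}=P^\gamma_{s,t}\circ P^\gamma_{t,s}$ and $P^\gamma_{s,s}=P^\gamma_{t,s}\circ P^\gamma_{s,t}$, then apply the identity axiom. One caveat on your closing aside: the axioms make only the invertibility part of ``linear isomorphism'' redundant, and linearity itself must still be assumed.
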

\begin{proof}
By the composition axiom with $u=s$,
\begin{align}
P^\gamma_{t,t}  =  P^\gamma_{s,t}\circ P^\gamma_{t,s}.
\end{align}
Using $P^\gamma_{t,t}=\mathrm{id}_{T_{\gamma(t)}M}$ gives the first identity. Similarly, with $u=t$,
\begin{align}
P^\gamma_{s,s}  =  P^\gamma_{t,s}\circ P^\gamma_{s,t},
\end{align}
and $P^\gamma_{s,s}=\mathrm{id}_{T_{\gamma(s)}M}$ gives the second identity. The inverse statement follows.
\end{proof}

\begin{proposition}[Parallel vector field and local form]\label{coordpartransp}
Let $\gamma:I\to M$ be a smooth curve and $P$ a parallel transport. For each $t\in I$ and $v\in T_{\gamma(t)}M$, there exists a unique smooth vector field $V$ along $\gamma$ solving
\begin{align}
D^P_sV=0,\qquad V(t)=v.
\end{align}
Equivalently,
\begin{align}
V(u)=P^\gamma_{s,u}V(s)\qquad\text{for all }s,u\in I.
\end{align}
In any coordinate chart traversed by $\gamma$, this equation reads
\begin{align}
\frac{dV^k}{ds}(s)+\Gamma^k_{ij}(\gamma(s))\dot\gamma^i(s)V^j(s)=0.
\end{align}
The functions $\Gamma^k_{ij}$ are called the Christoffel symbols associated to $P$ in the chosen chart, and $V$ is called a parallel vector field along $\gamma$.
\end{proposition}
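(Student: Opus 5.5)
The plan is to construct $V$ directly from the transport maps, then check the differential equation using the first-order locality axiom, and finally get uniqueness from the linear ODE in charts. Fix $t\in I$ and $v\in T_{\gamma(t)}M$, and define
\begin{align}
V(u)\coloneqq P^\gamma_{t,u}v,\qquad u\in I.
\end{align}
Smoothness of $V$ as a section of $\gamma^*TM$ follows from the first regularity axiom, since $(t,u,v)\mapsto P^\gamma_{t,u}(v)$ is smooth. The composition axiom gives $V(u)=P^\gamma_{t,u}v=P^\gamma_{s,u}P^\gamma_{t,s}v=P^\gamma_{s,u}V(s)$ for all $s,u\in I$; this is the stated equivalent form. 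Moreover, $V(t)=P^\gamma_{t,t}v=v$.

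Next I verify $D^P_sV=0$. By definition,
\begin{align}
D^P_sV=\lim_{h\to0}\frac{P^\gamma_{s+h,s}V(s+h)-V(s)}{h}.
\end{align}
Using $V(s+h)=P^\gamma_{s,s+h}V(s)$ and \cref{lem:P-inverse}, the numerator is $P^\gamma_{s+h,s}P^\gamma_{s,s+h}V(s)-V(s)=0$ identically, so the limit vanishes. Conversely, any field of the form $V(u)=P^\gamma_{s,u}V(s)$ is parallel by the same computation. Wherever $\gamma(s)$ lies in a chart $(U,x)$, the first-order locality axiom rewrites $D^P_sV=0$ componentwise as
\begin{align}
\frac{dV^k}{ds}+\Gamma^k_{ij}(\gamma(s))\dot\gamma^i(s)V^j(s)=0,
\end{align}
which is the coordinate form claimed.

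The main obstacle is uniqueness, which needs a global argument along $I$. Let $W$ be any smooth field along $\gamma$ with $D^P_sW=0$ and $W(t)=v$, and consider the set $J=\{u\in I: W(u)=V(u)\}$. This set contains $t$ and is closed by continuity. To show it is open, I take $u_0\in J$ and a chart around $\gamma(u_0)$, and restrict to a subinterval on which $\gamma$ stays in that chart. There both $V$ and $W$ solve the same linear ODE, whose coefficients $\Gamma^k_{ij}(\gamma(s))\dot\gamma^i(s)$ are smooth, with the same value at $u_0$. By Picard--Lindelöf uniqueness they agree on a neighborhood of $u_0$, so $J$ is open. Since $I$ is connected, $J=I$, hence $W=V$. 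This also shows that the parallel field is characterized by either form, the ODE or the transport identity.
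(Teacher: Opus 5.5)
Your proof is correct. The existence half is what the paper does: define $V(u)=P^\gamma_{t,u}v$, get smoothness from the regularity axiom, get the transport identity from composition, and observe that the difference quotient defining $D^P_sV$ vanishes identically.

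The two routes diverge at uniqueness. The paper never passes to coordinates there. Given a solution $W$ with $W(t)=v$, it transports $W$ back to the fixed fiber, setting $F(s)=P^\gamma_{s,t}W(s)\in T_{\gamma(t)}M$. It then uses composition in the form $P^\gamma_{s+h,t}=P^\gamma_{s,t}\circ P^\gamma_{s+h,s}$ to obtain
\[
\frac{F(s+h)-F(s)}{h}=P^\gamma_{s,t}\,\frac{P^\gamma_{s+h,s}W(s+h)-W(s)}{h}\longrightarrow P^\gamma_{s,t}\,D^P_sW=0 .
\]
So $F$ is a map from an interval into a single vector space with zero derivative. Hence $F\equiv v$ and $W=V$.

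That argument uses only the composition axiom and the definition of $D^P$. It needs neither the first-order locality axiom, nor Picard--Lindel\"of, nor an open--closed globalization step. It also makes a conceptual point: the transport maps themselves trivialize $\gamma^*TM$, and in that trivialization $D^P$ becomes an ordinary derivative. As a bonus, it gives uniqueness even among fields for which the limit $D^P_sW$ merely exists and vanishes.

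Your argument goes through the local linear ODE and patches local uniqueness using connectedness of $I$. This is the standard route, and essentially the one the paper uses later when it builds a transport from a connection. It is valid here: first-order locality supplies the ODE on every subinterval mapped into a chart, and $J$ is closed as the equalizer of two continuous sections. The cost is heavier machinery and an extra globalization step, both of which the paper's transport-back trick avoids.
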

\begin{proof}
Fix $t$ and $v$, and set $V(s)=P^\gamma_{t,s}v$. Regularity gives smoothness, and the identity axiom gives $V(t)=v$. By composition,
\begin{align}
P^\gamma_{s+h,s}V(s+h)=P^\gamma_{s+h,s}P^\gamma_{t,s+h}v=P^\gamma_{t,s}v=V(s),
\end{align}
so $D^P_sV=0$. Composition also gives $V(u)=P^\gamma_{s,u}V(s)$.
Conversely, let $W$ solve $D^P_sW=0$ with $W(t)=v$. Define $F(s)=P^\gamma_{s,t}W(s)\in T_{\gamma(t)}M$. Then
\begin{align}
\frac{F(s+h)-F(s)}{h}
=P^\gamma_{s,t}\frac{P^\gamma_{s+h,s}W(s+h)-W(s)}{h}
\longrightarrow P^\gamma_{s,t}D^P_sW=0.
\end{align}
Thus $F(s)=v$ and $W(s)=P^\gamma_{t,s}v=V(s)$, proving uniqueness and the transport identity. The coordinate equation follows directly from first--order locality in \cref{def:parallel-transport}.
\end{proof}

\begin{definition}[Covariant derivative induced by parallel transport]\label{def:cov-der-from-pt}
Let $P$ be a parallel transport.
For $X,Y\in\Gamma(TM)$ and $p\in M$, choose any smooth curve $\gamma:(-h_0,h_0)\to M$
with $\gamma(0)=p$ and $\dot\gamma(0)=X_p$. The operator
\begin{align}
\nabla^P:\Gamma(TM)\times\Gamma(TM)\longrightarrow\Gamma(TM),
\end{align}
defined by
\begin{align}
(\nabla^P_X Y)_p
\coloneqq
\lim_{h\to0}\frac{P^{\gamma}_{h,0}\,Y_{\gamma(h)}-Y_{\gamma(0)}}{h}
 = 
\left.\frac{d}{dt}\right|_{t=0}\big(P^{\gamma}_{t,0}\,Y_{\gamma(t)}\big),
\end{align}
is called the \emph{covariant derivative} associated to $P$.
\end{definition}

\begin{remark}
Because of the regularity, first--order locality, and segment locality assumptions on $P$, the value
$(\nabla^P_X Y)_p$ is independent of the chosen curve $\gamma$ with $\gamma(0)=p$ and $\dot\gamma(0)=X_p$, and the map $p\mapsto(\nabla^P_X Y)_p$ is smooth. Hence, the operator $\nabla^P\colon\Gamma(TM)\times\Gamma(TM)\to\Gamma(TM)$ is well--defined.
\end{remark}

\begin{proposition}\label{prop:nablaP-derivation}
Let $\nabla^P$ be the covariant derivative associated to a $P$. Then, for all $X,Z,Y,W\in\Gamma(TM)$ and $f,g\in C^\infty(M)$, we have
\begin{align}
\textnormal{(i)} & \nabla^P_{aX+bZ}Y  =  a\,\nabla^P_XY + b\,\nabla^P_ZY \qquad (a,b\in\mathbb R),\\[2pt]
\textnormal{(ii)} & \nabla^P_X(Y+W)  =  \nabla^P_XY + \nabla^P_XW,\\[2pt]
\textnormal{(iii)} & \nabla^P_{fX}Y  =  f\,\nabla^P_XY,\\[2pt]
\textnormal{(iv)} & \nabla^P_X(gY)  =  (Xg) Y  +  g\,\nabla^P_XY,
\end{align}
where $\displaystyle{(Xg)(p)=\frac{d}{dt}\big|_{t=0}\,g(\gamma(t))}$ for any smooth curve $\gamma$ with $\gamma(0)=p$ and $\dot\gamma(0)=X_p$.
\end{proposition}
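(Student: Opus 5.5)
The plan is to reduce everything to the coordinate formula supplied by the first--order locality axiom in \cref{def:parallel-transport}. Fix $p\in M$, a chart $(U,x)$ around $p$, and a smooth curve $\gamma$ with $\gamma(0)=p$ and $\dot\gamma(0)=X_p$. Since $Y$ is a field on $M$, the map $t\mapsto Y_{\gamma(t)}$ is a vector field along $\gamma$ with components $V^j(t)=Y^j(\gamma(t))$. First I will check that $\tfrac{d}{dt}|_{t=0}\big(P^\gamma_{t,0}Y_{\gamma(t)}\big)$ coincides with $D^P_tV$ at $t=0$, which is a matter of matching the difference quotients in the two definitions. Then first--order locality gives
\begin{align}
(\nabla^P_XY)_p=\Big(X^i(p)\,\partial_iY^k(p)+\Gamma^k_{ij}(p)\,X^i(p)\,Y^j(p)\Big)\partial_k|_p,
\end{align}
using the chain rule $\tfrac{d}{dt}|_{0}Y^k(\gamma(t))=\dot\gamma^i(0)\partial_iY^k(p)=X^i(p)\partial_iY^k(p)$. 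This formula also yields the curve--independence asserted in the preceding remark.

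From this local formula all four properties follow pointwise. For (i) and (iii), the expression is linear in $X^i(p)$, and $(fX)^i(p)=f(p)X^i(p)$; so $\nabla^P_{fX}Y$ at $p$ equals $f(p)(\nabla^P_XY)_p$, and constants $a,b$ are the special case. For (ii), the expression is additive in the components $Y^j$, since both $\partial_iY^k$ and $\Gamma^k_{ij}Y^j$ are. For (iv), substitute $(gY)^j=gY^j$ and apply the product rule $X^i\partial_i(gY^k)=(X^i\partial_ig)Y^k+gX^i\partial_iY^k$. Here $X^i\partial_ig(p)=(Xg)(p)$ agrees with the curve description given in the statement. The $\Gamma$ term simply picks up the factor $g(p)$.

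One could instead argue intrinsically: by linearity of $P^\gamma_{t,0}$, the map $t\mapsto g(\gamma(t))P^\gamma_{t,0}Y_{\gamma(t)}$ can be differentiated with the product rule, using $P^\gamma_{0,0}=\mathrm{id}$. That route handles (ii) and (iv) directly. However, (i) and (iii) involve changing the curve (for $fX$ one would use a reparametrized curve such as $\gamma(f(p)t)$, with a separate treatment when $f(p)=0$), so the coordinate route is more uniform.

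The main obstacle is the identification step at the start. The derivative $\tfrac{d}{dt}|_0 P^\gamma_{t,0}Y_{\gamma(t)}$ must agree with $D^P_tV$ at $t=0$, and $P^\gamma_{t,0}$ with varying starting point must be handled correctly. Matching the limits in the two definitions (both use $P^\gamma_{h,0}$ applied at $\gamma(h)$) settles this, with the regularity axiom guaranteeing that the limits exist. Smoothness in $p$ then follows from the local formula.
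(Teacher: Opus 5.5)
Your proposal is correct and follows the paper's proof. Both apply the first-order locality axiom to $t\mapsto Y_{\gamma(t)}$ to obtain $(\nabla^P_XY)_p=(X^i\partial_iY^k+\Gamma^k_{ij}X^iY^j)(p)\,\partial_k|_p$, and then read off (i)--(iv) from linearity in $X^i$, additivity in $Y^j$, and the product rule. Your check that the defining limit of $\nabla^P_XY$ is exactly $D^P_tV$ at $t=0$, since both apply $P^\gamma_{h,0}$ to the value at $\gamma(h)$, is a step the paper leaves implicit, and it is correct.
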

\begin{proof}
Fix $p\in M$ and choose a coordinate chart around $p$. Applying the first--order axiom to the field $Y\circ\gamma$, where $\gamma(0)=p$ and $\dot\gamma(0)=X_p$, gives
\begin{align}\label{5}
(\nabla^P_XY)_p
=\big(X^i\partial_iY^k+\Gamma^k_{ij}X^iY^j\big)(p)\,\partial_k|_p.
\end{align}
Both terms are $\mathbb R$--linear in $X$ and additive in $Y$, proving (i) and (ii). Replacing $X$ by $fX$ multiplies both terms by $f(p)$, proving (iii). Finally,
\begin{align}
(\nabla^P_X(gY))^k
&=X^i\partial_i(gY^k)+\Gamma^k_{ij}X^i gY^j\\
&=X(g)Y^k+g\big(X^i\partial_iY^k+\Gamma^k_{ij}X^iY^j\big),
\end{align}
which is (iv). Since $p$ is arbitrary, all identities hold globally.
\end{proof}

\begin{remark}
Property (iv) has the \emph{shape} of a Leibniz rule, but at this stage $\nabla^P$ is defined only on vector fields. In particular, we have not yet specified an action of $\nabla^P$ on functions, nor on general tensors.
For this reason we merely view (iv) as a Leibniz--type identity on the $C^\infty(M)$–module $\Gamma(TM)$.
\end{remark}

\begin{proposition}[Coordinate form]\label{prop:coord-form-nablaP}
Let $\nabla^P$ be the covariant derivative associated to $P$ and $X,Y\in\Gamma(TM)$. Then, in local coordinates on $U$, one has
\begin{align}
\big(\nabla^P_X Y\big)^k  =  X^i\,\partial_i Y^k +  \Gamma^k_{ij}\,X^i\,Y^j.
\end{align}
\end{proposition}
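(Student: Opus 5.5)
The plan is to read the formula off directly from the first--order locality axiom in \cref{def:parallel-transport}, applied to a curve through $p$ tangent to $X_p$. Fix $p\in U$ and choose a smooth curve $\gamma:(-h_0,h_0)\to U$ with $\gamma(0)=p$ and $\dot\gamma(0)=X_p$; for instance, take $\gamma(t)=\varphi^{-1}(\varphi(p)+tX^i(p)e_i)$ for $h_0$ small. By the remark following \cref{def:cov-der-from-pt}, $(\nabla^P_XY)_p$ does not depend on this choice. Along $\gamma$ I consider the field $V(t)\coloneqq Y_{\gamma(t)}$. Its components are $V^j(t)=Y^j(\gamma(t))$, and the limit in \cref{def:cov-der-from-pt} is exactly $D^P_tV$ evaluated at $t=0$.

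The first--order locality axiom then gives
\begin{align}
(\nabla^P_XY)_p=\Big(\tfrac{d}{dt}\big|_{t=0}Y^k(\gamma(t))+\Gamma^k_{ij}(p)\,\dot\gamma^i(0)\,Y^j(p)\Big)\partial_k|_p .
\end{align}
By the chain rule in the chart, $\tfrac{d}{dt}|_{t=0}Y^k(\gamma(t))=\dot\gamma^i(0)\,\partial_iY^k(p)$. We also have $\dot\gamma^i(0)=X^i(p)$, because $\dot\gamma(0)=X_p$ and, by \cref{coordTpM}(iii), the components of $X_p$ are $dx^i(X_p)=X^i(p)$. Substituting these gives the claimed component formula at $p$, and since $p\in U$ is arbitrary it holds on all of $U$. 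This is in fact the computation already recorded as \eqref{5} in the proof of \cref{prop:nablaP-derivation}, so the proof can simply cite it and spell out the chain-rule step.

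The only delicate point is identifying the limit in \cref{def:cov-der-from-pt} with $D^P_0V$. The definition of $D^P_t$ uses $P^\gamma_{t+h,t}V(t+h)$, and at $t=0$ this reads $P^\gamma_{h,0}Y_{\gamma(h)}$, so the two expressions match literally. One also has to check that $\dot\gamma^i$ in the axiom means $d(x^i\circ\gamma)/dt$, which is consistent with the chain-rule step. Neither point needs more than unwinding the definitions.
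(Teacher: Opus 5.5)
Your proposal is correct and follows essentially the same route as the paper: fix $p$, choose a curve $\gamma$ with $\gamma(0)=p$ and $\dot\gamma(0)=X_p$, apply the first--order locality axiom to the field $Y\circ\gamma$, and use the chain rule $\frac{d}{dt}\big|_{t=0}Y^k(\gamma(t))=X^i(p)\,\partial_iY^k(p)$. Your additional checks, that the limit defining $\nabla^P$ coincides with $D^P_0$ applied to $Y\circ\gamma$ and that $\dot\gamma^i(0)=X^i(p)$, spell out what the paper's short proof leaves implicit.
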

\begin{proof}
Fix $p\in U$ and choose $\gamma$ with $\gamma(0)=p$ and $\dot\gamma(0)=X_p$. In the chosen chart, the component derivative of $Y\circ\gamma$ is
\begin{align}
\left.\frac{d}{dt}\right|_{t=0}Y^k(\gamma(t))=X^i(p)\partial_iY^k(p).
\end{align}
Substitution into the first--order axiom gives
\begin{align}
(\nabla^P_XY)_p
=\big(X^i\partial_iY^k+\Gamma^k_{ij}X^iY^j\big)(p)\partial_k|_p,
\end{align}
as claimed.
\end{proof}

\begin{proposition}[Transformation law for Christoffel symbols]\label{prop:christoffel-transform}
Let $(x^1,\dots,x^n)$ and $(x'^1,\dots,x'^n)$ be coordinates on, respectively, open sets $U$ and $V$. Then, on $U\cap V$, the Christoffel symbols transform affinely. In particular, we have
\begin{align}
\Gamma'^k_{ij}
 = 
\frac{\partial x'^k}{\partial x^h}\,\frac{\partial x^l}{\partial x'^i}\,\frac{\partial x^r}{\partial x'^j}\,\Gamma^h_{lr}
 + 
\frac{\partial x'^k}{\partial x^h}\,\frac{\partial^2 x^h}{\partial x'^i\,\partial x'^j}.
\end{align}
\end{proposition}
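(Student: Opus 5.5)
The plan is to exploit the fact that the left-hand side of the first-order locality axiom in \cref{def:parallel-transport} is intrinsic: $D^P_tV$ is a well-defined vector at $\gamma(t)$, independent of any chart. Equivalently, by \cref{prop:coord-form-nablaP}, the covariant derivative $\nabla^P_XY$ is a globally defined vector field whose components in either chart are given by the same formula with that chart's Christoffel symbols. We will write $\nabla^P_XY$ in both charts on $U\cap V$, convert the unprimed expression into primed components using the Jacobian transformation law of \cref{coordTpM}(iv), and compare the coefficients of $X'^iY'^j$.

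Concretely, on $U\cap V$ we first record $X^l=\frac{\partial x^l}{\partial x'^i}X'^i$ and $Y^r=\frac{\partial x^r}{\partial x'^j}Y'^j$, together with
\begin{align}
(\nabla^P_XY)'^k=\frac{\partial x'^k}{\partial x^h}\,(\nabla^P_XY)^h .
\end{align}
Into the right-hand side we substitute $(\nabla^P_XY)^h=X^l\partial_lY^h+\Gamma^h_{lr}X^lY^r$. For the derivative term we write $X^l\partial_l=X'^i\partial'_i$, which is just the chain rule $\partial'_i=\frac{\partial x^l}{\partial x'^i}\partial_l$. We then expand
\begin{align}
\partial'_i Y^h=\partial'_i\Big(\frac{\partial x^h}{\partial x'^j}Y'^j\Big)=\frac{\partial x^h}{\partial x'^j}\partial'_iY'^j+\frac{\partial^2x^h}{\partial x'^i\partial x'^j}Y'^j .
\end{align}
Multiplying by $\frac{\partial x'^k}{\partial x^h}$, the first piece becomes $X'^i\partial'_iY'^k$ because $\frac{\partial x'^k}{\partial x^h}\frac{\partial x^h}{\partial x'^j}=\delta^k_j$. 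The second piece produces the inhomogeneous term. The $\Gamma$-term transforms homogeneously with one Jacobian factor $\frac{\partial x'^k}{\partial x^h}$ and two inverse-Jacobian factors.

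Comparing with the primed-chart formula $(\nabla^P_XY)'^k=X'^i\partial'_iY'^k+\Gamma'^k_{ij}X'^iY'^j$, we obtain an identity of the form $\big(\Gamma'^k_{ij}-[\text{claimed expression}]\big)X'^iY'^j=0$ at each point. To extract the coefficients, we choose $X,Y$ whose values at the point are arbitrary basis vectors $\partial'_i,\partial'_j$; constant-coefficient fields in the primed chart, extended by a cutoff, suffice. Since the identity is pointwise and bilinear, this gives the stated formula.

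The main obstacle is justifying that the identity holds for all fields with prescribed values at a point, that is, the cutoff and extension argument. An alternative that avoids it is to run the same computation directly on $D^P_tV$ along a curve, where $\dot\gamma(t)$ and $V(t)$ can be prescribed freely. In that version one uses $\frac{d}{dt}\frac{\partial x^h}{\partial x'^j}(\gamma(t))=\frac{\partial^2x^h}{\partial x'^i\partial x'^j}\dot\gamma'^i$. This bookkeeping step, ensuring the second-derivative term carries the right indices and that symmetry of mixed partials is not needed in the wrong place, is where care is required.
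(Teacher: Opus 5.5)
Your proof is correct and is essentially the paper's argument. The paper runs the same computation specialized to $X=\partial'_i$, $Y=\partial'_j$: it expands $\nabla^P_{\partial'_i}\partial'_j$ using $C^\infty(M)$-linearity in the direction argument and the Leibniz rule (your product-rule step on $\frac{\partial x^h}{\partial x'^j}Y'^j$), and reads off $\Gamma'^k_{ij}$ directly as the coefficient of $\partial'_k$, so it needs no coefficient extraction. The localization issue you flag is harmless: the paper itself applies $\nabla^P$ to locally defined coordinate fields without comment, and either bump functions or your curve version, where $\dot\gamma(t)$ and $V(t)$ can be prescribed freely, settles it.
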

\begin{proof}
Recall $\partial'_i=\dfrac{\partial x^p}{\partial x'^i}\,\partial_p$. Using $\nabla^P_{fX}Y=f\,\nabla^P_XY$ and
$\nabla^P_X(fY)=X[f]\,Y+f\,\nabla^P_XY$, we compute
\begin{align}
\nabla^P_{\partial'_i}\partial'_j
&=\nabla^P_{\left(\frac{\partial x^p}{\partial x'^i}\partial_p\right)}
           \left(\frac{\partial x^q}{\partial x'^j}\partial_q\right)\\
&=\frac{\partial x^p}{\partial x'^i}\frac{\partial x^q}{\partial x'^j}\,\nabla^P_{\partial_p}\partial_q
   + \partial'_i\!\left(\frac{\partial x^q}{\partial x'^j}\right)\partial_q\\
&=\frac{\partial x^p}{\partial x'^i}\frac{\partial x^q}{\partial x'^j}\,\Gamma^r_{pq}\,\partial_r
   + \frac{\partial^2 x^r}{\partial x'^i\,\partial x'^j}\,\partial_r.
\end{align}
Express $\partial_r=\dfrac{\partial x'^k}{\partial x^r}\,\partial'_k$ and compare coefficients of $\partial'_k$ with
$\nabla^P_{\partial'_i}\partial'_j=\Gamma'^k_{ij}\,\partial'_k$:
\begin{align}
\Gamma'^k_{ij}
=\frac{\partial x'^k}{\partial x^r}\left(
\frac{\partial x^p}{\partial x'^i}\frac{\partial x^q}{\partial x'^j}\,\Gamma^r_{pq}
+ \frac{\partial^2 x^r}{\partial x'^i\,\partial x'^j}\right)
=\frac{\partial x'^k}{\partial x^a}\,\Gamma^a_{bc}\,\frac{\partial x^b}{\partial x'^i}\,\frac{\partial x^c}{\partial x'^j}
+ \frac{\partial x'^k}{\partial x^a}\,\frac{\partial^2 x^a}{\partial x'^i\,\partial x'^j},
\end{align}
as claimed.
\end{proof}

\begin{remark}[Affine terms matter]
By \cref{prop:christoffel-transform}, under a coordinate transformation $x\mapsto x'$ the covariant derivative transforms as
\begin{align}
(\nabla^P_X Y)'^{\,k}
 = 
\frac{\partial x'^k}{\partial x^l}\,(\nabla^P_X Y)^l.
\end{align}
The derivative of the Jacobian coming from $X^i\,\partial_i Y^k$ is exactly canceled by the second--derivative term (the affine term) in the transformation law of $\Gamma$. Hence, $\nabla^P_X Y$ is a well--defined vector field (tensorial), even though neither $\partial_i Y^k$ nor $\Gamma^k_{ij}$ is tensorial on its own.
\end{remark}

\begin{corollary}
Let $P$ be a parallel transport, $\gamma:I\to M$ be a smooth curve and $V$ a vector field along $\gamma$, which in components reads $V_{\gamma(t)}=V^k(\gamma(t))\,\partial_k|_{\gamma(t)}$. Writing $\nabla^P_{\dot\gamma}V\coloneqq D^P_tV$ for the intrinsic derivative along the curve from \cref{def:parallel-transport}, $V$ is parallel if and only if
\begin{align}
    \nabla^P_{\dot\gamma}V=0.
\end{align}
\end{corollary}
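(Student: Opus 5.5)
The corollary follows almost immediately from \cref{coordpartransp}, once we fix what ``parallel'' means. By that proposition, a vector field $V$ along $\gamma$ is parallel when it satisfies $D^P_tV=0$; equivalently, $V(u)=P^\gamma_{s,u}V(s)$ for all $s,u\in I$. The notation $\nabla^P_{\dot\gamma}V$ is \emph{defined} in the statement to be $D^P_tV$. So the equivalence reduces to checking that the two characterizations in \cref{coordpartransp} agree, and I will prove both directions through the transport identity.

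\emph{Parallel implies $\nabla^P_{\dot\gamma}V=0$.} Assume $V(u)=P^\gamma_{s,u}V(s)$ for all $s,u$. The composition axiom gives
\begin{align}
P^\gamma_{t+h,t}V(t+h)=P^\gamma_{t+h,t}P^\gamma_{t,t+h}V(t)=V(t),
\end{align}
where the last equality uses \cref{lem:P-inverse}. Hence the difference quotient defining $D^P_tV$ vanishes identically in $h$, and therefore $\nabla^P_{\dot\gamma}V=0$.

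\emph{$\nabla^P_{\dot\gamma}V=0$ implies parallel.} Assume $D^P_tV=0$ and fix $t_0\in I$. Let $W(s)=P^\gamma_{t_0,s}V(t_0)$ be the unique parallel field with initial value $V(t_0)$ given by \cref{coordpartransp}. That proposition states uniqueness of solutions to $D^P_sV=0$ with a prescribed value at one point, so $V=W$. To make this self-contained, I would repeat the argument from its proof: set $F(s)=P^\gamma_{s,t_0}V(s)\in T_{\gamma(t_0)}M$. Then
\begin{align}
\frac{F(s+h)-F(s)}{h}=P^\gamma_{s,t_0}\,\frac{P^\gamma_{s+h,s}V(s+h)-V(s)}{h}.
\end{align}
Since $P^\gamma_{s,t_0}$ is a fixed linear map between finite-dimensional spaces, it is continuous, so the right-hand side tends to $P^\gamma_{s,t_0}D^P_sV=0$. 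Thus $F$ is constant on the interval $I$, and $V(s)=P^\gamma_{t_0,s}V(t_0)$. By composition this yields the transport identity for all pairs $s,u$.

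As a consistency check, in a chart the condition $\nabla^P_{\dot\gamma}V=0$ reads $\dot V^k+\Gamma^k_{ij}\dot\gamma^iV^j=0$, by first-order locality. This matches the coordinate equation in \cref{coordpartransp} and the component formula of \cref{prop:coord-form-nablaP}, evaluated with $X=\dot\gamma$.

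The only delicate point is interpretive rather than technical. The notation $\nabla^P_{\dot\gamma}V$ must be read as the intrinsic derivative $D^P_tV$, and not as $\nabla^P_XY$ for extensions $X$ of $\dot\gamma$ and $Y$ of $V$ to vector fields on $M$. Such extensions need not exist, for instance when $\gamma$ self-intersects with different values of $V$ at the crossing point. Where extensions do exist, \cref{prop:coord-form-nablaP} shows that the two readings coincide.
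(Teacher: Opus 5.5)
Your proof is correct, but it takes a different route from the paper's.

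The paper argues in coordinates. By first-order locality, the components of $\nabla^P_{\dot\gamma}V=D^P_tV$ are $\dot V^k+\Gamma^k_{ij}\dot\gamma^iV^j$. By \cref{coordpartransp}, $V$ is parallel exactly when this expression vanishes, so both sides of the equivalence reduce to the same linear ODE.

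You work intrinsically instead. You read ``parallel'' as the transport identity $V(u)=P^\gamma_{s,u}V(s)$ and derive its equivalence with $D^P_tV=0$ from the identity and composition axioms alone. The forward direction holds because the difference quotient vanishes identically. The converse follows from the constancy of $F(s)=P^\gamma_{s,t_0}V(s)$. The coordinate ODE appears only as a consistency check.

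What each route buys:
\begin{itemize}
\item The paper's route is shorter and ties the corollary directly to the explicit parallel-transport ODE. However, it works chart by chart. Also, since parallelism in \cref{coordpartransp} is already phrased through $D^P$, the corollary there is close to a restatement of that proposition.
\item Your route is coordinate-free and handles curves that cross several charts without any patching. It never uses the Christoffel-symbol formula. It also proves the more informative equivalence $\nabla^P_{\dot\gamma}V=0\Leftrightarrow V(u)=P^\gamma_{s,u}V(s)$, which the paper only records afterwards in its remark on parallel versus invariant fields.
\item The price of your route is that you essentially re-run the uniqueness argument from the proof of \cref{coordpartransp} rather than simply citing it.
\end{itemize}
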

\begin{proof}
In components, we have
    \begin{align}
\big(\nabla^P_{\dot\gamma}V\big)^k(\gamma(t))
&=\dot V^k(\gamma(t)) + \Gamma^k_{ij}\big(\gamma(t)\big)\,\dot\gamma^i(t)\,V^j(\gamma(t)),
\end{align}
and, by \cref{coordpartransp}, $V$ is parallel if and only if
\begin{align}
\dot V^k(\gamma(t)) + \Gamma^k_{ij}\big(\gamma(t)\big)\,\dot\gamma^i(t)\,V^j(\gamma(t))=0
\qquad\text{for all $k$.}
\end{align}
Therefore, $\nabla^P_{\dot\gamma}V=0$.
\end{proof}

\begin{remark}[Parallel vs invariant]
Let $P$ be a parallel transport, let $\nabla^P$ be its covariant derivative, and $\gamma:I\to M$ be a smooth curve. 
A vector field $V\in\Gamma(\gamma^*TM)$ along $\gamma$ is \emph{parallel} iff it is invariant under $P$:
\begin{align}
\nabla^P_{\dot\gamma}V=0
\quad\Longleftrightarrow\quad
V_{\gamma(t)}=P^\gamma_{s,t}\,V_{\gamma(s)}\ \ \text{for all }s,t\in I.
\end{align}
This mirrors Lie invariance: if $\Phi_t^X$ is the flow of $X$, then a vector field $Y$ is \emph{flow--invariant} iff
\begin{align}
\mathcal L_X Y=0
\quad\Longleftrightarrow\quad
(\Phi_t^X)_*Y=Y\ \ \text{for all }t\ \text{(on the domain of the flow)}.
\end{align}
Thus, both notions express “no intrinsic change” relative to a chosen comparison rule: 
parallel transport for $\nabla^P$, and the diffeomorphic flow for $\mathcal L_X$.\\ 
\textit{Note: any vector can be parallel transported by $P$, but only those solving 
$\nabla^P_{\dot\gamma}V=0$ are truly parallel; likewise, any field admits a pushforward by $\Phi_t^X$, 
but only those with $\mathcal L_XY=0$ are invariant under the flow}.
\end{remark}

\begin{definition}[Affine connection]\label{def:affine-connection}
An \emph{affine connection} on a smooth manifold $M$ is an $\mathbb R$–bilinear map
\begin{align}
\nabla:\ \Gamma(TM)\times\Gamma(TM)\longrightarrow\Gamma(TM),\qquad (X,Y)\longmapsto \nabla_X Y,
\end{align}
such that for all $X,Y,Z\in\Gamma(TM)$ and $f\in C^\infty(M)$:
\begin{align}
&\textnormal{(i)}\ \ \nabla_{X+Z}Y=\nabla_XY+\nabla_ZY,\quad \nabla_{aX}Y=a\,\nabla_XY\ (a\in\mathbb R),\\
&\textnormal{(ii)}\ \ \nabla_{fX}Y=f\,\nabla_XY,\\
&\textnormal{(iii)}\ \ \nabla_X(fY)=X[f] Y+f\,\nabla_XY.
\end{align}
\end{definition}

\begin{proposition}[Affine connection induces parallel transport]\label{prop:conn-to-PT}
Let $\nabla$ be an affine connection on $M$. Then $\nabla$ defines a unique parallel transport on $M$, obtained by solving the parallel equation along every smooth curve $\gamma:I\to M$.
\end{proposition}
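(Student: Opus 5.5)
The plan is to build $P$ from the linear ODE in each chart, glue along the curve, and then check the axioms of \cref{def:parallel-transport} one at a time. Fix a smooth curve $\gamma:I\to M$ and a chart $(U,x)$, and let $J\subset I$ be a subinterval with $\gamma(J)\subset U$. For $\nabla$ we define the Christoffel symbols by $\nabla_{\partial_i}\partial_j=\Gamma^k_{ij}\partial_k$. We then consider the linear system
\begin{align}
\frac{dV^k}{dt}+\Gamma^k_{ij}(\gamma(t))\dot\gamma^i(t)V^j=0 .
\end{align}
Its coefficients are smooth on $J$, so linear ODE theory gives global existence on all of $J$, uniqueness, smooth dependence on $(t,s,v)$, and linearity in the initial datum. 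This lets us set $P^\gamma_{t,s}v:=V(s)$, where $V$ is the solution with $V(t)=v$.

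To pass from a chart to the whole curve, we first use \cref{prop:christoffel-transform}: the equation is chart independent, since it is the coordinate expression of $\nabla_{\dot\gamma}V=0$. Concretely, a solution in one chart is also a solution in any overlapping chart. Uniqueness on overlaps then lets us concatenate local solutions along a finite cover of any compact $[a,b]\subset I$, and this produces a global smooth solution on $I$. Next we check the algebraic axioms. The identity axiom $P_{t,t}=\mathrm{id}$ is immediate. Composition holds because the solution through $(u,V(u))$ coincides with $V$ by uniqueness. Each $P_{t,s}$ is invertible, with inverse $P_{s,t}$, exactly as in \cref{lem:P-inverse}. For reparametrization invariance, let $W=V\circ\phi$ and apply the chain rule: $\dot{(\gamma\circ\phi)}=\phi'\,\dot\gamma\circ\phi$. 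Hence $W$ solves the equation for $\gamma\circ\phi$. Segment locality holds because the ODE on $[a,b]$ depends only on $\gamma|_{[a,b]}$.

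The remaining axioms are regularity in families of curves and first-order locality. For a smooth family $\Upsilon$, the ODE coefficients depend smoothly on the parameter $\varepsilon$, so smooth dependence of ODE solutions on parameters gives the family regularity. For first-order locality, take $V(t)=V^j(t)\partial_j$ and differentiate $P^\gamma_{t+h,t}V(t+h)$ at $h=0$. Writing $P^\gamma_{t+h,t}=\bigl(P^\gamma_{t,t+h}\bigr)^{-1}$, the matrix of $P^\gamma_{t,t+h}$ is $\mathrm{id}-h\,\Gamma^k_{ij}\dot\gamma^i+O(h^2)$ by the ODE. Inverting gives $\mathrm{id}+h\,\Gamma\dot\gamma+O(h^2)$, and this yields exactly $D^P_tV=\bigl(\dot V^k+\Gamma^k_{ij}\dot\gamma^iV^j\bigr)\partial_k$ with the same $\Gamma$. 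It follows that the covariant derivative induced by $P$ recovers $\nabla$, by \cref{prop:coord-form-nablaP}.

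For uniqueness, any parallel transport satisfying the axioms determines coefficients $\Gamma$ through first-order locality. If its induced covariant derivative is $\nabla$, those coefficients are the Christoffel symbols of $\nabla$. Then \cref{coordpartransp} forces its parallel fields to solve the same ODE, so uniqueness of ODE solutions gives equality of the two transports. The main obstacle is the gluing step: we need global existence on the whole interval $I$, smooth in $(t,s,v)$, across chart changes. I would handle it by the compact-exhaustion and finite-cover argument above, and derive smoothness of the glued solution from the flow of a linear ODE on each piece, composing finitely many smooth maps.
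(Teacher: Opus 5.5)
Your proposal is correct and follows essentially the same route as the paper: solve the linear parallel ODE chartwise, glue by uniqueness over a finite chart subdivision, verify identity, composition, smoothness, reparametrization and segment locality, obtain first-order locality from the expansions $I\mp hA(t)$ of the forward and backward transport matrices, and derive uniqueness from uniqueness of the ODE. The one step the paper makes explicit that you skip is the cutoff (locality) argument, which is needed to define $\nabla_{\partial_i}\partial_j$ on a chart and to apply the computation of \cref{prop:christoffel-transform} (stated there for $\nabla^P$) to an arbitrary connection; your reading of uniqueness, that any parallel transport inducing $\nabla$ must coincide with this one, is a harmless sharpening of the paper's.
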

\begin{proof}
Fix a smooth curve $\gamma:I\to M$. The axioms of a connection imply locality by a cutoff argument, so they apply to local coordinate fields. Define its local coefficients by $\nabla_{\partial_i}\partial_j=\Gamma^k_{ij}\partial_k$. Leibniz gives
\begin{align}
(\nabla_XY)^k=X^i\partial_iY^k+\Gamma^k_{ij}X^iY^j.
\end{align}
The calculation in \cref{prop:christoffel-transform}, which uses only these axioms, applies to these coefficients as well. Therefore the expression
\begin{align}
D^\gamma_sV
=\left(\frac{dV^k}{ds}+\Gamma^k_{ij}(\gamma(s))\dot\gamma^i(s)V^j(s)\right)\partial_k|_{\gamma(s)}
\end{align}
for an arbitrary field along $\gamma$ is independent of coordinates.
For $t\in I$ and $v\in T_{\gamma(t)}M$, solve
\begin{align}
\frac{dV^k}{ds}+\Gamma^k_{ij}(\gamma(s))\dot\gamma^i(s)V^j(s)=0,\qquad V(t)=v.
\end{align}
This is a linear ODE with smooth coefficients in each chart. Solutions extend across every compact subinterval of $I$ by a finite subdivision into charts, and agree on overlaps by uniqueness. Hence there is a unique solution on $I$. Define $P^\gamma_{t,s}v=V(s)$.\\
\emph{Identity and composition.} The initial condition gives $P^\gamma_{t,t}=\mathrm{id}$. Both $s\mapsto P^\gamma_{t,s}v$ and $s\mapsto P^\gamma_{u,s}(P^\gamma_{t,u}v)$ solve the same ODE and have the same value at $s=u$. Uniqueness gives
\begin{align}
P^\gamma_{t,s}=P^\gamma_{u,s}\circ P^\gamma_{t,u}.
\end{align}
Linearity in the initial value makes each $P^\gamma_{t,s}$ linear, and composition gives its inverse $P^\gamma_{s,t}$.\\
\emph{Smoothness.} Smooth dependence of linear ODE solutions on initial data, times, and parameters gives both regularity requirements in \cref{def:parallel-transport}. This applies locally to any smooth family of curves, with a finite subdivision covering the relevant segment.\\
\emph{Reparametrization invariance.} If $\phi:\tilde I\to I$ is a smooth orientation--preserving diffeomorphism, the chain rule shows that $V\circ\phi$ is parallel along $\gamma\circ\phi$ whenever $V$ is parallel along $\gamma$. Uniqueness gives
\begin{align}
P^{\gamma\circ\phi}_{\tau,\sigma}=P^\gamma_{\phi(\tau),\phi(\sigma)}.
\end{align}
\emph{Segment locality.} If two curves coincide on a segment, their coordinate ODEs there coincide, so the corresponding transports agree.\\
\emph{First--order locality.} Fix a coordinate chart and let $A(t)^k{}_j=\Gamma^k_{ij}(\gamma(t))\dot\gamma^i(t)$. In the coordinate trivializations at the two endpoints, the matrix of the forward transport is $I-hA(t)+o(h)$ and that of the backward transport is $I+hA(t)+o(h)$. For an arbitrary smooth field $U(t)=U^k(t)\partial_k|_{\gamma(t)}$ along the curve,
\begin{align}
P^\gamma_{t+h,t}U(t+h)
=U(t)+h\left(\frac{dU^k}{dt}(t)+\Gamma^k_{ij}(\gamma(t))\dot\gamma^i(t)U^j(t)\right)\partial_k|_{\gamma(t)}+o(h).
\end{align}
Thus the intrinsic limit in \cref{def:parallel-transport} has exactly the required coordinate expression. All axioms hold. Uniqueness follows from uniqueness of the parallel ODE with the coefficients of $\nabla$.
\end{proof}

\begin{corollary}\label{nablanablaP}
Let $P$ be the parallel transport defined by $\nabla$. Then $\nabla=\nabla^P$.
\end{corollary}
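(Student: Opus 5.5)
The plan is to compare the two operators in local coordinates, using the coordinate formulas already established on both sides. Fix a point $p\in M$ and a chart $(U,x)$ around it. By the proof of \cref{prop:conn-to-PT}, the connection $\nabla$ is local. Its coefficients $\Gamma^k_{ij}$ are defined by $\nabla_{\partial_i}\partial_j=\Gamma^k_{ij}\partial_k$, and they give
\begin{align}
(\nabla_XY)^k=X^i\partial_iY^k+\Gamma^k_{ij}X^iY^j .
\end{align}
On the other side, \cref{prop:coord-form-nablaP} expresses $\nabla^P$ in the same form,
\begin{align}
(\nabla^P_XY)^k=X^i\partial_iY^k+\widetilde\Gamma^k_{ij}X^iY^j .
\end{align}
Here $\widetilde\Gamma^k_{ij}$ are the coefficients attached to $P$ by the first--order locality axiom of \cref{def:parallel-transport}. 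So the whole statement reduces to showing $\widetilde\Gamma^k_{ij}=\Gamma^k_{ij}$ on $U$.

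The key step is the first--order locality computation at the end of the proof of \cref{prop:conn-to-PT}. For the transport $P$ built from $\nabla$, that argument shows the intrinsic limit satisfies
\begin{align}
D^P_tU=\Big(\tfrac{dU^k}{dt}+\Gamma^k_{ij}(\gamma(t))\dot\gamma^i(t)U^j(t)\Big)\partial_k ,
\end{align}
with the coefficients of $\nabla$. In the axiom the coefficients $\widetilde\Gamma$ depend only on the chart, not on the curve. Taking $\gamma(0)=q$, with arbitrary $\dot\gamma(0)=v$ and an arbitrary field $U$ with $U(0)=w$, the bilinear terms $\widetilde\Gamma^k_{ij}(q)v^iw^j$ and $\Gamma^k_{ij}(q)v^iw^j$ must coincide for every $v,w$. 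Choosing $v=\partial_i$ and $w=\partial_j$ gives $\widetilde\Gamma^k_{ij}(q)=\Gamma^k_{ij}(q)$ for every $q\in U$.

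Alternatively, and more directly, one can work straight from \cref{def:cov-der-from-pt}. Here $(\nabla^P_XY)_p=D^P_t(Y\circ\gamma)|_{t=0}$ for any curve $\gamma$ with $\dot\gamma(0)=X_p$. The displayed expansion then gives $X^i\partial_iY^k+\Gamma^k_{ij}X^iY^j$ at $p$, which is exactly $(\nabla_XY)^k(p)$. Since $p$ and the chart are arbitrary, $\nabla^P_XY=\nabla_XY$ for all $X,Y$.

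The only delicate point is to make sure the coefficients really are uniquely determined, rather than merely compatible. Uniqueness holds because the limit $D^P_tV$ is intrinsic, and varying $\dot\gamma$ and $V(t)$ independently over $T_qM$ recovers every component. I expect this to go through without difficulty; everything else is direct substitution.
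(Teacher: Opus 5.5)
Your proposal is correct and matches the paper's proof. The paper fixes $p$, a chart and a curve with $\dot\gamma(0)=X_p$, reads off from the parallel ODE that $\frac{d}{ds}\big|_{s=0}[P^\gamma_{s,0}]^k{}_j=\Gamma^k_{ij}(p)X^i(p)$, and differentiates $P^\gamma_{s,0}Y_{\gamma(s)}$ to obtain $(X^i\partial_iY^k+\Gamma^k_{ij}X^iY^j)(p)=(\nabla_XY)^k(p)$, which is your ``more direct'' route via the first-order expansion from \cref{prop:conn-to-PT}; your detour through uniqueness of the axiom coefficients $\widetilde\Gamma$ is valid but unnecessary, since that proposition already verifies the axiom for $P$ using $\nabla$'s own coefficients.
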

\begin{proof}
Fix smooth vector fields $X,Y$, a point $p$, and a curve $\gamma$ with $\gamma(0)=p$ and $\dot\gamma(0)=X_p$. Use a coordinate chart around $p$ and let $[P^\gamma_{s,0}]$ denote the matrix of transport in the coordinate trivializations. The ODE gives
\begin{align}\label{6}
\left.\frac{d}{ds}\right|_{s=0}[P^\gamma_{s,0}]^k{}_j
=\Gamma^k_{ij}(p)X^i(p).
\end{align}
Consequently,
\begin{align}
(\nabla^P_XY)_p
&=\left.\frac{d}{ds}\right|_{s=0}\big(P^\gamma_{s,0}Y_{\gamma(s)}\big)\\
&=\big(X^i\partial_iY^k+\Gamma^k_{ij}X^iY^j\big)(p)\partial_k|_p
=(\nabla_XY)_p.
\end{align}
Here the field $Y$ is fixed; the expression depends on its first derivatives, not merely on $Y_p$. Since $p,X,Y$ are arbitrary, $\nabla^P=\nabla$.
\end{proof}

\begin{remark}
    In particular, the coordinate expression is given by
    \begin{align}
        (\nabla_X Y)^k=X^i\partial_i Y^k+\Gamma^{k}_{\ ij}X^iY^j,
    \end{align}
    for all $X,Y\in\Gamma(TM)$, or equivalently\footnote{This last coordinate expression is common in the physics literature.}
    \begin{align}
        \nabla_i Y^k=\partial_i Y^k +\Gamma^{k}_{\ ij}Y^j,
    \end{align}
    where $\nabla_i\coloneq \nabla_{\partial_i}$.\footnote{For a general, non--coordinate frame $e_i$, we have $\nabla_{e_i} Y^k=e_iY^k+\Gamma^{k}_{\ ij}Y^j$.}
\end{remark}

\begin{definition}[Covariant derivative as an algebraic operation]\label{def:cov-der-algebraic}
Let $\nabla$ be an affine connection on $M$. For each $X\in\Gamma(TM)$ there exists a unique $\mathbb R$–linear operator
\begin{align}
\nabla_X:\ \mathfrak T(M)\longrightarrow \mathfrak T(M),
\end{align}
called the \emph{covariant derivative along $X$}, characterized by:
\begin{align}
&\textnormal{(i) On functions:}\quad && \nabla_X f = Xf,\qquad f\in C^\infty(M);\\
&\textnormal{(ii) On vector fields:}\quad && \nabla_X Y \ \text{agrees with } \nabla;\\
&\textnormal{(iii) $C^\infty(M)$--linearity in $X$:}\quad && \nabla_{fX}T=f\,\nabla_XT,\qquad f\in C^\infty(M),\ T\in\mathfrak T(M);\\
&\textnormal{(iv) Degree $0$ derivation:}\quad && \nabla_X(T\otimes S)=(\nabla_XT)\otimes S+T\otimes(\nabla_XS);\\
&\textnormal{(v) Compatibility:}\quad && \nabla_X\ \text{commutes with contractions of indices.}
\end{align}
\end{definition}

\begin{remark}
Notice that the covariant derivative is determined by its action of functions and vector fields. The action on general tensor fields is then obtained by the Leibniz rule, the compatibility with contractions, and the fact that it is a degree $0$ operator. Equivalently, we can specify its action of functions and $1$--forms $\omega\in\Omega^1(M)$, given by
\begin{align}
(\nabla_X\omega)(Y)=X[\omega(Y)]-\omega(\nabla_XY).
\end{align}
\end{remark}

\begin{remark}
Parallel transport extends functorially to tensor fields. If $T\in\mathfrak T(M)$ has type $(r,q)$, set
\begin{align}
\big(P^{\gamma}_{t,s} T\big)_{\gamma(s)}\coloneqq
\Big((P^{\gamma}_{t,s})^{\otimes r}\otimes \big((P^{\gamma}_{s,t})^*\big)^{\otimes q}\Big)\,T_{\gamma(t)}.
\end{align}
Its infinitesimal generator is the covariant derivative $\nabla^P$; by \cref{nablanablaP} one has $\nabla^P=\nabla$ (extended to tensors by the Leibniz rule and compatibility with contractions). This yields the equivalence between the geometric (operator defined by parallel-transport) and the algebraic (derivation) definitions of the covariant derivative.
\end{remark}

\begin{remark}[Covariant vs. Lie derivative]
Both $\nabla_X$ and $\mathcal L_X$ extend to $\mathfrak T(M)$ as degree $0$ derivations commuting with contractions,
and they coincide on functions: $\nabla_X f=\mathcal L_X f=Xf$. Nonetheless, there are two key differences:
\begin{enumerate}[label=(\alph*)]
\item \emph{Naturality.} $\mathcal L_X$ is canonical (it comes from the flow of $X$), whereas $\nabla_X$ depends on an additional geometric structure, i.e. a non--canonical choice of an affine connection. For instance, on vector fields $X,Y\in\Gamma(TM)$,
\begin{align}
\mathcal L_X Y = [X,Y], \qquad \nabla_X Y \ \text{is connection-dependent.}
\end{align}
In particular, there is no canonical identity expressing $\mathcal L_XY$ in terms of $\nabla_XY$ without extra geometric structure.
\item \emph{$f$--linearity.} For $f\in C^\infty(M)$,
\begin{align}
\mathcal L_{fX}T &\neq f\,\mathcal L_X T\\[3pt]
\nabla_{fX}T &= f\,\nabla_X T.
\end{align}
\end{enumerate}
\end{remark}

\begin{proposition}[Space of affine connections is affine]
Given two affine connections $\nabla$ and $\nabla'$ on $TM$, we define their difference as\footnote{The difference is defined in the natural way on vector fields.}
\begin{align}
A(X,Y)\coloneqq (\nabla-\nabla')_X Y.
\end{align}
Then, $A$ is a tensor field of $(1,2)$--type. Conversely, for any $A\in\Gamma( T^{(1,2)}(M))$,
\begin{align}
\widehat\nabla_X Y = \nabla_X Y + A(X,Y)
\end{align}
is an affine connection. Hence, the set of affine connections is an affine space modeled on $\Gamma(T^{(1,2)}(M))$.
\end{proposition}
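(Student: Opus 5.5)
The plan is to establish tensoriality of $A$ via the standard criterion: a map $\Gamma(TM)\times\Gamma(TM)\to\Gamma(TM)$ that is $C^\infty(M)$--linear in each argument is induced by a $(1,2)$--tensor field. First, $A$ is $\mathbb R$--bilinear, because both $\nabla$ and $\nabla'$ are. Linearity over functions in the first slot follows at once from axiom (ii) of \cref{def:affine-connection}: $A(fX,Y)=f\nabla_XY-f\nabla'_XY=fA(X,Y)$. For the second slot, axiom (iii) gives
\begin{align}
A(X,fY)=X[f]\,Y+f\nabla_XY-X[f]\,Y-f\nabla'_XY=fA(X,Y),
\end{align}
so the inhomogeneous derivative terms cancel. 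This cancellation is the essential point of the argument.

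Next, I will pass from $C^\infty(M)$--bilinearity to a genuine tensor field. The most economical route, consistent with the paper's coordinate style, is to work in a chart $(U,x)$. Writing $X=X^i\partial_i$ and $Y=Y^j\partial_j$ and using the local coordinate expression $(\nabla_XY)^k=X^i\partial_iY^k+\Gamma^k_{ij}X^iY^j$ (and likewise for $\nabla'$ with $\Gamma'^k_{ij}$), I get
\begin{align}
A(X,Y)^k=(\Gamma^k_{ij}-\Gamma'^k_{ij})X^iY^j .
\end{align}
Thus $A$ acts pointwise with components $A^k{}_{ij}=\Gamma^k_{ij}-\Gamma'^k_{ij}$. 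By \cref{prop:christoffel-transform}, applied to both connections (the proof of \cref{prop:conn-to-PT} notes that this transformation law uses only the connection axioms), the affine second-derivative terms are identical and cancel in the difference. The components therefore transform by the $(1,2)$ tensor law, and $A=A^k{}_{ij}\,\partial_k\otimes dx^i\otimes dx^j$ defines a smooth global section of $T^{(1,2)}(M)$.

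The one subtle step is locality. I need $\nabla$ to be applicable to local coordinate fields, so that the coordinate formula is legitimate. I will invoke the cutoff-function locality already used in the proof of \cref{prop:conn-to-PT}: if $Y$ vanishes near $p$, write $Y=(1-\chi)Y$ with a bump function $\chi$ equal to $1$ near $p$, and apply (iii) to obtain $(\nabla_XY)_p=0$. The same argument works in the $X$ slot via (ii).

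For the converse, let $\widehat\nabla_XY=\nabla_XY+A(X,Y)$ with $A\in\Gamma(T^{(1,2)}(M))$, where $A(X,Y)$ denotes the contraction of $A$ with $X$ and $Y$. Since $A$ is $C^\infty(M)$--bilinear and contributes no derivative of $f$, the axioms (i)--(iii) for $\widehat\nabla$ follow directly from those of $\nabla$; in particular $\widehat\nabla_X(fY)=X[f]Y+f\widehat\nabla_XY$. Combining the two directions, the map $A\mapsto\nabla+A$ is a bijection from $\Gamma(T^{(1,2)}(M))$ onto the set of connections for any fixed $\nabla$, with $(\nabla+A)+B=\nabla+(A+B)$. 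Nonemptiness requires the existence of at least one connection, which I will obtain by gluing flat coordinate connections with a partition of unity; convex combinations of connections are again connections. This yields the affine-space axioms of the earlier Affine space definition, modeled on $\Gamma(T^{(1,2)}(M))$.
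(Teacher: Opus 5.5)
Your proposal is correct and follows the paper's proof. Both use the same cancellation of the $X[f]\,Y$ terms to get $C^\infty(M)$--bilinearity of $A$, the same direct check of the axioms for $\nabla+A$, and the same free and transitive translation structure. There are two small differences. First, you justify the passage from $C^\infty(M)$--bilinearity to a section of $T^{(1,2)}(M)$ explicitly, via cutoff locality, $A^k{}_{ij}=\Gamma^k_{ij}-\Gamma'^k_{ij}$, and cancellation of the inhomogeneous terms of \cref{prop:christoffel-transform}, where the paper simply invokes the standard tensor characterization lemma. Second, you also prove that the space of connections is nonempty via a partition of unity, which the paper leaves implicit.
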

\begin{proof}
(1) \emph{Difference is a tensor.}
Define
\begin{align}
A:\Gamma(TM)\times\Gamma(TM)\longrightarrow\Gamma(TM),\qquad
A(X,Y)\coloneqq (\nabla-\nabla')_X Y .
\end{align}
Then $A$ is $\mathbb R$–bilinear and, using the axioms of an affine connection,
\begin{align}
A(fX,Y)
&= (\nabla-\nabla')_{fX}Y
 = f(\nabla-\nabla')_X Y
 = f\,A(X,Y),\\
A(X,fY)
&= (\nabla-\nabla')_X(fY)
 = X[f]\,Y+f\,\nabla_XY - X[f]\,Y - f\,\nabla'_X Y
 = f\,A(X,Y).
\end{align}
Hence, $A$ is $C^\infty(M)$–bilinear (tensorial) in both slots. By the standard characterization of tensor
fields via $C^\infty(M)$–multilinear maps, $A$ is a smooth section of $T^*M\otimes T^*M\otimes TM$.\\
(2) \emph{Adding a $(1,2)$–tensor preserves the axioms.}
Let $A\in\Gamma(T^*M\otimes T^*M\otimes TM)$ and define
\begin{align}
\widehat\nabla_X Y \coloneqq \nabla_X Y + A(X,Y).
\end{align}
Then, using $C^\infty$–bilinearity of $A$,
\begin{align}
\widehat\nabla_{X+Z}Y &= \nabla_{X+Z}Y + A(X+Z,Y)
= \nabla_XY+\nabla_ZY + A(X,Y)+A(Z,Y),\\
\widehat\nabla_{aX}Y &= a\,\nabla_XY + a\,A(X,Y) \qquad (a\in\mathbb R),\\
\widehat\nabla_{fX}Y &= \nabla_{fX}Y + A(fX,Y)
= f\,\nabla_XY + f\,A(X,Y) = f\,\widehat\nabla_XY,\\
\widehat\nabla_X(fY) &= \nabla_X(fY) + A(X,fY)
= X[f]\,Y + f\,\nabla_XY + f\,A(X,Y)
= X[f]\,Y + f\,\widehat\nabla_XY.
\end{align}
Thus $\widehat\nabla$ is an affine connection.\\
(3) \emph{Affine-space structure.}
Let $\mathrm{Conn}(TM)$ denote the set of affine connections on $TM$ and let
\begin{align}
\Gamma(T^*M\otimes T^*M\otimes TM)\times \mathrm{Conn}(TM)\longrightarrow \mathrm{Conn}(TM),
\qquad (A,\nabla)\longmapsto \nabla+A,
\end{align}
be the natural translation action. It is
\emph{free}, since $\nabla+A=\nabla$ implies $A=0$, and \emph{transitive}, since for any
$\nabla,\nabla'\in\mathrm{Conn}(TM)$ one has the unique $A=\nabla'-\nabla$ with $\nabla'=\nabla+A$ by (1).
Therefore $\mathrm{Conn}(TM)$ is an affine space modeled on
$\Gamma(T^*M\otimes T^*M\otimes TM)$.
\end{proof}

\begin{definition}[Covariant derivative along a curve]\label{defconnpullback}
Let $M$ be a smooth manifold with an affine connection $\nabla$, and let $\gamma:I\to M$ be smooth. Set $E= \gamma^*TM \to I$ and let $\nabla^\gamma\coloneqq \gamma^*\nabla$ denote the pullback connection.
The covariant derivative along $\gamma$ is the operator
\begin{align}
\nabla^\gamma_{\partial_\lambda} : \Gamma(E) \to \Gamma(E),
\end{align}
defined in local coordinates by
\begin{align}
(\nabla^\gamma_{\partial_\lambda}V_\gamma)(\lambda)
=\left(\frac{dV^k}{d\lambda}(\lambda)+\Gamma^k{}_{ij}(\gamma(\lambda))\dot\gamma^i(\lambda)V^j(\lambda)\right)\partial_k|_{\gamma(\lambda)}.
\end{align}
The affine transformation law of the Christoffel symbols makes this expression coordinate independent. It is defined for every section of $\gamma^*TM$, including along constant or self-intersecting curves. When an ambient extension $\tilde V$ exists locally along a parameter interval, this formula agrees with $\nabla_{\dot\gamma}\tilde V$ and is independent of that extension, since its component functions restrict to the same functions $V^k(\lambda)$ throughout that interval. We also write $\nabla_{\dot\gamma}V_\gamma$ for this derivative along the curve.
In particular, since $d\gamma(\partial_\lambda)=\dot\gamma$, one has
\begin{align}
\nabla^\gamma_{\partial_\lambda}\dot\gamma = \nabla_{\dot\gamma}\dot\gamma.
\end{align}
In local coordinates $x^k(\lambda)=\gamma^k(\lambda)$ and $V_\gamma=V^j\partial_j$,
\begin{align}
\bigl(\nabla^\gamma_{\partial_\lambda}V_\gamma\bigr)^k
= \frac{d V^k}{d\lambda} + \Gamma^k{}_{ij}(\gamma(\lambda))\,\dot\gamma^i V^j,
\qquad
\bigl(\nabla^\gamma_{\partial_\lambda}\dot\gamma\bigr)^k
= \frac{d^2 x^k}{d\lambda^2} + \Gamma^k{}_{ij}(\gamma(\lambda))\,\dot\gamma^i\dot\gamma^j.
\end{align}
\end{definition}

\begin{remark}\label{remnablatenscurv}
    The construction in \cref{defconnpullback} extends to tensor fields along a curve. For $T_\gamma\in\Gamma(\gamma^*(T^{(r,s)}M))$, define $(\gamma^*\nabla)_{\partial_\lambda}T_\gamma$ by differentiating its component functions in $\lambda$, adding one $+\Gamma$ term for each contravariant index and one $-\Gamma$ term for each covariant index, all contracted with $\dot\gamma$. This is coordinate independent and satisfies Leibniz and compatibility with contractions. If an ambient extension $\tilde T$ exists on a parameter interval, then
    \begin{align}
        (\gamma^*\nabla)_{\partial_\lambda}T_\gamma(\lambda)
        =\nabla_{\dot\gamma}\tilde T\big|_{\gamma(\lambda)},
    \end{align}
    independently of the extension.
\end{remark}

\begin{definition}[Geodesics]
Let $\gamma$ be a smooth curve and $\nabla$ an affine connection on $M$. If $\gamma$ satisfies
\begin{align}
\nabla_{\dot\gamma}\dot\gamma = 0,
\end{align}
we say that $\gamma$ is a \textit{geodesic} (for $\nabla$). In local coordinates, this is the ODE
\begin{align}
\ddot\gamma^k(\lambda) + \Gamma^k_{ij}(\gamma(\lambda))\,\dot\gamma^i(\lambda)\,\dot\gamma^j(\lambda) = 0.
\end{align}
\end{definition}

\begin{remark}\label{remconnpullback}
By \cref{defconnpullback}, for any smooth curve $\gamma:I\to M$ one has
\begin{align}
\nabla^\gamma_{\partial_\lambda}\dot\gamma
= \nabla_{\,d\gamma(\partial_\lambda)}\dot\gamma
= \nabla_{\dot\gamma}\dot\gamma .
\end{align}
Hence, the geodesic equation can be written equivalently as
\begin{align}
\nabla^\gamma_{\partial_\lambda}\dot\gamma=0
\qquad\Longleftrightarrow\qquad
\nabla_{\dot\gamma}\dot\gamma=0
\quad\text{along }\gamma .
\end{align}
\end{remark}

\begin{proposition}
Let $\nabla$ be an affine connection on $M$. For each $x\in M$ and $v\in T_xM$, there exists an open interval $I_{x,v}\subset\mathbb R$, with $0\in I_{x,v}$, and a unique geodesic $\gamma_{x,v}\colon I_{x,v}\to M$ such that
\begin{align}
\gamma_{x,v}(0)=x,\qquad \dot\gamma_{x,v}(0)=v.
\end{align}
We say that $\gamma_{x,v}$ is maximal in the following sense: if $J\subset\mathbb R$ is an open interval with $0\in J$ and $\tilde\gamma\colon J\to M$ is a geodesic with
\begin{align}
\tilde\gamma(0)=x,\qquad \dot{\tilde\gamma}(0)=v,
\end{align}
then $J\subset I_{x,v}$ and $\tilde\gamma=\gamma_{x,v}|_J$. Equivalently, there is no open interval $J\supset I_{x,v}$ and geodesic $\tilde\gamma\colon J\to M$ with $\tilde\gamma|_{I_{x,v}}=\gamma_{x,v}$.
Moreover, the map $(x,v,\lambda)\mapsto\gamma_{x,v}(\lambda)$ is smooth on its domain.
\end{proposition}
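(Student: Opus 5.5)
The plan is to reduce the statement to the standard existence, uniqueness and smooth-dependence theory for ODEs, in the same spirit as the construction of parallel transport in \cref{prop:conn-to-PT}. The geodesic equation is second order and nonlinear, so I first rewrite it as a first-order system on $TM$. In a chart $(U,x)$, with induced coordinates $(x^k,v^k)$ on $TU\cong U\times\mathbb R^n$, I set
\begin{align}
\dot x^k=v^k,\qquad \dot v^k=-\Gamma^k_{ij}(x)\,v^iv^j .
\end{align}
The right-hand side is smooth, so it defines a smooth vector field $G$ on $TU$. The key structural step is to check that these local fields glue to a global smooth vector field $G\in\Gamma(T(TM))$, the geodesic spray. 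For this I use the affine transformation law of \cref{prop:christoffel-transform}: under $x\mapsto x'$ with $v'^k=\frac{\partial x'^k}{\partial x^h}v^h$, differentiating $v'$ along a solution produces a term $\frac{\partial^2x'^k}{\partial x^h\partial x^l}v^hv^l$. This term is exactly compensated by the inhomogeneous term in $\Gamma'$, once one uses the identity relating second derivatives of the inverse coordinate change.

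Once $G$ is global, a curve $\gamma$ is a geodesic if and only if its velocity lift $\lambda\mapsto(\gamma(\lambda),\dot\gamma(\lambda))$ is an integral curve of $G$. One direction is the definition; for the converse, the first equation forces the $v$-component to equal $\dot\gamma$. I then apply the maximal-flow result for vector fields, already used in the section on the Lie derivative, to $G$ on the manifold $TM$. This gives an open domain $D\subset\mathbb R\times TM$ containing $\{0\}\times TM$ and a smooth maximal flow $\Theta$. I define $I_{x,v}$ as the maximal existence interval of the integral curve through $(x,v)$, and set $\gamma_{x,v}=\pi\circ\Theta(\cdot,(x,v))$.

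Maximality and uniqueness follow from uniqueness of integral curves of $G$. Any geodesic $\tilde\gamma\colon J\to M$ with the same initial data has a velocity lift that is an integral curve of $G$ through $(x,v)$. Hence $J\subset I_{x,v}$ and the two lifts coincide on $J$, and projecting by $\pi$ gives $\tilde\gamma=\gamma_{x,v}|_J$. Smoothness of $(x,v,\lambda)\mapsto\gamma_{x,v}(\lambda)$ is smoothness of $\Theta$ on $D$ composed with the smooth projection $\pi$.

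I expect the only real obstacle to be the global well-definedness of $G$, namely the chart-independence computation. If it becomes messy, the fallback is a chart-free definition: uniqueness of local ODE solutions in each chart shows that solutions computed in overlapping charts agree on the overlap. The local solutions can then be patched along a finite subdivision of any compact interval, exactly as in \cref{prop:conn-to-PT}. This yields a unique maximal geodesic directly, and smooth dependence on $(x,v,\lambda)$ follows chart by chart.
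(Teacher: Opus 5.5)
Your proof is correct, but it is organized differently from the paper's. The paper never leaves $M$. It solves the second-order coordinate ODE $\ddot\gamma^k+\Gamma^k_{ij}\dot\gamma^i\dot\gamma^j=0$ by Picard--Lindel\"of in each chart. It then uses the transformation law of the Christoffel symbols to see that the coordinate equations are equivalent on overlaps, so local solutions agree. Finally it continues the solution to a maximal interval and quotes smooth dependence from ODE theory.

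You instead use the transformation law once, at the start, to glue the local first-order systems into a global vector field $G$ on $TM$. Your cancellation is right: the term $\frac{\partial^2x'^k}{\partial x^h\partial x^l}v^hv^l$ is compensated by the inhomogeneous part of $\Gamma'^k_{ab}v'^av'^b$, using the derivative of $\frac{\partial x'^k}{\partial x^h}\frac{\partial x^h}{\partial x'^b}=\delta^k_b$. After that, everything else comes from one application of the fundamental theorem on flows on the manifold $TM$: uniqueness, the maximal interval, openness of the domain $\{(x,v,\lambda):\lambda\in I_{x,v}\}$, and joint smoothness on it. The only extra input is that integral curves of $G$ are exactly velocity lifts of geodesics.

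What your route buys is that the global claims come for free, in particular smoothness of $(x,v,\lambda)\mapsto\gamma_{x,v}(\lambda)$ on the whole maximal domain. In the paper's chart-by-chart version this silently requires chaining local smooth dependence along a finite subdivision of $[0,\lambda]$. The price is that you rely on the flow theorem on $TM$, including uniqueness of integral curves there, which uses that $TM$ is Hausdorff. You also need the spray computation rather than only elementary coordinate ODE facts. Your stated fallback is essentially the paper's proof.
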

\begin{proof}
In local coordinates, the geodesic equation is the smooth second--order ODE
\begin{align}
\ddot\gamma^k(\lambda)+\Gamma^k_{ij}(\gamma(\lambda))\,\dot\gamma^i(\lambda)\,\dot\gamma^j(\lambda)=0,
\qquad
\gamma(0)=x,\ \dot\gamma(0)=v.
\end{align}
By Picard–Lindelöf there is a unique local solution. On chart overlaps the Christoffel symbols transform so that the equations are equivalent, hence the coordinate solutions agree; this yields a well-defined geodesic on $M$. By standard ODE continuation, the solution extends uniquely until it becomes inextendible as a solution; this defines a unique maximal open interval $I_{x,v}$ containing $0$. Standard ODE theory gives smooth dependence of $\gamma_{x,v}(t)$ on $(x,v,\lambda)$.
\end{proof}

\begin{definition}[Geodesic completeness]
The connection $\nabla$ is geodesically complete if, for every $x\in M$ and $v\in T_xM$, the geodesic $\gamma_{x,v}$ is defined on all of $\mathbb R$.
\end{definition}

\section{Curvature and torsion}
When we discussed the Lie derivative, we noticed that the lack of $C^\infty(M)$--linearity of the derivation (the incapacity to encode the $C^\infty(M)$--module structure) seemed to go along with what we called the “flatness” condition for $\mathcal L$, i.e.
\begin{align}
    [\mathcal L_X,\mathcal L_Y]=0,
\end{align}
for all vector fields $X,Y\in\Gamma(TM)$ such that $[X,Y]=0$. In fact, that is what makes the Lie derivative a Lie algebra representation for $\mathfrak X(M)$ on $\mathfrak T(M)$, seen as real vector spaces, not as $C^\infty(M)$--modules. In other words, the extension of a derivation with all the properties of the Lie derivative plus the $C^\infty(M)$--linearity is precisely the definition of a covariant derivative given in \cref{def:cov-der-algebraic}, and we are about to see that it loses the property of being “flat”.

\begin{definition}[Torsion and curvature]
Let $\nabla$ be an affine connection on a smooth manifold $M$. 
The tensor fields $T\in \Gamma(T^{(1,2)}(M))$ and $R\in \Gamma(T^{(1,3)}(M))$, defined for all $X,Y\in\Gamma(TM)$ by
\begin{align}
T(X,Y)&\coloneqq\nabla_X Y-\nabla_Y X-[X,Y]\\[4pt]
R(X,Y)&\coloneqq[\nabla_X,\nabla_Y] -\nabla_{[X,Y]},
\end{align}
are called, respectively, the \emph{torsion} and \emph{curvature} of $\nabla$.\footnote{Where the brackets can be defined in the sense of \cref{opliealg}.}
\end{definition}

\begin{remark}
The defining formulas are $C^\infty(M)$--linear in every tensor argument. Indeed, using $[fX,Y]=f[X,Y]-Y(f)X$ and Leibniz, the derivative terms cancel to give
\begin{align}
T(fX,Y)&=fT(X,Y),\\
R(fX,Y)Z&=fR(X,Y)Z,\\
R(X,Y)(fZ)&=fR(X,Y)Z+\big(X(Yf)-Y(Xf)-[X,Y]f\big)Z=fR(X,Y)Z.
\end{align}
Skew symmetry gives linearity in the second slot, so these are indeed smooth tensor fields. The tensor fields $T$ and $R$ are skew symmetric in the first two slots; in particular $T(X,Y)=-T(Y,X)$ and $R(X,Y)=-R(Y,X)$.
\end{remark}

\begin{proposition}[Coordinate expressions of torsion and curvature]
Let $(x^1,\dots,x^n)$ be local coordinates on $M$ and write
\begin{align}
\nabla_{\partial_i}\partial_j&=\Gamma^k_{ij}\,\partial_k\\
T(\partial_i,\partial_j)&=T^k{}_{ij}\,\partial_k\\
R(\partial_i,\partial_j,\partial_k)&=R^\ell{}_{ijk}\,\partial_\ell.
\end{align}
Then, the components of $T$ and $R$ are given by
\begin{align}
T^k{}_{ij}&=\Gamma^k_{ij}-\Gamma^k_{ji},\\
R^\ell{}_{ijk}&=\partial_i\Gamma^\ell_{jk}-\partial_j\Gamma^\ell_{ik}
+\Gamma^\ell_{ip}\Gamma^p_{jk}-\Gamma^\ell_{jp}\Gamma^p_{ik}.
\end{align}
In particular, $T^k{}_{ij}=-T^k{}_{ji}$ and $R^\ell{}_{ijk}=-R^\ell{}_{jik}$.
\end{proposition}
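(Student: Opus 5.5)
The plan is to evaluate both defining formulas directly on coordinate vector fields. The key simplification is that coordinate fields commute, $[\partial_i,\partial_j]=0$, so every bracket term drops out. Since $T$ and $R$ have already been shown to be $C^\infty(M)$-linear tensor fields, their values on $(\partial_i,\partial_j)$ and $(\partial_i,\partial_j,\partial_k)$ determine the components. The only inputs needed are the definition $\nabla_{\partial_i}\partial_j=\Gamma^k_{ij}\partial_k$ and the Leibniz rule (iii) of \cref{def:affine-connection}.

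\emph{Torsion.} We compute
\begin{align}
T(\partial_i,\partial_j)=\nabla_{\partial_i}\partial_j-\nabla_{\partial_j}\partial_i-[\partial_i,\partial_j]=(\Gamma^k_{ij}-\Gamma^k_{ji})\partial_k,
\end{align}
and read off $T^k{}_{ij}=\Gamma^k_{ij}-\Gamma^k_{ji}$.

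\emph{Curvature.} Since $\nabla_{[\partial_i,\partial_j]}=\nabla_0=0$, we have $R(\partial_i,\partial_j)\partial_k=\nabla_{\partial_i}\nabla_{\partial_j}\partial_k-\nabla_{\partial_j}\nabla_{\partial_i}\partial_k$. The first term expands by Leibniz as
\begin{align}
\nabla_{\partial_i}(\Gamma^p_{jk}\partial_p)=(\partial_i\Gamma^p_{jk})\partial_p+\Gamma^p_{jk}\Gamma^\ell_{ip}\partial_\ell .
\end{align}
After renaming the dummy index $p\to\ell$ in the derivative part, this equals $(\partial_i\Gamma^\ell_{jk}+\Gamma^\ell_{ip}\Gamma^p_{jk})\partial_\ell$. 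Subtracting the same expression with $i\leftrightarrow j$ gives the stated formula for $R^\ell{}_{ijk}$.

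\emph{Antisymmetry.} Both properties can be read off from the component formulas, which are manifestly antisymmetric under $i\leftrightarrow j$. Alternatively, they follow from the skew-symmetry of $T$ and $R$ in their first two slots noted in the preceding remark.

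The main point needing care is the index bookkeeping in the curvature term. The Leibniz rule must be applied to the coefficient $\Gamma^p_{jk}$, producing the derivative term, and the second Christoffel symbol must be contracted in the correct slot, $\Gamma^\ell_{ip}$ rather than $\Gamma^\ell_{pi}$, since the connection need not be torsion-free.
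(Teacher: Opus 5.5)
Your proposal is correct and matches the paper's proof: evaluate on coordinate fields, drop the bracket terms using $[\partial_i,\partial_j]=0$, expand $\nabla_{\partial_i}(\Gamma^p_{jk}\partial_p)$ by the Leibniz rule, and antisymmetrize in $i,j$. Your contraction $\Gamma^\ell_{ip}\Gamma^p_{jk}$ is the paper's $\Gamma^\ell_{jk}\Gamma^p_{i\ell}\,\partial_p$ after renaming dummy indices.
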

\begin{proof}
Since $[\partial_i,\partial_j]=0$ in a coordinate frame,
\begin{align}
T(\partial_i,\partial_j)
=\nabla_{\partial_i}\partial_j-\nabla_{\partial_j}\partial_i-[\partial_i,\partial_j]
=(\Gamma^k_{ij}-\Gamma^k_{ji})\,\partial_k .
\end{align}
For curvature, compute
\begin{align}
\nabla_{\partial_i}\nabla_{\partial_j}\partial_k
&=\nabla_{\partial_i}(\Gamma^\ell_{jk}\,\partial_\ell)
=\partial_i\Gamma^\ell_{jk}\,\partial_\ell+\Gamma^\ell_{jk}\,\Gamma^p_{i\ell}\,\partial_p,\\
\nabla_{\partial_j}\nabla_{\partial_i}\partial_k
&=\partial_j\Gamma^\ell_{ik}\,\partial_\ell+\Gamma^\ell_{ik}\,\Gamma^p_{j\ell}\,\partial_p,\\
\nabla_{[\partial_i,\partial_j]}\partial_k&=0,
\end{align}
and subtract. The stated formulas follow.
\end{proof}

\begin{definition}
    Let $R$ be the curvature of an affine connection $\nabla$. If $R=0$, we say that $\nabla$ is \textit{flat}. Whereas, if $T=0$, $\nabla$ is \textit{torsion--free}.
\end{definition}

\begin{example}[Affine connections on manifolds]\label{exconn}
\leavevmode\newline
\vspace{-0.55cm}
\begin{enumerate}[label=(\roman*)]
\item \textit{The flat Euclidean space $\mathbb R^n$.}
In standard coordinates $x=(x^1,\dots,x^n)$, the connection
\begin{align}
\Gamma^k_{ij}=0\quad \text{(flat choice)},
\end{align}
gives
\begin{align}
T^k{}_{ij}=0,\qquad R^\ell{}_{ijk}=0.
\end{align}

\item \textit{Sphere $S^n$.}
On $U=S^n\setminus\{\text{north pole}\}$ with stereographic coordinates $y=(y^1,\dots,y^n)$,
consider the torsion--free affine connection with
\begin{align}
\Gamma^k_{ij}(y)=-\frac{2}{1+|y|^2}\big(\delta^k_i\,y_j+\delta^k_j\,y_i-\delta_{ij}\,y^k\big),
\qquad |y|^2=\sum_m (y^m)^2.
\end{align}
Then $T^k{}_{ij}=0$ and the curvature components in this chart are
\begin{align}
R^\ell{}_{ijk}(y)= h_{jk}(y)\,\delta^\ell_i-h_{ik}(y)\,\delta^\ell_j,
\qquad
h_{jk}(y)\coloneqq \frac{4}{(1+|y|^2)^2}\,\delta_{jk}.
\end{align}
These local formulas are compatible on overlaps of stereographic charts and thus define a global torsion--free affine connection on $S^n$.

\item \textit{Hyperbolic space $\mathbb H^n$.}
On the unit ball $B^n=\{y\in\mathbb R^n:|y|<1\}$ with coordinates $y=(y^1,\dots,y^n)$,
consider the torsion–free affine connection with
\begin{align}
\Gamma^k_{ij}(y)=\frac{2}{1-|y|^2}\big(\delta^k_i\,y_j+\delta^k_j\,y_i-\delta_{ij}\,y^k\big).
\end{align}
Then $T^k{}_{ij}=0$ and
\begin{align}
R^\ell{}_{ijk}(y)= -\big(h_{jk}(y)\,\delta^\ell_i-h_{ik}(y)\,\delta^\ell_j\big),
\qquad
h_{jk}(y)\coloneqq \frac{4}{(1-|y|^2)^2}\,\delta_{jk}.
\end{align}

\item \textit{A curved connection on $\mathbb R^n$.}
On $\mathbb R^n$, with $n\ge3$ and global coordinates $x=(x^1,\dots,x^n)$, consider the torsion–free connection
\begin{align}
\Gamma^k_{ij}(x)=\delta^k_i\,x_j+\delta^k_j\,x_i-\delta_{ij}\,x^k .
\end{align}
Then $T^k{}_{ij}=0$, while the curvature is nonzero and not constant; in particular, at $x=0$,
\begin{align}
R^\ell{}_{ijk}(0)=-2\big(\delta^\ell_i\,\delta_{jk}-\delta^\ell_j\,\delta_{ik}\big).
\end{align}
\end{enumerate}
\end{example}

\begin{remark}\label{remesconn}
The same smooth manifold can carry many inequivalent affine connections. For instance, in \cref{exconn}, we exhibited two different connections on $\mathbb R^n$, one flat and one curved. The manifold structure itself does not single out a preferred choice: it only imposes regularity and global constraints (coming from the dimension, topology, and smooth structure) on what parallel transports/affine connections can live on it. Within these constraints, different choices of connection yield different affine connections.
\end{remark}

\begin{proposition}[Bianchi identities]
Let $\nabla$ be an affine connection, and let $T$ and $R$ be its torsion and curvature, respectively. Moreover, denote with $\sum_{\mathrm{cyc}\ X,Y,Z}$ the cyclic summation over $(X,Y,Z)$, i.e.
\begin{align}
\sum_{\mathrm{cyc}\ X,Y,Z}F(X,Y,Z)=F(X,Y,Z)+F(Y,Z,X)+F(Z,X,Y).
\end{align}
Then, for all $X,Y,Z,W\in\Gamma(TM)$, one has
\begin{align}
\sum_{\mathrm{cyc}\ X,Y,Z} R(X,Y,Z)
&=
\sum_{\mathrm{cyc}\ X,Y,Z}\big((\nabla_X T)(Y,Z)+T(T(X,Y),Z)\big)\\
\sum_{\mathrm{cyc}\ X,Y,Z} (\nabla_X R)(Y,Z,W)
&=
-\sum_{\mathrm{cyc}\ X,Y,Z} R\big(T(X,Y),Z,W\big).
\end{align}
These relations are called, respectively, first and second Bianchi identity.
\end{proposition}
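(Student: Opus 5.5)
Both identities are equalities between tensor fields: $R$, $T$, $\nabla T$ and $\nabla R$ are tensorial by the remark following the definition of torsion and curvature and by \cref{def:cov-der-algebraic}. So it suffices to check them pointwise, with $X,Y,Z,W$ taken from a local coordinate frame. I will therefore fix a chart and assume $X,Y,Z$ are coordinate fields $\partial_i,\partial_j,\partial_k$, so that all mutual brackets vanish. In particular $T(X,Y)=\nabla_XY-\nabla_YX$ and $R(X,Y)=[\nabla_X,\nabla_Y]$ as operators. This reduction is what I expect to be the only delicate point. One must make sure every term in the identities is $C^\infty(M)$-linear in $X,Y,Z,W$, which holds because we only use $R$, $T$ and their covariant derivatives. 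One must also make sure the cutoff/locality argument from \cref{prop:conn-to-PT} lets us apply $\nabla$ to coordinate fields defined only on the chart.

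For the first identity I expand the covariant derivative of the $(1,2)$-tensor $T$ via the Leibniz rule and compatibility with contractions:
\begin{align}
(\nabla_XT)(Y,Z)=\nabla_X\big(T(Y,Z)\big)-T(\nabla_XY,Z)-T(Y,\nabla_XZ).
\end{align}
With commuting fields, $\nabla_X(T(Y,Z))=\nabla_X\nabla_YZ-\nabla_X\nabla_ZY$. The cyclic sum of these terms regroups as $\sum_{\mathrm{cyc}}(\nabla_X\nabla_Y-\nabla_Y\nabla_X)Z=\sum_{\mathrm{cyc}}R(X,Y)Z$. It then remains to show that the cyclic sum of
\begin{align}
-T(\nabla_XY,Z)-T(Y,\nabla_XZ)+T(\nabla_XY-\nabla_YX,Z)
\end{align}
vanishes. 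After cancellation of the $T(\nabla_XY,Z)$ terms, this is $\sum_{\mathrm{cyc}}\big(T(\nabla_XZ,Y)-T(\nabla_YX,Z)\big)$, using skew-symmetry of $T$. The cyclic shift $(X,Y,Z)\mapsto(Y,Z,X)$ maps the first term onto the second, so the sum is zero.

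For the second identity I use the Jacobi identity for the operator commutator on $\mathrm{End}_{\mathbb R}(\Gamma(TM))$ from \cref{opliealg}:
\begin{align}
\sum_{\mathrm{cyc}}[\nabla_X,[\nabla_Y,\nabla_Z]]=0,
\end{align}
where $[\nabla_Y,\nabla_Z]=R(Y,Z)$ for commuting fields. Applying Leibniz to the $(1,3)$-tensor $R$, evaluated on $W$, gives
\begin{align}
[\nabla_X,R(Y,Z)]W=(\nabla_XR)(Y,Z,W)+R(\nabla_XY,Z)W+R(Y,\nabla_XZ)W.
\end{align}
Summing cyclically and using skew-symmetry $R(Y,\nabla_XZ)=-R(\nabla_XZ,Y)$, the last two terms reorganize, after one cyclic shift, into $\sum_{\mathrm{cyc}}R(\nabla_XY-\nabla_YX,Z)W=\sum_{\mathrm{cyc}}R(T(X,Y),Z,W)$. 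Combining with the Jacobi identity gives the stated second identity. The only other care needed is bookkeeping of the cyclic shifts, which I would check term by term.
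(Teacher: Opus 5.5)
Your proof is correct. Its core matches the paper's:
\begin{itemize}
\item the second identity comes from the operator Jacobi identity $\sum_{\mathrm{cyc}}[\nabla_X,[\nabla_Y,\nabla_Z]]=0$ together with the Leibniz expansion of $[\nabla_X,R(Y,Z)]W$;
\item the first comes from expanding $\nabla_X\big(T(Y,Z)\big)$ and reindexing cyclically.
\end{itemize}
The genuine difference is your preliminary reduction to commuting coordinate fields, so that $T(X,Y)=\nabla_XY-\nabla_YX$ and $R(X,Y)=[\nabla_X,\nabla_Y]$ hold from the start.

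The paper instead keeps $X,Y,Z$ arbitrary. It writes $[\nabla_Y,\nabla_Z]=\nabla_{[Y,Z]}+R(Y,Z,\cdot)$ and removes $\sum_{\mathrm{cyc}}\nabla_{[X,[Y,Z]]}W$ and $\sum_{\mathrm{cyc}}[X,[Y,Z]]$ using the Jacobi identity for the Lie bracket of vector fields. It disposes of the remaining bracket terms $R(X,[Y,Z],W)$ and $T(X,[Y,Z])$ by substituting $[Y,Z]=\nabla_YZ-\nabla_ZY-T(Y,Z)$, after which the $\nabla$-terms cancel cyclically.

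Your route trades that bracket bookkeeping for a short justification. Every term in both identities must be $C^\infty(M)$-multilinear, which you correctly check via tensoriality of $R$, $T$, $\nabla T$, $\nabla R$. Also, $\nabla$ and the operator Jacobi identity must be applied on $\Gamma(TU)$ for chart-local fields, which uses locality of the connection. The paper's computation is frame-free and holds directly for all global vector fields, so it needs no localization step.
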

\begin{proof}
Fix $W$ and use the Jacobi identity for the operators:
\begin{align}
\sum_{\mathrm{cyc}\ X,Y,Z} [\nabla_X,[\nabla_Y,\nabla_Z]]\,W&=0.
\end{align}
Since $[\nabla_Y,\nabla_Z]=\nabla_{[Y,Z]}+R(Y,Z,\cdot)$,
\begin{align}
0
&=\sum_{\mathrm{cyc}\ X,Y,Z}\Big([\nabla_X,\nabla_{[Y,Z]}]W+[\nabla_X,R(Y,Z,\cdot)]W\Big)\\
&=\sum_{\mathrm{cyc}\ X,Y,Z}\Big(\nabla_{[X,[Y,Z]]}W+R(X,[Y,Z],W)+(\nabla_X R)(Y,Z,W)\\
&\phantom{=\sum_{\mathrm{cyc}\ X,Y,Z}\Big(}+R(\nabla_X Y,Z,W)+R(Y,\nabla_X Z,W)\Big).
\end{align}
By Jacobi on vector fields, $\sum_{\mathrm{cyc}\ X,Y,Z}\nabla_{[X,[Y,Z]]}W=0$, hence
\begin{align}
&\sum_{\mathrm{cyc}\ X,Y,Z}(\nabla_X R)(Y,Z,W)=\\
&-\sum_{\mathrm{cyc}\ X,Y,Z}\Big(R(X,[Y,Z],W)+R(\nabla_X Y,Z,W)+R(Y,\nabla_X Z,W)\Big).
\end{align}
With $[Y,Z]=\nabla_Y Z-\nabla_Z Y-T(Y,Z)$, the contributions involving $\nabla_Y Z,\nabla_Z Y$ cancel after the cyclic sum, leaving
\begin{align}
\sum_{\mathrm{cyc}\ X,Y,Z}(\nabla_X R)(Y,Z,W)
&=-\sum_{\mathrm{cyc}\ X,Y,Z} R\big(T(X,Y),Z,W\big),
\end{align}
which is the second Bianchi identity. For the first, abbreviate the cyclic sum over $X,Y,Z$ by $\sum_{\mathrm{cyc}}$. Retaining the second derivatives gives
\begin{align}
\sum_{\mathrm{cyc}}R(X,Y,Z)
&=\sum_{\mathrm{cyc}}\big(\nabla_X(\nabla_YZ-\nabla_ZY)-\nabla_{[Y,Z]}X\big)\\
&=\sum_{\mathrm{cyc}}\big(\nabla_X(T(Y,Z))+\nabla_X[Y,Z]-\nabla_{[Y,Z]}X\big)\\
&=\sum_{\mathrm{cyc}}\big(\nabla_X(T(Y,Z))+T(X,[Y,Z])\big),
\end{align}
where the last equality uses $\sum_{\mathrm{cyc}}[X,[Y,Z]]=0$. Expanding the derivative of $T$ and substituting $[Y,Z]=\nabla_YZ-\nabla_ZY-T(Y,Z)$, the terms involving covariant derivatives of $X,Y,Z$ cancel cyclically. Hence
\begin{align}
\sum_{\mathrm{cyc}}R(X,Y,Z)
&=\sum_{\mathrm{cyc}}\big((\nabla_XT)(Y,Z)-T(X,T(Y,Z))\big)\\
&=\sum_{\mathrm{cyc}}\big((\nabla_XT)(Y,Z)+T(T(X,Y),Z)\big),
\end{align}
using skew symmetry of $T$ and cyclic relabeling in the last step.
\end{proof}

\begin{remark}\label{bianchiT0}
    In the particular case of a torsion--free connection, one has
    \begin{align}
\sum_{\mathrm{cyc}\ X,Y,Z} R(X,Y,Z)
&=0\\
\sum_{\mathrm{cyc}\ X,Y,Z} (\nabla_X R)(Y,Z,W)
&=0,
\end{align}
for all $X,Y,W,Z\in\Gamma(TM)$.
\end{remark}

\begin{definition}[Ricci tensor]
Let $\nabla$ be an affine connection with curvature $R$. The tensor field $\mathrm{Ric}\in\Gamma(T^{(0,2)}(M))$, defined as
\begin{align}
\mathrm{Ric}(Y,Z)\coloneqq \operatorname{tr}\big(W\mapsto R(W,Y,Z)\big),
\end{align}
is called the \emph{Ricci tensor}.
\end{definition}

\begin{remark}
Equivalently, in local coordinates, the Ricci tensor takes the form\footnote{Since physicists like to use coordinates, the Ricci tensor is often denoted just as $R_{ij}$.}
\begin{align}
\mathrm{Ric}_{ij}=R^k{}_{kij}.
\end{align}
\end{remark}

\begin{proposition}[Ricci identity]\label{riccid}
Let $\nabla$ be a torsion--free connection with curvature $R$ and Ricci tensor 
$\mathrm{Ric}$. Then, in local coordinates, we have
\begin{align}
\nabla_a R^a{}_{\,bcd}  = \nabla_b \mathrm{Ric}_{cd}  -  \nabla_c \mathrm{Ric}_{bd}.
\end{align}
\end{proposition}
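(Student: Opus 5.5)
The plan is to obtain the identity as a contraction of the second Bianchi identity in the torsion--free form of \cref{bianchiT0}, using that covariant differentiation commutes with contractions (property (v) of \cref{def:cov-der-algebraic}). Nothing beyond the index conventions already fixed in the paper should be needed: $R(\partial_i,\partial_j)\partial_k=R^\ell{}_{ijk}\partial_\ell$, where the first two lower slots are the skew ones, and $\mathrm{Ric}_{ij}=R^k{}_{kij}$.

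\emph{Step 1: components of the second Bianchi identity.} I write $\nabla_a R^\ell{}_{bcd}$ for the components of the $(1,4)$--tensor $\nabla R$, with the differentiation slot listed first. Then I evaluate the torsion--free second Bianchi identity of \cref{bianchiT0} on $X=\partial_a$, $Y=\partial_b$, $Z=\partial_c$, $W=\partial_d$. Since $(\nabla_X R)$ is $C^\infty(M)$--linear in all its slots, this gives
\begin{align}
\nabla_a R^\ell{}_{bcd}+\nabla_b R^\ell{}_{cad}+\nabla_c R^\ell{}_{abd}=0 .
\end{align}

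\emph{Step 2: contraction.} Next I contract $\ell$ with $a$. Because $\nabla$ commutes with contractions, contracting the components of $\nabla R$ gives the components of $\nabla$ applied to the contracted tensor. Applying this to the last two terms:
\begin{itemize}
\item in the second term, $R^a{}_{cad}$ is the contraction of $R$ in its upper index and its second lower slot;
\item in the third term, $R^a{}_{abd}$ is the contraction of $R$ in its upper index and its first lower slot.
\end{itemize}
Using the skew symmetry $R^\ell{}_{ijk}=-R^\ell{}_{jik}$ from the coordinate expression of curvature, we get $R^a{}_{cad}=-R^a{}_{acd}=-\mathrm{Ric}_{cd}$ and $R^a{}_{abd}=\mathrm{Ric}_{bd}$. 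Hence the second term becomes $-\nabla_b\mathrm{Ric}_{cd}$ and the third becomes $\nabla_c\mathrm{Ric}_{bd}$.

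\emph{Step 3: conclusion.} Substituting these into the contracted identity gives
\begin{align}
\nabla_a R^a{}_{bcd}-\nabla_b\mathrm{Ric}_{cd}+\nabla_c\mathrm{Ric}_{bd}=0 ,
\end{align}
which is the claimed identity.

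The main obstacle is bookkeeping rather than ideas. First, the slot placement in the operator form $(\nabla_X R)(Y,Z,W)$ must match the component labelling $R^\ell{}_{ijk}$, so that the cyclic sum runs over exactly the indices $a,b,c$ and leaves $d$ fixed. Second, the sign in the trace must be tracked, since $\mathrm{Ric}$ contracts the upper index with the \emph{first} lower slot. I would check both points directly against the coordinate formula for $R^\ell{}_{ijk}$ and against the definition $\mathrm{Ric}(Y,Z)=\operatorname{tr}(W\mapsto R(W,Y,Z))$ before contracting.
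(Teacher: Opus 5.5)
Your proposal is correct and is essentially the paper's own argument: contract the torsion--free second Bianchi identity $\nabla_a R^\ell{}_{bcd}+\nabla_b R^\ell{}_{cad}+\nabla_c R^\ell{}_{abd}=0$ over $\ell=a$, then use skew symmetry in the first two lower slots to identify $R^a{}_{abd}=\mathrm{Ric}_{bd}$ and $R^a{}_{cad}=-\mathrm{Ric}_{cd}$. The only difference is that you state explicitly that $\nabla$ commutes with contractions, which the paper uses tacitly when it replaces $\nabla_c R^a{}_{abd}$ by $\nabla_c\mathrm{Ric}_{bd}$.
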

\begin{proof}
The second Bianchi identity reads
\begin{align}
\nabla_a R^e{}_{\,bcd}  +  \nabla_c R^e{}_{\,abd}  +  \nabla_b R^e{}_{\,cad}  =  0.
\end{align}
Contract $e$ with $a$:
\begin{align}
\nabla_a R^a{}_{\,bcd}  +    \nabla_c R^a{}_{\,abd}  +  \nabla_b R^a{}_{\,cad}  =  0.
\end{align}
By definition of $\mathrm{Ric}$ and $R^a{}_{\,abd}=-R^a{}_{\,bad}$,
\begin{align}
R^a{}_{\,abd}  =  \mathrm{Ric}_{bd}, 
\qquad
R^a{}_{\,cad}  =  -\mathrm{Ric}_{cd}.
\end{align}
Hence, we get
\begin{align}
\nabla_a R^a{}_{\,bcd}  +  \nabla_c \mathrm{Ric}_{bd}  -  \nabla_b \mathrm{Ric}_{cd}  =  0,
\end{align}
which gives the claim.
\end{proof}

\begin{definition}[Holonomy of a connection]
Let $\nabla$ be an affine connection on $M$ and $P$ its associated parallel transport. Here parallel transport is extended to piecewise smooth curves by composing the transports on their smooth segments; reversal gives the inverse map by the ODE and reparametrization. Throughout this section, loops may be piecewise smooth. Moreover, let $\gamma$ denote such a loop, namely $\gamma(0)=\gamma(1)=x\in M$. The \emph{holonomy at $x$} is
\begin{align}
\mathrm{Hol}_x(\nabla)\coloneqq \{\,P^\gamma_{0,1}:\ \gamma \text{ loop based at } x\,\}.
\end{align}
\end{definition}

\begin{remark}
    This object is intended as $\mathrm{Hol}_x\subset GL(T_xM)$, and it forms a group.
\end{remark}
\begin{remark}
    The \emph{restricted holonomy group} $\mathrm{Hol}^0_x(\nabla)$ is the subgroup generated by $P_\gamma$ with $\gamma$ null--homotopic at $x$; it is the identity component of $\mathrm{Hol}_x(\nabla)$. Its Lie algebra is denoted by
\begin{align}
\mathfrak{hol}_x(\nabla)\coloneqq T_{\mathrm{id}}\mathrm{Hol}_x(\nabla)\subset \mathrm{End}(T_xM).
\end{align}
\end{remark}

\begin{proposition}[Base--point independence up to conjugation]
Let $\nabla$ be an affine connection on $M$ with parallel transport $P$.
If $\gamma\colon[0,1]\to M$ is a smooth curve such that $\gamma(0)=x$ and $\gamma(1)=y$, then\footnote{Notice that the sign “$=$” is intended as equivalence of subgroups of $\mathrm{GL}(T_yM)$ and $\mathfrak{gl}(T_yM)$.}
\begin{align}
\mathrm{Hol}_y(\nabla)
&= \operatorname{Ad}_{P^{\gamma}_{0,1}}\!\big(\mathrm{Hol}_x(\nabla)\big)
\coloneqq P^{\gamma}_{0,1}\,\mathrm{Hol}_x(\nabla)\,(P^{\gamma}_{0,1})^{-1}\\[4pt]
\mathfrak{hol}_y(\nabla)
&= \operatorname{Ad}_{P^{\gamma}_{0,1}}\!\big(\mathfrak{hol}_x(\nabla)\big)
\coloneqq P^{\gamma}_{0,1}\,\mathfrak{hol}_x(\nabla)\,(P^{\gamma}_{0,1})^{-1},
\end{align}
where $P^{\gamma}_{0,1}\colon T_xM\to T_yM$ is the fiberwise linear isomorphism induced by $\nabla$ along~$\gamma$.
\end{proposition}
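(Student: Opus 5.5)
The plan is to prove the group statement by conjugating loops: a loop at $x$ is turned into a loop at $y$ by travelling back along $\gamma$, going around it, and returning along $\gamma$. The Lie algebra statement is then obtained by differentiating. Write $\bar\gamma(t)\coloneqq\gamma(1-t)$ for the reversed curve and $A\coloneqq P^\gamma_{0,1}$. By the reparametrization axiom (extended to orientation-reversing maps through the ODE, as stated in the definition of holonomy) together with \cref{lem:P-inverse}, we have $P^{\bar\gamma}_{0,1}=A^{-1}$.

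Let $\sigma$ be a piecewise smooth loop at $x$. Form the concatenation $\tau\coloneqq\bar\gamma\ast\sigma\ast\gamma$, rescaled to $[0,1]$, which is a piecewise smooth loop at $y$. Transport along piecewise smooth curves is defined as the composition of the transports on the smooth segments, and it is unaffected by the rescaling (reparametrization invariance). Therefore
\begin{align}
P^\tau_{0,1}=P^{\gamma}_{0,1}\circ P^\sigma_{0,1}\circ P^{\bar\gamma}_{0,1}=A\,P^\sigma_{0,1}\,A^{-1}.
\end{align}
This gives $A\,\mathrm{Hol}_x(\nabla)\,A^{-1}\subset\mathrm{Hol}_y(\nabla)$. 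Applying the same argument to $\bar\gamma$, which runs from $y$ to $x$ and carries transport $A^{-1}$, gives $A^{-1}\mathrm{Hol}_y(\nabla)A\subset\mathrm{Hol}_x(\nabla)$. The two inclusions together give equality.

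The same construction takes null-homotopic loops to null-homotopic loops, since conjugating by a path preserves null-homotopy. Hence the identical argument shows $\mathrm{Hol}^0_y=A\,\mathrm{Hol}^0_x\,A^{-1}$.

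For the Lie algebras, note that $\operatorname{Ad}_A:\mathrm{GL}(T_xM)\to\mathrm{GL}(T_yM)$, $g\mapsto AgA^{-1}$, is the restriction of a linear isomorphism $\mathrm{End}(T_xM)\to\mathrm{End}(T_yM)$. It maps $\mathrm{id}$ to $\mathrm{id}$ and carries $\mathrm{Hol}_x$ onto $\mathrm{Hol}_y$. Take a smooth curve $c(s)$ in $\mathrm{Hol}_x$ (in the identity component) with $c(0)=\mathrm{id}$; then $Ac(s)A^{-1}$ is a curve in $\mathrm{Hol}_y$, and its derivative at $0$ is $A\,\dot c(0)\,A^{-1}$. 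This yields $A\,\mathfrak{hol}_x A^{-1}\subset\mathfrak{hol}_y$, and the reverse inclusion follows by symmetry using $A^{-1}$.

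The main obstacle is the careful justification of the piecewise-smooth composition rule and of the reversal identity $P^{\bar\gamma}_{0,1}=(P^\gamma_{0,1})^{-1}$. I would handle this with the ODE uniqueness argument from \cref{prop:conn-to-PT}: the field $s\mapsto V(1-s)$ is parallel along $\bar\gamma$ whenever $V$ is parallel along $\gamma$, since the chain rule flips the signs of both $\dot V$ and $\dot\gamma$ and the parallel equation is linear in $\dot\gamma$. I would also take the tangent-space definition of $\mathfrak{hol}$ as given by the Lie subgroup structure of $\mathrm{Hol}^0$, and not reprove that structure here.
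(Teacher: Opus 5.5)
Your proposal is correct and follows the paper's proof. You conjugate loops by the path $\gamma$, use the composition rule for piecewise smooth transport together with $P^{\bar\gamma}_{0,1}=(P^{\gamma}_{0,1})^{-1}$, and differentiate the conjugation map to obtain the Lie algebra statement. The differences are minor: you carry loops at $x$ to loops at $y$ rather than the reverse, and you write out both inclusions and the ODE justification of the reversal identity, which the paper leaves implicit in ``varying $\ell$''.
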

\begin{proof}
Let $\ell$ be a loop based at $y$. Consider the loop based at $x$ obtained by concatenation
$\widetilde\ell\coloneqq \bar\gamma * \ell * \gamma$, where $\bar\gamma$ denotes $\gamma$ traversed backwards.
By the functoriality of parallel transport under concatenation,
\begin{align}
    P^{\widetilde\ell}_{0,1}
= P^{\bar\gamma}_{0,1}\,P^{\ell}_{0,1}\,P^{\gamma}_{0,1}
= (P^{\gamma}_{0,1})^{-1}\,P^{\ell}_{0,1}\,P^{\gamma}_{0,1}.
\end{align}
Hence $P^{\ell}_{0,1} = P^{\gamma}_{0,1}\,P^{\widetilde\ell}_{0,1}\,(P^{\gamma}_{0,1})^{-1}$, 
and varying $\ell$ yields 
$\mathrm{Hol}_y(\nabla) = \operatorname{Ad}_{P^{\gamma}_{0,1}}(\mathrm{Hol}_x(\nabla))$.
Differentiating the conjugation action gives the corresponding identity for the Lie algebras.
\end{proof}

\begin{theorem}[Ambrose--Singer]
Let $\nabla$ be an affine connection on $M$ with curvature $R$, and let $x\in M$.
Moreover, let $\gamma\colon [0,1]\to M$ denote a smooth curve. Then, we have
\begin{align}
\mathfrak{hol}_x(\nabla)
= \operatorname{span}_\gamma\Big\{
P^{\gamma}_{1,0}\,R_{\gamma(1)}(u,v)\,P^{\gamma}_{0,1}
 \Big| \ \gamma(0)=x,\ \,
u,v\in T_{\gamma(1)}M
\Big\}
\subset \mathfrak{gl}(T_xM),
\end{align}
Equivalently,
\begin{align}
\mathfrak{hol}_x(\nabla)
= \operatorname{span}\Big\{
\operatorname{Ad}_{(P^{\gamma}_{0,1})^{-1}}\!\big(R_{\gamma(1)}(u,v)\big)
 \Big| \ 
\gamma(0)=x,\ \ u,v\in T_{\gamma(1)}M
\Big\},
\end{align}
with $\operatorname{Ad}_{Q}(A)\coloneqq Q\,A\,Q^{-1}$.
\end{theorem}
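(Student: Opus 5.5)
We prove the two inclusions separately. Write $\mathfrak h\subset\mathfrak{gl}(T_xM)$ for the span on the right-hand side. The first step is an observation: $\mathfrak h$ is independent of the parametrization of the curves used. This follows from the reparametrization invariance of $P$ in \cref{def:parallel-transport}. We also use that, by Lie's theorem on path-connected subgroups, $\mathrm{Hol}^0_x(\nabla)$ is a connected Lie subgroup of $GL(T_xM)$; its Lie algebra is what we call $\mathfrak{hol}_x(\nabla)$. Passing from $\mathrm{Hol}_x$ to its identity component does not change the Lie algebra.

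\textbf{Inclusion $\mathfrak h\subset\mathfrak{hol}_x(\nabla)$.} Fix a curve $\gamma$ from $x$ to $y=\gamma(1)$ and vectors $u,v\in T_yM$. Choose coordinates around $y$ and let $\ell_\epsilon$ be the small coordinate parallelogram spanned by $\sqrt\epsilon\,u$ and $\sqrt\epsilon\,v$. We use the standard expansion of transport around such a loop:
\begin{align}
P^{\ell_\epsilon}=\mathrm{id}-\epsilon\,R_y(u,v)+o(\epsilon).
\end{align}
This can be obtained from the first-order ODE for $P$ in \cref{prop:conn-to-PT} expanded to second order, together with the coordinate formula for $R$. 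Conjugating by transport along $\gamma$ gives a curve $c(\epsilon)=P^\gamma_{1,0}P^{\ell_\epsilon}P^\gamma_{0,1}$ inside $\mathrm{Hol}^0_x$ (the loop $\bar\gamma*\ell_\epsilon*\gamma$ is null-homotopic), with $c(0)=\mathrm{id}$. Its derivative at $\epsilon=0$ is $-P^\gamma_{1,0}R_y(u,v)P^\gamma_{0,1}$. Since $\mathrm{Hol}^0_x$ is a Lie subgroup and $c$ is a $C^1$ curve in it, this derivative lies in $\mathfrak{hol}_x$. Hence $\mathfrak h\subset\mathfrak{hol}_x(\nabla)$.

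\textbf{Inclusion $\mathfrak{hol}_x(\nabla)\subset\mathfrak h$.} This is the main step, and the plan is the classical reduction-of-structure argument.

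1. Let $H\subset GL(T_xM)$ be the connected Lie subgroup with Lie algebra $\mathfrak h$. By construction $\mathfrak h$ is $\operatorname{Ad}$-invariant under $\mathrm{Hol}_x$, since it is built from all curves starting at $x$.

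2. Define the subbundle $Q$ of the frame bundle consisting of frames at $y$ of the form $P^\sigma_{0,1}\circ g$, where $\sigma$ runs over curves from $x$ to $y$ and $g\in H$. Equivalently, $Q$ is the holonomy bundle enlarged by $H$.

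3. We do not need the full bundle $Q$; it suffices to show that every null-homotopic loop $\ell$ at $x$ has $P^\ell\in H$. Given a homotopy $\ell_s$ from the constant loop to $\ell$, set $A(s)=P^{\ell_s}$. We differentiate in $s$ using the smooth dependence guaranteed by the regularity axiom. A variation-of-transport formula, obtained by differentiating the ODE of \cref{prop:conn-to-PT} in the family parameter and using the coordinate expression of $R$, gives
\begin{align}
A(s)^{-1}A'(s)=\int_0^1 P^{\ell_s}_{t,0}\,R\big(\partial_s\ell_s,\partial_t\ell_s\big)\,P^{\ell_s}_{0,t}\,dt .
\end{align}
Each integrand is a conjugate of a curvature value by transport along a curve starting at $x$. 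It therefore lies in $\mathfrak h$, and so does the integral, because $\mathfrak h$ is a finite-dimensional linear subspace.

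4. Hence $A$ solves $A'=A\,\xi(s)$ with $\xi(s)\in\mathfrak h$ and $A(0)=\mathrm{id}$. Since $H$ is a Lie subgroup, the solution stays in $H$. This gives $\mathrm{Hol}^0_x\subset H$, and therefore $\mathfrak{hol}_x\subset\mathfrak h$.

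\textbf{Main obstacle.} I expect the hardest part to be deriving the variation formula in step 3 rigorously. It requires a smooth homotopy, which may need smoothing of piecewise-smooth loops, and it requires reducing from $\mathrm{Hol}^0_x$ generated by null-homotopic loops to honest smooth homotopies. My first approach would be to prove the formula in a single chart by differentiating the linear ODE, obtaining $\partial_s$ of the solution via Duhamel's formula. I would then pass to the global case by subdividing the homotopy square into small rectangles and composing, using the composition axiom and \cref{lem:P-inverse}.
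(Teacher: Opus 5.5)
The paper states Ambrose--Singer without proof (it is only invoked in the sketch of the flatness proposition that follows), so there is no argument in the paper to compare yours with. Your strategy is the standard one and works: small parallelograms give $\mathfrak h\subset\mathfrak{hol}_x(\nabla)$, and the variation-of-transport formula along a smooth null-homotopy gives the converse. There is, however, one missing step on which the whole converse inclusion rests.

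\textbf{The gap.} In step 1 you take ``the connected Lie subgroup $H$ with Lie algebra $\mathfrak h$'', and in step 4 you conclude that the solution of $A'=A\,\xi(s)$ with $\xi(s)\in\mathfrak h$ stays in $H$. Both presuppose $[\mathfrak h,\mathfrak h]\subset\mathfrak h$, which you never prove. A priori $\mathfrak h$ is only a linear span of conjugated curvature values. If it were not a subalgebra, there would be no Lie subgroup with Lie algebra $\mathfrak h$. Moreover, curves with $A^{-1}A'\in\mathfrak h$ would generate, through brackets, directions outside $\mathfrak h$, so step 4 would fail.

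The $\operatorname{Ad}$-invariance you record in step 1 does not give bracket closure by itself; you must combine it with your first inclusion. Take a generator $B=P^\gamma_{1,0}R_y(u,v)P^\gamma_{0,1}$. Your curve $c(\epsilon)=P^\gamma_{1,0}P^{\ell_\epsilon}P^\gamma_{0,1}$ lies in $\mathrm{Hol}_x(\nabla)$ and satisfies $c(\epsilon)=\mathrm{id}-\epsilon B+o(\epsilon)$. Hence for every $C\in\mathfrak h$ you get $\mathfrak h\ni\operatorname{Ad}_{c(\epsilon)}C=C-\epsilon[B,C]+o(\epsilon)$. Since $\mathfrak h$ is a closed subspace, $[B,C]\in\mathfrak h$, and bilinearity gives $[\mathfrak h,\mathfrak h]\subset\mathfrak h$. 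Equivalently, $\mathfrak h$ is an $\operatorname{Ad}$-invariant subspace of $\mathfrak{hol}_x(\nabla)$, hence an ideal. Insert this before step 1 and the argument closes.

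\textbf{Smaller points.}
\begin{itemize}
\item Your variation formula has the wrong sign. Write $L(s,t)=\ell_s(t)$. On the pulled-back bundle, $\nabla_t\nabla_s=\nabla_s\nabla_t-R(\partial_sL,\partial_tL)$, which gives $A^{-1}A'=-\int_0^1P^{\ell_s}_{t,0}R(\partial_s\ell_s,\partial_t\ell_s)P^{\ell_s}_{0,t}\,dt$. This is the sign consistent with your (correct) parallelogram expansion $P^{\ell_\epsilon}=\mathrm{id}-\epsilon R_y(u,v)+o(\epsilon)$. The error is harmless, since $\mathfrak h$ is a linear subspace.
\item The result that path-connected subgroups of a Lie group are Lie subgroups is Yamabe's theorem. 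The path-connectedness of $\mathrm{Hol}^0_x(\nabla)$ that it requires is supplied by your own smooth null-homotopies, via the paths $s\mapsto P^{\ell_s}$.
\item Checking $\operatorname{Ad}$-invariance uses concatenations such as $\bar\sigma$ followed by $\gamma$, which are only piecewise smooth. Reparametrize them monotonically with stops at the corners; by the transport ODE this does not change the transport.
\item Step 2 is never used and can be dropped.
\end{itemize}
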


\begin{proposition}\label{propholoomo}
Let $\nabla$ be an affine connection on $M$ with curvature $R$, and let $U\subset M$ be a path--connected open set.
Let $\gamma_0,\gamma_1:[0,1]\to U$ be smooth curves with the same endpoints and homotopic in $U$ with fixed endpoints.\\
If $R= 0$ on $U$, then
\begin{align}
P^{\gamma_0}_{0,1}=P^{\gamma_1}_{0,1}.
\end{align}
In particular, parallel transport on $U$ depends only on the homotopy class (with fixed endpoints) of the curve.\\
Equivalently, if $R= 0$ on $U$, then
\begin{align}
\mathfrak{hol}_x\big(\nabla|_U\big)=\{0\},
\qquad
\mathrm{Hol}^0_x\big(\nabla|_U\big)=\{\mathrm{id}\},
\end{align}
for any $x\in U$.
\end{proposition}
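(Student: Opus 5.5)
The plan is to prove homotopy invariance directly, by a variational argument on a two-parameter family of curves. Let $H\colon[0,1]\times[0,1]\to U$ be a fixed-endpoint homotopy with $H(0,\cdot)=\gamma_0$, $H(1,\cdot)=\gamma_1$, $H(\sigma,0)=x$ and $H(\sigma,1)=y$. If the homotopy is only continuous, I first replace it by a smooth one with the same boundary behaviour; this is possible by standard smoothing in a manifold, keeping the image inside the open set $U$. Fix $v\in T_xM$ and define the field
\begin{align}
V(\sigma,t)\coloneqq P^{H(\sigma,\cdot)}_{0,t}v ,
\end{align}
so that $V$ is parallel along each curve $t\mapsto H(\sigma,t)$. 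The regularity axiom of \cref{def:parallel-transport}, applied to the smooth family $\Upsilon=H$, guarantees that $V$ is a smooth section of $H^*TM$. The claim then reduces to showing that
\begin{align}
\sigma\mapsto V(\sigma,1)\in T_yM
\end{align}
is constant.

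The key step is to compute $\nabla_\sigma V$, the pullback covariant derivative along the $\sigma$-curves in the sense of \cref{remnablatenscurv}. In a chart this derivative is given by the usual formula, and curvature appears through the commutation identity
\begin{align}
\nabla_t\nabla_\sigma V-\nabla_\sigma\nabla_t V=R(\partial_tH,\partial_\sigma H)V .
\end{align}
I will verify this in coordinates using the coordinate expression of $R$. The verification follows the same computation as in the coordinate proposition for curvature, with $\partial_t$ and $\partial_\sigma$ commuting on $[0,1]^2$, so no bracket term appears. Since $\nabla_tV=0$ and $R=0$ on $U$, this gives $\nabla_t(\nabla_\sigma V)=0$, so $W\coloneqq\nabla_\sigma V$ is parallel along each $t$-curve. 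At $t=0$ we have $V(\sigma,0)=v$ for every $\sigma$, and the $\sigma$-curve $\sigma\mapsto H(\sigma,0)=x$ is constant, so $W(\sigma,0)=\partial_\sigma v=0$. By uniqueness in \cref{coordpartransp}, $W\equiv0$. At $t=1$ the $\sigma$-curve is again constant (equal to $y$), so there $\nabla_\sigma$ reduces to the ordinary derivative in the fixed space $T_yM$. Hence $\partial_\sigma V(\sigma,1)=0$, which gives
\begin{align}
P^{\gamma_0}_{0,1}v=P^{\gamma_1}_{0,1}v .
\end{align}
Since $v$ is arbitrary, the first claim follows.

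For the equivalent formulation, take a loop at $x$ that is null-homotopic in $U$ and compare it with the constant loop, whose transport is $\mathrm{id}$ by the identity axiom. Piecewise smooth loops are handled by the same argument after reparametrizing the segments so that the concatenation becomes smooth, using the reparametrization axiom. This gives $\mathrm{Hol}^0_x(\nabla|_U)=\{\mathrm{id}\}$, and hence $\mathfrak{hol}_x(\nabla|_U)=\{0\}$ as its Lie algebra. This is also consistent with Ambrose--Singer, since every generator contains $R=0$.

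I expect the main obstacle to be making the commutation identity rigorous for fields along a map rather than ambient fields. Because $H$ need not be an immersion, I cannot extend $V$ to vector fields on $M$. I would therefore work entirely in coordinate components, using the definition in \cref{defconnpullback}: the terms involving $\partial_\sigma\partial_t H$ cancel by the torsion-independent symmetry of mixed partial derivatives, and the remaining terms assemble into $R^\ell{}_{ijk}\,\partial_tH^i\,\partial_\sigma H^j\,V^k$. Covering $H([0,1]^2)$ by finitely many charts is routine, since the identity is intrinsic.
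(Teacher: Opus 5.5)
Your argument is correct, but it runs in the opposite direction from the paper's. The paper's sketch starts from Ambrose--Singer. Since $R=0$ on $U$, every generator vanishes, so $\mathfrak{hol}_x(\nabla|_U)=\{0\}$. Because $\mathrm{Hol}^0_x$ is a connected Lie group with that Lie algebra, it is trivial. Homotopy invariance then follows because $(P^{\gamma_1}_{0,1})^{-1}P^{\gamma_0}_{0,1}$ is the transport around the null-homotopic loop $\bar\gamma_1*\gamma_0$.

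You instead prove homotopy invariance first and directly. You use the identity $\nabla_t\nabla_\sigma V-\nabla_\sigma\nabla_tV=R(\partial_tH,\partial_\sigma H)V$ for the pullback connection along a smooth fixed-endpoint homotopy, together with uniqueness for the transport ODE, and then read off the holonomy statements.

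The paper's route is a two-line reduction that places flatness inside the holonomy framework. However, it rests entirely on results the notes state without proof: Ambrose--Singer, and the Lie-group structure of $\mathrm{Hol}^0_x$ with Lie algebra $\mathfrak{hol}_x$. Your route is self-contained. It needs only the coordinate formula for $R$, linear ODE uniqueness, and Whitney smoothing of the homotopy relative to the boundary of the square. It uses the Lie-group facts only in the easy final step from $\mathrm{Hol}^0_x=\{\mathrm{id}\}$ to $\mathfrak{hol}_x=\{0\}$.

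Your coordinate check is also right that the $\Gamma\,\partial_t\partial_\sigma H$ terms cancel without torsion-freeness. So your argument covers arbitrary affine connections, as the statement requires.

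One small slip: the transport around the constant loop is the identity because the transport equation has a vanishing velocity term. It does not follow from the axiom $P^\gamma_{t,t}=\mathrm{id}$.
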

\begin{proof}[Sketch]
By Ambrose--Singer, $R|_U= 0$ implies $\mathfrak{hol}_x(\nabla|_U)=\{0\}$, hence $\mathrm{Hol}^0_x\big(\nabla|_U\big)=\{\mathrm{id}\}$. Parallel transports along homotopic curves differ by an element of $\mathrm{Hol}^0_x$, therefore they coincide.
\end{proof}

\begin{corollary}\label{corhom}
If $R= 0$ on a simply connected open set $U\subseteq M$, then for every $x\in U$ and every loop $\gamma$ based at $x$ with image in $U$,
\begin{align}
P^{\gamma}_{0,1}=\mathrm{id}_{T_xM},
\qquad
\mathrm{Hol}_x\big(\nabla|_U\big)=\{\mathrm{id}\}.
\end{align}
\end{corollary}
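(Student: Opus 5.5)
The plan is to reduce the statement to \cref{propholoomo}. Fix $x\in U$ and a loop $\gamma\colon[0,1]\to U$ based at $x$; by the convention in the definition of holonomy, $\gamma$ may be piecewise smooth. Since $U$ is simply connected, $\gamma$ is homotopic with fixed endpoints, inside $U$, to the constant loop $c_x(t)\equiv x$. Simple connectivity includes path-connectedness, so the hypotheses of \cref{propholoomo} are met on $U$. Applying it with $\gamma_0=\gamma$ and $\gamma_1=c_x$ gives $P^{\gamma}_{0,1}=P^{c_x}_{0,1}$.

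The second step is to compute the transport along the constant curve. For $c_x$ we have $\dot c_x\equiv 0$, so in any chart the parallel equation of \cref{coordpartransp} reduces to $\dot V^k=0$. Its unique solution with $V(0)=v$ is constant, hence $P^{c_x}_{0,1}=\mathrm{id}_{T_xM}$. This yields $P^\gamma_{0,1}=\mathrm{id}_{T_xM}$ for every loop in $U$ based at $x$. Since $\mathrm{Hol}_x(\nabla|_U)$ is by definition the set of all such $P^\gamma_{0,1}$, it equals $\{\mathrm{id}\}$.

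The main obstacle is regularity. \cref{propholoomo} is stated for smooth curves and smooth homotopies, while loops here may be only piecewise smooth and the homotopy provided by simple connectivity is only continuous. I would handle this in two ways.

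\begin{enumerate}
\item Reparametrize each smooth segment of the loop so that it is flat at its junctions. By the reparametrization axiom this leaves the transport unchanged and makes the concatenated loop smooth.
\item Use the standard fact that a continuous homotopy between smooth paths with fixed endpoints can be approximated, inside the open set $U$, by a smooth one.
\end{enumerate}

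Alternatively, one can note that $\mathrm{Hol}_x=\mathrm{Hol}^0_x$ when every loop in $U$ is null-homotopic. Then the conclusion $\mathrm{Hol}^0_x(\nabla|_U)=\{\mathrm{id}\}$ of \cref{propholoomo} gives the result directly.
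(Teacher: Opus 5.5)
Your proof is correct and follows the paper's route: the paper gives no separate argument and reads the corollary off \cref{propholoomo}, since in a simply connected $U$ every loop is homotopic to the constant loop, so that $\mathrm{Hol}_x=\mathrm{Hol}^0_x=\{\mathrm{id}\}$. One small correction to your regularity step: the reparametrizations that flatten a segment at its junctions are not diffeomorphisms of the closed segments, so the reparametrization axiom does not literally apply, and you should instead use the chain rule on the linear parallel-transport ODE, which holds for any smooth reparametrization.
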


\begin{remark}
    In the literature, the condition “$R=0$ in a neighborhood” is also denoted as $R\equiv 0$, to distinguish that from vanishing at a single point.
\end{remark}

\begin{example}[$TS^2$ admits no flat connection]
In \cref{remesconn}, we observed that the same smooth manifold may carry different affine connections, yet these are constrained by the manifold’s own smooth topology. Now that we have basic tools from homology theory, we can treat a classical case: \textit{on the $2$--sphere $S^2$ there is no flat affine connection on the tangent bundle}.\\
Assume, for contradiction, that there exists an affine connection $\nabla$ on $TS^2$ with curvature $R\equiv 0$. Following from \cref{propholoomo}, since $S^2$ is simply connected, parallel transport for $\nabla$ depends only on endpoints; hence for any loop $\gamma$ based at $x\in S^2$ one has
\begin{align}
P^\gamma_{0,1}=\mathrm{id}_{T_xS^2},
\end{align}
so the holonomy is trivial.\footnote{Since $S^2$ is simply connected, one has $\mathrm{Hol}_x(\nabla)=\mathrm{Hol}^0_x(\nabla)$ for every $x\in S^2$.} Fix $x_0\in S^2$ and a basis $(v_1,v_2)$ of $T_{x_0}S^2$. For each $x\in S^2$, choose a smooth curve $\gamma_x$ from $x_0$ to $x$ and define
\begin{align}
E_i(x)\coloneqq P^{\gamma_x}_{0,1}(v_i)\in T_xS^2\qquad (i=1,2).
\end{align}
The trivial holonomy implies that $E_i(x)$ does not depend on the choice of $\gamma_x$ (homotopy invariance), and the definition of parallel transport gives that $x\mapsto E_i(x)$ is smooth. Furthermore, $P^{\gamma_x}_{0,1}$ is an isomorphism, hence $(E_1(x),E_2(x))$ is a basis of $T_xS^2$ for every $x\in S^2$, producing a \textit{global} smooth frame of $TS^2$ and a bundle isomorphism $TS^2\cong S^2\times\mathbb R^2$. This contradicts the non-parallelizability\footnote{Triviality of the tangent bundle.} of $S^2$ (e.g., the hairy ball theorem). Therefore, no flat affine connection on $TS^2$ exists.
\end{example}

\begin{example}[Beware of the “inverse” implication]
Let $M=\mathbb R^2$ with coordinates $(x^1,x^2)$.
Fix the unit circle $\gamma(t)=(\cos(2\pi t),\sin(2\pi t))$, $t\in[0,1]$, and let 
\begin{align}
A=\{\,x\in\mathbb R^2:\ 1-\varepsilon<|x|<1+\varepsilon\,\}
\end{align}
be an annular neighborhood of $\gamma$.
Choose a smooth function $f:\mathbb R^2\to\mathbb R$ such that 
$f\equiv0$ on $A$ but $\partial_2 f\neq0$ somewhere inside the disk $\{|x|<1-\varepsilon\}$.
Define an affine connection by declaring the only nonvanishing Christoffel symbol to be
\begin{align}
\Gamma^{1}_{11}=f(x^1,x^2), \qquad \text{all other }\Gamma^{k}_{ij}=0.
\end{align}
Along $\gamma$ we have $\Gamma^{k}_{ij}(\gamma(t))=0$ for all $t$, so the parallel–transport equation
\begin{align}
\dot V^k(\gamma(t))+\Gamma^{k}_{ij}(\gamma(t))\,\dot\gamma^i(t)\,V^j(\gamma(t))=0
\end{align}
reduces to $\dot V(\gamma(t))=0$. Hence $P^{\gamma}_{0,1}=\mathrm{id}_{T_{\gamma(0)}M}$: the holonomy along $\gamma$ is the identity.
However, the curvature is not identically zero inside the disk. With vanishing torsion,
\begin{align}
R^{\ell}{}_{ijk}
=\partial_i\Gamma^{\ell}_{jk}-\partial_j\Gamma^{\ell}_{ik}
+\Gamma^{\ell}_{ip}\Gamma^{p}_{jk}-\Gamma^{\ell}_{jp}\Gamma^{p}_{ik}.
\end{align}
Since $\Gamma^1_{11}=f$ and all other Christoffel symbols vanish, the only possibly nonzero components are
\begin{align}
R^{1}{}_{1 2 1}=-\,\partial_{2}f,
\qquad
R^{1}{}_{2 1 1}=\,\partial_{2}f,
\end{align}
and all other $R^{\ell}{}_{ijk}$ vanish. In particular, $R\neq0$ wherever $\partial_2 f\neq0$.
Thus this connection satisfies $R\not\equiv0$ on $M$ while the parallel transport along the loop $\gamma$ is the identity.
\end{example}

\begin{proposition}[Flatness from trivial holonomy]
Let $\nabla$ be an affine connection on $M$, and let $U\subseteq M$ be a path--connected open set. Assume that for every $x\in U$ and every piecewise smooth loop $\gamma:[0,1]\to U$ based at $x$ one has
\begin{align}
P^{\gamma}_{0,1}=\mathrm{id}_{T_xM}.
\end{align}
Then, the curvature vanishes on $U$, i.e. $R|_U\equiv 0$.
\end{proposition}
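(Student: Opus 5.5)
Fix a point $p\in U$, a chart $(V,x)$ around $p$ with $V\subset U$ and $x(p)=0$, and two coordinate fields $\partial_i,\partial_j$. Since $R$ is a tensor, it suffices to show $R(\partial_i,\partial_j)\partial_k\big|_p=0$ for all $i,j,k$. The plan is to compute holonomy around small coordinate parallelograms in two ways: once using the hypothesis, which makes it the identity, and once by expanding it to second order, where the curvature term appears.

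For small $s,t>0$, consider the piecewise smooth loop $\gamma_{s,t}$ based at $p$ that travels along the $x^i$--axis for coordinate length $s$, then along the $x^j$--direction for length $t$, then back along $x^i$ by $s$, and finally back along $x^j$ by $t$. All of its sides are coordinate segments in $V$, so the hypothesis gives $P^{\gamma_{s,t}}_{0,1}=\mathrm{id}_{T_pM}$.

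The main computation is the expansion
\begin{align}
P^{\gamma_{s,t}}_{0,1}=\mathrm{id}-st\,R_p(\partial_i,\partial_j)+O\big(st(s+t)\big),
\end{align}
or the same expression with the opposite sign, depending on the conventions. I would obtain it as follows. Extend a vector $v\in T_pM$ to a field $Y$ near $p$: first parallel transport $v$ along the $x^j$--axis, then parallel transport each resulting vector along the $x^i$--lines. This gives $\nabla_{\partial_i}Y\equiv0$ near $p$, and $\nabla_{\partial_j}Y=0$ along the $x^j$--axis. Comparing the transport of $v$ along the two orders of the sides, the difference is then governed by the function
\begin{align}
t\mapsto \nabla_{\partial_j}Y\big(x(s,t)\big),
\end{align}
which vanishes at $s=0$. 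Its $s$--derivative is
\begin{align}
\nabla_{\partial_i}\nabla_{\partial_j}Y=\nabla_{\partial_i}\nabla_{\partial_j}Y-\nabla_{\partial_j}\nabla_{\partial_i}Y=R(\partial_i,\partial_j)Y,
\end{align}
using $[\partial_i,\partial_j]=0$. The uniform $O$--bounds come from the smooth dependence of linear ODE solutions on initial data and parameters, as established in \cref{prop:conn-to-PT}.

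Alternatively, I would expand directly in the coordinate trivialization. The transport along a segment of length $h$ in direction $\partial_i$ starting at $y$ has matrix
\begin{align}
I-h\Gamma_i(y)+\tfrac{h^2}{2}\big(\Gamma_i^2-\partial_i\Gamma_i\big)(y)+O(h^3),
\qquad (\Gamma_i)^k{}_l=\Gamma^k_{il}.
\end{align}
Multiplying the four factors, with base points shifted by $s$ or $t$ and Taylor expanded, the terms that are pure in $s$ and pure in $t$ cancel, since the loop returns along the same lines. The mixed $st$ coefficient is then
\begin{align}
\partial_i\Gamma_j-\partial_j\Gamma_i+[\Gamma_i,\Gamma_j],
\end{align}
evaluated at $p$, up to sign. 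Comparing with the coordinate formula for $R^\ell{}_{ijk}$ in the earlier proposition identifies this matrix with $R(\partial_i,\partial_j)$ acting on $\partial_k$.

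Setting the expansion equal to the identity, dividing by $st$, and letting $s,t\to0$ gives $R_p(\partial_i,\partial_j)=0$. Since $p$, $i$, $j$ are arbitrary, $R|_U\equiv0$.

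The expected main obstacle is the bookkeeping in the second--order expansion: tracking base points and the order of the matrix products so that exactly the commutator term $[\Gamma_i,\Gamma_j]$ with the correct sign survives, and controlling the remainder uniformly in $s$ and $t$. The sign itself is irrelevant, since we only need the $st$ coefficient to vanish.
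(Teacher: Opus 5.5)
Your proof is correct, but it follows a genuinely different route from the paper's. The paper argues structurally. Trivial holonomy makes transport path--independent (via the loop $\bar\gamma_1*\gamma_0$), so transporting a basis of $T_{x_0}M$ from a fixed point yields a frame $(E_1,\dots,E_n)$ on $U$ with $\nabla E_i=0$. Then $R(X,Y)E_i=\nabla_X\nabla_YE_i-\nabla_Y\nabla_XE_i-\nabla_{[X,Y]}E_i=0$, together with tensoriality, gives $R|_U\equiv0$ with no expansion at all. The cost is that one must justify smoothness of the $E_i$ (the connecting curves have to depend smoothly on the endpoint), and the argument uses path--connectedness. You instead identify $R_p(\partial_i,\partial_j)$ as the $st$--coefficient of the holonomy of a small coordinate rectangle. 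This is local and uses only arbitrarily small contractible loops near $p$, so it proves the stronger pointwise statement and needs no connectedness. It also makes the infinitesimal content of Ambrose--Singer explicit, at the cost of the bookkeeping. That bookkeeping checks out with the paper's conventions: for the loop that first moves along $+x^i$, the product of the four transport matrices is $\mathrm{id}-st\,R_p(\partial_i,\partial_j)+O(st(s+t))$. The mixed terms combine into $\partial_i\Gamma_j-\partial_j\Gamma_i+[\Gamma_i,\Gamma_j]$, which is the matrix $(R^\ell{}_{ijk})_{\ell k}$. The cleanest way to control the remainder is to note that $H(s,t)=P^{\gamma_{s,t}}_{0,1}$ depends smoothly on $(s,t)$ and satisfies $H(s,0)=H(0,t)=\mathrm{id}$ exactly, since the degenerate loops retrace themselves. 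Hence $H(s,t)-\mathrm{id}=st\,G(s,t)$ with $G$ smooth and $G(0,0)=\partial_s\partial_tH(0,0)=-R_p(\partial_i,\partial_j)$. The hypothesis forces $G=0$ for $st\neq0$, so $G(0,0)=0$ by continuity. (In your first variant, the field $Y$ obtained by transporting along the $x^j$--axis and then along $x^i$--lines lives only on the coordinate $(x^i,x^j)$--plane; this is harmless, since only derivatives along that plane enter.)
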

\begin{proof}[Sketch]
If $p,q\in U$ and $\gamma_0,\gamma_1$ are two curves in $U$ from $p$ to $q$, the concatenation $\bar\gamma_1*\gamma_0$ is a loop at $p$, hence
\begin{align}
P^{\gamma_0}_{0,1}=P^{\gamma_1}_{0,1}.
\end{align}
Thus parallel transport between any two points of $U$ is independent of the path. Fix $x_0\in U$ and a basis $(v_1,\dots,v_n)$ of $T_{x_0}M$, and define $E_i(x)\coloneqq P^{\gamma_x}_{0,1}(v_i)$ using any curve $\gamma_x$ from $x_0$ to $x$. Path independence makes $E_i$ well defined and smooth, and $(E_1,\dots,E_n)$ is a global parallel frame on $U$, i.e.\ $\nabla E_i=0$.
Therefore the connection has zero curvature on $U$.
\end{proof}

This result, together with \cref{propholoomo}, implies the following corollary.

\begin{corollary}
    Let $\nabla$ be an affine connection on $M$, and let $U\subseteq M$ be a simply connected open set.
\begin{align}
R|_U\equiv 0 \quad\Longleftrightarrow\quad P^{\gamma}_{0,1}=\mathrm{id}_{T_xM},
\end{align}
for every $x\in U$ and every loop $\gamma$ based at $x$ whose image lies in $U$.
\end{corollary}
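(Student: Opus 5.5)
The corollary is a biconditional, and each direction follows from a result already in the text. I treat them separately.

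\emph{Forward direction ($\Rightarrow$).} Assume $R|_U\equiv 0$. This is exactly the hypothesis of \cref{corhom}: $U$ is simply connected and the curvature vanishes there. So for every $x\in U$ and every loop $\gamma$ based at $x$ with image in $U$, we get $P^\gamma_{0,1}=\mathrm{id}_{T_xM}$.

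Underneath this, simple connectivity means every such loop is homotopic in $U$, with fixed endpoints, to the constant loop at $x$. \cref{propholoomo} then says transport along homotopic curves agrees. Transport along the constant loop is the identity by the identity axiom of \cref{def:parallel-transport}. Hence $P^\gamma_{0,1}=\mathrm{id}$.

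One point to check: \cref{propholoomo} is stated for smooth curves, but loops here may be piecewise smooth. I would handle this either by smoothing corners through a reparametrization, or by applying the homotopy argument segmentwise, using composition and the inverse identity of \cref{lem:P-inverse}.

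\emph{Backward direction ($\Leftarrow$).} Assume every loop in $U$ has trivial transport. A simply connected set is path--connected, so the hypotheses of the proposition \emph{Flatness from trivial holonomy} hold, and it yields $R|_U\equiv 0$.

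\emph{Main obstacle.} The real content sits in the final step of that proposition's sketch, which I would flesh out if needed. Path independence gives a well-defined frame $E_i$ on $U$ with $\nabla_X E_i=0$ for all $X$; this follows by differentiating along curves as in \cref{def:cov-der-from-pt}. From the definition,
\begin{align}
R(X,Y)E_i=\nabla_X\nabla_YE_i-\nabla_Y\nabla_XE_i-\nabla_{[X,Y]}E_i=0 .
\end{align}
Since $R(X,Y)$ is $C^\infty(U)$-linear in its third argument and $(E_i)$ spans every fiber, $R(X,Y)Z=0$ for all $Z$.

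The remaining point is smoothness of $E_i$. I would obtain it from the regularity axiom for smooth families of curves, by choosing $\gamma_x$ to depend smoothly on $x$ locally: for instance, a fixed path to a base point $p$, followed by radial segments in a coordinate ball around $p$.
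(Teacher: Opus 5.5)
Your proof is correct and follows the paper's route: the paper obtains this corollary by combining \cref{propholoomo} (via \cref{corhom}) for the forward direction with the proposition \emph{Flatness from trivial holonomy} for the converse, and your elaboration of that proposition's sketch (parallel frame $E_i$, $R(X,Y)E_i=0$, tensoriality in the third slot) is a faithful expansion of the same argument. One small precision: transport along the constant loop is the identity because the parallel-transport ODE reduces to $\dot V^k=0$ when $\dot\gamma=0$; the axiom $P^\gamma_{t,t}=\mathrm{id}$ alone does not give this.
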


\begin{remark}[On the concept of “flatness”]
\leavevmode\newline
\textbf{Geometric intuition.}
\cref{corhom} completes the analogy with the Lie derivative, where the flatness condition is a direct consequence of the naturality of the transport. In other words, the natural transport along flow loops is the identity by construction, whereas parallel transport detects a geometric obstruction given by the curvature. Moreover, in order to see the obstruction given by the curvature, we are not forced to transport vector fields along flow loops, because we have a more structured notion of transporting along curves.
\begin{mdframed}
\normalsize{\textbf{Curvature measures the failure of parallel transporting vector fields along loops back to themselves.}}
\end{mdframed}
\textbf{Algebraic intuition.}
For $X\in\mathfrak X(M)$, the map
\begin{align}
X \longmapsto \nabla_X
\end{align}
acts on $\mathfrak T(M)$ and is $C^\infty(M)$-linear in $X$. For all $X,Y\in\Gamma(TM)$, one has
\begin{align}
[\nabla_X,\nabla_Y]-\nabla_{[X,Y]} = R(X,Y).
\end{align}
In particular, it can no longer be a representation of the real Lie algebra $\mathfrak X(M)$ in general. What holds is: $X \mapsto \nabla_X$ is a \emph{flat} representation of the tangent Lie algebroid on $\mathfrak T(M)$ iff $R=0$.
\begin{mdframed}
\normalsize{\textbf{The covariant derivative is a representation of the tangent Lie algebroid iff its curvature vanishes.}}
\end{mdframed}
\textbf{Comparison with the Lie derivative.}
The Lie derivative yields a strict Lie-algebra representation on the real vector space $\mathfrak T(M)$:
\begin{align}
[\mathcal L_X,\mathcal L_Y]=\mathcal L_{[X,Y]},\qquad
[\mathcal L_X,\mathcal L_Y]=0 \ \text{whenever}\ [X,Y]=0.
\end{align}
This comes at the price of not being $C^\infty(M)$--linear in $X$. Imposing $C^\infty(M)$--linearity led us to obtaining a covariant derivative, and this changed the algebra.
Thus, the commutation of transports, even along commuting vector fields, is obstructed by the curvature, and one has
\begin{align}
[\nabla_X,\nabla_Y]=\nabla_{[X,Y]}\ \text{for all }X,Y
\quad\Longleftrightarrow\quad
R=0.
\end{align}
\begin{figure}[h]
\centering
\includegraphics[width=0.55\textwidth]{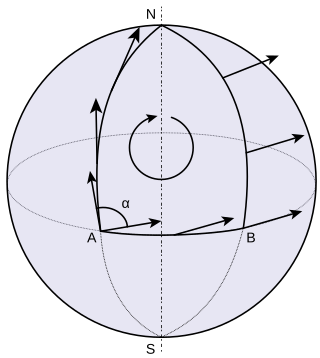}
\caption{Parallel transport along a loop in $S^2$. Since there cannot be a flat connection on the $2$--sphere, the holonomy of a loop need not to be the identity. Source: Wikiwand--Parallel Transport.}
\end{figure}
\end{remark}

\printbibliography[heading=subbibliography]
\chapter{The General Theory of Relativity}
\section{Albert Einstein: The World as I See It}

\textit{“The World as I See It” is the title of a famous book by Albert Einstein. In this book, Einstein sets out his philosophical view of the world he lived in and also comments on the theory of special relativity. One of the authors of these notes, when he was seventeen, might have stolen his grandfather’s old pipe, found a 1989 edition of the book in the village’s small library, and read it on his balcony in the cold of the winter, while trying to smoke some old, dry tobacco.}\\

\noindent Special relativity enlarged the kinematical symmetry of space–time from the Galilean to the Poincaré transformations. Our construction is two--folded. First, it supplied a kinematics that preserves the speed of light $c$;\footnote{Indeed, in special relativity, the symmetry transformations to change from one inertial frame to another are the Poincaré ones.} second, it absorbed the full dynamical content of Maxwell’s electrodynamics into the same symmetries of the free particle: the equations of electromagnetism are invariant in form under the symmetry of free motion.\footnote{Recall from \cref{remsymmforc} that, in Newtonian mechanics, introducing forces typically reduces the symmetry of the equations relative to the free particle. By contrast, special relativity is formulated so that the full electromagnetic interaction is built within the same space--time (Poincaré) symmetry as the free motion.} In this precise sense, special relativity offered a symmetry--based reformulation of mechanics and electromagnetism.\\

\noindent\textbf{Conceptual residues after special relativity.}
\begin{enumerate}
\item \textit{Partiality of the relativity principle.} This construction realizes Einstein’s principle of relativity only for mechanics and electromagnetism. Gravitation is not included in this construction. A prescribed external gravitational field generally breaks the Poincaré symmetry of a fixed background configuration; this does not by itself exclude Poincaré--covariant dynamical theories of gravity on Minkowski space--time.
\item \textit{Epistemic incompleteness of the definition of inertial frames.} Special relativity still relies on the notion of an inertial frame. Operationally, the law of inertia is formulated in such frames, and those frames are in turn identified by the validity of inertial motion. This circularity is not removed at the structural level by special relativity.\footnote{In the Minkowski construction, this fact translates to an \textit{a priori} choice of a fixed geometry given by $\eta$.}
\item \textit{General descriptions.} This issue is a generalization of the two preceding concepts, and it seems even more abstract; it hardly looks like a genuine conceptual problem, yet it is the central point of general relativity and where Einstein’s genius is visible. To quote Richard Feynman: “\textit{Einstein himself, of course, arrived at the same Lagrangian but without the help of a developed field theory, and I must admit that I have no idea how he guessed the final result. We have had troubles enough arriving at the theory — but I feel as though he had done it while swimming underwater, blindfolded, and with his hands tied behind his back!}” Einstein was aware that a mere change of description---a change of perspective on the universe---could not in any case affect physics. Nonetheless, in special relativity, transformations between inertial Cartesian descriptions are the Poincaré transformations. General coordinates are also admissible, but the metric components and connection coefficients then change, so Maxwell equations and relativistic mechanics need not retain their Cartesian component expressions. By contrast, “general” transformations/descriptions spoil the manifest Poincaré form and seem to generate \textit{fictitious forces}, present in one frame and absent in another. Therefore, although \textit{physics itself is not affected by a change of perspective}, it seems that two \textit{arbitrary descriptions}, connected by a general transformation, would \textit{formally} portray the universe---the same physical phenomena---in two “too different” ways.\footnote{Notice that here there is a subtle but important point: changes of description are not necessarily physical transformations among different observers/frames. We will deepen further this topic later on.}
\end{enumerate}

\noindent\textbf{Experimental evidence: Mercury’s perihelion.}

\medskip
\noindent The perihelion of Mercury—the point of closest approach to the Sun—does not remain fixed, but its orbital ellipse slowly rotates within the orbital plane (a secular \emph{apsidal precession}). To set the angular scale on the sky,
\begin{align}
1^{\circ}=60'=3600'', 
\qquad
1''=\frac{\pi}{648000}\ \mathrm{rad}\approx 4.848\times 10^{-6}\ \mathrm{rad}.
\end{align}
The \emph{total} observed secular advance (nineteenth–century ephemerides) is
\begin{align}
\dot\varpi_{\mathrm{observed}}  \approx  5600''/\text{century}.
\end{align}
A theoretical explanation in the Newtonian setting was attempted by subtracting the precession of the equinoxes (an effect from Earth’s axis precession) and the planetary perturbations (mutual Newtonian interactions with other planets). Nonetheless, one is left with the well–known residual:

\begin{align}
\boxed{\ \Delta\dot\varpi  \approx  43''/\text{century}.}
\end{align}
Since Mercury completes about \(415\) revolutions per century, this is
\begin{align}
\Delta\varpi_{\mathrm{per\ orbit}}  \approx  \frac{43''}{415}  \approx  0.103''\ \text{per orbit}.
\end{align}
\textit{How small is $43''$?} The Moon’s apparent diameter is about \(30'=1800''\), so \(43''\) is \(\sim 2.4\%\) of the lunar disk. At arm’s length, the width of a finger subtends \(\sim 1^{\circ}=3600''\); thus \(43''\approx 1/84\) of a finger’s width. A human hair of thickness \(\sim 0.1\,\mathrm{mm}\) viewed from \(60\,\mathrm{cm}\) subtends \(\theta  =  \frac{0.1}{600}\ \mathrm{rad} \approx 1.67\times10^{-4}\ \mathrm{rad}
 \approx 35''.\), i.e. it is essentially a \emph{hair’s breadth on the sky}.\\
The perihelion advance of Mercury thus stood as a small anomaly within an otherwise exceptionally successful Newtonian framework; indeed, the prevailing confidence in Newtonian gravitation motivated explanations \emph{within} the paradigm---such as an intra--Mercurial planet (“Vulcan”) or a non--spherical solar figure---rather than a reformulation of the Newtonian theory. In this light, Einstein's theory that we are about to develop was driven chiefly by conceptual considerations, rather than by an experimental crisis!\\

Einstein's idea was to overcome all these conceptual issues at once, by generalizing the laws of physics to all possible descriptions. It turns out that this is the natural setting to develop a theory for gravity.

The insight behind this fact is highlighted in the famous thought experiment of \textit{Einstein's lift}: consider a small laboratory (“elevator”). Let a \textit{description} be, informally, a local rule that labels\footnote{What kind of geometric object does this process of labeling recall?} events in space--time. In particular, the representation of a worldline $\gamma$ and of its time–derivatives $\dot{\gamma},\ddot{\gamma}$ (the “velocity” and “acceleration”) depends on the chosen description.\\
\textit{Case A: static elevator with a uniform gravitational field.}
In a description $\mathcal D_{\mathrm{stat}}$ where the elevator is at rest and gravity is homogeneous with constant $\mathbf g$, a freely falling test body satisfies
\begin{align}
\ddot{\mathbf{\gamma}}^{\mathcal D_{\mathrm{stat}}}(t)=\mathbf g.
\end{align}
A horizontal light pulse emitted from one wall to the opposite crosses the cabin in time $T\simeq L/c$ (with $L$ the horizontal width). In the small-laboratory approximation $gL/c^2\ll1$, neglecting tidal effects, the equivalence principle predicts the leading vertical drop
\begin{align}
\Delta h^{\mathcal D_{\mathrm{stat}}}
= \tfrac12\, g\,T^2
= \tfrac12\, g\,\frac{L^2}{c^2},
\qquad
\theta^{\mathcal D_{\mathrm{stat}}}\simeq \frac{\Delta h}{L}=\frac{gL}{2c^2},
\end{align}
i.e. the beam “falls” slightly while traversing the elevator. Here $\theta$ is the angle of the chord joining emission and reception; the change in the beam tangent is approximately $gL/c^2$. The following elevator formulas use the same local approximation.\\
\textit{Case B: freely falling elevator.}
Form a new description $\mathcal D_{\mathrm{fall}}$ by re-describing every spatial reading at time $t$ by subtracting the common fall of the elevator:
\begin{align}
\mathcal R_t:\ \text{position}\ \mapsto\ \text{position}-\tfrac12\,\mathbf g\,t^2 .
\end{align}
If $\gamma^{\mathcal D_{\mathrm{stat}}}(t)$ reads a worldline in the static picture, its re-description inside the falling cabin is
\begin{align}
\gamma^{\mathcal D_{\mathrm{fall}}}(t)
 =  \mathcal R_t\big(\gamma^{\mathcal D_{\mathrm{stat}}}(t)\big).
\end{align}
Differentiating twice in $t$ yields the transformation law for the acceleration \emph{as read by the description}:
\begin{align}
\ddot\gamma^{\mathcal D_{\mathrm{fall}}}(t)
= \ddot\gamma^{\mathcal D_{\mathrm{stat}}}(t) - \mathbf g.
\end{align}
Hence free test bodies satisfy
\begin{align}
\ddot\gamma^{\mathcal D_{\mathrm{fall}}}(t)=\mathbf 0,
\end{align}
so inside the falling elevator the same motions appear gravity–free.
Apply the same re-description to the light pulse. In $\mathcal D_{\mathrm{stat}}$ its vertical reading was $h(t)=\tfrac12 g t^2$; in $\mathcal D_{\mathrm{fall}}$ one finds
\begin{align}
h^{\mathcal D_{\mathrm{fall}}}(t)
= h^{\mathcal D_{\mathrm{stat}}}(t) - \tfrac12 g t^2
= 0,
\end{align}
so the beam appears straight across the cabin to the order retained in this approximation.
In other words, the same occurrence is read either as “gravity present” ($\mathcal D_{\mathrm{stat}}$) or as “no gravity” ($\mathcal D_{\mathrm{fall}}$), depending only on the description.\\
\textit{Conversely: an accelerated description manufactures gravity.}
Start from a gravity–free description $\mathcal D_0$ where free bodies have $\ddot\gamma^{\mathcal D_0}=0$ and a light pulse travels straight.
Produce a new description $\widetilde{\mathcal D}$ by re-describing positions via
\begin{align}
\widetilde{\mathcal R}_t:\ \text{position}\ \mapsto\ \text{position}-\tfrac12\,\mathbf a\,t^2 ,
\end{align}
with constant $\mathbf a$ upwards. Then
\begin{align}
\ddot\gamma^{\widetilde{\mathcal D}}(t)
= \ddot\gamma^{\mathcal D_0}(t) - \mathbf a
= -\mathbf a,
\end{align}
so inside $\widetilde{\mathcal D}$ every free body “falls” with acceleration $-\mathbf a$,
exactly mirroring Case A upon the identification $\mathbf g=-\mathbf a$.
\begin{remark}
    Notice that we implicitly assumed that the \textit{inertial mass}, the one in the equation of motion of the free particle, equals the \textit{gravitational mass}, the one that \textit{couples} to gravity. This was a hidden assumption of Newtonian gravity as well. This principle would be replaced by the \textit{universality of the free--fall}.
\end{remark}
These considerations about the dependency of gravity on the description led Einstein to the formulation of his famous \textit{equivalence principle} (EEP).
\begin{mdframed}
\begin{axiom*}[Einstein equivalence principle]
For any event $p$ in space--time and any sufficiently small laboratory around $p$, there exists a (freely--falling) description in which:
\begin{enumerate}\setlength{\itemsep}{0pt}\renewcommand{\labelenumi}{(\roman{enumi})}
\item \textit{Weak equivalence principle (WEP)}: any two test bodies with negligible self--gravity and in absence of other interactions, released with the same initial position and velocity, follow the same worldline independent of the mass or other internal states (universality of the free--fall);
\item \textit{Local Lorentz invariance (LLI)}: the non--gravitational laws reduce at $p$ to the form of special relativity;
\item \textit{Local inertial frame (LIF)}: the metric can be made Minkowskian at $p$ with vanishing first derivatives there; tidal effects encoded by curvature cannot be eliminated;
\item \textit{Local position invariance (LPI)}: for any other event $q$ and for any sufficiently small freely--falling laboratory around it, identically prepared copies of every local non--gravitational experiment performed in the two labs yield identical outcome statistics.
\end{enumerate}
\end{axiom*}
\end{mdframed}
\begin{remark}
Some points in this statement may still appear not fully defined. This is intentional: the reader need not be concerned, as we will formalize these notions shortly in a precise mathematical sense.
\end{remark}
\begin{remark} Given EEP, a single coherent theory covering \textit{all} small laboratories requires that the laws be stated so that changing the description of the very same physical situation does not alter either their form or their predictions.\\
Therefore, in addition, one can assume the following.
\begin{enumerate}\setlength{\itemsep}{2pt}%
\renewcommand{\labelenumi}{\normalfont(\roman{enumi})}
\setcounter{enumi}{4}
\item \textbf{Description--neutrality:} the local relations (equations) describing the laws of physics are invariant in form under a general change of description;
\item \textbf{No hidden scaffolding (Kretschmann 1917):} no fixed, non--dynamical background element is allowed to single out some particular descriptions.
\end{enumerate}
\end{remark}
This is equivalent to stating the following axiom.
\begin{mdframed}
\begin{axiom*}
(v) $+$ (vi)$\;$ The laws of physics are independent of the description.
\end{axiom*}
\end{mdframed}
\begin{remark}
This principle captures the core of general relativity.
\end{remark}
\begin{remark}
Assumption (v) is also known as \textit{principle of general covariance}.\footnote{Notice that, by Kretschmann, this assumption alone does not ensure a description--independent physics.}
\end{remark}

\section{Pseudo--Riemannian manifolds and the Levi--Civita connection}
From the stated EEP, it is clear that we need a notion of space--time distance.\footnote{See for example point (iii), where we want a space--time “scale” of the lab.} We start with naive choice of a distance on $M$, which always exists thanks to the following result.
\begin{theorem}[Urysohn metrization for manifolds]
Let $(M,[\mathcal A])$ be a smooth manifold. Then there exists a distance $d:M\times M\to[0,\infty)$ that induces the given topology of $M$. Equivalently, $M$ is homeomorphic (via the identity on the underlying set) to the metric space $(M,d)$.
\end{theorem}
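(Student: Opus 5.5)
I will prove the statement by checking the hypotheses of the classical Urysohn metrization theorem: every second countable regular Hausdorff space is metrizable. The definition of a smooth manifold already gives Hausdorff and second countable. So the only real work is \emph{regularity}: a point $p$ and a closed set $C\not\ni p$ must have disjoint open neighborhoods. After that, Urysohn's theorem supplies a distance $d$ whose metric topology coincides with the given one. That is exactly the identity map $(M,\tau)\to(M,d)$ being a homeomorphism.

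For regularity I will use the local Euclidean structure. Take a chart $(U,\varphi)$ around $p$ and set $W:=U\setminus C$, an open neighborhood of $p$. Since $\varphi(W)$ is open in $\mathbb R^n$, choose $r>0$ with the closed ball $\bar B:=\bar B_r(\varphi(p))\subset\varphi(W)$. Let $K:=\varphi^{-1}(\bar B)$. It is compact, as the continuous image of a compact set. Because $M$ is Hausdorff, $K$ is closed in $M$. Then $V_1:=\varphi^{-1}(B_r(\varphi(p)))$ is an open neighborhood of $p$ contained in $K$, and $V_2:=M\setminus K$ is an open set containing $C$, since $K\subset W$ is disjoint from $C$. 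These are the required disjoint neighborhoods.

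The step needing care is the closedness of $K$ in $M$, not merely in $U$. This is where the Hausdorff hypothesis enters: it excludes pathologies such as the line with two origins. I will use the standard fact that compact subsets of Hausdorff spaces are closed.

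As an alternative, more geometric route matching the smooth setting of the notes, I could pick a Riemannian metric $h$, built by a partition of unity from Euclidean metrics in charts, and define $d(p,q)$ as the infimum of $h$-lengths of piecewise smooth curves joining $p$ and $q$. One would then show that $d$ is a distance inducing the topology by comparing $h$ with the Euclidean metric on small chart balls. Here positivity $d(p,q)>0$ for $p\neq q$ again rests on the Hausdorff property. Connectedness issues would be handled by setting $d=1$ between different components, after replacing $d$ by $\min(d,1)$. I would present the Urysohn-based argument as the main proof and mention this construction as a remark.
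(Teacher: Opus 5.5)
Your proof is correct and follows the argument the paper intends: the paper states this result without proof, as a direct instance of Urysohn's metrization theorem, and your regularity check fills in the step it leaves implicit. That check is that the preimage of a closed coordinate ball is compact, hence closed by the Hausdorff property, and so separates $p$ from $C$. The Riemannian length-distance construction you sketch is also a valid route, but it is not needed.
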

\begin{example}[Different distances generate the same topology]
\leavevmode\newline
\vspace{-0.55cm}
\begin{enumerate}
\item \textbf{Monotone reparametrization on a general metric space.}
Let $(M,d_1)$ be a metric space and define
\begin{align}
d_2(p,q) := \arctan \big(d_1(p,q)\big).
\end{align}
For $a,b\ge0$, $\arctan(a+b)-\arctan(a)=\int_0^b(1+(a+t)^2)^{-1}dt\le\arctan b$, so $\arctan$ is subadditive and $d_2$ satisfies the triangle inequality. Since $\arctan$ is strictly increasing with $\arctan(0)=0$, for every $p\in M$ and $0<r<\tfrac{\pi}{2}$,
\begin{align}
B_{d_2}(p,r) = \{q:\ d_2(p,q)<r\} = \{q:\ d_1(p,q)<\tan r\} = B_{d_1}(p,\tan r).
\end{align}
Thus $d_1$ and $d_2$ induce the same topology.

\item \textbf{Truncation on $\mathbb{R}^n$.}
Let $d_1(x,y)=\|x-y\|$ and define
\begin{align}
d_2(x,y) := \min\{1,\|x-y\|\}.
\end{align}
For $0<r<1$ one has
\begin{align}
B_{d_2}(x,r) = \{y:\ d_2(x,y)<r\} = \{y:\ \|x-y\|<r\} = B_{d_1}(x,r),
\end{align}
and for $r>1$,
\begin{align}
B_{d_2}(x,r) = \{y:\ d_2(x,y)<r\} = \mathbb{R}^n.
\end{align}
At $r=1$ the ball is $B_{d_1}(x,1)$. The triangle inequality follows from $\min\{1,a+b\}\le\min\{1,a\}+\min\{1,b\}$ for $a,b\ge0$. Hence $d_1$ and $d_2$ generate the same topology.

\item \textbf{Bounded monotone reparametrization on $\mathbb{R}^n$.}
Let $d_1(x,y)=\|x-y\|$ and set
\begin{align}
d_2(x,y) := \frac{\|x-y\|}{1+\|x-y\|}.
\end{align}
The map $f(t)=\tfrac{t}{1+t}$ is subadditive, since $\frac{a+b}{1+a+b}\le\frac{a}{1+a}+\frac{b}{1+b}$ for $a,b\ge0$, so $d_2$ is a distance. It is also a strictly increasing homeomorphism $[0,\infty)\to[0,1)$ with inverse $f^{-1}(s)=\tfrac{s}{1-s}$. Thus, for every $x\in\mathbb{R}^n$ and $0<r<1$, we have
\begin{align}
B_{d_2}(x,r) = \{y:\ d_2(x,y)<r\} = \{y:\ \|x-y\|<\tfrac{r}{1-r}\} = B_{d_1} \Big(x,\frac{r}{1-r}\Big).
\end{align}
Therefore, $d_1$ and $d_2$ induce the same topology.
\end{enumerate}
\end{example}

\begin{remark}
Since different (indeed, infinitely many) choices of a distance $d$ give rise to the same topology on $M$, we may regard them as gauge--equivalent if we retain only the topology they induce. In this sense, we are introducing what goes by the name of a \textit{gauge symmetry}: a freedom in the choice of a space--time object which produces an invariant result. In this case, the space--time object is $d$ and the invariant is the topology of $M$.\\
Since $d$ merely represents the topology already assigned to $M$, it introduces no additional background structure and need not be dynamical. However, any two distances inducing the same topology lie, by our choice, in the same gauge--equivalence class. Hence, $d$ carries no local dynamical degrees of freedom: the gauge class $[d]$ is fixed because the topology on $M$ is fixed. Therefore, every gauge--invariant quantity built from $[d]$ would be topological.\\
Then, if the only structure on our space--time were to be $[d]$, the resulting theory would have only topological content. We will thus not attempt to build gravity out of $d$ considered up to this equivalence, which implies that we do \textit{not} assume, in principle, that gravity is a topological theory.
\end{remark}

\begin{remark}
    GR is indeed a gauge theory: its diffeomorphism symmetry acts on the dynamical geometric fields on a fixed smooth manifold.
\end{remark}
We introduce now the fundamental dynamical object of the theory.
\begin{definition}[Signature of a symmetric bilinear form]
Let $V$ be a finite-dimensional real vector space and $b:V\times V\to\mathbb{R}$ a symmetric, non--degenerate bilinear form.
By Sylvester's law of inertia, there exists a basis $(e_1,\dots,e_n)$ of $V$ such that
\begin{align}
b(e_i,e_j)&=0 \quad\text{for } i\neq j,\\
b(e_i,e_i)&=-1 \ \text{for } i=1,\dots,p,\qquad
b(e_i,e_i)=+1 \ \text{for } i=p+1,\dots,p+q,
\end{align}
with $p+q=n=\dim V$. Equivalently, if $B$ denotes the associated matrix of $b$ in this basis, given by $B_{ij}=b(e_i,e_j),$ then $B$ is diagonal with $p$ entries $-1$ and $q$ entries $+1$.
The ordered pair $(p,q)$ is invariant with respect to the choice of any basis and is called the \emph{signature} of $b$:
\begin{align}
\mathrm{sig}(b)\coloneqq (p,q).
\end{align}
\end{definition}

\begin{definition}[Riemannian metric]
Let $M$ be a smooth manifold. A \emph{Riemannian metric} on $M$ is a smooth section
\begin{align}
g &\in \Gamma(\mathrm{Sym}^2 T^*M),
\end{align}
such that for every $x\in M$ the bilinear form $g_x$ on $T_xM$ is symmetric, non--degenerate, and positive definite. Equivalently, $\mathrm{sig}(g_x)=(0,\dim M)$ for all $x\in M$.
\end{definition}

\begin{definition}[Pseudo--Riemannian metric]
A \emph{pseudo--Riemannian metric} on a smooth manifold $M$ is a smooth section
\begin{align}
g &\in \Gamma(\mathrm{Sym}^2 T^*M),
\end{align}
such that for every $x\in M$ the bilinear form $g_x$ on $T_xM$ is symmetric, non--degenerate, and has signature
\begin{align}
\mathrm{sig}(g_x)=(p,q),\qquad p+q=\dim M.
\end{align}
Equivalently, for every $x\in M$ there is a basis of $T_xM$ in which the matrix of $g_x$ is
$\mathrm{diag}(\underbrace{-1,\dots,-1}_{p},\underbrace{1,\dots,1}_{q})$.
\end{definition}

\begin{definition}[Lorentzian manifold]
A \emph{Lorentzian manifold} is a pair $(M,g)$ where $M$ is a smooth manifold and $g$ is a pseudo--Riemannian metric with
\begin{align}
\mathrm{sig}(g_x)=(1,\dim M-1), \qquad \text{for all } x\in M.
\end{align}
\end{definition}

\begin{remark}
This fixes the $(-,+,\dots,+)$ convention. The opposite convention $(+,-,\dots,-)$ is equally valid and corresponds to $(p,q)=(\dim M-1,1)$.
\end{remark}

\begin{definition}[Cometric]
Let $(M,g)$ be a pseudo--Riemannian manifold. Since $g_x$ is non--degenerate for every $x\in M$, we can define the inverse, called the \emph{cometric}, which is the smooth tensor field of $(2,0)$--type
\begin{align}
g^{-1}\in\Gamma(\mathrm{Sym}^2 TM).
\end{align}
Then, the metric $g$ and the cometric $g^{-1}$ induce the isomorphisms of vector bundles
\begin{align}
g\colon TM&\longrightarrow T^*M\\
X&\longmapsto i_X g, 
\end{align}
and
\begin{align}
g^{-1}\colon T^*M&\longrightarrow TM\\
\alpha&\longmapsto g^{-1}(\alpha).
\end{align}
In local coordinates, if $g=g_{ij}\,dx^i\otimes dx^j$ and $(g^{ij})=(g_{ij})^{-1}$, then
\begin{align}
g^{-1}=g^{ij}\,\partial_i\otimes \partial_j,\qquad i_X g=g_{ij}X^i\,dx^j,\qquad g^{-1}(\alpha)=g^{ij}\alpha_i\,\partial_j.
\end{align}
\end{definition}

\begin{remark}[Raising and lowering of tensor indices]
The bundle isomorphisms
\(
g\colon TM\to T^*M
\)
and
\(
g^{-1}\colon T^*M\to TM
\)
are used to \emph{lower} and \emph{raise} indices of tensor fields. In local
coordinates this amounts to contracting with $g_{ij}$ or $g^{ij}$, e.g.
\begin{align}
X_i = g_{ij}X^j, \qquad \alpha^i = g^{ij}\alpha_j,
\end{align}
and similarly for each index of a the components of a general tensor field.
\end{remark}

\begin{definition}[Orthonormal frames]
Let $\mathrm{sig}(g)= (p,q)$, $p+q=n=\dim M$. A local frame $e=(e_1,\dots,e_n)$ of $TM$ over $U\subset M$ is \emph{$g$--orthonormal} if
\begin{align}
g_x(e_i(x),e_j(x))=\mathrm{diag}(\underbrace{-1,\dots,-1}_{p},\underbrace{1,\dots,1}_{q})_{ij},
\end{align}
for every $x\in U$.
\end{definition}

\begin{remark}
For a Lorentzian manifold $(M,g)$, the condition above becomes
\begin{align}
g_x(e_i(x),e_j(x))=\eta_{ij},
\end{align}
where $\eta=\mathrm{diag}(-1,\underbrace{1,\dots,1}_{\dim M-1})$ is the Minkowskian metric in standard form.
\end{remark}

\begin{proposition}[Indefinite Gram--Schmidt]
Let $(M,g)$ be a pseudo--Riemannian manifold. For every $x\in M$ there exists a neighborhood $U$ and a $g$--orthonormal frame on $U$.
\end{proposition}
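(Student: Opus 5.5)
The plan is to start from an arbitrary coordinate frame $(\partial_1,\dots,\partial_n)$ on a chart neighborhood $U_0$ of $x$ and run a Gram--Schmidt procedure adapted to the indefinite form. The point is to make every step smooth in the base point, which is only possible after shrinking the neighborhood. The indefinite case differs from the Riemannian one in a single way: a vector can be null, so normalizing by $\sqrt{|g(v,v)|}$ may fail. The work therefore goes into ordering the input vectors so that every pivot is non-null at $x$. Continuity then keeps the pivots non-null on a smaller neighborhood.

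First, working in the vector space $V=T_xM$, I choose a basis $v_1,\dots,v_n$ of $T_xM$ with the following property. For each $k$, the restriction of $g_x$ to $\mathrm{span}\{v_1,\dots,v_k\}$ is non-degenerate. Equivalently, all leading principal minors $\det\big(g_x(v_i,v_j)\big)_{i,j\le k}$ are nonzero. One way to get such a basis is to take the Sylvester basis from the definition of signature, which is diagonal with entries $\pm1$, so every leading minor equals $\pm1$. Next, I extend each $v_i$ to a smooth vector field $V_i$ near $x$ with constant components in the chart, $V_i=v_i^j\partial_j$. The leading minors $\Delta_k(y)=\det\big(g_y(V_i,V_j)\big)_{i,j\le k}$ are continuous functions of $y$ and nonzero at $x$. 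Hence there is a neighborhood $U$ on which all $\Delta_k\neq0$ and the $V_i$ stay linearly independent. On $U$, each restriction of $g$ to $\mathrm{span}\{V_1,\dots,V_k\}$ is non-degenerate.

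Then I run the recursion on $U$. Set $W_1=V_1$ and $e_1=W_1/\sqrt{|g(W_1,W_1)|}$. For $k\ge2$, define
\begin{align}
W_k=V_k-\sum_{i<k}\epsilon_i\,g(V_k,e_i)\,e_i,\qquad \epsilon_i\coloneqq g(e_i,e_i)\in\{\pm1\},
\end{align}
and $e_k=W_k/\sqrt{|g(W_k,W_k)|}$. By induction, $W_k$ is $g$-orthogonal to $e_1,\dots,e_{k-1}$, and $\mathrm{span}\{e_1,\dots,e_k\}=\mathrm{span}\{V_1,\dots,V_k\}$. The Gram determinant of $V_1,\dots,V_k$ equals that of $e_1,\dots,e_{k-1},W_k$, because the change of basis is unitriangular up to the positive normalizations. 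That determinant is $\pm g(W_k,W_k)$ times a positive factor, so $\Delta_k\neq0$ gives $g(W_k,W_k)\neq0$ on $U$. This is the key step and the main obstacle: it is exactly where non-nullity of the pivots is needed, and it is secured by the minor condition chosen at $x$. Each $\epsilon_k$ is a locally constant sign on $U$, fixed after shrinking $U$ to be connected. All operations are smooth, since the square roots are of nowhere-vanishing functions.

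Finally, $(e_1,\dots,e_n)$ is a smooth frame on $U$ with $g(e_i,e_j)=\epsilon_i\delta_{ij}$. By Sylvester's law of inertia, the number of $-1$'s equals $p$ at every point. Reordering the $e_i$ so that the negative ones come first gives the matrix $\mathrm{diag}(-1,\dots,-1,1,\dots,1)$, i.e.\ a $g$-orthonormal frame on $U$.
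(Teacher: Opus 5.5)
Your proof is correct. Starting from a Sylvester basis at $x$ makes every leading Gram minor equal to $\pm1$; the constant-coefficient extensions keep the minors $\Delta_k$ nonzero on a smaller connected $U$, and the relation $\Delta_k=\pm(\text{positive})\,g(W_k,W_k)$ then rules out null pivots, which is exactly the point where the indefinite case differs from the Riemannian one. The paper states this proposition without proof, and its title ``Indefinite Gram--Schmidt'' points to precisely the argument you give. Your write-up also fills in the details the statement leaves implicit: smoothness of $\sqrt{|g(W_k,W_k)|}$, local constancy of the signs $\epsilon_k$, and the count of negative signs via Sylvester's law.
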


\begin{remark}
    In physics, reference frames are space--time objects that allow one to measures separations of events. An orthogonal frame is precisely the mathematical object representing a reference frame. Notice that only linear transformations that preserve this orthogonal structure are considered physical changes of reference frame.
\end{remark}

\begin{definition}[Causal classes and null cones]
Let $(M,g)$ a Lorentzian manifold and $V\in\Gamma(TM)$ be a vector field. Then, $V$ is causally classified as follows:
\begin{align}
\text{timelike} &\iff g_x(V(x),V(x))<0\ \text{for all } x,\\
\text{null (lightlike)} &\iff g_x(V(x),V(x))=0\ \text{for all } x,\\
\text{spacelike} &\iff g_x(V(x),V(x))>0\ \text{for all } x.
\end{align}
A smooth curve $\gamma$ is timelike/null/spacelike if $\dot\gamma$ is a  timelike/null/spacelike vector field. Finally, we define the \emph{null (or light) cone} at $x$ as $\mathcal J_x\coloneqq\{v\in T_xM\setminus\{0\}:\ g_x(v,v)=0\}$, i.e. the set of all null vectors in $T_xM$.
\end{definition}

\begin{remark}[Invariance under coordinate changes]
Let $(x^1,\dots,x^n)$ and $(x'^1,\dots,x'^n)$ be local coordinates around $x\in M$. Write $v=v^i\partial_i=v'^a\partial'_a$ and $g=g_{ij}\,dx^i\otimes dx^j=g'_{ab}\,dx'^a\otimes dx'^b$. Then,
\begin{align}
v'^a&=\frac{\partial x'^a}{\partial x^i}\,v^i, &
g'_{ab}(x')&=\frac{\partial x^i}{\partial x'^a}\,\frac{\partial x^j}{\partial x'^b}\,g_{ij}(x).
\end{align}
Therefore, since $g$ a tensor field, the scalar $g_x(v,v)$ is chart--independent:
\begin{align}
g'_{ab}\,v'^a v'^b
=\Big(\frac{\partial x^i}{\partial x'^a}\,\frac{\partial x^j}{\partial x'^b}\,g_{ij}\Big)
\Big(\frac{\partial x'^a}{\partial x^k}\,v^k\Big)
\Big(\frac{\partial x'^b}{\partial x^\ell}\,v^\ell\Big)
= g_{ij}\,v^i v^\jmath.
\end{align}
Hence, the timelike/null/spacelike classifications and the null--cones at every $x\in M$ $\{v\neq 0:\ g_x(v,v)=0\}$ are intrinsic and do not depend on the chosen coordinates.
\end{remark}

\begin{remark}
    This local view suggests that a theory built out of a Lorentzian metric $g$ will possess a gauge symmetry given by coordinate transformations: the components of the metric itself change, but the scalars constructed with the metric are gauge (coordinate) invariant quantities. This is precisely the gauge symmetry that one has in general relativity.\footnote{We will see in the next section that the purest form of gauge symmetry of general relativity is actually the one of space--time diffeomorphisms, whose local expression is given by coordinate transformations.}
\end{remark}

\begin{definition}[Time orientability]
For each $x\in M$, define the \textit{time-like cone} $\mathcal T_x$ at $x$ as
\begin{align}
  \mathcal T_x \coloneqq \{v\in T_xM:\ g_x(v,v)<0\},
\end{align}
which has two connected components. Then, a Lorentzian manifold $(M,g)$ is time–orientable iff there exists a (global) smooth timelike vector field $T\in\Gamma(TM)$. Such a $T$ selects, at each $x$, one of the two connected components of $\mathcal T_x$, namely the one containing $T_x$. A \emph{time–orientation} $\mathfrak t$ is a smooth choice of one of the two timelike cone components at every point, represented by such a field $T$. Two smooth timelike vector fields define the same time–orientation iff, for every $x\in M$, their values lie in the same connected component of $\mathcal T_x$. Once a time–orientation is fixed, for every $x$, one writes
\begin{align}
    \mathcal T_x = \mathcal T_x^+ \sqcup \mathcal T_x^-,
\end{align}
where we denote by $\mathcal T_x^+$ the component containing $T_x$. We call $\mathcal T_x^+$ the \emph{future} timelike cone and $\mathcal T_x^-$ the \emph{past} timelike cone. A time--like vector $v\in T_xM$ is future–directed (or past--directed) iff it lies in the future (or respectively past) component of the causal cone at $x$.
\end{definition}

\begin{definition}[Cauchy hypersurface and (Cauchy) temporal function]
Let $(M,g)$ be a time–oriented Lorentzian manifold. 
A subset $\Sigma\subset M$ is a \emph{Cauchy hypersurface} if it is a smooth spacelike embedded hypersurface which every inextendible causal curve intersects exactly once. 
A smooth function $t:M\to\mathbb R$ is \emph{temporal} if its gradient is everywhere past--directed timelike\footnote{With our signature, $dt(v)=g(\operatorname{grad}_g t,v)>0$ for every nonzero future--directed causal vector $v$.}. This implies that $t$ is strictly increasing along every future--directed causal curve; strict increase alone does not imply a timelike gradient. 
It is \emph{Cauchy temporal} if, in addition, $t$ is surjective and each level set $t^{-1}(s)$ is a Cauchy hypersurface.
\end{definition}

\begin{theorem}[Bernal–Sánchez splitting theorem]
Let $(M,g)$ be a time–oriented Lorentzian manifold. The following are equivalent:
\begin{enumerate}\setlength{\itemsep}{2pt}\renewcommand{\labelenumi}{(\roman{enumi})}
\item $(M,g)$ is \emph{globally hyperbolic};
\item $(M,g)$ admits a smooth spacelike Cauchy hypersurface $\Sigma$;
\item There exists a \emph{Cauchy temporal function} $t:M\to\mathbb R$ whose level sets $\Sigma_s\coloneqq t^{-1}(s)$ are smooth spacelike Cauchy hypersurfaces;
\item There is a diffeomorphism $\Phi:M\to\Sigma\times\mathbb R$ and smooth data $\beta:\Sigma\times\mathbb R\to(0,\infty)$ and a smooth one–parameter family $\{h_s\}_{s\in\mathbb R}$ of Riemannian metrics on $\Sigma$ such that, writing $t\coloneqq \operatorname{pr}_{\mathbb R} \circ \Phi$,
\begin{align}
g  =  -\,\beta\,dt^2  +  h_t ,
\end{align}
and, in particular, each slice $\Sigma_s=\Sigma\times\{s\}$ is a Cauchy hypersurface.
\end{enumerate}
Moreover, given any (smooth spacelike) Cauchy hypersurface $\Sigma_0\subset M$, one can choose $t$ with $\Sigma_0=t^{-1}(0)$.
\end{theorem}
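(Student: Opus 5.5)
The plan is to prove the chain of implications (i)$\Rightarrow$(ii)$\Rightarrow$(iii)$\Rightarrow$(iv)$\Rightarrow$(i) and to handle the final clause within the step (ii)$\Rightarrow$(iii). Global hyperbolicity has not been defined in these notes. I take the standard definition: $(M,g)$ is causal, and every causal diamond $J^+(p)\cap J^-(q)$ is compact.

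\textbf{Step (iv)$\Rightarrow$(i).} This is the classical direction. I first show that each slice $\Sigma_s$ is met exactly once by every inextendible causal curve. From $g=-\beta\,dt^2+h_t$ one gets $dt(v)\neq 0$ for every causal $v\neq 0$, so $t$ is strictly monotone along causal curves. Next, using $h_t$ to bound spatial speed by $\sqrt\beta$ on compact time intervals, an inextendible causal curve cannot stay in a bounded $t$-range. Hence $t$ runs over all of $\mathbb R$ along the curve. Causality follows from the strict monotonicity of $t$. Compactness of diamonds follows from a limit-curve argument: a sequence of causal curves from $p$ to $q$ is confined to $t\in[t(p),t(q)]$, and within that range spatial speeds are controlled. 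So a subsequence converges to a causal curve, and $J^+(p)\cap J^-(q)$ is closed and bounded.

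\textbf{Step (iii)$\Rightarrow$(iv).} This is a flow construction. Let $X\coloneqq -\operatorname{grad}_g t/g(\operatorname{grad}_g t,\operatorname{grad}_g t)$. Then $X$ is future-directed timelike and satisfies $dt(X)=1$. Its integral curves are inextendible timelike curves. Each of them therefore meets every Cauchy level set $t^{-1}(s)$, which makes $X$ complete in $t$. Put $\Sigma=t^{-1}(0)$ and define $\Phi^{-1}(x,s)=\mathrm{Fl}^X_s(x)$. This map is a diffeomorphism $\Sigma\times\mathbb R\to M$, by the flow proposition together with the fact that each integral curve hits $\Sigma$ exactly once. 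Since $X\perp\ker dt$, the metric splits orthogonally as $g=-\beta\,dt^2+h_t$ with $\beta=-1/g(\operatorname{grad}t,\operatorname{grad}t)>0$. Because $\ker dt$ is spacelike, $h_t$ is Riemannian.

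\textbf{Steps (i)$\Rightarrow$(ii)$\Rightarrow$(iii).} Here lies the real content. I would first follow Geroch. Choose a finite Borel measure $\mu$ with smooth positive density, and form $\tau(p)=\ln\bigl(\mu(J^-(p))/\mu(J^+(p))\bigr)$. Global hyperbolicity makes $\tau$ continuous. It is strictly increasing along causal curves, and it tends to $\pm\infty$ along inextendible ones. Its level sets are therefore topological Cauchy hypersurfaces.

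\textbf{Main obstacle: smoothing.} The step I expect to be hardest is upgrading this continuous time function to a smooth one whose gradient is timelike, and then to a Cauchy temporal function with a prescribed slice $\Sigma_0=t^{-1}(0)$. I would try the Bernal--S\'anchez strategy. For each level $\tau^{-1}(k)$, $k\in\mathbb Z$, build local smooth functions that are temporal on the future side of that level; these come from time functions with timelike gradient defined near a point, such as a Lorentzian distance-like function in a convex neighborhood. Sum them using a locally finite partition-of-unity cover of the compact pieces $J^+(\Sigma_k)\cap J^-(\Sigma_{k+1})$. Glue the resulting steps $h_k$ into $t=\sum_k h_k$, whose gradient is a positive combination of timelike past-directed vectors and is therefore timelike. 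Finally, rescale so that $t$ is unbounded in both directions along inextendible curves, which makes each level set Cauchy. To obtain $t^{-1}(0)=\Sigma_0$, start the construction from $\Sigma_0$ itself, building separate step functions on $J^+(\Sigma_0)$ and on $J^-(\Sigma_0)$ and matching them across $\Sigma_0$. The delicate point is controlling the gradient near the junctions of the steps. The cone of past-directed timelike vectors is convex, which is what should keep sums of such gradients timelike.
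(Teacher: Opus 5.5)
The notes state this theorem without proof; it is quoted from Bernal--S\'anchez, so there is no argument of theirs to compare with. Judged on its own, your plan has a genuine gap in (iv)$\Rightarrow$(i).

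The form $g=-\beta\,dt^2+h_t$ alone does not make the slices Cauchy. The bound $|\dot x|_{h_t}\le\sqrt\beta$ cannot supply this: it limits how fast a causal curve moves in $\Sigma$, not whether it leaves every compact subset of $\Sigma$ in finite $t$. For example, $M=\mathbb R\times(0,1)$ with $g=-dt^2+dx^2$ has the form (iv) with $\Sigma=(0,1)$, $\beta\equiv1$, $h_t=dx^2$. Yet the null curve $t\mapsto(t,t-1)$, $t\in(1,2)$, is inextendible and never meets $\{t=0\}$. Also, $J^+(p)\cap J^-(q)$ for $p=(0,\tfrac12)$, $q=(2,\tfrac12)$ contains $(1,\epsilon)$ for every $\epsilon\in(0,1)$, so it is not compact; ``closed and bounded'' is not compactness in $M$. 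Completeness of $h_t$ does not help either: for $g=-(1+x^2)^2dt^2+dx^2$ on $\mathbb R^2$, the null curves $x=\tan(t-c)$ reach spatial infinity in finite $t$. Your argument needs extra hypotheses, e.g.\ $\beta$ bounded and $h_t\ge h_0$ for a fixed complete $h_0$, and these are not in (iv).

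The fix is to treat the Cauchy property of the slices as part of the hypothesis (iv), as Bernal--S\'anchez do; the ``in particular'' in the statement is misleading. Then (iv)$\Rightarrow$(ii) is immediate, since the slices are spacelike because $h_s$ is Riemannian. Your limit-curve argument for compact diamonds also works once the speed bound is replaced by the Cauchy property. A sequence in $J^+(p)\cap J^-(q)$ without a convergent subsequence gives, by the limit curve lemma, a future-inextendible causal curve from $p$ along which $t\le t(q)$. Extended arbitrarily to the past (where $t\le t(p)$), this curve never meets the Cauchy slice $\{t=t(q)+1\}$, a contradiction.

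Three smaller corrections follow.
\begin{itemize}
\item In (iii)$\Rightarrow$(iv) your field has the wrong sign. Since $\operatorname{grad}_g t$ is past-directed with $g(\operatorname{grad}_g t,\operatorname{grad}_g t)<0$, your $X$ is a positive multiple of it, hence past-directed with $dt(X)=-1$. Take $X=\operatorname{grad}_g t/g(\operatorname{grad}_g t,\operatorname{grad}_g t)$; the rest of that step is then correct.
\item In (i)$\Rightarrow$(ii), Geroch's level sets are only topological (achronal, Lipschitz) Cauchy hypersurfaces, so (ii) is not obtained before smoothing. You really prove (i)$\Rightarrow$(iii) and read off (ii).
\item In the smoothing sketch, two things need fixing.
\begin{itemize}
\item The slabs $J^+(\Sigma_k)\cap J^-(\Sigma_{k+1})$ are not compact, already in Minkowski space. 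Only sets like $J^+(\Sigma_k)\cap J^-(K)$ with $K$ compact are.
\item Local temporal functions cannot be glued by a partition of unity, because $d\bigl(\sum_i\phi_if_i\bigr)$ contains the uncontrolled terms $f_i\,d\phi_i$.
\end{itemize}
The mechanism that works (Bernal--S\'anchez) uses globally defined functions $h^+\ge0$ whose gradient is past-directed timelike wherever $h^+>0$ inside a controlled region such as $J^-(\Sigma_{k+1})$. These are combined with their time-duals $h^-$ via quotients like $h^+/(h^++h^-)$ into step functions whose gradient is causal everywhere and timelike on a slab. Only then does convexity of the causal cone make the gradient of $t=\sum_k h_k$ timelike.
\end{itemize}
Finally, building $t$ separately on $J^\pm(\Sigma_0)$ and matching along $\Sigma_0$ generally gives neither smoothness nor a timelike gradient across $\Sigma_0$. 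Prescribing $t^{-1}(0)=\Sigma_0$ requires a separate argument near $\Sigma_0$, which is a later Bernal--S\'anchez result.
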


\begin{remark}
    In particular, if $M$ is globally hyperbolic, then $M\cong \Sigma \times\mathbb R$, where $\Sigma$ is a Cauchy hypersurface. 
\end{remark}

\begin{definition}[Metric pairing on tensors]\label{met_pair}
Let $(M,g)$ be a Lorentzian smooth manifold of dimension $n\ge 2$. The metric $g$ and cometric $g^{-1}$ extend to a pointwise pairing $\langle\cdot,\cdot\rangle_g$ on the full tensor algebra $\mathfrak T(M)$. For $A\in\Gamma(T^*M\otimes TM)$, the \emph{metric adjoint} $A^\dagger\in\Gamma(T^*M\otimes TM)$ is defined by
\begin{align}
g(A^\dagger X,Y)=g(X,AY)\qquad\text{for all }X,Y\in\Gamma(TM).
\end{align}
The \emph{metric trace} is
\begin{align}
\mathrm{tr}_g A \coloneqq \langle A,\mathrm{id}_{TM}\rangle_g = A^i{}_i \quad \text{in }g\text{-orthonormal coordinates.}
\end{align}
\end{definition}

\begin{definition}[Volume density and Hodge star]
Let $(M,g)$ be an oriented pseudo Riemannian manifold of dimension $n$. In positively oriented local coordinates $(x^1,\dots,x^n)$, choose the volume form\footnote{See \cref{or_com_clos}.} $\mathrm{vol}_g\in\Omega^n(M)$ such that
\begin{align}
\mathrm{vol}_g \coloneqq \sqrt{|\det(g_{ij})|}\,dx^1\wedge\cdots\wedge dx^n.
\end{align}
The \emph{Hodge star} $\star:\Omega^k(M)\to \Omega^{n-k}(M)$ is the unique bundle map such that, for all $\alpha,\beta\in\Omega^k(M)$,
\begin{align}
\alpha\wedge\star\beta = \langle \alpha,\beta\rangle_g\,\mathrm{vol}_g,
\end{align}
where the induced pairing on alternating $k$--forms uses the normalization
\begin{align}
\langle\alpha,\beta\rangle_g
=\frac{1}{k!}\,g^{i_1j_1}\cdots g^{i_kj_k}\alpha_{i_1\cdots i_k}\beta_{j_1\cdots j_k}.
\end{align}
Thus it is $1/k!$ times the full tensor contraction from \cref{met_pair}. Finally, if $\mathrm{sig}(g)=(p,q)$, then
\begin{align}
\star\star\big|_{\Omega^k} = (-1)^{\,k(n-k)+p}\,\mathrm{id}_{\Omega^k}.
\end{align}
\end{definition}

\begin{mdframed}
\normalsize\textbf{{From now on, we will adopt the Greek letters for space--time components.}}
\end{mdframed}

\begin{definition}[Levi--Civita connection]
Let $(M,g)$ be a Lorentzian manifold. The \emph{Levi--Civita connection} of $g$ is the unique affine connection $\nabla$ such that the following hold:
\begin{enumerate}\setlength{\itemsep}{0pt}\renewcommand{\labelenumi}{(\roman{enumi})}
\item \textbf{Torsion--free:} for all $X,Y\in\Gamma(TM)$,
\begin{align}
\nabla_XY-\nabla_YX-[X,Y]=0.
\end{align}
In coordinates, it reads $\Gamma^\rho_{\mu\nu}-\Gamma^\rho_{\nu\mu}=0$.
\item \textbf{Metric--compatibility:} for all $X,Y,Z\in\Gamma(TM)$,
\begin{align}
\nabla_X \big(g(Y,Z)\big)=g(\nabla_XY,Z)+g(Y,\nabla_XZ),
\end{align}
i.e. $\nabla_X g=0$ for all $X\in\Gamma(TM)$. In coordinates, it reads $\nabla_\rho g_{\mu\nu}=0$.
\end{enumerate}
The \textit{Christoffel symbols} are given by
\begin{align}
\Gamma^\alpha_{\ \beta\gamma}
= \tfrac12\,g^{\alpha\mu} \left(\partial_\beta g_{\mu\gamma}
+\partial_\gamma g_{\mu\beta}
-\partial_\mu g_{\beta\gamma}\right),
\end{align}
and satisfy $\Gamma^\alpha_{\ \beta\gamma}=\Gamma^\alpha_{\ \gamma\beta}$.\footnote{This condition is given by the fact that the Levi--Civita connection is torsion--free.}
\end{definition}

\begin{remark}
    It is useful to remember that, for any general $\nabla$ (possibly \textit{not} metric--compatible), one also has $\nabla_X \big(g(Y,Z)\big)=X\big(g(Y,Z)\big)$, since $g(Y,Z)\in C^\infty(M)$.
\end{remark}

\begin{definition}[Riemann curvature tensor]
Let $(M,g)$ be a Lorentzian smooth manifold and let $\nabla$ be the Levi--Civita connection of $g$. 
The Riemann curvature tensor is the curvature $R$ of $\nabla$, and it is denoted as $R(g)$.
\end{definition}

\begin{proposition}
Let $(M,g)$ be a Lorentzian smooth manifold and let $\nabla$ be the Levi--Civita connection of $g$. 
The Riemann curvature is completely determined by $g$ and, in local coordinates, it reads
\begin{align}
(R(g))^\rho{}_{\ \mu\nu\sigma}
= \partial_\mu \Gamma^\rho{}_{\nu\sigma}
 - \partial_\nu \Gamma^\rho{}_{\mu\sigma}
 + \Gamma^\rho{}_{\mu\lambda}\Gamma^\lambda{}_{\nu\sigma}
 - \Gamma^\rho{}_{\nu\lambda}\Gamma^\lambda{}_{\mu\sigma}.
\end{align}
Equivalently, lowering the output index (without reordering the three input indices),
\begin{align}
(R(g))_{\alpha\beta\gamma\delta}
= g_{\alpha\lambda}\,(R(g))^\lambda{}_{\ \beta\gamma\delta}
= g_{\alpha\lambda} \left(
\partial_\beta \Gamma^\lambda{}_{\gamma\delta}
- \partial_\gamma \Gamma^\lambda{}_{\beta\delta}
+ \Gamma^\lambda{}_{\beta\mu}\Gamma^\mu{}_{\gamma\delta}
- \Gamma^\lambda{}_{\gamma\mu}\Gamma^\mu{}_{\beta\delta}
\right).
\end{align}
\end{proposition}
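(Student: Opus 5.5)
The statement has two parts. First, $R(g)$ is determined by $g$ alone. Second, its components take the displayed forms. The plan is to reduce both to results already established.

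For the first part, note that the Levi--Civita connection is the unique torsion--free, metric--compatible affine connection. Its Christoffel symbols are given explicitly by
\begin{align}
\Gamma^\alpha{}_{\beta\gamma}=\tfrac12 g^{\alpha\mu}\left(\partial_\beta g_{\mu\gamma}+\partial_\gamma g_{\mu\beta}-\partial_\mu g_{\beta\gamma}\right),
\end{align}
which involve only $g_{\mu\nu}$, its inverse $g^{\mu\nu}$, and first derivatives of $g$. If one wants to justify this formula rather than cite it, the route is the Koszul-type argument. Write metric compatibility $\partial_\rho g_{\mu\nu}=\Gamma^\lambda{}_{\rho\mu}g_{\lambda\nu}+\Gamma^\lambda{}_{\rho\nu}g_{\mu\lambda}$ for the three cyclic permutations of $(\rho,\mu,\nu)$. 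Add two of them and subtract the third, use the symmetry $\Gamma^\lambda{}_{\mu\nu}=\Gamma^\lambda{}_{\nu\mu}$ from torsion--freeness, and contract with $g^{\alpha\nu}$. Since $R$ is the curvature of this $\nabla$, its components are built from $\Gamma$ and $\partial\Gamma$. It therefore depends on $g$ and its derivatives up to second order, and on nothing else.

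For the explicit formula, I would invoke the earlier proposition on coordinate expressions of torsion and curvature, valid for any affine connection:
\begin{align}
R^\ell{}_{ijk}=\partial_i\Gamma^\ell_{jk}-\partial_j\Gamma^\ell_{ik}+\Gamma^\ell_{ip}\Gamma^p_{jk}-\Gamma^\ell_{jp}\Gamma^p_{ik}.
\end{align}
Relabelling $(\ell,i,j,k,p)\mapsto(\rho,\mu,\nu,\sigma,\lambda)$ gives the first displayed expression verbatim. No use of torsion--freeness is needed here beyond the fact that $\nabla$ is an affine connection. The convention $R(\partial_i,\partial_j,\partial_k)=R^\ell{}_{ijk}\partial_\ell$ should be kept consistent with the earlier definition $R(X,Y)=[\nabla_X,\nabla_Y]-\nabla_{[X,Y]}$ and with $[\partial_\mu,\partial_\nu]=0$.

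For the lowered version, apply the index-lowering isomorphism from the remark on raising and lowering indices to the first (contravariant) slot: $R_{\alpha\beta\gamma\delta}=g_{\alpha\lambda}R^\lambda{}_{\beta\gamma\delta}$. Then substitute the previous formula with indices renamed $(\rho,\mu,\nu,\sigma)\mapsto(\lambda,\beta,\gamma,\delta)$ and the dummy index $\lambda\mapsto\mu$.

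The only real obstacle is bookkeeping. One must make sure the ordering of the lower indices matches the convention fixed earlier, so that contracting with $g_{\alpha\lambda}$ does not reorder the input slots as the statement requires. One must also avoid clashes between dummy indices when renaming. If one insists on a self-contained derivation of the Christoffel formula, the cyclic-sum step is the one to check carefully, since sign errors occur there.
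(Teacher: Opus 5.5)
Your proposal is correct and matches the paper's argument. The paper recomputes $\nabla_{\partial_\mu}\nabla_{\partial_\nu}\partial_\sigma-\nabla_{\partial_\nu}\nabla_{\partial_\mu}\partial_\sigma$ using $[\partial_\mu,\partial_\nu]=0$, which is exactly the general-connection computation you cite, and then lowers the first index with $g$; your added remark that $R$ depends only on $g$ through the Levi--Civita Christoffel formula makes explicit what the paper leaves implicit.
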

\begin{proof}
By definition, using the notation $\nabla_\mu\coloneqq \nabla_{\partial_\mu}$, we have
\begin{align}
(R(g))_{\ \mu\nu}
= \nabla_{\partial_\mu}\nabla_{\partial_\nu}
  -\nabla_{\partial_\nu}\nabla_{\partial_\mu}
  -\nabla_{[\partial_\mu,\partial_\nu]}.
\end{align}
Since $[\partial_\mu,\partial_\nu]=0$ and
$\nabla_{\partial_\mu}\partial_\nu=\Gamma^\rho{}_{\mu\nu}\partial_\rho$, we get
\begin{align}
\nabla_{\partial_\mu}\nabla_{\partial_\nu}\partial_\sigma
&= \partial_\mu\Gamma^\rho{}_{\nu\sigma}\,\partial_\rho
   + \Gamma^\rho{}_{\nu\sigma}\Gamma^\lambda{}_{\mu\rho}\,\partial_\lambda,\\
\nabla_{\partial_\nu}\nabla_{\partial_\mu}\partial_\sigma
&= \partial_\nu\Gamma^\rho{}_{\mu\sigma}\,\partial_\rho
   + \Gamma^\rho{}_{\mu\sigma}\Gamma^\lambda{}_{\nu\rho}\,\partial_\lambda .
\end{align}
Subtracting gives the claim. Lowering the first index with $g$ yields the $(0,4)$ expression above.
\end{proof}

\begin{corollary}[Algebraic symmetries]
For all $X,Y,Z,W\in\Gamma(TM)$, one has
\begin{align}
R(g)(X,Y,Z,W) &= -\,R(g)(Y,X,Z,W) = -\,R(g)(X,Y,W,Z)\\
R(g)(X,Y,Z,W) &= R(g)(Z,W,X,Y)\\
\sum_{\mathrm{cyc}\,X,Y,Z} R(g)(X,Y)Z &= 0 \quad\text{(first Bianchi)},
\end{align}
where $R(g)(X,Y,Z,W)\coloneqq g(R(g)(X,Y)Z,\,W)$. With the component convention above, $(R(g))_{\alpha\beta\gamma\delta}=R(g)(\partial_\beta,\partial_\gamma,\partial_\delta,\partial_\alpha)$. Thus the same symmetries read
\begin{align}
    (R(g))_{\alpha\beta\gamma\delta}&=-(R(g))_{\alpha\gamma\beta\delta}=-(R(g))_{\delta\beta\gamma\alpha}=(R(g))_{\gamma\delta\alpha\beta}\\
    (R(g))_{\alpha[\beta\gamma\delta]}&=0,
\end{align}
with the brackets denoting the anti--symmetrization of the indices.
\end{corollary}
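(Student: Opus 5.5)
The plan is to derive the three families of symmetries from the two defining properties of the Levi--Civita connection, in increasing order of difficulty. Throughout, write $R(X,Y,Z,W)\coloneqq g(R(X,Y)Z,W)$ with $R=R(g)$.

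\emph{Antisymmetry in $X,Y$.} This needs no metric at all. It is already contained in the remark following the definition of torsion and curvature, since $R(X,Y)=[\nabla_X,\nabla_Y]-\nabla_{[X,Y]}$ changes sign under $X\leftrightarrow Y$.

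\emph{First Bianchi identity.} This follows at once from \cref{bianchiT0}, because the Levi--Civita connection is torsion--free.

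\emph{Antisymmetry in $Z,W$.} This is where metric compatibility enters. By $C^\infty(M)$--linearity of $R$ we may compute locally with coordinate fields, so that $[X,Y]=0$ and the $\nabla_{[X,Y]}$ term drops. It suffices to show $R(X,Y,Z,Z)=0$ and then polarize in $Z$. Metric compatibility gives
\begin{align}
X Y\big(g(Z,Z)\big)=2\,g(\nabla_X\nabla_Y Z,Z)+2\,g(\nabla_Y Z,\nabla_X Z).
\end{align}
Subtracting the same expression with $X$ and $Y$ exchanged, the symmetric term $g(\nabla_YZ,\nabla_XZ)$ cancels. The left-hand side becomes $[X,Y]\big(g(Z,Z)\big)=0$, which leaves $2\,g(R(X,Y)Z,Z)=0$.

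\emph{Pair symmetry.} I would use the standard combinatorial trick. Write the first Bianchi identity in the form
\begin{align}
R(X,Y,Z,W)+R(Y,Z,X,W)+R(Z,X,Y,W)=0
\end{align}
for each of the four cyclic choices of which slot plays the role of $W$: the triples $(X,Y,Z;W)$, $(Y,Z,W;X)$, $(Z,W,X;Y)$ and $(W,X,Y;Z)$. Add all four identities. Using the two antisymmetries already established, the terms cancel in pairs, leaving
\begin{align}
2R(Z,X,Y,W)-2R(Y,W,Z,X)=0.
\end{align}
After relabeling this is exactly $R(X,Y,Z,W)=R(Z,W,X,Y)$. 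This sign bookkeeping is the main place where an error could creep in, so I would check it carefully, term by term, against the antisymmetries.

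\emph{Translation to components.} Insert coordinate fields into the identities above. With the stated convention $(R(g))_{\alpha\beta\gamma\delta}=R(g)(\partial_\beta,\partial_\gamma,\partial_\delta,\partial_\alpha)$, each symmetry becomes a permutation of indices:
\begin{itemize}
\item antisymmetry in the first pair gives $(R(g))_{\alpha\beta\gamma\delta}=-(R(g))_{\alpha\gamma\beta\delta}$;
\item antisymmetry in the last pair swaps $\delta$ with $\alpha$, giving $-(R(g))_{\delta\beta\gamma\alpha}$;
\item pair symmetry gives $(R(g))_{\gamma\delta\alpha\beta}$;
\item the cyclic Bianchi sum over $\beta,\gamma,\delta$, combined with the antisymmetry in $\beta\gamma$, is equivalent to the vanishing of the total antisymmetrization $(R(g))_{\alpha[\beta\gamma\delta]}=0$.
\end{itemize}
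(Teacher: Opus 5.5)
Your proof is correct. The four-fold Bianchi sum leaves exactly $2R(Z,X,Y,W)+2R(Y,W,X,Z)=2R(Z,X,Y,W)-2R(Y,W,Z,X)$, and your index translations match the convention $(R(g))_{\alpha\beta\gamma\delta}=R(g)(\partial_\beta,\partial_\gamma,\partial_\delta,\partial_\alpha)$. The paper states this corollary without a written proof, relying implicitly on the skew-symmetry remark after the definition of curvature, the torsion-free first Bianchi identity of \cref{bianchiT0}, and metric compatibility, so your argument supplies the standard details of exactly that route.
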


\begin{corollary}[Symmetry of the Ricci tensor]
    Let $(M,g)$ be a Lorentzian smooth manifold and let $\nabla$ be the Levi--Civita connection of $g$. The Ricci tensor for the Riemann curvature, denoted as $\mathrm{Ric}(g)$, is symmetric, i.e. $\mathrm{Ric}(g)(X,Y)=\mathrm{Ric}(g)(Y,X)$, for all $X,Y\in \Gamma(TM)$. Equivalently, $\mathrm{Ric}(g)\in\Gamma(\mathrm{Sym^2}T^*M)$
\end{corollary}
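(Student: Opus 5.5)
Since $\mathrm{Ric}(g)$ is by definition a trace of $R(g)$, the plan is to read symmetry off the pair symmetry $R(g)(X,Y,Z,W)=R(g)(Z,W,X,Y)$ from the preceding corollary (Algebraic symmetries), after rewriting the trace as a full metric contraction. In components we have $\mathrm{Ric}_{\beta\gamma}=R^\alpha{}_{\alpha\beta\gamma}$. Raising the lowered first index back with the cometric gives
\begin{align}
\mathrm{Ric}_{\beta\gamma}=g^{\alpha\delta}(R(g))_{\delta\alpha\beta\gamma}.
\end{align}
The key steps are the following.
\begin{enumerate}[label=(\roman*)]
\item Express $(R(g))_{\delta\alpha\beta\gamma}$ in the slot convention of the corollary, namely $(R(g))_{\delta\alpha\beta\gamma}=R(g)(\partial_\alpha,\partial_\beta,\partial_\gamma,\partial_\delta)$.
\item Apply the pair symmetry to obtain $R(g)(\partial_\beta,\partial_\gamma,\partial_\alpha,\partial_\delta)$. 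Then use the two antisymmetries, once in the first pair and once in the second, so the two signs cancel: $R(g)(\partial_\gamma,\partial_\beta,\partial_\delta,\partial_\alpha)$.
\item Contract with $g^{\alpha\delta}$, which is symmetric, so that $\alpha$ and $\delta$ may be swapped in the sum. This turns the expression into $g^{\alpha\delta}R(g)(\partial_\alpha,\partial_\gamma,\partial_\beta,\partial_\delta)$, which is $\mathrm{Ric}_{\gamma\beta}$.
\end{enumerate}
Since both sides are tensorial, equality of components in every chart gives $\mathrm{Ric}(g)(X,Y)=\mathrm{Ric}(g)(Y,X)$ for all vector fields. Hence $\mathrm{Ric}(g)\in\Gamma(\mathrm{Sym}^2T^*M)$.

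As a cross-check that avoids the pair symmetry, I could contract the first Bianchi identity $R^\alpha{}_{\alpha\beta\gamma}+R^\alpha{}_{\beta\gamma\alpha}+R^\alpha{}_{\gamma\alpha\beta}=0$. The middle term should vanish after lowering: it is $g^{\alpha\delta}$ contracted against a tensor antisymmetric in its first and last lowered indices. The last term is $-\mathrm{Ric}_{\gamma\beta}$ by antisymmetry in the first two inputs. This leaves $\mathrm{Ric}_{\beta\gamma}-\mathrm{Ric}_{\gamma\beta}=0$.

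The main obstacle is bookkeeping rather than ideas. The paper's convention places the output index first in $(R(g))_{\alpha\beta\gamma\delta}$ but last in $R(g)(X,Y,Z,W)$, so I must check carefully which slots the trace pairs. In particular, I need the traced upper index to be matched with the first input slot, consistent with $\mathrm{Ric}(Y,Z)=\operatorname{tr}(W\mapsto R(W,Y,Z))$. I also need to confirm that $g^{\alpha\delta}$ contracts the output slot with the first input.
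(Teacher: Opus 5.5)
Your overall route (trace the curvature with the cometric and read the symmetry off the algebraic symmetries) is the intended one. The paper records this as an immediate corollary of those symmetries and writes no separate proof.

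However, the slot bookkeeping in steps (ii)--(iii) is wrong, and not harmlessly so. Pair symmetry $R(g)(X,Y,Z,W)=R(g)(Z,W,X,Y)$, applied to $R(g)(\partial_\alpha,\partial_\beta,\partial_\gamma,\partial_\delta)$, gives $R(g)(\partial_\gamma,\partial_\delta,\partial_\alpha,\partial_\beta)$, not $R(g)(\partial_\beta,\partial_\gamma,\partial_\alpha,\partial_\delta)$. The latter is a cyclic permutation of the first three slots. That is not an identity on its own: the first Bianchi identity only says the three cyclic terms sum to zero.

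Worse, both $R(g)(\partial_\beta,\partial_\gamma,\partial_\alpha,\partial_\delta)$ and your next expression $R(g)(\partial_\gamma,\partial_\beta,\partial_\delta,\partial_\alpha)$ are antisymmetric in $\alpha,\delta$, since those indices sit in the last two slots. Their contraction with the symmetric $g^{\alpha\delta}$ therefore vanishes. Also, renaming $\alpha\leftrightarrow\delta$ in the latter gives $R(g)(\partial_\gamma,\partial_\beta,\partial_\alpha,\partial_\delta)$, not the $R(g)(\partial_\alpha,\partial_\gamma,\partial_\beta,\partial_\delta)$ you write in (iii). Taken literally, your chain proves $\mathrm{Ric}_{\beta\gamma}=0$.

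The correct chain is
\begin{align}
R(g)(\partial_\alpha,\partial_\beta,\partial_\gamma,\partial_\delta)
=R(g)(\partial_\gamma,\partial_\delta,\partial_\alpha,\partial_\beta)
=R(g)(\partial_\delta,\partial_\gamma,\partial_\beta,\partial_\alpha).
\end{align}
Contracting with $g^{\alpha\delta}$ and renaming $\alpha\leftrightarrow\delta$ then gives $g^{\alpha\delta}R(g)(\partial_\alpha,\partial_\gamma,\partial_\beta,\partial_\delta)=g^{\alpha\delta}(R(g))_{\delta\alpha\gamma\beta}=\mathrm{Ric}_{\gamma\beta}$, as wanted.

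Your cross-check, by contrast, is correct and is already a complete proof. In the contracted first Bianchi identity, the middle term $R^\alpha{}_{\beta\gamma\alpha}=\operatorname{tr}\big(R(g)(\partial_\beta,\partial_\gamma)\big)$ vanishes because $R(g)(X,Y)$ is $g$-skew (metric compatibility). The last term equals $-\mathrm{Ric}_{\gamma\beta}$ by antisymmetry in the first two inputs.

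This argument uses exactly the two defining properties of the Levi--Civita connection: torsion-freeness, through Bianchi, and metric compatibility, through skewness. It avoids pair symmetry, which is itself derived from those same facts. You should make it your main argument, or else repair the chain above.
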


\begin{definition}
    Let $(M,g)$ be a Lorentzian smooth manifold and let $\nabla$ be the Levi--Civita connection of $g$. The \textit{Ricci scalar} $\mathrm{Scal}(g)\in C^\infty(M)$ is defined as
    \begin{align}
        \mathrm{Scal}(g)\coloneqq \mathrm{tr}_g(\mathrm{Ric}(g)).
    \end{align}
\end{definition}

\begin{proposition}\label{contbianc}
    Let $(M,g)$ be a Lorentzian smooth manifold and let $\nabla$ be the Levi--Civita connection of $g$. Then, for the Ricci tensor, it holds
    \begin{align}
        \mathrm{div}_g\big(\mathrm{Ric}(g)-\frac12 \,\mathrm{Scal}(g)\,g\big)=0,
    \end{align}
    where $\nabla$ is regarded as a map $\nabla \colon \Gamma(\otimes^{r}TM\otimes^{s}T^*M) \to  \Gamma(\otimes^{r}TM\otimes^{s+1}T^*M)$, and $\mathrm{div}^j_g\colon\Gamma(\otimes^rTM\otimes^sT^*M)\to\Gamma(\otimes^rTM\otimes^{s-1}T^*M)$ is defined via the trace with respect to the $(s+1)$th and $j$th ($r+1\le j\le r+s$) covariant slots of $\nabla T\in\Gamma(\otimes^rTM\otimes^{s+1}T^*M))$, for every $T\in\Gamma(\otimes^rTM\otimes^sT^*M)$.\footnote{Notice that the index is not specified in the identity above because of symmetry of $\mathrm{Ric}(g)$ and $g$.}\\
    In components, it reads
    \begin{align}
        \nabla_\mu\big((\mathrm{Ric}(g))^\mu_{\ \nu}-\frac12\,\mathrm{Scal}(g)\,\delta^{\mu}_{\ \nu}\big)=0.
    \end{align}
\end{proposition}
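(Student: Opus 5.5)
The plan is to obtain the statement as a double contraction of the second Bianchi identity, using the Ricci identity already established in \cref{riccid}, which is itself the once-contracted second Bianchi identity for a torsion--free connection. The Levi--Civita connection is torsion--free, so \cref{riccid} applies and gives, in coordinates,
\begin{align}
\nabla_\alpha R^\alpha{}_{\beta\gamma\delta}=\nabla_\beta \mathrm{Ric}_{\gamma\delta}-\nabla_\gamma\mathrm{Ric}_{\beta\delta}.
\end{align}
The remaining work is to contract once more with the cometric and to use metric compatibility, $\nabla g=0=\nabla g^{-1}$. This compatibility lets $g^{\mu\nu}$ pass through $\nabla$, so raising and lowering indices commutes with covariant differentiation.

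First I would lower the upper index with $g$ and use the pair symmetry $R_{\alpha\beta\gamma\delta}=R_{\gamma\delta\alpha\beta}$, together with the antisymmetries from the corollary on algebraic symmetries, to rewrite $\nabla_\alpha R^\alpha{}_{\beta\gamma\delta}$ as a divergence on a different slot. Then I would contract the Ricci identity with $g^{\beta\delta}$ (or $g^{\gamma\delta}$, depending on which pair matches the convention $\mathrm{Ric}_{ij}=R^k{}_{kij}$). On the right-hand side this gives $\nabla_\beta\mathrm{Scal}$-type and $\nabla^\delta\mathrm{Ric}_{\beta\delta}$-type terms. On the left-hand side, after moving $g^{\beta\delta}$ inside $\nabla_\alpha$, the term becomes a contraction of the Riemann tensor that is again a Ricci tensor, up to sign. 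The target outcome is $2\nabla^\mu\mathrm{Ric}_{\mu\nu}=\nabla_\nu\mathrm{Scal}$.

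Finally, $\nabla_\mu(\mathrm{Scal}\,\delta^\mu{}_\nu)=\nabla_\nu\mathrm{Scal}$ because $\nabla\delta=0$ (the identity tensor is parallel, by compatibility with contractions). Hence $\nabla_\mu(\mathrm{Ric}^\mu{}_\nu-\tfrac12\mathrm{Scal}\,\delta^\mu{}_\nu)=0$, and the symmetry of $\mathrm{Ric}(g)$ and $g$ shows the choice of slot in $\mathrm{div}_g$ is immaterial.

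The main obstacle is index and sign bookkeeping. The paper's conventions are nonstandard: $\mathrm{Ric}_{ij}=R^k{}_{kij}$ contracts the upper index with the first derivative slot, and the lowered tensor $R_{\alpha\beta\gamma\delta}$ places the output index first. I must therefore check carefully which contraction of $R^\alpha{}_{\beta\gamma\delta}$ with $g^{\beta\delta}$ reproduces $\pm\mathrm{Ric}$, and confirm that the two Ricci terms combine with coefficient $2$ rather than cancel. I would first verify these contractions directly from the symmetries $R_{\alpha\beta\gamma\delta}=-R_{\alpha\gamma\beta\delta}=-R_{\delta\beta\gamma\alpha}$. As a sanity check I would compare against the coordinate formula for $R$ in a locally flat chart.
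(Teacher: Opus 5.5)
Your proposal is correct and follows the paper's proof essentially step for step. The paper also starts from the once-contracted second Bianchi identity (the content of \cref{riccid}) and contracts it with $g^{\lambda\nu}$ on the last two lower slots, using $g^{\lambda\nu}R^\rho{}_{\sigma\lambda\nu}=\mathrm{Ric}^\rho{}_\sigma$ and $\nabla g=0$; it then invokes the symmetry of $\mathrm{Ric}(g)$ to merge the two divergence terms into $2\nabla^\mu\mathrm{Ric}_{\mu\nu}=\nabla_\nu\mathrm{Scal}$. On your bookkeeping worry: contracting with $g^{\gamma\delta}$ gives $+\mathrm{Ric}^\alpha{}_\beta$ on the left, and contracting with $g^{\beta\delta}$ gives $-\mathrm{Ric}^\alpha{}_\gamma$, so both lead to the same identity and the pair-symmetry rewriting is unnecessary; in either case it is the symmetry of the Ricci tensor, not just of $g$, that identifies $\nabla^\alpha\mathrm{Ric}_{\alpha\beta}$ with $\nabla^\delta\mathrm{Ric}_{\beta\delta}$.
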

\begin{proof}
Write $R_{\mu\nu}=\mathrm{Ric}(g)_{\mu\nu}=R^\rho{}_{\rho\mu\nu}$ and $R=\mathrm{Scal}(g)$. In our convention, the second Bianchi identity reads
\begin{align}
\nabla_\lambda R^\rho{}_{\mu\nu\sigma}
+\nabla_\mu R^\rho{}_{\nu\lambda\sigma}
+\nabla_\nu R^\rho{}_{\lambda\mu\sigma}=0.
\end{align}
Contracting $\rho$ with $\lambda$ gives
\begin{align}
\nabla_\rho R^\rho{}_{\mu\nu\sigma}
=\nabla_\mu R_{\nu\sigma}-\nabla_\nu R_{\mu\sigma}.
\end{align}
Relabeling the free indices, this is
\begin{align}\label{eq:bianchi-contracted-step}
\nabla_\rho R^\rho{}_{\sigma\lambda\nu}
=\nabla_\sigma R_{\lambda\nu}-\nabla_\lambda R_{\sigma\nu}.
\end{align}
The algebraic curvature symmetries give $g^{\lambda\nu}R^\rho{}_{\sigma\lambda\nu}=R^\rho{}_\sigma$. Contracting \eqref{eq:bianchi-contracted-step} with $g^{\lambda\nu}$ and using $\nabla g=0$ yields
\begin{align}
\nabla_\rho R^\rho{}_\sigma
=\nabla_\sigma R-\nabla^\lambda R_{\sigma\lambda}.
\end{align}
By symmetry of Ricci, the two divergence terms are equal, so
\begin{align}
\nabla^\mu R_{\mu\nu}=\tfrac12\nabla_\nu R.
\end{align}
Therefore,
\begin{align}
\nabla^\mu\big(R_{\mu\nu}-\tfrac12R\,g_{\mu\nu}\big)=0,
\end{align}
which is the claim.
\end{proof}
\begin{remark}
    The tensor field defined by $G(g)\coloneqq \mathrm{Ric}(g)-\frac12 \,\mathrm{Scal}(g)\,g\in \Gamma(\mathrm{Sym^2}T^*M)$ is called the \textit{Einstein tensor}. Therefore, the identity above reads 
    \begin{align}
        \mathrm{div}_gG(g)=0
    \end{align}
    Notice that, because of metric compatibility, it also holds
    \begin{align}
        \mathrm{div}_g\big(G(g)+\Lambda g\big)=0,
    \end{align}
    where $\Lambda\in\mathbb R$. The constant $\Lambda$ is called the \textit{cosmological constant}.\footnote{As far as now, this quantity seems to be responsible for the so called \textit{dark energy} in our universe.}
\end{remark}

\begin{mdframed}
\begin{definition}[Axioms of geometry]
\leavevmode\newline
\vspace{-0.55cm}
\begin{enumerate}\setlength{\itemsep}{2pt}
    \item \textbf{Space--time.} Space--time is a connected four--dimensional Lorentzian manifold $(M,g)$; an orientation $\mathfrak o$ and a time–orientation $\mathfrak t$ are fixed.

    \item \textbf{Affine connection.} The space--time manifold $(M,g)$ carries an affine connection $\nabla$, which is the Levi--Civita connection of $g$.

    \item \textbf{Causality and worldlines.} Cones and, for every $x\in M$, admissible causal curves are determined by $g$ and $\mathfrak t$:
    \begin{align}
    \mathcal J_x&=\{v\in T_xM\setminus\{0\}:\ g_x(v,v)=0\},\\
    \mathcal T_x&=\{v\in T_xM:\ g_x(v,v)<0\}=\mathcal T_x^+ \sqcup \mathcal T_x^-.
    \end{align}
    Worldlines are smooth future--directed time--like curves.

    \item \textbf{General covariance with no absolute structure.}
    The equations of the theory are local tensor identities on open sets $U\subseteq M$ (relations in $\mathfrak T(U)$); no additional non--dynamical structure on $M$ is admitted, apart from a fixed orientation $\mathfrak o$ and time–orientation $\mathfrak t$ (discrete; used only for $\mathrm{vol}_g$, $\varepsilon$, $\star$, and future/past orientation).
\end{enumerate}
\end{definition}
\end{mdframed}

\begin{remark}
Notice that axiom $2$ may seem redundant, since every Lorentzian metric determines a unique Levi--Civita connection. However, the axiom specifies that the connection entering the theory is precisely this connection, rather than an independent field. Thus, the connection is entirely determined by $g$, and the gravitational field equations can be written in terms of $g$ alone. In other words, this restricts the space of gravitational fields to the space of Lorentzian metrics.\\
Another formulation treats the connection as an independent field in the variational principle. In the torsion--free metric--affine Palatini formulation of the Einstein--Hilbert action, metric compatibility follows from the connection equation (when matter does not couple to the independent connection). Related first--order formulations will be treated in the next chapters.
\end{remark}

\section{Elements of variational calculus}
\begin{definition}[Fréchet field space of mapping type]\label{def:field-space-mapping}
Let $K$ be a smooth manifold and let $\pi:E\to K$ be a real smooth finite--rank vector bundle.
The space of smooth sections
\begin{align}
\mathcal F \coloneqq \Gamma(E)\subseteq C^\infty(K,E)
\end{align}
endowed with the standard $C^\infty$--Fréchet topology is called a \emph{Fréchet field space of mapping type} or just a \textit{space of fields}.
\end{definition}

\begin{remark}\label{remopfrech}
Any open subset $\mathcal U \subset \Gamma(E)$ (with the $C^\infty$--Fréchet topology) inherits the structure of a Fréchet manifold modeled on $\Gamma(E)$. In particular, a restriction that is open in this topology still gives a Fréchet manifold. Pointwise open conditions, such as non--vanishing or a fixed metric signature, define open subsets when $K$ is compact; on a noncompact base they need not be open in the compact--open $C^\infty$ topology.
\end{remark}

\begin{proposition}[Basic properties]\label{prop:field-space-basic}
Let $\mathcal F$ be a space of fields as in \cref{def:field-space-mapping}, so that $\mathcal F=\Gamma(E)$ for a finite--rank vector bundle $E\to K$. Then, the following hold:
\begin{enumerate}\setlength{\itemsep}{2pt}
    \item $\mathcal F$ is a smooth Fréchet manifold modeled on the Fréchet space $\Gamma(E)$;
    \item for every $\phi\in\mathcal F$ there is a canonical identification
    \begin{align}
        T_\phi \mathcal F \cong \Gamma(E);
    \end{align}
    \item every tangent vector $W\in T_\phi\mathcal F$ can be given as the derivative of some smooth curve at $0$
    \begin{align}
        \Phi:(-\varepsilon,\varepsilon)\to\mathcal F,\qquad \Phi_0=\phi,\quad \dot\Phi_0=W;
    \end{align}
    \item consequently, for any $F\in C^\infty(\mathcal F,\mathbb R)$ and $W\in T_\phi\mathcal F$,
    \begin{align}
        (\delta F)_\phi[W]
        = \frac{d}{ds}\Big|_{s=0} F(\Phi_s),
    \end{align}
    where $\Phi$ is any smooth curve in $\mathcal F$ with $\Phi_0=\phi$ and $\dot\Phi_0=W$, and $\delta$ denotes the de Rham differential on $\mathcal F$.\footnote{This is the usual name for the exterior derivative in the infinite--dimensional case.}
\end{enumerate}
\end{proposition}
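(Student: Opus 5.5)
The plan is to reduce everything to the linear structure of $\Gamma(E)$ itself, since $\mathcal F$ is the whole section space rather than a nonlinear mapping space.

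For item 1, I would take the single global chart $\mathrm{id}:\mathcal F\to\Gamma(E)$. There are no overlaps to check, so there are no transition maps beyond the identity, which is trivially smooth in the Fréchet sense. The only genuine input is that $\Gamma(E)$ with the $C^\infty$ topology is a Fréchet space. I would describe the topology by a countable family of seminorms $\|s\|_{k,K_m}$, obtained by covering $K$ with a countable exhaustion by compacta $K_m$ (second countability). Each $K_m$ is covered by finitely many trivializing charts, and on each chart we take the sup over $K_m$ of all coordinate derivatives of order $\le k$ of the local components. Completeness follows because a Cauchy sequence converges in $C^k$ on each compact chart piece for every $k$, and the limits glue to a smooth section. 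Uniform limits preserve derivatives, and the local components transform linearly via the smooth $g_{UV}$, so the gluing is consistent. Different choices of charts give equivalent seminorm families, since $g_{UV}$ and its derivatives are bounded on compacta.

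For item 2, in a global linear chart the tangent space at any point of an open subset of a locally convex space is canonically the model space. I would define $T_\phi\mathcal F$ as equivalence classes of smooth curves through $\phi$ (the curve picture from the finite-dimensional remark on $T_pM$). The map $[\Phi]\mapsto\dot\Phi_0=\lim_{s\to0}(\Phi_s-\phi)/s\in\Gamma(E)$ is then well defined, with the limit taken in the Fréchet topology. It is canonical because no chart choice entered.

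Item 3 is surjectivity of that map. Given $W\in\Gamma(E)$, I would take $\Phi_s=\phi+sW$, which is affine and hence smooth, with $\dot\Phi_0=W$. Item 4 then follows by defining $(\delta F)_\phi[W]$ as the Gateaux/Bastiani derivative $\lim_{s\to0}(F(\phi+sW)-F(\phi))/s$. Independence of the chosen curve follows from the chain rule in the convenient/Bastiani calculus, $\frac{d}{ds}F(\Phi_s)|_0=dF_\phi(\dot\Phi_0)$, because $dF_\phi$ is continuous and linear in the direction.

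The main obstacle is foundational rather than computational. Fréchet calculus lacks an inverse function theorem, so one must fix a framework (Bastiani/Michal–Bastiani smoothness) in which the chain rule still holds. I would state that framework explicitly and cite the chain rule there, rather than prove it. A secondary subtlety is the noncompact case: the topology is the compact–open $C^\infty$ one. This does not affect the argument, since only linearity and the seminorm description are used.
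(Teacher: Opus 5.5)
Your proposal is correct and follows essentially the same route as the paper: a countable family of seminorms makes $\Gamma(E)$ a Fréchet space, so $\mathcal F$ is a Fréchet manifold with a single global chart (the paper uses $\sup_{K_j}|\nabla^m s|$ for a fixed connection and bundle metric where you use chart-wise coordinate derivatives); tangent vectors are derivatives of curves of sections; the affine curve $\phi+sW$ gives item 3; and the directional derivative gives item 4. Your appeal to the Bastiani chain rule to show that item 4 does not depend on the chosen curve is more careful than the paper, which just asserts it by definition of $\delta$; the one point to tighten is to take your chart-wise suprema over compact subsets of the chart domains, so that they are finite.
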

\begin{proof}
Since $K$ is a smooth manifold and $E$ has finite rank, we can choose a countable, locally finite trivializing cover $\{U_j\}_{j\in\mathbb N}$ of $E$ and an exhaustion by compacts $K_1 \Subset K_2 \Subset \cdots$ with $K=\bigcup_j K_j$.
Fix a connection on $E$ and a bundle metric, inducing norms on all tensor powers; for $s\in\Gamma(E)$ define seminorms
\begin{align}
    p_{j,m}(s) \coloneqq \sup_{x\in K_j}\big|\nabla^m s(x)\big|.
\end{align}
These seminorms generate the standard $C^\infty$ topology on $\Gamma(E)$, which is metrizable, complete and locally convex; hence $\Gamma(E)$ is a Fréchet space. This proves (1).\\
For (2), let $\phi\in\Gamma(E)$. A tangent vector at $\phi$ is, by definition, the derivative at $s=0$ of a smooth curve of sections
\begin{align}
    s \longmapsto \Phi_s \in \Gamma(E), \qquad \Phi_0 = \phi.
\end{align}
For each $x\in K$, the curve $s\mapsto \Phi_s(x)$ lies in the fiber $E_x$, so its derivative at $0$ is a vertical vector canonically identified with an element of $E_x$. Thus, we get a smooth map
\begin{align}
    x \longmapsto W(x)\in E_x,
\end{align}
i.e. a section $W\in\Gamma(E)$. Conversely, given $W\in\Gamma(E)$, we can define
\begin{align}
    \Phi_s(x) \coloneqq \phi(x) + s\,W(x)
\end{align}
for $|s|$ small; this gives a smooth curve of sections with $\Phi_0=\phi$ and $\dot\Phi_0=W$. Hence,
\begin{align}
    T_\phi \mathcal F \cong \Gamma(E),
\end{align}
which proves (2), and the explicit construction of $\Phi_s$ proves (3).\\
For (4), let $F\in C^\infty(\mathcal F,\mathbb R)$ and $W\in T_\phi\mathcal F$, and pick any smooth curve $\Phi_s$ with $\Phi_0=\phi$ and $\dot\Phi_0=W$. By definition of the de Rham differential $\delta$ on the Fréchet manifold $\mathcal F$,
\begin{align}
    (\delta F)_\phi[W]
    = \frac{d}{ds}\Big|_{s=0} F(\Phi_s),
\end{align}
as claimed.
\end{proof}

\begin{remark}[Variational meaning of the de Rham operator]\label{rem:variational-delta}
The operator $\delta$ appearing in \cref{prop:field-space-basic} is exactly the one we use in the \textit{variational calculus} on the space of fields.
In other words, when we write expressions such as $\delta S(\phi)[W]=0$, the symbol $\delta$ is \emph{nothing else} than the de Rham differential on the Fréchet manifold of fields $\mathcal F=\Gamma(E)$, evaluated on a tangent direction $W\in T_\phi\mathcal F$.\\
The whole construction above is precisely what makes the “physicists’ variation”
\begin{align}
\delta F[\text{field}] = \frac{d}{ds}\Big|_{s=0} F(\text{field} + s\cdot\text{variation})
\end{align}
rigorous, and in this context a “variation” is just a tangent vector to the field space, where the formula in (4) of \cref{prop:field-space-basic} says that the de Rham differential computes it.
\end{remark}

\begin{remark}\label{rem:examples-mapping-type}
An important example of \cref{def:field-space-mapping} is obtained by taking
$E' \coloneqq \Lambda^k T^*K \otimes E$, where $E\to K$ is a vector bundle, so that
\begin{align}
    \mathcal F = \Gamma(E') = \Omega^k(K;E)
\end{align}
is the space of $k$--forms on $K$ with values in $E$.
\end{remark}

\begin{definition}[Space of smooth curves]\label{ass:curves-space}
Let $I=[a,b]\subset\mathbb R$ and set
\begin{align}
\mathcal F \coloneqq C^\infty(I,M).
\end{align}
We refer to this space as the \textit{space of smooth curves}.
\end{definition}

\begin{remark}
    The mapping space
    \begin{align}
    \mathcal F=C^\infty(I,M)
    \end{align}
    is a smooth Fréchet manifold. This is not directly of the form $\Gamma(E)$ with $E$ a \emph{vector} bundle, but it can be realized, as a Fréchet manifold: since $I$ is compact, we have the natural identification
\begin{align}
T\,C^\infty(I,M)\cong C^\infty(I,TM),
\end{align}
and, taking the fiber at $\gamma$, we obtain
\begin{align}
T_\gamma \mathcal F \cong C^\infty(I,TM)_\gamma \cong \Gamma(\gamma^*TM).
\end{align}
One way to see this is to embed $M$ properly and smoothly into some $\mathbb R^N$ and regard $C^\infty(I,M)$ as a closed Fréchet submanifold of $C^\infty(I,\mathbb R^N)=\Gamma(I\times \mathbb R^N)$.\footnote{By the proper version of Whitney's embedding theorem, any finite--dimensional smooth manifold $M$ admits a proper smooth embedding $i:M\hookrightarrow\mathbb R^N$ for some natural $N$. Hence,
    \begin{align}
    C^\infty(I,M)=\big\{\gamma\in C^\infty(I,\mathbb R^N)\ \big|\ \gamma(\lambda)\in i(M)\ \forall\,\lambda\in I\big\},
    \end{align}
    so $C^\infty(I,M)$ is a closed Fréchet submanifold of the section space $C^\infty(I,\mathbb R^N)=\Gamma(I\times\mathbb R^N)$.} In this sense, the tautological form of \cref{def:tautological} fits into the mapping--space framework introduced in \cref{def:field-space-mapping}.
\end{remark}

\begin{definition}[Evaluation map]\label{def:evaluation-map-final}
Let $\mathcal F = C^\infty(I,M)$. The \emph{evaluation map} is
\begin{align}
\mathrm{ev}:\ \mathcal F\times I \longrightarrow M,\qquad
\mathrm{ev}(\gamma,\lambda)\coloneqq \gamma(\lambda).
\end{align}
\end{definition}

\begin{definition}[Tautological $1$--form on $\mathcal F$]\label{def:tautological}
Let $\mathcal F=C^\infty(I,M)$.
Consider the bundle
\begin{align}
\mathcal E \coloneqq \bigsqcup_{\gamma\in\mathcal F} \Gamma(\gamma^*TM) \longrightarrow \mathcal F,
\end{align}
whose fiber over $\gamma$ is the space of vector fields along $\gamma$.
The \emph{tautological one--form} on $\mathcal F$ is the $\mathcal E$--valued $1$--form
\begin{align}
\delta\gamma \in \Omega^1(\mathcal F;\mathcal E)
\end{align}
defined by
\begin{align}
(\delta\gamma)_\gamma[W] \coloneqq V_\gamma \in \Gamma(\gamma^*TM),
\end{align}
where $W\in T_\gamma\mathcal F$ and $V_\gamma$ is the unique vector field along $\gamma$ corresponding to $W$ under the canonical identification
\begin{align}
T_\gamma\mathcal F \cong \Gamma(\gamma^*TM).
\end{align}
\end{definition}

\begin{definition}[Pullback of a tensor field to $\mathcal F\times I$]\label{def:pull--back-tensor}
Let $T\in\Gamma(\otimes^r TM\otimes^s T^*M)$ and let $\mathcal F = C^\infty(I,M)$.
The \emph{pull--back} of $T$ to $\mathcal F\times I$ is
\begin{align}
\mathrm{ev}^*T
\ \in\ \Gamma \big(\mathrm{ev}^*(\otimes^r TM\otimes^s T^*M)\big),
\end{align}
where $\mathrm{ev}:\mathcal F\times I\to M$ is the evaluation map $(\gamma,\lambda)\mapsto \gamma(\lambda)$.
\end{definition}

\begin{remark}
Since $M$ carries an affine connection $\nabla$ (typically the Levi--Civita connection of a metric $g$), it pulls back along $\mathrm{ev}$ to a connection on $\mathrm{ev}^*(\cdots)$, which we denote by
\begin{align}
\mathrm{ev}^*\nabla.
\end{align}
\end{remark}

\begin{definition}[Vertical variation of a pulled--back tensor]\label{def:vertical-variation}
Let $\nabla$ be an affine connection on $M$, let $I=[a,b]\subset\mathbb R$ be compact, and let $\mathcal F=C^\infty(I,M)$ with evaluation map
\begin{align}
\mathrm{ev}:\ \mathcal F\times I \longrightarrow M,\qquad (\gamma,\lambda)\longmapsto \gamma(\lambda).
\end{align}
For a tensor field $T\in\Gamma(\otimes^r TM\otimes^s T^*M)$, consider its pull--back
\begin{align}
\mathrm{ev}^*T \in \Gamma \big(\mathrm{ev}^*(\otimes^r TM\otimes^s T^*M)\big).
\end{align}
The \emph{vertical variation of $T$} is the $1$--form on $\mathcal F$ with values in tensor fields along the curve, defined by
\begin{align}
\delta T \coloneqq \iota_{\delta\gamma}\big(\mathrm{ev}^*(\nabla T)\big).
\end{align}
Here $\delta$ on bundle--valued fields denotes the covariant variation induced by $\mathrm{ev}^*\nabla$, also denoted $\delta_\nabla$; it agrees with the ordinary de Rham variation on scalar functions. More generally, for any section $A$ of a tensor bundle pulled back by $\mathrm{ev}$, including the velocity field, set
\begin{align}
(\delta_\nabla A)_{(\gamma,\lambda)}[W]
=(\mathrm{ev}^*\nabla)_{(W,0)}A.
\end{align}
This distinction matters for higher variations: $\delta_\nabla^2$ is curvature acting on the tensor indices, and need not vanish. Ordinary scalar de Rham variation still satisfies $\delta^2=0$.
\end{definition}

\begin{proposition}[Explicit form of the vertical variation]\label{prop:vertical-variation-explicit}
Let $T$ and $\delta T$ be as in \cref{def:vertical-variation}.
Then, for every $\gamma\in\mathcal F$ and $W\in T_\gamma\mathcal F$, with $V_\gamma=(\delta\gamma)_\gamma[W]\in\Gamma(\gamma^*TM)$, one has
\begin{align}
(\delta T)_\gamma[W] = (\nabla_{V_\gamma} T)^\gamma,
\end{align}
i.e.\ the variation is the covariant derivative of $T$ in the direction of the variational vector field, restricted along $\gamma$.
\end{proposition}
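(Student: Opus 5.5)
The identity follows by unwinding the three ingredients of \cref{def:vertical-variation}: the pull--back $\mathrm{ev}^*(\nabla T)$, the tautological form $\delta\gamma$ of \cref{def:tautological}, and the interior product $\iota$ of \cref{opetensors}. All of these are defined pointwise. So I fix $\gamma\in\mathcal F$, $W\in T_\gamma\mathcal F$ and $\lambda\in I$, evaluate both sides at $(\gamma,\lambda)$, and show they agree in $\otimes^r T_{\gamma(\lambda)}M\otimes^s T^*_{\gamma(\lambda)}M$.

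\textbf{Left-hand side.} By definition of the pull--back along $\mathrm{ev}$, the fiber of $\mathrm{ev}^*(\nabla T)$ at $(\gamma,\lambda)$ is $(\nabla T)_{\gamma(\lambda)}\in T^*_{\gamma(\lambda)}M\otimes(\otimes^rT_{\gamma(\lambda)}M\otimes^sT^*_{\gamma(\lambda)}M)$. Here the extra covariant slot is the derivative slot. By \cref{def:tautological}, $(\delta\gamma)_\gamma[W]=V_\gamma$, so at $\lambda$ it gives $V_\gamma(\lambda)\in T_{\gamma(\lambda)}M$. This is well typed because $V_\gamma$ is a section of $\gamma^*TM$. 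Inserting it into the derivative slot via $\iota$ gives
\begin{align}
(\delta T)_{(\gamma,\lambda)}[W]=\iota_{V_\gamma(\lambda)}(\nabla T)_{\gamma(\lambda)}=(\nabla_{V_\gamma(\lambda)}T)_{\gamma(\lambda)}.
\end{align}
The last equality uses the $C^\infty(M)$--linearity of $\nabla_X T$ in $X$ (\cref{def:cov-der-algebraic}(iii)). That linearity makes $\nabla_XT$ at a point depend only on $X$ at that point, so $\nabla T$ is a genuine tensor and contracting its derivative slot with a vector is the covariant derivative in that direction. This is exactly $(\nabla_{V_\gamma}T)^\gamma(\lambda)$, the restriction along $\gamma$ of the covariant derivative in the direction of the variational field. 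It makes sense even though $V_\gamma$ is only a field along $\gamma$.

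\textbf{Consistency with the covariant-variation description.} As a check, I would also compare with the description $(\delta_\nabla A)[W]=(\mathrm{ev}^*\nabla)_{(W,0)}A$ in \cref{def:vertical-variation}, applied to $A=\mathrm{ev}^*T$. The defining property of a pull--back connection is $(\mathrm{ev}^*\nabla)_Z(\mathrm{ev}^*T)=\mathrm{ev}^*(\nabla_{d\,\mathrm{ev}(Z)}T)$. So it suffices to show $d\,\mathrm{ev}_{(\gamma,\lambda)}(W,0)=V_\gamma(\lambda)$. To see this, represent $W$ by a curve $\Phi_s$ with $\Phi_0=\gamma$, as in \cref{prop:field-space-basic}(3). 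Then $\mathrm{ev}(\Phi_s,\lambda)=\Phi_s(\lambda)$, whose $s$--derivative at $0$ is $V_\gamma(\lambda)$ under the identification $T_\gamma\mathcal F\cong\Gamma(\gamma^*TM)$.

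\textbf{Main obstacle.} The only genuinely delicate point is this chain-rule identity $d\,\mathrm{ev}(W,0)=V_\gamma(\lambda)$. It needs smoothness of $\mathrm{ev}$ on the Fréchet manifold $\mathcal F\times I$. I would handle this through the proper Whitney embedding $M\hookrightarrow\mathbb R^N$: there $\mathrm{ev}$ is the restriction of the continuous linear evaluation map on $C^\infty(I,\mathbb R^N)$, and the derivative is computed componentwise in the ambient space. After that, the result is a matter of unwinding definitions, as set out above.
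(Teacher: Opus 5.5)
Your argument is correct and is the same as the paper's: both unwind $\delta T=\iota_{\delta\gamma}(\mathrm{ev}^*(\nabla T))$. Both note that the fiber of $\mathrm{ev}^*(\nabla T)$ over $(\gamma,\lambda)$ is $(\nabla T)_{\gamma(\lambda)}$, and that evaluating $\delta\gamma$ on $W$ inserts $V_\gamma$ into the derivative slot. Your comparison with $(\mathrm{ev}^*\nabla)_{(W,0)}$ and the chain-rule identity $d\,\mathrm{ev}(W,0)=V_\gamma(\lambda)$ is a harmless extra consistency check that the paper does not carry out, so it is not really the ``main obstacle'' for this statement as defined.
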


\begin{proof}
Since $\nabla T$ is a $(r,s+1)$--tensor field on $M$, its pull--back $\mathrm{ev}^*(\nabla T)$ is a section of the pulled--back $(r,s+1)$--tensor bundle over $\mathcal F\times I$, with tensor slots in $T_{\gamma(\lambda)}M$ and its dual.
The form $\delta\gamma$ is a $1$--form on $\mathcal F$ with values in vector fields along $\gamma$.
Contracting $\mathrm{ev}^*(\nabla T)$ with $\delta\gamma$ in the last slot gives a $1$--form on $\mathcal F$ with values in tensors along $\gamma$.
Evaluating at $\gamma$ and $W$ amounts to inserting $V_\gamma$ in that last slot:
\begin{align}
(\delta T)_\gamma[W]
= \big(\mathrm{ev}^*(\nabla T)\big)_{(\gamma,\cdot)}[V_\gamma]
= (\nabla_{V_\gamma} T)^\gamma,
\end{align}
because restricting $\mathrm{ev}^*(\nabla T)$ to $\{\gamma\}\times I$ is the same as restricting $\nabla T$ along $\gamma$.
\end{proof}

\begin{lemma}[Chain rule along the interval]\label{lem:chain-rule-I}
Let $d\lambda$ be the standard $1$--form on $I$. Then
\begin{align}
\delta(d\lambda)=0,
\end{align}
and $\delta$ acts only on the $\mathcal F$--variable of $\mathcal F\times I$.
In particular, if $\alpha(\gamma,\lambda)$ is a smooth function on $\mathcal F\times I$, then
\begin{align}
\delta \left(\int_I \alpha(\gamma,\lambda)\,d\lambda\right)
= \int_I \delta\big(\alpha(\gamma,\lambda)\big)\,d\lambda.
\end{align}
\end{lemma}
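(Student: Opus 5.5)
The plan is to view everything on the product Fréchet manifold $\mathcal F\times I$ and to use that its tangent spaces split canonically as $T_{(\gamma,\lambda)}(\mathcal F\times I)\cong T_\gamma\mathcal F\oplus T_\lambda I$. The variational operator $\delta$ is, by \cref{prop:field-space-basic}(4) and \cref{rem:variational-delta}, the de Rham differential in the $\mathcal F$--direction only. In other words, it is the partial differential obtained from the total differential on $\mathcal F\times I$ by restricting to vectors of the form $(W,0)$. This is the same convention already used for $\delta_\nabla$ in \cref{def:vertical-variation}, where one differentiates along $(W,0)$. The second statement of the lemma, that $\delta$ acts only on the $\mathcal F$--variable, is therefore a matter of unwinding this definition, and I would state it first.

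For $\delta(d\lambda)=0$, I would regard $d\lambda$ as the pull--back $\mathrm{pr}_I^*(d\lambda)$ along the projection $\mathrm{pr}_I:\mathcal F\times I\to I$. Its coefficient is the constant function $1$ on $\mathcal F\times I$, so it has no dependence on $\gamma$. I then evaluate the variation on a curve $\Phi_s$ with $\Phi_0=\gamma$, $\dot\Phi_0=W$. The form $d\lambda$ at $(\Phi_s,\lambda)$ is the same for all $s$, so $\frac{d}{ds}\big|_{s=0}$ gives zero. Equivalently, $\delta$ commutes with $\mathrm{pr}_I^*$ and annihilates anything pulled back from $I$. With the graded Leibniz rule this yields $\delta(\alpha\,d\lambda)=(\delta\alpha)\wedge d\lambda$, up to the sign convention for moving $\delta$ past a $1$--form on $I$. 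That sign is irrelevant here because $\alpha$ is a function.

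For the integral formula, fix $\gamma$ and $W$ and choose the curve $\Phi_s=\gamma+sW$ in a trivialization. On $C^\infty(I,M)$ this means using the closed--submanifold embedding into $C^\infty(I,\mathbb R^N)$ from the remark above, or any smooth curve as in \cref{prop:field-space-basic}(3). Then
\begin{align}
\delta\Big(\int_I\alpha\,d\lambda\Big)_\gamma[W]=\frac{d}{ds}\Big|_{s=0}\int_I\alpha(\Phi_s,\lambda)\,d\lambda ,
\end{align}
and the goal is to move $\frac{d}{ds}$ inside the integral. This is the main obstacle: it needs joint smoothness of $(s,\lambda)\mapsto\alpha(\Phi_s,\lambda)$ on $(-\varepsilon,\varepsilon)\times I$.

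I would obtain that joint smoothness from smoothness of $\alpha$ on $\mathcal F\times I$ composed with the smooth curve $s\mapsto(\Phi_s,\lambda)$. Combined with compactness of $I=[a,b]$, this makes $\partial_s\alpha(\Phi_s,\lambda)$ continuous, hence uniformly bounded on $[-\varepsilon/2,\varepsilon/2]\times I$. The classical theorem on differentiation under the integral sign (or dominated convergence) then applies. The integrand obtained this way is $\frac{d}{ds}\big|_0\alpha(\Phi_s,\lambda)=\delta(\alpha(\gamma,\lambda))[W]$ pointwise in $\lambda$. Finally, independence from the chosen curve $\Phi$ follows from \cref{prop:field-space-basic}(4) for each fixed $\lambda$, which concludes the argument.
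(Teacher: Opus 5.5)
Your proposal is correct and follows essentially the same route as the paper. In both, $\delta$ is the differential in the $\mathcal F$--direction only, so it annihilates $d\lambda$. The integral identity then comes from differentiating $\int_I\alpha(\Upsilon_s,\lambda)\,d\lambda$ along a curve of curves and exchanging $\frac{d}{ds}$ with $\int_I$ by smoothness and compactness of $I$; you justify this exchange more explicitly through the uniform bound on $\partial_s\alpha$.

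One small slip: $\gamma+sW$ formed in $C^\infty(I,\mathbb R^N)$ generally leaves $i(M)$, so $\alpha(\Phi_s,\lambda)$ would not be defined there. Use instead any smooth curve in $C^\infty(I,M)$ with initial velocity $W$, for instance $\lambda\mapsto\exp_{\gamma(\lambda)}(s\,V_\gamma(\lambda))$ for small $|s|$, or the composition with a tubular-neighbourhood retraction; your fallback ``any smooth curve'' already allows this.
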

\begin{proof}
By definition, $\delta$ is the de Rham differential on $\mathcal F$, so it does not act on the $I$--factor; hence $\delta(d\lambda)=0$.
For the integral, let
\begin{align}
G(\gamma)\coloneqq \int_I \alpha(\gamma,\lambda)\,d\lambda.
\end{align}
Then $G:\mathcal F\to\mathbb R$ is smooth (since $I$ is compact and $\alpha$ is smooth). Its differential at $\gamma$ in the direction $W$ is
\begin{align}
(\delta G)_\gamma[W]
= \frac{d}{ds}\Big|_{s=0} \int_I \alpha(\Upsilon_s(\cdot),\lambda)\,d\lambda
= \int_I \frac{d}{ds}\Big|_{s=0} \alpha(\Upsilon_s(\cdot),\lambda)\,d\lambda
= \int_I (\delta \alpha)_\gamma[W](\lambda)\,d\lambda,
\end{align}
where we exchanged derivative and integral by smoothness and compactness of $I$.
\end{proof}

\begin{lemma}[Variation of the velocity]\label{lem:delta-velocity}
Let $\gamma\in\mathcal F=C^\infty(I,M)$ and $W\in T_\gamma\mathcal F$, and let $V_\gamma=(\delta\gamma)_\gamma[W]\in\Gamma(\gamma^*TM)$.
Let $\nabla^\gamma$ be the pull--back connection on $\gamma^*TM$ induced by $\nabla$.
If $\nabla$ is torsion--free (e.g. Levi--Civita), then\footnote{The informal intuition behind this identity is to resemble the commutation rule one has Newtonian mechanics, i.e. $\delta\frac{d}{dt}q^i=\frac{d}{dt}\delta q^i$. Since in general relativity one deals with tensorial quantities, one has to preserve covariance. In particular, $\delta(\partial_\lambda\gamma)=\partial_\lambda(\delta\gamma)$ would make no sense, since $\delta\gamma$ is a $1$--form with values in $\Gamma(\gamma^*TM)$.}
\begin{align}
\big(\delta(\partial_\lambda \gamma)\big)_\gamma[W]
= \nabla^\gamma_{\partial_\lambda} V_\gamma.
\end{align}
\end{lemma}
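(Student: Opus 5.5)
The plan is to reduce the statement to the classical symmetry-of-second-derivatives lemma for a two-parameter family of curves, and then read off the variational formula through the definitions above.

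Fix $\gamma\in\mathcal F$ and $W\in T_\gamma\mathcal F$. By \cref{prop:field-space-basic}(3) I choose a smooth curve $s\mapsto\Upsilon_s\in\mathcal F$ with $\Upsilon_0=\gamma$ and $\dot\Upsilon_0=W$. Equivalently, this is a smooth map $\Upsilon:(-\varepsilon,\varepsilon)\times I\to M$ with $\Upsilon(0,\lambda)=\gamma(\lambda)$ and $\partial_s\Upsilon(0,\lambda)=V_\gamma(\lambda)$. Geometrically one can use a locally finite family of charts, or the Whitney embedding from the remark on $C^\infty(I,M)$, to realize such a map. By \cref{def:vertical-variation}, the covariant variation of the pulled-back velocity field $\partial_\lambda\gamma$ in direction $W$ is $(\mathrm{ev}^*\nabla)_{(W,0)}(\partial_\lambda\gamma)$. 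Pulling back further along $(s,\lambda)\mapsto(\Upsilon_s,\lambda)$, this is
\begin{align}
\big(\delta(\partial_\lambda\gamma)\big)_\gamma[W]=(\Upsilon^*\nabla)_{\partial_s}\partial_\lambda\Upsilon\Big|_{s=0},
\end{align}
where $\partial_\lambda\Upsilon=d\Upsilon(\partial_\lambda)$ is a section of $\Upsilon^*TM$. Meanwhile $\nabla^\gamma_{\partial_\lambda}V_\gamma=(\Upsilon^*\nabla)_{\partial_\lambda}\partial_s\Upsilon\big|_{s=0}$, since restricting $\Upsilon^*\nabla$ to $s=0$ gives the pullback connection $\gamma^*\nabla$ of \cref{defconnpullback}.

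The core step is the symmetry lemma: for a torsion-free $\nabla$,
\begin{align}
(\Upsilon^*\nabla)_{\partial_s}d\Upsilon(\partial_\lambda)-(\Upsilon^*\nabla)_{\partial_\lambda}d\Upsilon(\partial_s)=T\big(d\Upsilon(\partial_s),d\Upsilon(\partial_\lambda)\big)+d\Upsilon([\partial_s,\partial_\lambda]).
\end{align}
Both terms on the right vanish, because $T=0$ and the coordinate fields commute. I would prove this in local coordinates on $M$, in the style of \cref{defconnpullback} and \cref{remnablatenscurv}. Writing $x^k(s,\lambda)=x^k\circ\Upsilon$, the left-hand side has components
\begin{align}
\partial_s\partial_\lambda x^k+\Gamma^k_{ij}\,\partial_s x^i\,\partial_\lambda x^j-\partial_\lambda\partial_s x^k-\Gamma^k_{ij}\,\partial_\lambda x^i\,\partial_s x^j=(\Gamma^k_{ij}-\Gamma^k_{ji})\,\partial_s x^i\,\partial_\lambda x^j,
\end{align}
which is the torsion term. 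The second-derivative terms cancel by Schwarz's theorem. This is coordinate independent by the affine transformation law in \cref{prop:christoffel-transform}, so it holds globally along $I$, covering $I$ by finitely many chart intervals by compactness. Evaluating at $s=0$ gives the claim.

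I expect the main obstacle to be justifying the first identification. One has to check that the intrinsic covariant variation $(\mathrm{ev}^*\nabla)_{(W,0)}$ on the Fréchet manifold $\mathcal F\times I$ agrees with the covariant $s$-derivative along the finite-dimensional family $\Upsilon$. It must also be independent of the chosen curve $\Upsilon$ representing $W$. I would handle this by the naturality of pullback connections, $(\Upsilon^*\mathrm{ev}^*\nabla)=(\mathrm{ev}\circ\Upsilon)^*\nabla$, together with the chain rule $d(\Upsilon,\mathrm{id})(\partial_s)=(W,0)$ at $s=0$. Independence of the representative then follows because the result depends only on this tangent vector. As a consistency check, the coordinate formula shows the answer depends only on $x^k$, $\partial_\lambda x^k$ and $\partial_s x^k|_{s=0}=V^k$.
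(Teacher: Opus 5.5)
Your proposal is correct and follows the paper's own argument. You realize $W$ by a two-parameter family $(s,\lambda)\mapsto\Upsilon_s(\lambda)$, identify the covariant variation with the covariant $s$-derivative of $\partial_\lambda\Upsilon$ at $s=0$, and use torsion-freeness to replace it by $\nabla^\gamma_{\partial_\lambda}$ of $\partial_s\Upsilon|_{s=0}=V_\gamma$; you simply add the coordinate proof of the symmetry lemma and the naturality justification of that identification, both of which the paper takes for granted.
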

\begin{proof}
Take a smooth curve of curves $\Upsilon:(-\varepsilon,\varepsilon)\to\mathcal F$ with
$\Upsilon_0=\gamma$ and $\dot\Upsilon_0=W$, and consider
\begin{align}
\Gamma:(-\varepsilon,\varepsilon)\times I \to M,\qquad \Gamma(s,\lambda)=\Upsilon_s(\lambda).
\end{align}
Then,
\begin{align}
\partial_\lambda \gamma(\lambda)=\partial_\lambda \Gamma(0,\lambda), \qquad
V_\gamma(\lambda)=\partial_s \Gamma(s,\lambda)\big|_{s=0}.
\end{align}
By definition of the covariant variation $\delta_\nabla$,
\begin{align}
\big(\delta(\partial_\lambda \gamma)\big)_\gamma[W](\lambda)
= \nabla_{\partial_s}\big(\partial_\lambda \Gamma(s,\lambda)\big)\Big|_{s=0}.
\end{align}
If $\nabla$ is torsion--free, then
\begin{align}
\nabla_{\partial_s}(\partial_\lambda \Gamma) = \nabla_{\partial_\lambda}(\partial_s \Gamma),
\end{align}
so evaluating at $s=0$,
\begin{align}
\big(\delta(\partial_\lambda \gamma)\big)_\gamma[W](\lambda)
= \nabla_{\partial_\lambda} V_\gamma(\lambda)
= \nabla^\gamma_{\partial_\lambda} V_\gamma(\lambda).
\end{align}
\end{proof}

\begin{proposition}[Variation of a tensor contracted with the velocity]\label{prop:variation-tensor-with-velocity}
Let $T\in\Gamma(\otimes^r TM \otimes^s T^*M)$ with $s\ge 2$ and let $\gamma\in\mathcal F$.
Fix $W\in T_\gamma\mathcal F$ and let $V_\gamma=(\delta\gamma)_\gamma[W]$. If $\nabla$ is torsion--free,
then the covariant variation of the tensor field along the curve
\begin{align}
\gamma \longmapsto T(\dot\gamma,\dot\gamma)
\end{align}
is
\begin{align}
\big(\delta\big(T(\dot\gamma,\dot\gamma)\big)\big)_\gamma[W]
= (\nabla_{V_\gamma} T)(\dot\gamma,\dot\gamma)
  + T\big(\nabla^\gamma_{\partial_\lambda} V_\gamma,\dot\gamma\big)
  + T\big(\dot\gamma,\nabla^\gamma_{\partial_\lambda} V_\gamma\big).
\end{align}
\end{proposition}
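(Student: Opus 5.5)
The plan is to treat $T(\dot\gamma,\dot\gamma)$ as the full contraction of the tensor product $\mathrm{ev}^*T\otimes\partial_\lambda\gamma\otimes\partial_\lambda\gamma$ over $\mathcal F\times I$, and to use that the covariant variation $\delta_\nabla$ of \cref{def:vertical-variation} is a derivation along the direction $(W,0)$. Concretely, I would choose a smooth curve of curves $\Upsilon_s$ with $\Upsilon_0=\gamma$ and $\dot\Upsilon_0=W$, as in \cref{prop:field-space-basic}(3), and form the two-parameter map $\Gamma(s,\lambda)=\Upsilon_s(\lambda)$, exactly as in the proof of \cref{lem:delta-velocity}. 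Then $(\delta(T(\dot\gamma,\dot\gamma)))_\gamma[W]$ is $\nabla_{\partial_s}$ of the section $T_{\Gamma(s,\lambda)}(\partial_\lambda\Gamma,\partial_\lambda\Gamma)$ of the pulled-back bundle $\Gamma^*(\otimes^{r}TM\otimes^{s-2}T^*M)$, evaluated at $s=0$.

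The first step is the Leibniz rule for the pull-back connection $\Gamma^*\nabla$. Since it is a degree-$0$ derivation that commutes with contractions (cf.\ \cref{def:cov-der-algebraic} and \cref{remnablatenscurv}), it gives
\begin{align}
\nabla_{\partial_s}\big(T(\partial_\lambda\Gamma,\partial_\lambda\Gamma)\big)
=(\nabla_{\partial_s}(\Gamma^*T))(\partial_\lambda\Gamma,\partial_\lambda\Gamma)
+T(\nabla_{\partial_s}\partial_\lambda\Gamma,\partial_\lambda\Gamma)
+T(\partial_\lambda\Gamma,\nabla_{\partial_s}\partial_\lambda\Gamma).
\end{align}
The second step identifies the first term. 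Because $\Gamma^*T$ is a pulled-back tensor field, the chain-rule identity $(\Gamma^*\nabla)_{\partial_s}(\Gamma^*T)=\nabla_{d\Gamma(\partial_s)}T$ holds; this is the content of \cref{prop:vertical-variation-explicit}, which gives $(\nabla_{V_\gamma}T)(\dot\gamma,\dot\gamma)$ at $s=0$. The third step handles the remaining two terms with \cref{lem:delta-velocity}: torsion-freeness gives $\nabla_{\partial_s}\partial_\lambda\Gamma=\nabla_{\partial_\lambda}\partial_s\Gamma$, which at $s=0$ equals $\nabla^\gamma_{\partial_\lambda}V_\gamma$. Substituting both identifications yields the stated formula. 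If $s>2$, the remaining covariant slots are simply carried along as free indices.

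The main obstacle is justifying the Leibniz and chain-rule properties for pull-back connections on bundles over a two-parameter domain, where the curve may not be embedded and so no ambient extension need exist. To settle this, I would work in local coordinates along $\Gamma$. There $(\Gamma^*\nabla)_{\partial_s}$ acts componentwise as $\partial_s$ plus $\pm\Gamma^k_{ij}\partial_s\Gamma^i$ terms, following \cref{defconnpullback} and \cref{remnablatenscurv}. The product rule for $\partial_s$ then gives the Leibniz rule directly, and the Christoffel terms regroup into $(\nabla_{\partial_s\Gamma}T)$ plus the covariant derivatives of the velocity slots.
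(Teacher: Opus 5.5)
Your proposal is correct and follows essentially the same route as the paper: the Leibniz rule and compatibility with contractions for the covariant variation split $\delta\big(T(\dot\gamma,\dot\gamma)\big)$ into three terms, the first of which is identified via \cref{prop:vertical-variation-explicit} and the other two via the torsion-free identity of \cref{lem:delta-velocity}. Your explicit two-parameter realization $\Gamma(s,\lambda)$ and your coordinate check of the Leibniz and chain rules for the pull-back connection make precise what the paper takes for granted; one minor point is that you use $s$ both for the variation parameter and for the number of covariant slots, which you should avoid.
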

\begin{proof}
Set
\begin{align}
F(\gamma)(\lambda)\coloneqq T_{\gamma(\lambda)}(\dot\gamma(\lambda),\dot\gamma(\lambda)),
\end{align}
where two specified covariant slots are contracted. For $s\ge2$, this is a field of type $(r,s-2)$ along $\gamma$. In particular, for $(r,s)=(0,2)$ it is a smooth function on $I$, not a real number until evaluated at a parameter or integrated.
We want to compute $(\delta F)_\gamma[W]$ for $\gamma\in\mathcal F$ and $W\in T_\gamma\mathcal F$.
Let $V_\gamma=(\delta\gamma)_\gamma[W]\in\Gamma(\gamma^*TM)$ be the variational vector field along $\gamma$.\\
By the Leibniz rule and compatibility with contractions for the covariant variation (which agrees with de Rham variation for scalar contractions) we have
\begin{align}
\delta\big(T(\dot\gamma,\dot\gamma)\big)
= (\delta T)(\dot\gamma,\dot\gamma)
  + T\big(\delta(\partial_\lambda\gamma),\dot\gamma\big)
  + T\big(\dot\gamma,\delta(\partial_\lambda\gamma)\big).
\end{align}
Now evaluate this $1$--form at $\gamma$ on the direction $W$. For the first term we use the explicit formula for the vertical variation in \cref{prop:vertical-variation-explicit}:
\begin{align}
\big((\delta T)(\dot\gamma,\dot\gamma)\big)_\gamma[W]
= (\delta T)_\gamma[W](\dot\gamma,\dot\gamma)
= \big((\nabla_{V_\gamma} T)^\gamma\big)(\dot\gamma,\dot\gamma)
= (\nabla_{V_\gamma} T)(\dot\gamma,\dot\gamma).
\end{align}
For the second and third terms we use the variation of the velocity, \cref{lem:delta-velocity}, which gives
\begin{align}
\big(\delta(\partial_\lambda\gamma)\big)_\gamma[W]
= \nabla^\gamma_{\partial_\lambda} V_\gamma.
\end{align}
Therefore,
\begin{align}
\big(T(\delta(\partial_\lambda\gamma),\dot\gamma)\big)_\gamma[W]
&= T\big(\nabla^\gamma_{\partial_\lambda} V_\gamma,\dot\gamma\big),\\
\big(T(\dot\gamma,\delta(\partial_\lambda\gamma))\big)_\gamma[W]
&= T\big(\dot\gamma,\nabla^\gamma_{\partial_\lambda} V_\gamma\big).
\end{align}
Putting everything together we obtain
\begin{align}
\big(\delta\big(T(\dot\gamma,\dot\gamma)\big)\big)_\gamma[W]
= (\nabla_{V_\gamma} T)(\dot\gamma,\dot\gamma)
  + T\big(\nabla^\gamma_{\partial_\lambda} V_\gamma,\dot\gamma\big)
  + T\big(\dot\gamma,\nabla^\gamma_{\partial_\lambda} V_\gamma\big),
\end{align}
which is the claimed formula.
\end{proof}

\section{Geodesics}

\begin{mdframed}
    \begin{definition}[Axiom for the free particle]\label{axfreepar}
\leavevmode\newline
\vspace{-0.55cm}
\begin{enumerate}\setlength{\itemsep}{2pt}\setcounter{enumi}{4}
    \item \textbf{Equations of motion.} Let $I=[a,b]\subseteq \mathbb R$ be a compact interval, and let $p,q\in M$. The equations of motion for a free--falling particle are obtained by the \emph{variational principle} with fixed endpoints for the action functional
\begin{align}
    S_{free}[\gamma]=-\tfrac12\int_I (\gamma^*g)(\partial_\lambda,\partial_\lambda)\,d\lambda,
\end{align}
over the space of smooth curves $\gamma\colon I\to M$ with $\gamma(a)=p$ and $\gamma(b)=q$, where $(\gamma^*g)(\partial_\lambda,\partial_\lambda)=g(\dot\gamma,\dot\gamma)\in C^\infty(I)$ for every $\lambda\in I$.
\end{enumerate}
\end{definition}
\end{mdframed}

\begin{remark}
    Notice that, by abuse of notation, in $g(\dot\gamma,\dot\gamma)$ the metric tensor $g$ is evaluated along $\gamma$.
\end{remark}

\begin{remark}
    We will prove later, in \cref{lemincond}, that the quantity $g(\dot\gamma,\dot\gamma)$ is constant along any solution to the variational principle above. Hence, its sign can be read at any parameter value, e.g. at $a$. According to the sign of $g(\dot\gamma(a),\dot\gamma(a))$, we have:\footnote{Provided there exists a curve with the prescribed causal character between $p$ and $q$. In particular, if $(M,g)$ is \emph{globally hyperbolic}, then for any two events $p\le q$ (i.e. $q\in J^+(p)$, timelike or null separated with respect to the chosen time--orientation) there exists a future--directed causal solution joining them; therefore a causal solution exists in that case.}
    \begin{itemize}
        \item $g(\dot\gamma(a),\dot\gamma(a))<0$: the trajectory is \emph{timelike} (causal);
        \item $g(\dot\gamma(a),\dot\gamma(a))=0$: the trajectory is \emph{null} (causal);
        \item $g(\dot\gamma(a),\dot\gamma(a))>0$: the trajectory is \emph{spacelike} (non--causal).
    \end{itemize}
    Therefore, \cref{axfreepar} is not meant to describe only causal trajectories: it yields \emph{all} solutions of the above variational problem, among which physical causal trajectories are selected by their nonzero causal tangent and future orientation; endpoint conditions alone need not select a unique geodesic or causal character.
\end{remark}

\begin{remark}
    Notice that, since $d\lambda$ is a volume form on $I\subseteq R$, we can make formally sense of the  integration of the form $(\gamma^*g)(\partial_\lambda,\partial_\lambda)\,d\lambda\in\Omega^1(I)$. Moreover, note that we can also formulate it differently by splitting into the two cases of the timelike and the null free particle. In the timelike case, we can actually consider of a line--volume element $ds_\gamma\in\Omega^1(I)$ defined by 
    \begin{align}
        ds_\gamma\coloneqq \sqrt{-(\gamma^*g)(\partial_\lambda,\partial_\lambda)}\,d\lambda\in\Omega^1(I),
    \end{align}
    and, consequently, define the action over the space of timelike curves as
    \begin{align}
        S_{free}[\gamma]=-\int_I ds_{\gamma},
    \end{align}
    where $g(\dot\gamma,\dot\gamma)<0$. The variational principle is therefore well--defined since $ds_\gamma\neq 0$. In this case, we introduce a gauge symmetry on $I$ given by the reparametrization $\lambda\mapsto f(\lambda)$, with $f\colon I\to I$ an orientation--preserving smooth diffeomorphism. In fact, we can notice that the line element is invariant under this transformation. This symmetry is encoded by Noether's second theorem and corresponds to constraints in the phase--space.
\end{remark}

\begin{proposition}
Let $S_{free}$ be the action functional defined in \cref{axfreepar}, and let $p,q\in M$.
A smooth curve $\gamma:I\to M$ from $p$ to $q$ is a solution for the variational problem $\delta S_{free}=0$ with fixed endpoints, if and only if $\gamma$ is a geodesic from $p$ to $q$.
\end{proposition}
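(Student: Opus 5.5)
The plan is to compute the first variation $\delta S_{free}$ directly with the machinery of the previous section, and then apply the fundamental lemma of the calculus of variations. Fix $\gamma$ with $\gamma(a)=p$, $\gamma(b)=q$ and a tangent direction $W\in T_\gamma\mathcal F$ compatible with the fixed endpoints. Let $V=V_\gamma=(\delta\gamma)_\gamma[W]\in\Gamma(\gamma^*TM)$; the endpoint constraint means $V(a)=0=V(b)$. Conversely, every such $V$ arises from an admissible variation, for instance $\Upsilon_s(\lambda)=\exp_{\gamma(\lambda)}(sV(\lambda))$, or through charts and the embedding remark. By \cref{lem:chain-rule-I}, $\delta$ commutes with $\int_I\,d\lambda$, so
\begin{align}
(\delta S_{free})_\gamma[W]=-\tfrac12\int_I \big(\delta\,g(\dot\gamma,\dot\gamma)\big)_\gamma[W]\,d\lambda .
\end{align}

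Next I apply \cref{prop:variation-tensor-with-velocity} with $T=g$, which is legitimate because the Levi--Civita connection is torsion--free. The first term $(\nabla_V g)(\dot\gamma,\dot\gamma)$ vanishes by metric compatibility. By symmetry of $g$, the remaining two terms combine to $2\,g(\nabla^\gamma_{\partial_\lambda}V,\dot\gamma)$, so the integrand becomes $-g(\nabla^\gamma_{\partial_\lambda}V,\dot\gamma)$. Metric compatibility of the pull--back connection gives
\begin{align}
\partial_\lambda\, g(V,\dot\gamma)=g(\nabla^\gamma_{\partial_\lambda}V,\dot\gamma)+g(V,\nabla^\gamma_{\partial_\lambda}\dot\gamma).
\end{align}
Integrating by parts and using $V(a)=V(b)=0$ to kill the boundary term, we obtain
\begin{align}
(\delta S_{free})_\gamma[W]=\int_I g\big(V,\nabla^\gamma_{\partial_\lambda}\dot\gamma\big)\,d\lambda .
\end{align}

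If $\gamma$ is a geodesic, then by \cref{remconnpullback} we have $\nabla^\gamma_{\partial_\lambda}\dot\gamma=\nabla_{\dot\gamma}\dot\gamma=0$, and the variation vanishes for all admissible $W$.

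For the converse, set $A\coloneqq\nabla^\gamma_{\partial_\lambda}\dot\gamma\in\Gamma(\gamma^*TM)$. This is the step requiring care, because $g$ is indefinite, so I cannot simply take $V=\phi A$ and invoke positivity. Instead I will use the non--degenerate isomorphism $g\colon TM\to T^*M$. Fix a $g$--orthonormal frame along $\gamma$ (Gram--Schmidt along the curve, or parallel transport of an orthonormal basis), with $\eta=\mathrm{diag}(-1,1,1,1)$ its Gram matrix. Define $\tilde A=\eta^{ij}\,g(A,e_j)\,e_i$, i.e.\ flip the sign of the timelike component, so that $g(\tilde A,A)=\sum_i g(A,e_i)^2\ge0$. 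Choosing $V=\phi\tilde A$ with $\phi\ge0$ a bump function vanishing at $a,b$, the vanishing of $\int_I\phi\sum_i g(A,e_i)^2\,d\lambda$ for all such $\phi$ gives $A=0$ on $(a,b)$, and hence on $I$ by continuity. Equivalently, componentwise the fundamental lemma applied to each $g(A,e_i)$ with $V=\phi e_i$ gives the same conclusion, which is perhaps cleaner. The result is $\nabla_{\dot\gamma}\dot\gamma=0$, i.e.\ $\gamma$ is a geodesic from $p$ to $q$.

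The main obstacle is justifying that fixed--endpoint variations realize every $V$ vanishing at $a$ and $b$, together with this indefinite--signature version of the fundamental lemma. I would handle the former via the curve $\Upsilon_s$ above, using compactness of $I$ so that the exponential map is defined for small $s$.
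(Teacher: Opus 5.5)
Your proposal is correct and follows essentially the same route as the paper. You pass $\delta$ under the integral, apply the variation formula for $T(\dot\gamma,\dot\gamma)$ with $T=g$, remove $(\nabla_V g)(\dot\gamma,\dot\gamma)$ by metric compatibility, and integrate by parts to reach $\int_I g(V,\nabla_{\dot\gamma}\dot\gamma)\,d\lambda$ before invoking the fundamental lemma; you are also more careful than the paper about realizing every $V$ with $V(a)=V(b)=0$ and about the indefinite signature.

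One slip: $\eta^{ij}g(A,e_j)e_i$ is just $A$ itself, so $g(\tilde A,A)=g(A,A)$ is indefinite. The sign-flipped field you intend is $\tilde A=\sum_i g(A,e_i)\,e_i$. Your componentwise argument with $V=\phi e_i$ already settles the converse without it.
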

\begin{proof}
Let $W\in T_\gamma C^\infty(I,M)$ with $W(a)=W(b)=0$. Via the canonical identification $T_\gamma C^\infty(I,M)\cong \Gamma(\gamma^*TM)$, write $V_\gamma\in\Gamma(\gamma^*TM)$ for the vector field along $\gamma$ corresponding to $W$, in particular $\delta\gamma[W]=V_\gamma$ (by \cref{def:tautological}) and $V_{\gamma(a)}=V_{\gamma(b)}=0$.
 Then, omitting to write $\lambda$ in $\gamma$ and $\dot\gamma$, we have\footnote{Here $\delta\big(g(\dot\gamma,\dot\gamma)\big)[W]$ is a smooth function on $I$. Since $I\subseteq\mathbb R$ is orientable, $d\lambda$ gives a global nowhere--vanishing $1$-form on $I$, hence $\delta\big(g(\dot\gamma,\dot\gamma)\big)[W] \,d\lambda\in\Omega^1(I)$ and its integral is well-defined.}
    \begin{align}
        (\delta S_{free})_\gamma[W]
        &=-\tfrac12\int_I \delta\big((\gamma^*g)(\partial_\lambda,\partial_\lambda)\big)[W]\,d\lambda= -\tfrac12\int_I \delta\big(g(\dot\gamma,\dot\gamma)\big)[W]\,d\lambda\\
        &=-\tfrac12\int_I \big((\delta g)_\gamma(\dot\gamma,\dot\gamma)+2g(\delta\dot\gamma,\dot\gamma)\big)[W]\,d\lambda\\
        &=-\tfrac12\int_I (\nabla_{V_\gamma}g)(\dot\gamma,\dot\gamma)\,d\lambda-\int_I \big(g(\delta\dot\gamma,\dot\gamma)\big)[W]\,d\lambda\\
        &= -\int_I \big(g(\delta\dot\gamma,\dot\gamma)\big)[W]\,d\lambda=-\int_I \big(g(\delta(\partial_\lambda\gamma),\dot\gamma)\big)[W]\,d\lambda\\
        &=-\int_I \big(g(\nabla^\gamma_{\partial_\lambda}\delta\gamma,\dot\gamma)\big)[W]\,d\lambda\\
        &=-\int_I \big(\nabla^\gamma_{\partial_\lambda} g(\delta\gamma,\dot\gamma)-g(\delta\gamma,\nabla^\gamma_{\partial_\lambda}\dot\gamma)\big)[W]\,d\lambda\\
        &=-\int_I \big(\partial_\lambda g(\delta\gamma,\dot\gamma)-g(\delta\gamma,\nabla^\gamma_{\partial_\lambda}\dot\gamma)\big)[W]\,d\lambda\\
        &=-\int_I \big(\partial_\lambda g(V_\gamma,\dot\gamma)-g(V_\gamma,\nabla^\gamma_{\partial_\lambda}\dot\gamma)\big)\,d\lambda\\
        &= -\Big[g(V_\gamma,\dot\gamma)\Big]_{a}^{b}+\int_I g(V_\gamma,\nabla^\gamma_{\partial_\lambda}\dot\gamma)\,d\lambda\\
        &=\int_I g(V_\gamma,\nabla^\gamma_{\partial_\lambda}\dot\gamma)\,d\lambda\\
        &=\int_I g\big(V_\gamma,\nabla_{\dot\gamma}\dot\gamma\big)\,d\lambda.
    \end{align}
    Notice that we used \cref{prop:vertical-variation-explicit}, \cref{lem:delta-velocity}, \cref{prop:variation-tensor-with-velocity}, the fact that $\nabla^\gamma_{\partial_\lambda}=\partial_\lambda$ on $C^\infty(I)$, integration by parts, and, in the last equality, \cref{defconnpullback}. Since $W\in T_\gamma C^\infty(I,M)$ is arbitrary (and hence also $V_\gamma\in\Gamma(\gamma^*TM)$), by the fundamental lemma of the calculus of variations, we conclude that
    \begin{align}
        \delta S_{free}=0 \quad \text{if and only if} \quad \nabla_{\dot\gamma}\dot\gamma=0.
    \end{align}
    This is the geodesic equation.
\end{proof}

\begin{remark}[Variational bicomplex]\label{varbicomp}
Informally, the \textit{variational bicomplex}\footnote{Its jet--bundle formulation will be defined in later sections.} separates horizontal and vertical differentials. On the product $\mathcal F\times I$, its mapping--space analogue is the ordinary scalar de Rham bicomplex. Let $\Omega^{p,q}(\mathcal F\times I)$ denote scalar forms with $p$ field--space slots and $q$ interval slots. Then the total differential splits as $d=d_V+d_H$, with $d_V=\delta$ and
\begin{align}
d_H=d\lambda\wedge\partial_\lambda,\qquad
d_V^2=d_H^2=0,\qquad d_Vd_H+d_Hd_V=0,
\end{align}
using total exterior-degree signs. These identities follow from the product de Rham differential. They do not extend unchanged if ordinary differentiation is replaced by a connection on tensor--valued fields. For a two--parameter variation $\Gamma(s,\lambda)$ and a vector field $A$ along it,
\begin{align}
[\nabla^\Gamma_{\partial_s},\nabla^\Gamma_{\partial_\lambda}]A
=R(\partial_s\Gamma,\partial_\lambda\Gamma)A.
\end{align}
Thus covariant variations generally have a curvature obstruction. The identity
\begin{align}
\delta_\nabla(\partial_\lambda\gamma)
=\nabla^\gamma_{\partial_\lambda}(\delta\gamma)
\end{align}
from \cref{lem:delta-velocity} is the torsion--free identity for the velocity; it does not assert flatness of the pulled--back connection.

The scalar bicomplex is represented by the diagram:
This particular variational bicomplex is given by the diagram:

\centering
\begin{tikzpicture}[>=Latex, every node/.style={font=\footnotesize}]
  \matrix (m) [matrix of math nodes,
               row sep=1.0em, column sep=1.2em,
               nodes={text height=1.7ex, text depth=.25ex}] {
    \vdots & \vdots \\
    \Omega^{2,0} (\mathcal F\times I) &
    \Omega^{2,1} (\mathcal F\times I) \\
    \Omega^{1,0} (\mathcal F\times I) &
    \Omega^{1,1} (\mathcal F\times I) \\
    \Omega^{0,0} (\mathcal F\times I) &
    \Omega^{0,1} (\mathcal F\times I) \\
  };

  \draw[->] (m-2-1) -- node[above] {$d_H$} (m-2-2);
  \draw[->] (m-3-1) -- node[above] {$d_H$} (m-3-2);
  \draw[->] (m-4-1) -- node[above] {$d_H$} (m-4-2);

  \draw[->] (m-4-1) -- node[right] {$\delta$} (m-3-1);
  \draw[->] (m-3-1) -- node[right] {$\delta$} (m-2-1);
  \draw[->] (m-2-1) -- node[right] {$\delta$} (m-1-1);

  \draw[->] (m-4-2) -- node[right] {$\delta$} (m-3-2);
  \draw[->] (m-3-2) -- node[right] {$\delta$} (m-2-2);
  \draw[->] (m-2-2) -- node[right] {$\delta$} (m-1-2);
\end{tikzpicture}
\end{remark}

\begin{proposition}\label{lemincond}
    The quantity $g(\dot\gamma,\dot\gamma)$ is constant \emph{on--shell}, i.e. along the solutions of the variational problem for $S_{free}$.
\end{proposition}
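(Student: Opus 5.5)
By the preceding proposition, a smooth curve $\gamma\colon I\to M$ solves $\delta S_{free}=0$ with fixed endpoints if and only if it is a geodesic, $\nabla_{\dot\gamma}\dot\gamma=0$, where $\nabla$ is the Levi--Civita connection. The plan is therefore to differentiate the function $\lambda\mapsto g(\dot\gamma(\lambda),\dot\gamma(\lambda))\in C^\infty(I)$ along the curve and show the derivative vanishes on--shell. We stay within the pull--back formalism of \cref{defconnpullback}, so that no ambient extension of $\dot\gamma$ is needed, including for self--intersecting or constant curves.

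\textbf{Step 1.} Regard $g(\dot\gamma,\dot\gamma)$ as the full contraction of the pulled--back tensor $\gamma^*g$ with $\dot\gamma\otimes\dot\gamma$. By \cref{remnablatenscurv}, the pull--back connection $\nabla^\gamma$ on tensor fields along $\gamma$ satisfies the Leibniz rule and commutes with contractions. It also acts as $\partial_\lambda$ on $C^\infty(I)$. Hence
\begin{align}
\partial_\lambda\big(g(\dot\gamma,\dot\gamma)\big)
=\big(\nabla^\gamma_{\partial_\lambda}(\gamma^*g)\big)(\dot\gamma,\dot\gamma)
+2\,g\big(\nabla^\gamma_{\partial_\lambda}\dot\gamma,\dot\gamma\big),
\end{align}
where the factor $2$ comes from the symmetry of $g$.

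\textbf{Step 2.} Where a local extension exists, \cref{remnablatenscurv} identifies $\nabla^\gamma_{\partial_\lambda}(\gamma^*g)$ with $\nabla_{\dot\gamma}g$ restricted to $\gamma$. Since $g$ is a global field, such an extension always exists, and metric compatibility gives $\nabla g=0$. So the first term vanishes identically, off--shell as well.

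\textbf{Step 3.} By \cref{remconnpullback}, $\nabla^\gamma_{\partial_\lambda}\dot\gamma=\nabla_{\dot\gamma}\dot\gamma$, which vanishes on--shell by the preceding proposition. Thus $\partial_\lambda\, g(\dot\gamma,\dot\gamma)=0$ on the interval $I$. Since $I$ is connected, $g(\dot\gamma,\dot\gamma)$ is constant along $\gamma$.

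The only delicate point is Step 2: justifying that the pulled--back connection on $\gamma^*(T^*M\otimes T^*M)$ annihilates $\gamma^*g$ without extending $\dot\gamma$. As a sanity check, this can be done in coordinates. The components of the derivative are $\partial_\lambda(g_{\mu\nu}\circ\gamma)-\Gamma^\rho_{\ \sigma\mu}\dot\gamma^\sigma g_{\rho\nu}-\Gamma^\rho_{\ \sigma\nu}\dot\gamma^\sigma g_{\mu\rho}=\dot\gamma^\sigma\nabla_\sigma g_{\mu\nu}=0$, using the chain rule on $\partial_\lambda(g_{\mu\nu}\circ\gamma)$. Alternatively, one could note that $S_{free}$ has no explicit $\lambda$-dependence and read the statement as conservation of the associated energy $\dot\gamma\cdot\partial\mathcal L/\partial\dot\gamma-\mathcal L=-\tfrac12 g(\dot\gamma,\dot\gamma)$. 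The covariant computation above, however, fits the paper's machinery more directly.
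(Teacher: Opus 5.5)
Your proof is correct and follows the paper's argument: differentiate $g(\dot\gamma,\dot\gamma)$ along the curve using the Leibniz rule for the pulled-back connection, drop the first term by metric compatibility $\nabla g=0$, and drop the second by the geodesic equation $\nabla_{\dot\gamma}\dot\gamma=0$ supplied by the preceding proposition. Your coordinate check of Step 2 and your remark identifying $-\tfrac12 g(\dot\gamma,\dot\gamma)$ as the conserved energy are both correct, though neither is needed.
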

\begin{proof}
Let $\gamma : I \to M$ be a geodesic, i.e. a critical curve for $S_{free}$. Therefore, it satisfies
\begin{align}
    \nabla_{\dot\gamma}\dot\gamma=0.
\end{align}
Using metric compatibility $\nabla g=0$ of the Levi--Civita connection, we compute
\begin{align}
\partial_\lambda g(\dot\gamma,\dot\gamma)&=\nabla^\gamma_{\partial_\lambda}g(\dot\gamma,\dot\gamma)=\nabla_{\dot\gamma}g(\dot\gamma,\dot\gamma)\\
&=(\nabla_{\dot\gamma}g)(\dot\gamma,\dot\gamma)+2\,g(\nabla_{\dot\gamma}\dot\gamma,\dot\gamma)\\
&=0+2\,g(0,\dot\gamma)=0,
\end{align}
where we used  the fact that $\nabla^\gamma_{\partial_\lambda}=\partial_\lambda$ on $C^\infty(I)$ and \cref{remnablatenscurv}.
Hence, $g(\dot\gamma,\dot\gamma)$ is constant along $\gamma$, i.e.
\begin{align}
    g(\dot\gamma(\lambda),\dot\gamma(\lambda)) \equiv g(\dot\gamma(a),\dot\gamma(a)) \qquad \forall\,\lambda\in I.
\end{align}
In particular, if $g(\dot\gamma(a),\dot\gamma(a))<0$ (resp.\ $=0$, resp.\ $>0$), then 
$g(\dot\gamma(\lambda),\dot\gamma(\lambda))<0$ (resp.\ $=0$, resp.\ $>0$) for all $\lambda\in I$.
\end{proof}

\begin{remark}[Symmetries of the geodesic action]\label{rem:geodesic-isometries}
Fix a Lorentzian manifold $(M,g)$ and consider the geodesic functional
\begin{align}
    S_{free}[\gamma]
    \coloneqq -\tfrac12 \int_I g(\dot\gamma,\dot\gamma)\,d\lambda.
\end{align}
Let $\varphi\in \mathrm{Diff}(M)$ act on curves by pull--back/composition, i.e.
\begin{align}
    (\varphi^{-1})^*\gamma \coloneqq \varphi\circ \gamma.
\end{align}
Then, this is equivalent to act via pull--back on the metric $g$ by
\begin{align}
    S_{free}[(\varphi^{-1})^*\gamma]
    = -\tfrac12 \int_I g(\varphi_*\dot{\gamma},\varphi_*\dot{\gamma})\,d\lambda
    = -\tfrac12 \int_I (\varphi^*g)(\dot\gamma,\dot\gamma)\,d\lambda.
\end{align}
Hence,
\begin{align}
    S_{free}[(\varphi^{-1})^*\gamma] = S_{free}[\gamma]\quad\text{for all }\gamma
    \quad\Longleftrightarrow\quad \varphi^*g = g.
\end{align}
In other words, the symmetry group of $S_{free}$ is precisely the isometry group
\begin{align}
    \mathrm{Iso}(M,g) \coloneqq \{\varphi\in \mathrm{Diff}(M) \mid \varphi^*g=g\}.
\end{align}
Suppose we have a flow of diffeomorphisms $\Phi^X_t$ generated by $X$. Then, it satisfies $(\Phi^X_t)^*g=g$ (they act as isometries) if and only if
\begin{align}
    \mathcal L_X g = 0,
\end{align}
and $X$ is called a \textit{ Killing vector field}.
\end{remark}

\begin{remark}[Isometries and gauge symmetries]\label{rem:isometries-not-gauge}
For the test--particle action with a fixed background metric, isometries act as global symmetries on the particle trajectories. This differs from the gauge action of diffeomorphisms on the dynamical fields of general relativity. Nevertheless, an isometry is not excluded from that gauge group merely because it fixes $g$: it belongs to the stabilizer of $g$ under the diffeomorphism action. If matter fields are present, it need not fix those fields as well. Which diffeomorphisms are gauge, and which carry physical charges, also depends on boundary or asymptotic conditions. The finite-dimensionality of the isometry group alone does not decide this distinction.
\end{remark}

\begin{remark}[Geodesics and the symmetric part of the connection]
The geodesic equation in local coordinates reads
\begin{align}
    \ddot x^\rho + \Gamma^\rho_{\mu\nu}(x)\,\dot x^\mu \dot x^\nu = 0.
\end{align}
Since $\dot x^\mu \dot x^\nu$ is symmetric in $(\mu,\nu)$, only the symmetric part
\begin{align}
    \Gamma^\rho_{\{\mu\nu\}} \coloneqq \tfrac12\big(\Gamma^\rho_{\mu\nu} + \Gamma^\rho_{\nu\mu}\big)
\end{align}
contributes to the motion. The antisymmetric part
\begin{align}
    \Gamma^\rho_{[\mu\nu]} \coloneqq \tfrac12\big(\Gamma^\rho_{\mu\nu} - \Gamma^\rho_{\nu\mu}\big)
\end{align}
does not appear in the geodesic equation. Notice, however, that it is precisely the antisymmetric part that defines the torsion tensor,
\begin{align}
    T^\rho{}_{\mu\nu} = 2\,\Gamma^\rho_{[\mu\nu]}.
\end{align}
By contrast, the symmetric part $\Gamma^\rho_{\{\mu\nu\}}$ is \emph{not} a tensor, and can be made to vanish in at a point by a suitable choice of coordinates (that we will call normal coordinates).
\end{remark}

\begin{definition}[Star--shaped neighborhood]\label{def:Star--shaped}
Let $V$ be a real vector space and let $U\subset V$ be an open set.\footnote{With respect to the standard topology.} We say that $U$ is \emph{Star--shaped with respect to the origin} if for every $v\in U$ the whole segment
\begin{align}
    [0,1]\ni t \longmapsto t v \in V
\end{align}
is contained in $U$. Equivalently,
\begin{align}
    v\in U  \Longrightarrow  t v \in U \quad \forall\,t\in[0,1].
\end{align}
More generally, $U$ is \emph{Star--shaped with respect to a point} $p\in U$ if
\begin{align}
    v\in U  \Longrightarrow  (1-t)p + t v \in U \quad \forall\,t\in[0,1].
\end{align}
\end{definition}

\begin{definition}[Exponential map]\label{def:exponential-map}
Let $(M,g)$ be a smooth Lorentzian manifold, let $\nabla$ be its Levi--Civita connection, and let $p\in M$. 
For every $v\in T_pM$ consider the unique geodesic $\gamma_v\colon I_v\to M$ solving
\begin{align}
    \gamma_v(0)=p, \qquad \dot\gamma_v(0)=v,
\end{align}
where $I_v\subset \mathbb R$ is the maximal interval of existence. 
Define
\begin{align}
    \mathcal D_p \coloneqq \{ v\in T_pM \mid 1\in I_v \},
\end{align}
i.e. the set of initial velocities for which the corresponding geodesic is defined at time $t=1$. 
The \emph{exponential map at $p$} is the smooth map
\begin{align}
    \exp_p \colon \mathcal D_p \to M, \qquad
    \exp_p(v)\coloneqq \gamma_v(1).
\end{align}
\end{definition}

\begin{lemma}[Local diffeomorphism property of $\exp_p$]\label{lem:exp-local-diffeo}
Let $(M,g)$ be as above and let $p\in M$. Then there exists an open neighborhood $U\subset T_pM$ of $0$ such that:
\begin{enumerate}\setlength{\itemsep}{2pt}
    \item $U$ is Star--shaped with respect to $0$;
    \item for every $v\in U$, the geodesic $\gamma_v$ with initial data $(p,v)$ is defined on $[0,1]$;
    \item the restriction 
    \begin{align}
        \exp_p\big|_U \colon U \longrightarrow \exp_p(U)\subset M
    \end{align}
    is a diffeomorphism onto the open subset $\exp_p(U)\subset M$.
\end{enumerate}
\end{lemma}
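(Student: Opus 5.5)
The plan is the standard one: use smooth dependence of geodesics on initial data, the rescaling property of geodesics, and the inverse function theorem at the origin of $T_pM$.

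\emph{Step 1 (domain contains a star-shaped neighborhood).} By the existence proposition for maximal geodesics, the map $(x,v,\lambda)\mapsto\gamma_{x,v}(\lambda)$ is smooth on an open domain in $TM\times\mathbb R$ containing every point $(x,0,\lambda)$. This holds because the zero vector gives the constant geodesic, defined for all $\lambda$. Since $\{(p,0)\}\times[0,2]$ is compact and lies in this open domain, I will pick $\varepsilon>0$ with two properties:
\begin{align}
\|v\|<\varepsilon \ \Longrightarrow\ [0,2]\subset I_v,
\end{align}
where $\|\cdot\|$ is any auxiliary Euclidean norm on $T_pM$. The open ball $B_\varepsilon(0)$ is star-shaped with respect to $0$, which gives (1). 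It satisfies (2), and it is contained in $\mathcal D_p$, so $\exp_p$ is smooth on it.

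\emph{Step 2 (rescaling).} I will show that $\gamma_{tv}(\lambda)=\gamma_v(t\lambda)$ whenever either side is defined. The curve $\lambda\mapsto\gamma_v(t\lambda)$ satisfies the coordinate geodesic ODE $\ddot x^k+\Gamma^k_{ij}\dot x^i\dot x^j=0$, because both terms scale by $t^2$. Its initial data are $(p,tv)$. Uniqueness then gives the identity. Consequently $\exp_p(tv)=\gamma_v(t)$ for $v$ in the ball and $t\in[0,1]$.

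\emph{Step 3 (differential at the origin).} Using the curve $t\mapsto tv$ and Step 2, I compute
\begin{align}
(d\exp_p)_0(v)=\frac{d}{dt}\Big|_{t=0}\exp_p(tv)=\dot\gamma_v(0)=v,
\end{align}
after identifying $T_0(T_pM)\cong T_pM$. So $(d\exp_p)_0=\mathrm{id}$, which is invertible. The inverse function theorem then gives an open neighborhood $U'$ of $0$ such that $\exp_p|_{U'}$ is a diffeomorphism onto an open subset of $M$.

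\emph{Step 4 (restoring star-shapedness).} The set $U'$ produced by the inverse function theorem need not be star-shaped. I will therefore take a smaller ball $U=B_\delta(0)\subset U'\cap B_\varepsilon(0)$. A ball centered at $0$ is star-shaped, and a diffeomorphism restricted to an open subset is still a diffeomorphism onto an open image. This gives (3) while keeping (1) and (2).

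The main obstacle is the uniformity in Step 1: making sure the existence interval contains $[0,1]$ on a whole neighborhood of $0$, and not merely pointwise. I will handle this through openness of the flow domain together with compactness of $[0,2]$. As a more concrete alternative, I could use the rescaling of Step 2: if the geodesic from $v$ exists up to time $\epsilon_0$ for all small $v$, then for $\|v\|$ small enough the geodesic from $v/\epsilon_0$ also exists up to time $\epsilon_0$, and rescaling gives existence up to time $1$.
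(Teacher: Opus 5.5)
Your proposal is correct and follows essentially the same route as the paper: uniform existence of geodesics on a neighborhood of $0$, the rescaling identity $\gamma_{tv}(\lambda)=\gamma_v(t\lambda)$ giving $(d\exp_p)_0=\mathrm{id}$, the inverse function theorem, and shrinking to a norm ball to recover star-shapedness. The only difference is cosmetic: you justify the uniform existence interval through openness of the maximal flow domain and a compactness (tube-lemma) argument, whereas the paper simply invokes standard ODE theory to obtain a neighborhood $W$ on which geodesics exist on $[-1,1]$.
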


\begin{proof}
By standard ODE theory for smooth vector fields on a smooth manifold, applied to the geodesic equation with initial point $p$, there exists an open neighborhood $W\subset T_pM$ of $0$ such that, for every $v\in W$, the unique geodesic $\gamma_v$ with initial data $(p,v)$ is defined at least on the interval $[-1,1]$, and the map
\begin{align}
    \Phi \colon [-1,1]\times W \longrightarrow M,\qquad
    \Phi(t,v)\coloneqq \gamma_v(t),
\end{align}
is smooth.
In particular, for every $v\in W$ we have $1\in[-1,1]\subset I_v$, so $W\subset \mathcal D_p$ and
\begin{align}
    \exp_p(v)=\Phi(1,v)=\gamma_v(1),\qquad v\in W.
\end{align}
Thus $\exp_p$ is smooth on $W$.\\
We now compute the differential of $\exp_p$ at $0$. For $v\in T_pM$ and $t\in\mathbb R$ small, consider the geodesic $\gamma_v$ and the geodesic $\gamma_{t v}$ with initial data $(p,tv)$. The curve
\begin{align}
    s \longmapsto \gamma_v(ts)
\end{align}
is a geodesic with initial data $(p,tv)$, hence by uniqueness of solutions of the geodesic equation we have
\begin{align}
    \gamma_{t v}(s) = \gamma_v(ts), \qquad \text{for all $s$ with $ts\in I_v$}.
\end{align}
In particular, for $|t|$ sufficiently small,
\begin{align}
    \exp_p(tv)
    = \gamma_{t v}(1)
    = \gamma_v(t).
\end{align}
Let $c_v(t)\coloneqq t v\in T_pM$. Then
\begin{align}
    d(\exp_p)_0(v)
    &= \left.\frac{d}{dt}\right|_{t=0} \exp_p(c_v(t))
     = \left.\frac{d}{dt}\right|_{t=0} \exp_p(tv)
     = \left.\frac{d}{dt}\right|_{t=0} \gamma_v(t)
     = \dot\gamma_v(0)
     = v.
\end{align}
Therefore $d(\exp_p)_0 = \mathrm{id}_{T_pM}$. By the inverse function theorem, there exists an open neighborhood $U_0\subset W$ of $0$ such that
\begin{align}
    \exp_p\big|_{U_0}\colon U_0 \longrightarrow \exp_p(U_0)
\end{align}
is a diffeomorphism onto an open neighborhood $\exp_p(U_0)$ of $p$ in $M$.\\
Since $T_pM$ is a finite--dimensional real vector space, we can fix a norm $\|\cdot\|$ and choose $r>0$ such that the open ball
\begin{align}
    B_r(0)\coloneqq \{v\in T_pM:\ \|v\|<r\}
\end{align}
is contained in $U_0$. Set
\begin{align}
    U\coloneqq B_r(0).
\end{align}
Then $U$ is an open neighborhood of $0$, it is Star--shaped with respect to $0$ (by convexity of balls in a normed vector space), and for every $v\in U$ the geodesic $\gamma_v$ is defined on $[-1,1]\supset[0,1]$ because $U\subset U_0\subset W$. Moreover, the restriction $\exp_p|_U$ is still a diffeomorphism onto the open subset $\exp_p(U)\subset M$. This yields the three properties claimed in the statement.
\end{proof}

\begin{theorem}[Expansion in normal coordinates]\label{thm:normal-coord-expansion}
Let $(M,g)$ be a four--dimensional Lorentzian manifold, and let $p\in M$, $\nabla$ be the Levi--Civita connection for $g$.  Fix an orthonormal basis $(e_\mu)_{\mu=0}^3$ of $T_pM$ so that
\begin{align}
    g_p(e_\mu,e_\nu)=\eta_{\mu\nu}.
\end{align}
By Lemma~\ref{lem:exp-local-diffeo} there exists a Star--shaped neighborhood $U\subset T_pM$ of the origin such that
\begin{align}
    \exp_p\big|_U\colon U\longrightarrow \exp_p(U)
\end{align}
is a diffeomorphism onto an open neighborhood $\exp_p(U)$ of $p$ in $M$.
Writing each $v\in U$ as $v=v^\mu e_\mu$, we define the coordinate chart
\begin{align}
    x^\mu \colon \exp_p(U)\to \mathbb R,\qquad 
    x^\mu\!\big(\exp_p(v)\big)\coloneqq v^\mu.
\end{align}
Then $(x^\mu)$ are pseudo--Riemannian normal coordinates around $p$, and the metric components
\begin{align}
      g_{\mu\nu}(x) \coloneqq g\big((\partial_\mu)_x,(\partial_\nu)_x\big)
\end{align}
in this chart satisfy:
\begin{enumerate}\setlength{\itemsep}{3pt}
    \item $x(p)=0$ and
    \begin{align}
        g_{\mu\nu}(0)=\eta_{\mu\nu};
    \end{align}
    \item the first derivatives vanish at the origin:
    \begin{align}
        \partial_\rho g_{\mu\nu}(0)=0;
    \end{align}
    \item the Christoffel symbols vanish at the origin:
    \begin{align}
        \Gamma^\rho_{\mu\nu}(0)=0;
    \end{align}
    \item the second order expansion of the metric at $p$ is
    \begin{align}
        g_{\mu\nu}\!\big(\exp_p(v)\big)
        = \eta_{\mu\nu}
        + \tfrac13\,(R(g))_{\mu\rho\nu\sigma}(p)\,v^\rho v^\sigma
        + O(|v|^3),
    \end{align}
    for $v\in U$ sufficiently small, where $(R(g))_{\mu\rho\nu\sigma}$ are the components of the Riemann curvature tensor at $p$ in the basis $(e_\mu)$, and $O(|v|^3)$ denotes a tensor field of $(0,2)$--type on $U$ whose components are $O(|v|^3)$.
    Equivalently, in terms of the coordinates $x^\mu$, writing $x^\mu=v^\mu$, one has
    \begin{align}
        g_{\mu\nu}(x)
        = \eta_{\mu\nu}
        + \tfrac13\,(R(g))_{\mu\rho\nu\sigma}(p)\,x^\rho x^\sigma
        + O(|x|^3).
    \end{align}
\end{enumerate}
\end{theorem}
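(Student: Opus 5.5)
The plan is to work entirely in the chart $x=(\exp_p|_U)^{-1}$ composed with the basis $(e_\mu)$. In this chart the radial lines $t\mapsto tv$ are geodesics, and everything follows from that one fact.

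For items (1)–(3): $x(p)=0$ holds by construction. Since $d(\exp_p)_0=\mathrm{id}_{T_pM}$ (shown in the proof of \cref{lem:exp-local-diffeo}), we get $(\partial_\mu)_p=e_\mu$, and hence $g_{\mu\nu}(0)=\eta_{\mu\nu}$. Next, uniqueness of geodesics gives $\exp_p(tv)=\gamma_v(t)$, so in coordinates $x^\mu(\gamma_v(t))=tv^\mu$ for every small $v$. Substituting into the local geodesic equation, the second derivative $\ddot x$ vanishes, which leaves
\begin{align}
\Gamma^\rho_{\mu\nu}(tv)\,v^\mu v^\nu=0
\end{align}
for all small $t$ and $v$. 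Setting $t=0$ and polarizing, using the symmetry $\Gamma^\rho_{\mu\nu}=\Gamma^\rho_{\nu\mu}$ (torsion-free), gives $\Gamma^\rho_{\mu\nu}(0)=0$. Metric compatibility in coordinates reads $\partial_\rho g_{\mu\nu}=\Gamma^\lambda_{\rho\mu}g_{\lambda\nu}+\Gamma^\lambda_{\rho\nu}g_{\mu\lambda}$, so $\partial_\rho g_{\mu\nu}(0)=0$ follows.

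For item (4), write $g_{\mu\nu}(x)=\eta_{\mu\nu}+\tfrac12 A_{\mu\nu\rho\sigma}x^\rho x^\sigma+O(|x|^3)$ with $A_{\mu\nu\rho\sigma}=\partial_\rho\partial_\sigma g_{\mu\nu}(0)$; this is Taylor's theorem together with (1)–(2). We then need constraints on $A$, which I would extract in three steps.

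\textbf{Step (a).} Differentiate the identity $\Gamma^\rho_{\mu\nu}(tv)v^\mu v^\nu=0$ once in $t$ at $t=0$. This gives
\begin{align}
\partial_{(\lambda}\Gamma^\rho_{\mu\nu)}(0)=0,
\end{align}
i.e.\ the total symmetrization vanishes.

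\textbf{Step (b).} Express this in terms of the metric. Because the first derivatives of $g$ vanish at $0$, we have
\begin{align}
\partial_\lambda\Gamma_{\rho\mu\nu}=\tfrac12\big(A_{\rho\nu\lambda\mu}+A_{\rho\mu\lambda\nu}-A_{\mu\nu\lambda\rho}\big)
\end{align}
there. Symmetrizing over $(\lambda,\mu,\nu)$ yields $A_{\rho(\nu\lambda\mu)}=\tfrac12 A_{(\mu\nu\lambda)\rho}$, where $A$ is read as symmetric in its first pair and in its last pair. A further cyclic manipulation should give the radial identity $A_{\mu\nu\rho\sigma}x^\nu x^\rho x^\sigma=0$, equivalently $g_{\mu\nu}(x)x^\nu=\eta_{\mu\nu}x^\nu$ to second order (the Gauss lemma at second order).

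\textbf{Step (c).} Compute the curvature at $0$. Since $\Gamma(0)=0$, only the derivative terms survive:
\begin{align}
R_{\alpha\beta\gamma\delta}(0)=\partial_\beta\Gamma_{\alpha\gamma\delta}-\partial_\gamma\Gamma_{\alpha\beta\delta},
\end{align}
which becomes a specific combination of the components of $A$.

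To finish, contract this curvature expression with $x^\beta x^\delta$ (matching the index positions $\mu\rho\nu\sigma$ of the statement) and use the symmetry constraints from (a)–(b) to eliminate all components of $A$ except $A_{\mu\nu\rho\sigma}x^\rho x^\sigma$. The expected outcome is
\begin{align}
\tfrac12A_{\mu\nu\rho\sigma}x^\rho x^\sigma=\tfrac13 R_{\mu\rho\nu\sigma}(0)x^\rho x^\sigma.
\end{align}

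I expect this last algebraic elimination to be the main obstacle. It requires careful bookkeeping of index positions, because of the paper's convention that the output index comes first, $R_{\alpha\beta\gamma\delta}=R(\partial_\beta,\partial_\gamma,\partial_\delta,\partial_\alpha)$, and of the overall sign. If the direct elimination becomes messy, I would fall back on the Jacobi-field route. The field $J(t)=t\,w^\mu\partial_\mu$ along $\gamma_v$ is a Jacobi field, and expanding $g(J,J)$ to fourth order in $t$ via $J''=-R(J,\dot\gamma)\dot\gamma$ gives the coefficient $-\tfrac13 g(R(w,v)v,w)$. Polarizing in $w$ then fixes the quadratic term, and the sign can be matched to the stated convention using the symmetries in the Corollary on algebraic symmetries.
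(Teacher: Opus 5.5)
Items (1)--(3) are exactly the paper's argument. For (4) you have assembled the same ingredients the paper uses: the cyclic identity $\partial_{(\lambda}\Gamma^\rho_{\mu\nu)}(0)=0$ from the radial geodesics, the curvature at the origin in terms of first derivatives of $\Gamma$, and the link between $\partial\Gamma(0)$ and $\partial\partial g(0)$. But the proposal stops exactly where the content of (4) lies. Two claims are asserted rather than derived: that step (b) ``should give'' the second-order Gauss identity, and that the final elimination yields $\tfrac12A_{\mu\nu\rho\sigma}x^\rho x^\sigma=\tfrac13R_{\mu\rho\nu\sigma}(0)x^\rho x^\sigma$. The Jacobi-field fallback is also only sketched. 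Its coefficient $-\tfrac13 g(R(w,v)v,w)$ is correct in the notes' convention. However, it rests on the Jacobi equation and on $J(t)=t\,w^\mu\partial_\mu$ being a Jacobi field, neither of which is available in these notes, and the fourth-order expansion of $g(J,J)$ is not carried out. As written, (4) is therefore not proved.

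\textbf{How the paper closes it.} The paper never eliminates in $A=\partial\partial g$. It first solves for $P^\lambda{}_{\rho\mu\sigma}\coloneqq\partial_\sigma\Gamma^\lambda_{\rho\mu}(0)$.
\begin{itemize}
\item Since $\Gamma(0)=0$, one has $R^\lambda{}_{\sigma\rho\mu}(0)=P^\lambda{}_{\rho\mu\sigma}-P^\lambda{}_{\sigma\mu\rho}$.
\item Add the same identity with $\rho\leftrightarrow\mu$. Then use the symmetry of $P$ in $(\rho,\mu)$ together with your step (a), in the form $P^\lambda{}_{\rho\mu\sigma}+P^\lambda{}_{\mu\sigma\rho}+P^\lambda{}_{\sigma\rho\mu}=0$. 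This gives
\[
3\,\partial_\sigma\Gamma^\lambda_{\rho\mu}(0)=R^\lambda{}_{\sigma\rho\mu}(0)+R^\lambda{}_{\sigma\mu\rho}(0).
\]
\item One more derivative of metric compatibility yields $\partial_\sigma\partial_\rho g_{\mu\nu}(0)=\tfrac13\big(R_{\mu\rho\nu\sigma}+R_{\nu\rho\mu\sigma}\big)(0)$, and (4) follows by Taylor's theorem.
\end{itemize}

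\textbf{How your own route closes.} If you prefer to stay with $A$, your route also works.
\begin{itemize}
\item Set $B_\rho=A_{\rho\nu\lambda\mu}x^\nu x^\lambda x^\mu$ and $C_\rho=A_{\mu\nu\lambda\rho}x^\mu x^\nu x^\lambda$. Step (b) says $2B_\rho=C_\rho$.
\item Contracting with $x^\rho$ forces the quartic $q=A_{\mu\nu\lambda\rho}x^\mu x^\nu x^\lambda x^\rho$ to vanish identically. Then $\partial_\rho q=2(B_\rho+C_\rho)=0$, which together with $2B=C$ gives $B=C=0$, i.e.\ $A_{\mu(\nu\rho\sigma)}=0$.
\item Write $R_{\mu\rho\nu\sigma}(0)=\tfrac12\big(A_{\mu\nu\rho\sigma}-A_{\nu\sigma\rho\mu}-A_{\mu\rho\nu\sigma}+A_{\rho\sigma\nu\mu}\big)$ and contract with $x^\rho x^\sigma$. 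Let $G_{\mu\nu}=A_{\mu\nu\rho\sigma}x^\rho x^\sigma$.
\item Apply $A_{\mu(\nu\rho\sigma)}=0$ once with the free index first and once with a dummy index first. This gives $A_{\mu\rho\nu\sigma}x^\rho x^\sigma=-\tfrac12G_{\mu\nu}$ and $A_{\rho\sigma\mu\nu}x^\rho x^\sigma=G_{\mu\nu}$.
\item Hence $R_{\mu\rho\nu\sigma}(0)x^\rho x^\sigma=\tfrac32G_{\mu\nu}$, which is your expected outcome.
\end{itemize}
The paper's ordering is shorter because it never needs the Gauss identity; yours produces it as a by-product.
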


\begin{proof}
By the discussion in the proof of Lemma~\ref{lem:exp-local-diffeo}, for every $v\in U$ the curve
\begin{align}
    \gamma_v \colon [0,1]\to M,\qquad \gamma_v(\lambda)\coloneqq \exp_p(\lambda v),
\end{align}
is the unique geodesic with initial data $\gamma_v(0)=p$ and $\dot\gamma_v(0)=v$. In the coordinates defined above, this geodesic has components
\begin{align}
    \lambda \longmapsto (\lambda v^\mu)_{\mu=0}^3,
\end{align}
so all geodesics through $p$ appear as straight radial lines through the origin.\\
At $x=0$ we have $x(p)=0$, and the differential $d(\exp_p)_0=\mathrm{id}_{T_pM}$ implies that the coordinate vector fields at $p$ satisfy
\begin{align}
    (\partial_\mu)_p = e_\mu.
\end{align}
Hence,
\begin{align}
    g_{\mu\nu}(0)=g_p(e_\mu,e_\nu)=\eta_{\mu\nu},
\end{align}
which proves (1).\\
We next show that the Christoffel symbols vanish at $0$. Let $v\in T_pM$, and consider the geodesic
\begin{align}
    \lambda\longmapsto \gamma_v(\lambda)=\exp_p(\lambda v),
\end{align}
whose coordinate representation is $\gamma_v^\mu(\lambda)=\lambda v^\mu$. The geodesic equation in coordinates reads
\begin{align}
    0
    = \ddot\gamma_v^\rho(\lambda)
      + \Gamma^\rho_{\mu\nu}(\gamma_v(\lambda))\,
        \dot\gamma_v^\mu(\lambda)\dot\gamma_v^\nu(\lambda).
\end{align}
Evaluating at $\lambda=0$ and using $\gamma_v(0)=p$, $\dot\gamma_v^\mu(0)=v^\mu$, we obtain
\begin{align}
    0
    = \Gamma^\rho_{\mu\nu}(0)\,v^\mu v^\nu
    \qquad \forall\,v\in T_pM.
\end{align}
Thus $\Gamma^\rho_{(\mu\nu)}(0)=0$. Since the Levi--Civita connection is torsion--free, $\Gamma^\rho_{\mu\nu}=\Gamma^\rho_{\nu\mu}$, and hence
\begin{align}
    \Gamma^\rho_{\mu\nu}(0)=0,
\end{align}
proving (3).\\
Metric compatibility gives, in coordinates,
\begin{align}
  0 = (\nabla_\rho g)_{\mu\nu}
    = \partial_\rho g_{\mu\nu}
      - \Gamma^\sigma_{\rho\mu}g_{\sigma\nu}
      - \Gamma^\sigma_{\rho\nu}g_{\mu\sigma},
\end{align}
so
\begin{align}
    \partial_\rho g_{\mu\nu}
    = g_{\sigma\nu}\,\Gamma^\sigma_{\rho\mu}
      + g_{\mu\sigma}\,\Gamma^\sigma_{\rho\nu}.
\end{align}
Evaluating at $x=0$ and using $g_{\mu\nu}(0)=\eta_{\mu\nu}$ and $\Gamma^\sigma_{\rho\mu}(0)=0$ we obtain
\begin{align}
    \partial_\rho g_{\mu\nu}(0)=0,
\end{align}
which proves (2).\\
For the second order term, we differentiate once more. Differentiating the previous identity and using that $\partial_\sigma g_{\mu\nu}(0)=0$, we get at $x=0$
\begin{align}
    \partial_\sigma \partial_\rho g_{\mu\nu}(0)
    = \eta_{\lambda\nu}\,\partial_\sigma \Gamma^\lambda_{\rho\mu}(0)
      + \eta_{\mu\lambda}\,\partial_\sigma \Gamma^\lambda_{\rho\nu}(0).
\end{align}
Set $A^\lambda{}_{\rho\mu\sigma}=\partial_\sigma\Gamma^\lambda_{\rho\mu}(0)$. Torsion--freeness makes $A$ symmetric in $\rho,\mu$. The radial geodesic identity $\Gamma^\lambda_{\rho\mu}(x)x^\rho x^\mu=0$, differentiated at the origin, implies
\begin{align}
A^\lambda{}_{\rho\mu\sigma}
+A^\lambda{}_{\mu\sigma\rho}
+A^\lambda{}_{\sigma\rho\mu}=0.
\end{align}
With our convention $R(\partial_i,\partial_j)\partial_k=R^\lambda{}_{ijk}\partial_\lambda$, curvature at the origin gives
\begin{align}
(R(g))^\lambda{}_{\sigma\rho\mu}(0)
=A^\lambda{}_{\rho\mu\sigma}-A^\lambda{}_{\sigma\mu\rho}.
\end{align}
Adding this identity to the one with $\rho$ and $\mu$ exchanged, and using the cyclic identity for $A$, yields
\begin{align}
\partial_\sigma\Gamma^\lambda_{\rho\mu}(0)
=\tfrac13\big((R(g))^\lambda{}_{\sigma\rho\mu}(0)
+(R(g))^\lambda{}_{\sigma\mu\rho}(0)\big).
\end{align}
Substituting into the metric-compatibility identity and using the algebraic curvature symmetries gives
\begin{align}
\partial_\sigma\partial_\rho g_{\mu\nu}(0)
=\tfrac13\big((R(g))_{\mu\rho\nu\sigma}(0)
+(R(g))_{\nu\rho\mu\sigma}(0)\big).
\end{align}
Now view the metric components as functions of $v\in U$ via the normal coordinate map:
\begin{align}
  g_{\mu\nu}(v)
  \coloneqq g_{\mu\nu}\big(\exp_p(v)\big).
\end{align}
The Taylor expansion of $g_{\mu\nu}(v)$ at $v=0$ is
\begin{align}
    g_{\mu\nu}(v)
    &= g_{\mu\nu}(0) + \partial_\rho g_{\mu\nu}(0)\,v^\rho
       + \tfrac12\,\partial_\sigma \partial_\rho g_{\mu\nu}(0)\,v^\rho v^\sigma
       + O(|v|^3) \\
    &= \eta_{\mu\nu}
       + \tfrac16\big(
          (R(g))_{\mu\rho\nu\sigma}(0)
          + (R(g))_{\nu\rho\mu\sigma}(0)
        \big)v^\rho v^\sigma
       + O(|v|^3).
\end{align}
Since $v^\rho v^\sigma$ is symmetric in $(\rho,\sigma)$, we can exchange the names of the dummy indices $\rho$ and $\sigma$ in the second term inside the parentheses and use the curvature symmetry
\begin{align}
      (R(g))_{\nu\sigma\mu\rho}=(R(g))_{\mu\rho\nu\sigma}
\end{align}
to find
\begin{align}
  \big(
    (R(g))_{\mu\rho\nu\sigma}(0)
    + (R(g))_{\nu\rho\mu\sigma}(0)
  \big)v^\rho v^\sigma
  = 2\,(R(g))_{\mu\rho\nu\sigma}(0)\,v^\rho v^\sigma.
\end{align}
Therefore,
\begin{align}
  g_{\mu\nu}(v)
  = \eta_{\mu\nu}
    + \tfrac13 (R(g))_{\mu\rho\nu\sigma}(p)\,v^\rho v^\sigma
    + O(|v|^3),
\end{align}
which is (4). Finally, recalling that by definition of the coordinates we have
\begin{align}
      x^\mu\big(\exp_p(v)\big)=v^\mu,
\end{align}
the same expansion can be written as
\begin{align}
  g_{\mu\nu}(x)
  = \eta_{\mu\nu}
    + \tfrac13 (R(g))_{\mu\rho\nu\sigma}(p)\,x^\rho x^\sigma
    + O(|x|^3),
\end{align}
as stated.
\end{proof}

\begin{remark}[Einstein equivalence principle and normal coordinates]
\Cref{thm:normal-coord-expansion} provides the geometric content of the Einstein equivalence principle. At any event $p\in M$ one can choose normal coordinates so that
\begin{align}
    g_{\mu\nu}(p)=\eta_{\mu\nu}, \qquad \Gamma^\rho_{\mu\nu}(p)=0,
\end{align}
hence, in that local freely falling frame,\footnote{Notice that in mathematical terms this is expressed by an orthonormal local frame $(e_\mu)$. Moreover, “physically meaningful” frames are the one respecting the orientation and the time--orientation.} the geodesic equation reduces at $p$ to the Minkowskian inertial equation
\begin{align}
    \ddot x^\rho(p)=0.
\end{align}
The nontrivial gravitational effects reappear at second order through the curvature term
\begin{align}
    g_{\mu\nu}(x) = \eta_{\mu\nu} + \tfrac13 (R(g))_{\mu\rho\nu\sigma}(p)\,x^\rho x^\sigma + O(|x|^3),
\end{align}
which cannot be removed by a change of coordinates. In other words: the EEP allows one to eliminate the non-tensorial part of the connection at a point, but not the curvature terms, which take the name of \textit{tidal effects}.
\end{remark}

\section{The Einstein--Hilbert action}
Albert Einstein arrived at his field equations only at the end of a long and tortuous path. Between 1912 and 1915 he kept oscillating between different candidates, torn between physical requirements and the mathematical demand that the laws should hold for arbitrary descriptions. In Zürich, with the help of Grossmann, he gradually learned the language of differential geometry and tried to cast gravity into that framework, but his first generally covariant attempts were still unsatisfactory: they did not secure a clean built–in notion of energy–momentum conservation for matter, and the overall structure of the theory remained uneasy in his own eyes.\footnote{In particular, they did not respect \cref{contbianc}.} For a while he even stepped back to less general, “restricted” equations, because he feared that full covariance might be incompatible with determinism.\footnote{Einstein’s worry was the so–called \emph{hole argument}: if the field equations were generally covariant, then for a given matter distribution one could apply a diffeomorphism that is the identity outside a “hole” but nontrivial inside, producing a second metric that also solves the equations with the same sources. Taken naively, this seems to destroy uniqueness of the gravitational field in the hole, and thus determinism. Only later did Einstein realize that such diffeomorphism--related metrics represent the same physical situation (gauge invariance).}

The decisive turn came only in the autumn of 1915, when Einstein returned to Berlin and reopened the problem in a series of four short notes to the Prussian Academy, week after week. In these November papers one sees him converging, step by step, towards the final form of the gravitational field equations. At almost the same time, in Göttingen, Hilbert approached the problem from a different, more elegant angle: he wrote down an action principle for gravity coupled to matter and derived a set of field equations that coincided with the ones of Einstein. This near--simultaneity triggered a brief and rather tense “priority race” between the two, with overlapping correspondence and rapidly evolving manuscripts on both sides. Einstein’s final paper, presented on 25 November 1915, appeared in print just before Hilbert’s work, which itself went through last–minute corrections. In later years Hilbert openly acknowledged that, whatever the fine chronology of drafts and proofs, the gravitational field equations and the physical theory they express should be credited to Einstein; what remained his own contribution was the variational viewpoint and the unifying ambition behind his broader “foundations of physics” program.
\begin{definition}[Space of metrics]\label{def:field-space-metrics}
Let $M$ be a smooth manifold. The \emph{space of Lorentzian metrics} on $M$ is
\begin{align}
\mathrm{Met}(M)\coloneqq\big\{g\in\Gamma(\mathrm{Sym}^2T^*M)\mid g_x\text{ is non--degenerate and Lorentzian for every }x\in M\big\}.
\end{align}
If $M$ is compact, this is an open subset of the Fréchet space $\Gamma(\mathrm{Sym}^2T^*M)$, and
\begin{align}
T_g\mathrm{Met}(M)\cong\Gamma(\mathrm{Sym}^2T^*M).
\end{align}
For noncompact $M$, pointwise nondegeneracy is not an open condition in the compact--open $C^\infty$ topology. We will not use that false openness assertion: all local Euler--Lagrange calculations below use compactly supported symmetric variations $h\in\Gamma_c(\mathrm{Sym}^2T^*M)$. For every such $h$, the curve $g+s h$ remains Lorentzian for sufficiently small $|s|$, because the condition is uniform on the compact support and $g$ is unchanged elsewhere. The symbol $\delta$ then denotes the first variation along these curves; equivalently, one may work in Fréchet neighborhoods of metrics on relatively compact regions.
We denote by
\begin{align}
\mathrm{Met}(M)_0\subseteq\mathrm{Met}(M)
\end{align}
the time--orientable Lorentzian metrics. For a fixed $g\in\mathrm{Met}(M)_0$, compactly supported sufficiently small variations also preserve time orientability: a global $g$--timelike field stays timelike on the compact support and is unchanged outside it.
\end{definition}

\begin{definition}[Lorentzian volume form]\label{def:lorentzian-volume-det}
Let $(M,g)$ be an oriented Lorentzian manifold of dimension $n$. In any oriented local chart on $U$ compatible with the given orientation, write
\begin{align}
    g = g_{\mu\nu}\,dx^\mu \otimes dx^\nu,
\end{align}
so that the matrix $(g_{\mu\nu}(x))$ has Lorentzian signature and hence $\det(g_{\mu\nu}(x))<0$ for all $x\in U$. Define on $U$ the $n$--form
\begin{align}
    \mathrm{vol}_g \coloneqq \sqrt{|\det(g_{\mu\nu})|}\, dx^1 \wedge \cdots \wedge dx^n.
\end{align}
These local expressions agree on the overlaps of oriented charts, therefore they glue to a unique smooth $n$--form $\mathrm{vol}_g \in \Omega^n(M)$, the \emph{volume form induced by $g$}.
\end{definition}

\begin{mdframed}
    \begin{definition}[Axiom of field dynamics]\label{axiomfe}
\leavevmode\newline
\vspace{-0.55cm}
\begin{enumerate}\setlength{\itemsep}{2pt}\setcounter{enumi}{5}
    \item \textbf{Field equations.} Let $M$ be a smooth oriented $4$--dimensional manifold without boundary. Space--time $(M,g)$ is a solution to the variational problem for the \textit{Einstein--Hilbert action functional}
\begin{align}
    S_{EH}[g] \coloneqq S_{m}[g]+\frac{1}{16\pi} \int_M \mathrm{Scal}(g)\,\mathrm{vol}_g
\end{align}
over the space of time--orientable Lorentzian metrics $\mathrm{Met}(M)_0$, where $S_{m}:\mathrm{Met}(M)_0\to\mathbb R$ is a smooth local functional, whose first variation at $g$ gives
\begin{align}
    (\delta S_{m})_g[h]
    = \frac12 \int_M \big\langle \mathrm{T}(g), h \big\rangle_g \,\mathrm{vol}_g,
    \qquad h\in \Gamma_c(\mathrm{Sym}^2 T^*M).
\end{align}
On a noncompact space--time, these total actions are regarded as real-valued functionals only when the integrals converge. In general the variational principle is local: for compactly supported $h$, differentiate the integral of the difference of the Lagrangian densities at $g+s h$ and $g$, which has compact support. This gives the same field equations without assuming that the total action is finite.
Here, $\langle \mathrm{T}(g), h \rangle_g$ denotes the pairing on symmetric tensor fields of $(0,2)$--type induced by $g$, i.e.
\begin{align}
    \langle \mathrm{T}(g), h \rangle_g \coloneqq (g^{-1}\otimes g^{-1})(\mathrm{T}(g),h)=g^{\mu\alpha} g^{\nu\beta} \,\mathrm{T}(g)_{\mu\nu} \, h_{\alpha\beta}\in \mathbb R,
\end{align}
and $\mathrm{vol}_g$ is the volume form induced by $g$. The tensor field $\mathrm{T}(g)\in \Gamma(\mathrm{Sym}^2 T^*M)$ defined in this way is called the \emph{energy--momentum} tensor.
\end{enumerate} 
\end{definition}
\end{mdframed}

\begin{remark}
Note that postulating a Lorentzian manifold without boundary is an extra assumption. One can generalize it by adding an extra term in the Einstein--Hilbert action functional.
\end{remark}

\begin{remark}[On existence and uniqueness of solutions]\label{rem:cauchy-gr}
The axiom above characterizes space--times as critical points of the Einstein--Hilbert functional, but it does \emph{not} guarantee that such critical metrics exist, nor that they are unique. In general, the solutions to the variational problem are an underdetermined, diffeomorphism--invariant system of nonlinear PDEs. Existence and (local) uniqueness are obtained instead by posing a Cauchy problem: one prescribes suitable initial data $(\Sigma,\gamma,K)$ on a $3$--dimensional spacelike hypersurface $\Sigma\subset M$, satisfying the Einstein constraint equations, and in vacuum (or for a matter model with an appropriate well--posed coupled hyperbolic system) one proves (Choquet--Bruhat, later Choquet--Bruhat--Geroch) that there exists a unique (up to diffeomorphisms) maximal globally hyperbolic development of these data. Thus, the axiom fixes the \emph{form} of the field equations; well--posedness of general relativity is a separate (hyperbolic) Cauchy theory.
\end{remark}

\begin{proposition}\label{prop:einstein-variational}
Fix a smooth local matter action $S_m$ as in \cref{axiomfe}. Then, the solutions of the variational problem $\delta S_{EH}=0$ over $\mathrm{Met}(M)_0$ are given by the Euler--Lagrange equations
\begin{align}\label{8}
    \mathrm{Ric}(g) - \tfrac12 \mathrm{Scal}(g)\,g = 8\pi\, \mathrm{T}(g),
\end{align}
which are called the Einstein field equations.
\end{proposition}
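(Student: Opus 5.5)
The plan is to compute the first variation of the gravitational part $\frac{1}{16\pi}\int_M \mathrm{Scal}(g)\,\mathrm{vol}_g$ along curves $g+sh$ with $h\in\Gamma_c(\mathrm{Sym}^2T^*M)$, as allowed by \cref{def:field-space-metrics}. We then add the matter contribution given in \cref{axiomfe} and apply the fundamental lemma of the calculus of variations on symmetric tensors. All integrands have compact support (the support of $h$), so the boundary terms coming from Stokes' theorem vanish, since $M$ has no boundary. We write $\mathrm{Scal}(g)=g^{\mu\nu}R_{\mu\nu}$ and vary the three factors $g^{\mu\nu}$, $R_{\mu\nu}$ and $\mathrm{vol}_g$ separately.

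\textbf{Step 1 (algebraic variations).} From $g^{\mu\alpha}g_{\alpha\nu}=\delta^\mu_\nu$ we get $\delta g^{\mu\nu}=-g^{\mu\alpha}g^{\nu\beta}h_{\alpha\beta}$. Jacobi's formula $\delta\det(g)=\det(g)\,g^{\mu\nu}h_{\mu\nu}$, applied in an oriented chart to \cref{def:lorentzian-volume-det} (the sign of $\det$ is constant), gives $\delta\mathrm{vol}_g=\tfrac12 g^{\mu\nu}h_{\mu\nu}\,\mathrm{vol}_g$.

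\textbf{Step 2 (variation of the Ricci tensor; main obstacle).} The difference of Levi--Civita connections is a tensor, by the proposition stating that the space of affine connections is affine. Hence $C^\rho{}_{\mu\nu}\coloneqq\delta\Gamma^\rho{}_{\mu\nu}$ is a symmetric $(1,2)$--tensor. Working at a point in normal coordinates (\cref{thm:normal-coord-expansion}), where $\Gamma(p)=0$, the coordinate formula for $R$ yields the Palatini identity
\begin{align}
\delta R^\rho{}_{\mu\nu\sigma}=\nabla_\mu C^\rho{}_{\nu\sigma}-\nabla_\nu C^\rho{}_{\mu\sigma}.
\end{align}
Both sides are tensors, so the identity holds everywhere. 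Contracting gives $\delta R_{\nu\sigma}=\nabla_\rho C^\rho{}_{\nu\sigma}-\nabla_\nu C^\rho{}_{\rho\sigma}$. Consequently $g^{\nu\sigma}\delta R_{\nu\sigma}=\nabla_\rho Y^\rho$, with $Y^\rho=g^{\nu\sigma}C^\rho{}_{\nu\sigma}-g^{\rho\sigma}C^\nu{}_{\nu\sigma}$, using $\nabla g=0$. The vector field $Y$ has compact support. The identity $\mathcal L_Y\mathrm{vol}_g=(\nabla_\rho Y^\rho)\mathrm{vol}_g=d\,\iota_Y\mathrm{vol}_g$ follows from Cartan's formula and the divergence remark. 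Stokes' theorem then gives $\int_M \nabla_\rho Y^\rho\,\mathrm{vol}_g=0$. The care needed here is in the sign and index conventions for $R$ and $\mathrm{Ric}$ ($\mathrm{Ric}_{ij}=R^k{}_{kij}$) and in justifying the divergence identity. An explicit formula for $C$ in terms of $\nabla h$ is not needed.

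\textbf{Step 3 (assembly).} Collecting terms, we obtain
\begin{align}
\delta\!\int_M\mathrm{Scal}(g)\,\mathrm{vol}_g=\int_M\Big(-R^{\alpha\beta}+\tfrac12\mathrm{Scal}(g)\,g^{\alpha\beta}\Big)h_{\alpha\beta}\,\mathrm{vol}_g=-\int_M\langle G(g),h\rangle_g\,\mathrm{vol}_g .
\end{align}
Adding the matter variation, $(\delta S_{EH})_g[h]=\int_M\langle \tfrac12\mathrm{T}(g)-\tfrac1{16\pi}G(g),h\rangle_g\,\mathrm{vol}_g$. This vanishes for all compactly supported symmetric $h$ if and only if $G(g)=8\pi\mathrm{T}(g)$. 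For this last step we use the fundamental lemma, taking $h$ to be a bump function times the lowered symmetric tensor $g(\cdot)g(\cdot)$ applied to the bracket; the pairing is nondegenerate on symmetric tensors. The converse direction is immediate from the same formula.
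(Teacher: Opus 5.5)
Your computation is correct and follows the paper's route: Jacobi's formula for $\delta\mathrm{vol}_g$, the Palatini identity for $\delta\mathrm{Ric}$, discarding the compactly supported divergence, adding the matter term, and concluding by nondegeneracy of the pairing. The only real difference is that you derive the Palatini identity from normal coordinates and the tensoriality of $\delta\Gamma$, whereas the paper writes $\delta\Gamma$ explicitly. Substituting $C^\rho{}_{\mu\nu}=\tfrac12 g^{\rho\sigma}(\nabla_\mu h_{\sigma\nu}+\nabla_\nu h_{\sigma\mu}-\nabla_\sigma h_{\mu\nu})$ into your $Y$ gives exactly the paper's $V(h)^\mu=\nabla_\nu h^{\mu\nu}-\nabla^\mu(g^{\alpha\beta}h_{\alpha\beta})$.

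One point in your last step fails as written. Set $B\coloneqq\tfrac12\mathrm{T}(g)-\tfrac1{16\pi}G(g)$ and take, as you propose, $h=\chi B$ with a bump function $\chi$. The integrand is then $\chi\,\langle B,B\rangle_g$, and for Lorentzian $g$ the induced pairing on $\mathrm{Sym}^2T^*M$ is indefinite. In an orthonormal frame $\langle B,B\rangle_g=B_{00}^2-2\sum_i B_{0i}^2+\sum_{i,j}B_{ij}^2$. For example, $B=\ell\otimes\ell$ with $\ell$ null gives $\langle B,B\rangle_g=0$ although $B\neq0$. So the vanishing of this integral does not force $B=0$.

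Use nondegeneracy pointwise instead. If $B(p)\neq0$, choose $h_0\in\mathrm{Sym}^2T^*_pM$ with $\langle B(p),h_0\rangle_g>0$ and extend it to a smooth local field. Multiply by a bump supported in a neighborhood where the pairing stays positive; the integral is then nonzero, a contradiction. Alternatively, lower the indices with an auxiliary Riemannian metric, or argue componentwise in a chart using the scalar fundamental lemma.
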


\begin{proof}
For simplicity, we compute the variation term by term, restricting to metric variations
\begin{align}
    h \in \Gamma_c(\mathrm{Sym}^2 T^*M),
\end{align}
i.e. symmetric tensor fields of $(0,2)$--type with compact support. This is enough to derive the Euler--Lagrange equations pointwise.\\
\textit{(1) Gravitational part.}
Consider the evaluation map
\begin{align}
    \mathrm{ev} : \mathrm{Met}(M)_0\times M \longrightarrow \mathrm{Sym}^2 T^*M,
    \qquad
    \mathrm{ev}(g,x) = g_x,
\end{align}
and the notation $\delta g$ for the metric variation, so that $(\delta g)_g[h]=h$; on a noncompact base we use the compactly supported interpretation specified above.
Since $\delta$ acts only on the metric variable, and $h$ has compact support, we may pass $\delta$ under the integral:
\begin{align}
    \delta  \left( \int_M \mathrm{Scal}(g)\,\mathrm{vol}_g \right)[h]
    = \int_M \delta\big(\mathrm{Scal}(g)\,\mathrm{vol}_g\big)[h].
\end{align}
For the two factors we have the metric variations
\begin{align}
    \delta \mathrm{vol}_g[h] &= \tfrac12 \,\mathrm{tr}_g(h)\,\mathrm{vol}_g,\\
    \delta \mathrm{Scal}(g)[h] &= - \langle \mathrm{Ric}(g), h\rangle_g + \operatorname{div}_g V(h),
\end{align}
where, with $h^{\mu\nu}=g^{\mu\alpha}g^{\nu\beta}h_{\alpha\beta}$,
\begin{align}
V(h)^\mu=\nabla_\nu h^{\mu\nu}-\nabla^\mu(g^{\alpha\beta}h_{\alpha\beta}).
\end{align}
Indeed, differentiation of $g^{\mu\alpha}g_{\alpha\nu}=\delta^\mu_\nu$ gives $\delta g^{\mu\nu}[h]=-h^{\mu\nu}$, while
\begin{align}
\delta\Gamma^\rho_{\mu\nu}[h]
&=\tfrac12g^{\rho\sigma}
(\nabla_\mu h_{\sigma\nu}+\nabla_\nu h_{\sigma\mu}-\nabla_\sigma h_{\mu\nu}),\\
\delta R_{\mu\nu}[h]
&=\nabla_\rho\delta\Gamma^\rho_{\mu\nu}[h]
-\nabla_\mu\delta\Gamma^\rho_{\rho\nu}[h].
\end{align}
Contracting the last identity with $g^{\mu\nu}$ proves the stated variation of scalar curvature. Jacobi's determinant formula gives the volume variation. If $h$ is compactly supported, so is $V(h)$. By the Leibniz rule,
\begin{align}
    \delta\big(\mathrm{Scal}(g)\,\mathrm{vol}_g\big)[h]
    &= \big(\delta \mathrm{Scal}(g)[h]\big)\,\mathrm{vol}_g + \mathrm{Scal}(g)\,\delta \mathrm{vol}_g[h] \\
    &= \big(- \langle \mathrm{Ric}(g), h\rangle_g + \operatorname{div}_g V(h)\big)\,\mathrm{vol}_g
       + \mathrm{Scal}(g)\,\tfrac12 \mathrm{tr}_g(h)\,\mathrm{vol}_g \\
    &= \big\langle -\mathrm{Ric}(g) + \tfrac12 \mathrm{Scal}(g)\,g ,\, h \big\rangle_g \,\mathrm{vol}_g
       + \big(\operatorname{div}_g V(h)\big)\,\mathrm{vol}_g.
\end{align}
The last term is an exact $n$--form (a boundary term) and, since $h$ has compact support, its integral over $M$ vanishes. Therefore,
\begin{align}
    \frac{1}{16\pi} \, \delta  \left( \int_M \mathrm{Scal}(g)\,\mathrm{vol}_g \right)_g[h]
    = -\frac{1}{16\pi} \int_M \big\langle \mathrm{Ric}(g) - \tfrac12 \mathrm{Scal}(g)\,g ,\, h \big\rangle_g \,\mathrm{vol}_g.
\end{align}
\textit{(2) Matter part.}
By assumption,
\begin{align}\label{0}
    (\delta S_m)_g[h]
    = \frac12 \int_M \big\langle \mathrm{T}(g), h \big\rangle_g \,\mathrm{vol}_g,
    \qquad h\in \Gamma_c(\mathrm{Sym}^2 T^*M).
\end{align}
\textit{(3) Sum.}
Since
\begin{align}
    S_{EH}[g] = S_m[g] + \frac{1}{16\pi} \int_M \mathrm{Scal}(g)\,\mathrm{vol}_g,
\end{align}
and $\delta$ is linear,
\begin{align}
    (\delta S_{EH})_g[h]
    &= -\frac{1}{16\pi} \int_M \big\langle \mathrm{Ric}(g) - \tfrac12 \mathrm{Scal}(g)\,g ,\, h \big\rangle_g \,\mathrm{vol}_g
      + \frac12 \int_M \big\langle \mathrm{T}(g), h \big\rangle_g \,\mathrm{vol}_g.
\end{align}
\textit{(4) Euler--Lagrange condition.}
If $(\delta S_{EH})_g[h]=0$ for all $h\in \Gamma_c(\mathrm{Sym}^2 T^*M)$, then
\begin{align}
    \int_M \big\langle \mathrm{Ric}(g) - \tfrac12 \mathrm{Scal}(g)\,g - 8\pi\,\mathrm{T}(g) ,\, h \big\rangle_g \,\mathrm{vol}_g = 0
    \qquad \forall\, h\in \Gamma_c(\mathrm{Sym}^2 T^*M).
\end{align}
By nondegeneracy of the pairing, this implies
\begin{align}
    \mathrm{Ric}(g) - \tfrac12 \mathrm{Scal}(g)\,g = 8\pi\, \mathrm{T}(g).
\end{align}
\end{proof}

\begin{remark}
    In the physics literature, Einstein equations are usually written as
    \begin{align}
        \boxed{R_{\mu\nu}-\tfrac12 R\,g_{\mu\nu}=\frac{8\pi G}{c^4} \,T_{\mu\nu},}
    \end{align}
    where we use now the actual units.
\end{remark}

\begin{remark}[Non--linearity]
    The left--hand side is \emph{non--linear} in $g$: both $\mathrm{Ric}(g)$ and $\mathrm{Scal}(g)$ contain $g^{-1}$ and first derivatives of $g$ inside the Christoffel symbols, and second derivatives of $g$ in the curvature. As a consequence, the field $g$ itself carries energy and momentum (it contributes to curvature), so “gravity itself generates gravity”.
\end{remark}

\begin{proposition}[Diffeomorphism invariance]\label{prop:EH-diffeo-invariant}
Let $S_m:\mathrm{Met}(M)_0\to \mathbb R$ be as defined in \cref{axiomfe} and invariant under orientation--preserving diffeomorphisms, i.e. $S_m[\varphi^*g]=S_m[g]$ for all $g\in\mathrm{Met}(M)_0$ and $\varphi \in \mathrm{Diff}^+(M)$. Then, whenever the total actions are defined, it holds
\begin{align}
    S_{EH}[\varphi^* g] = S_{EH}[g],
\end{align}
for all $g\in \mathrm{Met}(M)_0$.
\end{proposition}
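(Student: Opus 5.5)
Since $S_m$ is assumed invariant, I only need to show that the gravitational term $\int_M \mathrm{Scal}(g)\,\mathrm{vol}_g$ is invariant under $\varphi\in\mathrm{Diff}^+(M)$. The plan is to prove the pointwise naturality identity
\begin{align}
\mathrm{Scal}(\varphi^*g)\,\mathrm{vol}_{\varphi^*g}=\varphi^*\big(\mathrm{Scal}(g)\,\mathrm{vol}_g\big),
\end{align}
and then apply the change of variables formula for integrals of top forms under orientation-preserving diffeomorphisms.

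\textbf{Step 1: the volume form is natural.} I would take oriented charts $(U,x)$ on $M$ and set $y=x\circ\varphi$ on $\varphi^{-1}(U)$. Because $\varphi$ preserves orientation, $y$ is again an oriented chart. In these coordinates the components of $\varphi^*g$ are $(\varphi^*g)_{\mu\nu}(q)=g_{\mu\nu}(\varphi(q))$, so Definition~\ref{def:lorentzian-volume-det} gives directly $\mathrm{vol}_{\varphi^*g}=\varphi^*\mathrm{vol}_g$. This is the only place where the orientation-preserving hypothesis enters; a reversing $\varphi$ would produce a sign.

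\textbf{Step 2: the curvature is natural.} Here I would use the uniqueness of the Levi--Civita connection. The connection $\varphi^*\nabla$, defined by $(\varphi^*\nabla)_XY=\varphi^{-1}_*\big(\nabla_{\varphi_*X}\varphi_*Y\big)$, is an affine connection. It is torsion-free because Lie brackets are natural, $\varphi_*[X,Y]=[\varphi_*X,\varphi_*Y]$, and it is compatible with $\varphi^*g$. By uniqueness it is therefore the Levi--Civita connection of $\varphi^*g$. It follows that $R(\varphi^*g)=\varphi^*R(g)$, and since traces and contractions with the cometric commute with pullback, also $\mathrm{Ric}(\varphi^*g)=\varphi^*\mathrm{Ric}(g)$ and $\mathrm{Scal}(\varphi^*g)=\mathrm{Scal}(g)\circ\varphi$.

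An equivalent and quicker route uses the same coordinate trick as in Step 1. In the chart $y$, the components of $\varphi^*g$ are those of $g$ in the chart $x$, transported by $\varphi$. The coordinate formulas for $\Gamma$, $R$ and $\mathrm{Scal}$ therefore produce the same functions composed with $\varphi$.

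\textbf{Step 3: integrate.} Combining the two steps gives $\int_M\varphi^*(\mathrm{Scal}(g)\,\mathrm{vol}_g)=\int_M\mathrm{Scal}(g)\,\mathrm{vol}_g$ whenever the integral converges. This holds by the definition of the integral through an oriented atlas and a partition of unity: the pulled-back atlas and partition of unity $\{\rho_\alpha\circ\varphi\}$ give identical coordinate integrals. Adding $S_m[\varphi^*g]=S_m[g]$ gives the claim.

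I also need $\varphi^*g\in\mathrm{Met}(M)_0$. This holds because the pullback metric is Lorentzian and $\varphi^{-1}_*T$ is a timelike field for $\varphi^*g$ whenever $T$ is timelike for $g$.

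\textbf{Main obstacle.} The delicate point is the convergence caveat and the change of variables for noncompactly supported top forms. I would handle it by noting that the integrand is absolutely integrable for $g$ if and only if it is for $\varphi^*g$, since each coordinate piece corresponds bijectively. The remaining work, the naturality of the curvature, is routine once the uniqueness of the Levi--Civita connection is invoked.
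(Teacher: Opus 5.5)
Your proposal is correct and follows the paper's route. You prove $\mathrm{Scal}(\varphi^*g)\,\mathrm{vol}_{\varphi^*g}=\varphi^*(\mathrm{Scal}(g)\,\mathrm{vol}_g)$, apply the change-of-variables formula for orientation-preserving diffeomorphisms, and add justifications the paper takes for granted (curvature naturality via uniqueness of the Levi--Civita connection, and $\varphi^*g\in\mathrm{Met}(M)_0$). One inaccuracy in your commentary: orientation preservation is also used in Step 3, since the pulled-back atlas is oriented only if $\varphi$ preserves orientation, and for a reversing $\varphi$ the two resulting signs would cancel, so Step 1 is not the only place the hypothesis enters.
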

\begin{proof}
The invariance of $S_m$ is given by assumption, i.e. $S_m[\varphi^* g]=S_m[g]$, and, by linearity of the pull-back, we just need to prove the invariance of integral. So, fix $g\in \mathrm{Met}(M)_0$ and $\varphi\in \mathrm{Diff}^+(M)$. By naturality of the curvature, the scalar curvature transforms as
\begin{align}
    \mathrm{Scal}(\varphi^* g) = \varphi^*\big(\mathrm{Scal}(g)\big),
\end{align}
and, since $\varphi$ preserves the orientation, the volume form transforms as
\begin{align}
    \mathrm{vol}_{\varphi^* g} = \varphi^*(\mathrm{vol}_g).
\end{align}
Hence, the integrand of the gravitational part of the Einstein--Hilbert action transforms as
\begin{align}
    \mathrm{Scal}(\varphi^* g)\,\mathrm{vol}_{\varphi^* g}
    = \varphi^*\big(\mathrm{Scal}(g)\big)\,\varphi^*(\mathrm{vol}_g)
    = \varphi^*\big(\mathrm{Scal}(g)\,\mathrm{vol}_g\big).
\end{align}
Using the change--of--variables formula for orientation--preserving diffeomorphisms,
\begin{align}
    \int_M \varphi^* \alpha = \int_M \alpha \qquad \text{for all integrable } \alpha\in \Omega^n(M),
\end{align}
we obtain
\begin{align}
    S_{EH}[\varphi^* g]
    &= S_m[g]+ \frac{1}{16\pi} \int_M \mathrm{Scal}(\varphi^* g)\,\mathrm{vol}_{\varphi^* g} \\
    &= S_m[g]+\frac{1}{16\pi} \int_M \varphi^*\big(\mathrm{Scal}(g)\,\mathrm{vol}_g\big) \\
    &= S_m[g]+\frac{1}{16\pi} \int_M \mathrm{Scal}(g)\,\mathrm{vol}_g \\[2pt]
    &= S_{EH}[g].
\end{align}
\end{proof}

\begin{proposition}[Covariant conservation of the energy--momentum tensor]\label{prop:divT-zero}
Let $S_m:\mathrm{Met}(M)_0\to \mathbb R$ be as defined in \cref{axiomfe} and invariant under orientation--preserving diffeomorphisms. Then, we have\footnote{The divergence operator is defined in \cref{contbianc}.}
\begin{align}
    \operatorname{div}_g \mathrm T(g) = 0.
\end{align}
\end{proposition}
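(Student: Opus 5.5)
The plan is the standard Noether-type argument: differentiate the invariance $S_m[\varphi^*g]=S_m[g]$ along the flow of a compactly supported vector field, then read off the divergence after one integration by parts.

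Fix $g\in\mathrm{Met}(M)_0$ and $X\in\Gamma_c(TM)$. Its flow $\Phi^X_t$ is complete, and each $\Phi^X_t$ is an orientation-preserving diffeomorphism, since it is isotopic to the identity. Consider the curve $t\mapsto(\Phi^X_t)^*g$ in $\mathrm{Met}(M)_0$. By the flow definition of the Lie derivative, its velocity at $t=0$ is
\begin{align}
h\coloneqq\mathcal L_Xg\in\Gamma_c(\mathrm{Sym}^2T^*M).
\end{align}
This $h$ is compactly supported because $X$ is, so it is an admissible variation in the sense of \cref{def:field-space-metrics}.

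Differentiating the invariance at $t=0$ and using the defining property of $\mathrm T(g)$ in \cref{axiomfe} gives
\begin{align}
0=(\delta S_m)_g[\mathcal L_Xg]=\tfrac12\int_M\langle\mathrm T(g),\mathcal L_Xg\rangle_g\,\mathrm{vol}_g .
\end{align}
On a noncompact $M$ the total action may not be defined. In that case I would phrase the step through the difference of Lagrangian densities, as the axiom allows: the diffeomorphism $\Phi^X_t$ is the identity outside $\operatorname{supp}X$, so only a compact region contributes.

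Next I use the Levi--Civita connection. Torsion-freeness and $\nabla g=0$ give
\begin{align}
(\mathcal L_Xg)_{\mu\nu}=\nabla_\mu X_\nu+\nabla_\nu X_\mu .
\end{align}
This follows from the operational formula for $\mathcal L_X$ by replacing $[X,Y]$ with $\nabla_XY-\nabla_YX$. Since $\mathrm T(g)$ is symmetric, the integrand becomes $2\,\mathrm T^{\mu\nu}\nabla_\mu X_\nu$. The Leibniz rule then splits it as
\begin{align}
2\,\mathrm T^{\mu\nu}\nabla_\mu X_\nu=2\nabla_\mu(\mathrm T^{\mu\nu}X_\nu)-2(\nabla_\mu\mathrm T^{\mu\nu})X_\nu .
\end{align}
The first term is $\operatorname{div}_g$ of the compactly supported vector field $Y^\mu=\mathrm T^{\mu\nu}X_\nu$. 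I would justify that it integrates to zero via Cartan's formula, $\mathcal L_Y\mathrm{vol}_g=d\,\iota_Y\mathrm{vol}_g=(\operatorname{div}_gY)\,\mathrm{vol}_g$, together with Stokes' theorem on a compact domain containing the support; this is the same boundary-term argument used in \cref{prop:einstein-variational}. Identifying $\nabla_\mu Y^\mu\,\mathrm{vol}_g$ with $d\,\iota_Y\mathrm{vol}_g$ needs the coordinate identity $\Gamma^\mu_{\mu\nu}=\partial_\nu\log\sqrt{|\det g|}$.

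After this step,
\begin{align}
\int_M(\operatorname{div}_g\mathrm T(g))(X)\,\mathrm{vol}_g=0 \qquad\text{for all }X\in\Gamma_c(TM).
\end{align}
The fundamental lemma of the calculus of variations, with $X$ ranging over compactly supported fields, gives $\operatorname{div}_g\mathrm T(g)=0$ pointwise.

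I expect the main obstacle to be the first step: rigorously justifying that $\frac{d}{dt}\big|_{0}S_m[(\Phi^X_t)^*g]=(\delta S_m)_g[\mathcal L_Xg]$. This requires the curve $t\mapsto(\Phi^X_t)^*g$ to be smooth into the field space, with velocity exactly $\mathcal L_Xg$, and requires handling the noncompact case through localized densities. I would handle it by noting that the curve is the identity outside a fixed compact set, so it stays within the admissible compactly supported variations described in \cref{def:field-space-metrics}. I would then invoke part (4) of \cref{prop:field-space-basic}, which computes $\delta$ along any curve with the given velocity.
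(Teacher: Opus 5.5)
Your proposal is correct and follows the same route as the paper: differentiate $S_m[\varphi_s^*g]=S_m[g]$ along the flow of a compactly supported $X$, insert $h=\mathcal L_Xg$ into the defining formula for $\mathrm T(g)$, rewrite the integrand as $2\,\mathrm T^{\mu}{}_{\nu}\nabla_\mu X^\nu$, integrate by parts, and conclude from the arbitrariness of $X$. Your additional points are details the paper takes for granted, not a different argument: completeness and orientation preservation of the flow, the derivation of $(\mathcal L_Xg)_{\mu\nu}=\nabla_\mu X_\nu+\nabla_\nu X_\mu$, the divergence theorem via Cartan's formula and Stokes, and the chain-rule justification of the first step.
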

\begin{proof}
Let $X\in \Gamma_c(TM)$ be a smooth vector field with compact support and let $\{\varphi_s\}_{s\in (-\varepsilon,\varepsilon)}$ be its flow, so that
\begin{align}
    \frac{d}{ds}\Big|_{s=0} \varphi_s = X.
\end{align}
By diffeomorphism invariance of $S_m$ we have
\begin{align}
    0
    = \frac{d}{ds}\Big|_{s=0} S_m[\varphi_s^* g]
    = (\delta S_m)_g\big[\mathcal L_X g\big],
\end{align}
where $\mathcal L_X g$ is the Lie derivative of $g$ along $X$. Using \eqref{0} with $h=\mathcal L_X g$ we obtain
\begin{align}\label{eq:stress-diffeo-variation}
    0 = \frac12 \int_M \big\langle \mathrm T(g), \mathcal L_X g \big\rangle_g \,\mathrm{vol}_g.
\end{align}
Now, use the standard identity for the symmetric $(0,2)$--tensor field $\mathrm T$ and a vector field $X$:
\begin{align}\label{eq:stress-lie-contraction}
    \big\langle \mathrm T, \mathcal L_X g \big\rangle_g
    = 2\, \mathrm T^{\mu}{}_{\nu}\, \nabla_\mu X^{\nu},
\end{align}
where $\nabla$ is the Levi--Civita connection of $g$ and we raised an index of $\mathrm T$ with $g$. Since $X$ has compact support, we can integrate by parts:
\begin{align}
    \int_M \mathrm T^{\mu}{}_{\nu}\, \nabla_\mu X^{\nu} \,\mathrm{vol}_g
    = - \int_M (\nabla_\mu \mathrm T^{\mu}{}_{\nu})\, X^{\nu} \,\mathrm{vol}_g,
\end{align}
the boundary term vanishing because $\operatorname{supp} X$ is compact.
Inserting \eqref{eq:stress-lie-contraction} into \eqref{eq:stress-diffeo-variation}, we get
\begin{align}
    0
    &= \frac12 \int_M 2\, \mathrm T^{\mu}{}_{\nu}\, \nabla_\mu X^{\nu} \,\mathrm{vol}_g \\
    &= - \int_M (\nabla_\mu \mathrm T^{\mu}{}_{\nu})\, X^{\nu} \,\mathrm{vol}_g.
\end{align}
Since this holds for every $X\in \Gamma_c(TM)$, by nondegeneracy of the $L^2$--pairing, we conclude
\begin{align}
    \nabla_\mu \mathrm T^{\mu}{}_{\nu} = 0,
\end{align}
that is,
\begin{align}
    \operatorname{div}_g \mathrm T(g) = 0.
\end{align}
\end{proof}

\begin{remark}
The notation $S_m[g]$ above treats matter as a metric-dependent local functional. For a full matter theory $S_m[g,\psi]$ with independent fields $\psi$, the stress--energy tensor is defined by varying $g$ at fixed $\psi$. Diffeomorphism invariance then varies both fields, and implies $\operatorname{div}_gT=0$ on solutions of the matter equations; it does not in general give that conservation law off shell in $\psi$.
\end{remark}

\begin{remark}[Diffeomorphism invariance and contracted Bianchi identity]\label{rem:diff-compatibility}
The key geometric fact is the contracted Bianchi identity
\begin{align}
    \operatorname{div}_g\big(\mathrm{Ric}(g) - \tfrac12 \mathrm{Scal}(g)\,g\big) = 0,
\end{align}
which holds for every Lorentzian metric $g$, independently of any matter content. When we postulate Einstein’s equations
\begin{align}
    \mathrm{Ric}(g) - \tfrac12 \mathrm{Scal}(g)\,g = 8\pi\, \mathrm T(g),
\end{align}
this identity forces the right--hand side to satisfy
\begin{align}
    \operatorname{div}_g \mathrm T(g) = 0.
\end{align}
\cref{prop:divT-zero} shows that this is \emph{exactly} what we get if the matter action $S_m$ is invariant under orientation--preserving diffeomorphisms:
\begin{align}
    \text{$\mathrm{Diff}^+(M)$--invariance of } S_m\Longrightarrow\operatorname{div}_g \mathrm T(g)=0,
\end{align}
so the matter sector is automatically compatible with the geometric identity. In other words, the symmetry of the action under $\mathrm{Diff}^+(M)$ is the mechanism that aligns the matter conservation law with the Bianchi identity; this is why, in general relativity, diffeomorphism invariance is not optional but structurally built in it.
\end{remark}

\begin{remark}[General covariance vs.\ diffeomorphism invariance]\label{rem:covariance-vs-diffeo}
There are three statements that in general are \emph{not} the same:
\begin{enumerate}\setlength{\itemsep}{2pt}
    \item[(a)] \emph{General covariance}: the field equations are written as tensorial (i.e.\ coordinate--independent) relations on open sets $U\subseteq M$.
    \item[(b)] \emph{No fixed background}: no non--dynamical geometric object (metric, connection, volume form, frame, etc.) is kept fixed in the equations, apart from the purely topological data we decided to allow (orientation, time--orientation, signature sector).
    \item[(c)] \emph{Diffeomorphism invariance of the action}: for every orientation--preserving $\varphi\in \mathrm{Diff}^+(M)$,
    \begin{align}
        S[\varphi^* \Phi] = S[\Phi],
    \end{align}
    for all field configurations $\Phi$ (in GR: $\Phi=g$ or $(g,\text{matter})$).
\end{enumerate}
For the Einstein--Hilbert theory, the connection between these statements follows from the \emph{naturality of the Lagrangian}: its density is constructed from the dynamical fields and their derivatives so that
\begin{align}
L[\varphi^*\Phi]=\varphi^*L[\Phi].
\end{align}
Integration then proves (c), subject to convergence and the chosen boundary conditions. The Euler--Lagrange equations transform naturally under the same diffeomorphisms.
However, tensorial equations and the absence of a specified background do not by themselves prove exact invariance of a particular action; boundary terms and the existence of a natural variational formulation also matter. Conversely, covariance under simultaneous transformations of fields and external background data does not turn those background data into dynamical fields. In general relativity, we impose the geometric field content and use a natural action, rather than infer a general logical equivalence between (a), (b), and (c).
\end{remark}

\begin{remark}[Geodesic action vs. full diffeomorphism invariance]\label{rem:geodesic-vs-diffeo}
There is a useful asymmetry between the Einstein--Hilbert sector and the “test--particle” sector. The geodesic equation
\begin{align}
    \nabla_{\dot\gamma}\dot\gamma = 0
\end{align}
is obtained by varying the curve $\gamma:I\to M$ in the functional
\begin{align}
    S_{free}[\gamma]
    = -\tfrac12 \int_I g(\dot\gamma,\dot\gamma)\,d\lambda
\end{align}
\emph{with the metric $g$ kept fixed}. Because $g$ is frozen, $S_{free}$ is not a fully $\mathrm{Diff}(M)$--invariant functional in the GR sense: a general diffeomorphism should act on \emph{all} dynamical fields, but here only $\gamma$ is varied while $g$ is treated as background. This is why, at the level of the particle action alone, one only sees invariance under those diffeomorphisms that are isometries of $g$ (they preserve $g$, so they are actual symmetries of $S_{free}$). This does \emph{not} contradict diffeomorphism invariance of GR. In the full theory the metric is \emph{also} a variable: one first determines $g$ as a solution of Einstein’s equations from the Einstein--Hilbert action, and \emph{then} studies geodesics for that specific solution $g$. So the apparent “background” in the particle action is in fact a dynamical field coming from another variational problem. What breaks full $\mathrm{Diff}^+(M)$ in the geodesic functional is precisely the choice of looking at a \emph{derived} problem, namely the motion of a test body in a fixed geometry.
\end{remark}

\begin{remark}[Universality of the coupling and EEP]\label{rem:universality-EEP}
In the formulation above, the universality of gravity is already built into the action. Indeed:
\begin{enumerate}\setlength{\itemsep}{2pt}
    \item both the gravitational part and the matter part of the theory depend on the \emph{same} geometric variable, the metric $g$;
    \item the matter sector is assumed to be diffeomorphism invariant;
    \item the stress--energy tensor is \emph{defined} as the variational derivative of $S_m$ with respect to $g$.
\end{enumerate}
With the Einstein--Hilbert gravitational action, the metric variation of every matter contribution enters the total source $\mathrm T(g)$ in
\begin{align}
    \mathrm{Ric}(g) - \tfrac12 \mathrm{Scal}(g)\,g = 8\pi\, \mathrm T(g).
\end{align}
Thus all matter included in this formulation contributes to the source obtained by varying the same metric. This is universal sourcing in Einstein's equations, but it does not by itself establish the full Einstein equivalence principle. Diffeomorphism invariance alone also allows nonminimal terms such as $R\psi^2$ or curvature-dependent kinetic couplings, whose local effects cannot be removed by setting $\Gamma=0$ at a point. To implement EEP for nongravitational experiments one needs the corresponding universal local Lorentz-invariant matter coupling assumptions, commonly expressed by minimal coupling to a single metric. For a full action $S_m[g,\psi]$, diffeomorphism invariance gives covariant stress--energy conservation on the matter equations, consistently with the contracted Bianchi identity.
\end{remark}

\begin{remark}[Degree-of-freedom counting in GR]\label{rem:GR-dof}
In this remark we restrict to vacuum Einstein--Hilbert gravity.
In dimension $4$, a Lorentzian metric $g$ has at each point $10$ independent components (a symmetric $4\times 4$ matrix). However, in general relativity not all of these components describe physical degrees of freedom.

\begin{enumerate}\setlength{\itemsep}{2pt}
    \item \textbf{Gauge directions.} The theory is invariant under $\mathrm{Diff}(M)$. Infinitesimally, a diffeomorphism is generated by a vector field $X$, which has $4$ independent components. Acting with $X$ changes $g$ by a Lie derivative,
    \begin{align}
        g \longmapsto g + \mathcal L_X g,
    \end{align}
    and this change is pure gauge. So $4$ combinations of the $10$ metric components do not carry physics: they can be gauged away.

    \item \textbf{Constraints.} The Einstein equations $G_{\mu\nu}=8\pi T_{\mu\nu}$ do \emph{not} determine all remaining components freely. In a $3+1$ split, the projections $G_{\mu\nu}n^\mu n^\nu$ and $G_{\mu\nu}n^\mu e_i^\nu$ (Hamiltonian and momentum constraints), with $n$ the unit normal and $e_i$ tangent to the slice, do not contain second time derivatives of the metric: they are constraint equations restricting the initial data on a Cauchy surface, rather than independent evolution equations. Concretely, this gives $4$ additional relations among the metric components and their first derivatives.

    \item \textbf{Resulting count.} The heuristic spacetime count starts from $10$ components and subtracts $4$ coordinate freedoms\footnote{This is also equivalent to considering the components of $g$ and use the coordinate gauge freedom.} and $4$ constraints, giving
    \begin{align}
        10 - 4 - 4 = 2
    \end{align}
    independent \emph{propagating} degrees of freedom per space point. A precise Hamiltonian count uses the six components of the spatial metric and their six conjugate momenta: four first-class constraints and their four gauge directions leave $12-2\cdot4=4$ physical phase-space components, or two configuration degrees of freedom per point. These $2$ d.o.f.\ are precisely the two polarizations of the gravitational field that appear in the linearized theory on a fixed background (the so called “$+$” and “$\times$” gravitational waves).

    \item \textbf{PDE meaning.} The constraints are the PDE manifestation of diffeomorphism invariance: because $\mathrm{Diff}(M)$ is infinite dimensional, the equations cannot be all evolution equations; some of them must enforce compatibility with the symmetry (they are preserved in time by the Bianchi identity).
\end{enumerate}

This counting is local and on--shell: globally (e.g. in presence of boundaries, asymptotic conditions, or nontrivial topology) extra “global” degrees of freedom or charges can appear, but the local number of propagating modes of GR in $4$ dimension is $2$.
\end{remark}

\printbibliography[heading=subbibliography]

\clearpage
\begingroup
\small

\section*{Bibliography}

As stated in the abstract, these notes were prepared using only
the authors' own knowledge and interactions with ChatGPT models.
The works listed below are recommended for further reading on the
subjects covered in the notes; they are not a list of sources
consulted in preparing the text. They offer complementary
perspectives on classical mechanics, differential geometry,
and general relativity.

\newrefsection[bibliography.bib]

\nocite{
  Arnold1989,
  AbrahamMarsden1978,
  Goldstein2002,
  Lee2013,
  Tu2011,
  BottTu1982,
  Lee2018,
  KobayashiNomizu,
  Nakahara2003,
  Frankel2012,
  ONeill1983,
  LandauLifshitz1975,
  Schutz2009,
  Carroll2004,
  Wald1984,
  MisnerThorneWheeler1973,
  HawkingEllis1973,
  Straumann2013
}

\setlength{\bibitemsep}{2pt}
\renewcommand*{\bibfont}{\small}
\printbibliography[heading=none]

@book{Arnold1989,
  author = {Arnold, Vladimir I.},
  title = {{Mathematical Methods of Classical Mechanics}},
  publisher = {Springer},
  year = {1989},
  edition = {2}
}

@book{AbrahamMarsden1978,
  author = {Abraham, Ralph and Marsden, Jerrold E.},
  title = {{Foundations of Mechanics}},
  publisher = {Benjamin/Cummings},
  year = {1978},
  edition = {2}
}

@book{Goldstein2002,
  author = {Goldstein, Herbert and Poole, Charles and Safko, John},
  title = {{Classical Mechanics}},
  publisher = {Addison-Wesley},
  year = {2002},
  edition = {3}
}

@book{Lee2013,
  author = {Lee, John M.},
  title = {{Introduction to Smooth Manifolds}},
  publisher = {Springer},
  year = {2013},
  edition = {2}
}

@book{Tu2011,
  author = {Tu, Loring W.},
  title = {{An Introduction to Manifolds}},
  publisher = {Springer},
  year = {2011},
  edition = {2}
}

@book{BottTu1982,
  author = {Bott, Raoul and Tu, Loring W.},
  title = {{Differential Forms in Algebraic Topology}},
  publisher = {Springer},
  year = {1982}
}

@book{Lee2018,
  author = {Lee, John M.},
  title = {{Introduction to Riemannian Manifolds}},
  publisher = {Springer},
  year = {2018},
  edition = {2}
}

@book{KobayashiNomizu,
  author = {Kobayashi, Shoshichi and Nomizu, Katsumi},
  title = {{Foundations of Differential Geometry}},
  publisher = {Interscience},
  date = {1963/1969},
  note = {Vols.~I and II (published in 1963 and 1969, respectively)}
}

@book{Nakahara2003,
  author = {Nakahara, Mikio},
  title = {{Geometry, Topology and Physics}},
  publisher = {Institute of Physics Publishing},
  year = {2003},
  edition = {2}
}

@book{Frankel2012,
  author = {Frankel, Theodore},
  title = {{The Geometry of Physics: An Introduction}},
  publisher = {Cambridge University Press},
  year = {2012},
  edition = {3}
}

@book{ONeill1983,
  author = {O'Neill, Barrett},
  title = {{Semi-Riemannian Geometry with Applications to Relativity}},
  publisher = {Academic Press},
  year = {1983}
}

@book{LandauLifshitz1975,
  author = {Landau, Lev D. and Lifshitz, Evgeny M.},
  title = {{The Classical Theory of Fields}},
  publisher = {Pergamon Press},
  year = {1975},
  edition = {4},
  series = {Course of Theoretical Physics},
  volume = {2}
}

@book{Schutz2009,
  author = {Schutz, Bernard},
  title = {{A First Course in General Relativity}},
  publisher = {Cambridge University Press},
  year = {2009},
  edition = {2}
}

@book{Carroll2004,
  author = {Carroll, Sean M.},
  title = {{Spacetime and Geometry: An Introduction to General Relativity}},
  publisher = {Addison-Wesley},
  year = {2004}
}

@book{Wald1984,
  author = {Wald, Robert M.},
  title = {{General Relativity}},
  publisher = {University of Chicago Press},
  year = {1984}
}

@book{MisnerThorneWheeler1973,
  author = {Misner, Charles W. and Thorne, Kip S. and Wheeler, John A.},
  title = {{Gravitation}},
  publisher = {W. H. Freeman},
  year = {1973}
}

@book{HawkingEllis1973,
  author = {Hawking, Stephen W. and Ellis, George F. R.},
  title = {{The Large Scale Structure of Space-Time}},
  publisher = {Cambridge University Press},
  year = {1973}
}

@book{Straumann2013,
  author = {Straumann, Norbert},
  title = {{General Relativity}},
  publisher = {Springer},
  year = {2013},
  edition = {2}
}

\endrefsection

\par
\endgroup
\clearpage

\end{document}